\newtheorem{theorem}{Theorem}
\newtheorem{lemma}{Lemma}
\newtheorem{proposition}{Proposition}
\newtheorem{corollary}{Corollary}
\newtheorem{definition}{Definition}
\newtheorem{example}{Example}
\newtheorem{remark}{Remark}
\newcommand{\prob}{\ensuremath{\mathbb{P}}}
\newcommand{\naturals}{\ensuremath{\mathbb{N}}}
\newcommand{\Reals}{\ensuremath{\mathbb{R}}}
\newcommand{\expectation}{\ensuremath{\mathbb{E}}}
\newcommand{\set}{\ensuremath{\mathcal}}
\begin{document}
\title{\LARGE{On Data-Processing and Majorization Inequalities\\[0.1cm]
for $f$-Divergences with Applications\\[1cm]}}
\author{Igal Sason
\thanks{
\normalsize{\vspace*{0.2cm}

{\small
I. Sason is  with the Department of Electrical Engineering,
Technion--Israel Institute of Technology, Haifa 32000, Israel
(e-mail: sason@ee.technion.ac.il).\newline
This paper is published in the {\em Entropy} journal, vol.~21,
no.~10, paper~1022, pages~1--80, October 21, 2019.
Available at \tt{https://www.mdpi.com/1099-4300/21/10/1022}.}}}}

\maketitle

\begin{abstract}
\normalsize{This paper is focused on derivations of data-processing and majorization
inequalities for $f$-divergences, and their applications in information theory
and statistics. For the accessibility of the material, the main results are first
introduced without proofs, followed by exemplifications of the theorems with further
related analytical results, interpretations, and information-theoretic applications.
One application refers to the performance analysis of list decoding with either fixed
or variable list sizes; some earlier bounds on the list decoding error probability are
reproduced in a unified way, and new bounds are obtained and exemplified numerically.
Another application is related to a study of the quality of approximating a
probability mass function, induced by the leaves of a Tunstall tree, by an equiprobable
distribution. The compression rates of finite-length Tunstall codes are further analyzed
for asserting their closeness to the Shannon entropy of a memoryless and stationary
discrete source. Almost all the analysis is relegated to the appendices, which form the
major part of this manuscript.}
\end{abstract}

{\bf{Keywords}}: {\small
Contraction coefficient,
data-processing inequalities,
$f$-divergences,
hypothesis testing,
list decoding,
majorization,
R\'{e}nyi information measures,
Tsallis entropy,
Tunstall trees.}

\break
\section{Introduction}
\label{section: introduction}

Divergences are non-negative measures of the dissimilarity between arbitrary pairs
of probability measures which are defined on the same measurable space. They play
a key role in the development of information theory, probability theory,
statistics, learning, signal processing, and other related fields. One important
class of divergence measures is defined by means of convex functions $f$, and it
is called the class of $f$-divergences. It unifies fundamental and
independently-introduced concepts in several branches of mathematics such as the
chi-squared test for the goodness of fit in statistics, the total variation distance
in functional analysis, the relative entropy in information theory and statistics,
and it is also closely related to the R\'{e}nyi divergence which generalizes the
relative entropy. The class of $f$-divergences was independently introduced in the
sixties by Ali and Silvey \cite{AliS}, Csisz\'{a}r \cite{Csiszar63}--\cite{Csiszar72},
and Morimoto \cite{Morimoto63}. This class satisfies pleasing features such as the
data-processing inequality, convexity, (semi)continuity and duality properties, and it
finds nice applications in information theory and statistics (see, e.g.,
\cite{Csiszar67a, Csiszar72, LieseV_book87, Merhav11, Pardo05, PardoV97, StummerV10,
Vajda89, ZakaiZ75, ZivZ73}).

This manuscript is a research paper which is focused on the derivation of data-processing
and majorization inequalities for $f$-divergences, and a study of some of their potential
applications in information theory and statistics. Preliminaries are next provided.

\subsection{Preliminaries and Related Works}
\label{subsection: preliminaries}

We provide here definitions and known results from the literature which serve as a
background to the presentation in this paper. We first provide a definition for the
family of $f$-divergences.
\begin{definition} \label{def:fD} \cite[p.~4398]{LieseV_IT2006}
Let $P$ and $Q$ be probability measures, let $\mu$ be a dominating measure of $P$ and $Q$
(i.e., $P, Q \ll \mu$),
and let $p := \frac{\text{d}P}{\text{d}\mu}$ and $q := \frac{\text{d}Q}{\text{d}\mu}$.
The $f$-divergence from $P$ to $Q$ is given, independently of $\mu$, by
\begin{align} \label{eq:fD}
D_f(P\|Q) := \int q \, f \Bigl(\frac{p}{q}\Bigr) \, \text{d}\mu,
\end{align}
where
\begin{align}
& f(0) := \underset{t \to 0^+}{\lim} \, f(t), \\[0.1cm]
& 0 f\biggl(\frac{0}{0}\biggr) := 0, \\[0.1cm]
& 0 f\biggl(\frac{a}{0}\biggr)
:= \lim_{t \to 0^+} \, t f\biggl(\frac{a}{t}\biggr)
= a \lim_{u \to \infty} \frac{f(u)}{u}, \quad a>0.
\end{align}
\end{definition}

\begin{definition}
\label{def: contraction}
Let $Q_X$ be a probability distribution which is defined on a set $\set{X}$,
and that is not a point mass, and let $W_{Y|X} \colon \set{X} \to \set{Y}$
be a stochastic transformation.
The contraction coefficient for $f$-divergences is defined as
\begin{align}
\label{contraction coef.}
\mu_f(Q_X, W_{Y|X}) := \underset{P_X: \, D_f(P_X \| Q_X) \in (0, \infty)}{\sup} \,
\frac{D_f(P_Y \| Q_Y)}{D_f(P_X \| Q_X)},
\end{align}
where, for all $y \in \set{Y}$,
\begin{align}
\label{23062019a1}
& P_Y(y) = (P_X W_{Y|X}) \: (y) := \int_{\set{X}} \mathrm{d}P_X(x) \, W_{Y|X}(y|x), \\[0.1cm]
\label{23062019a2}
& Q_Y(y) = (Q_X W_{Y|X}) \: (y) := \int_{\set{X}} \mathrm{d}Q_X(x) \, W_{Y|X}(y|x).
\end{align}
The notation in \eqref{23062019a1} and \eqref{23062019a2}, and also in \eqref{transP}, \eqref{transQ},
\eqref{MC1 in DMC}, \eqref{MC2 in DMC}, \eqref{MC3 in DMC} in the continuation of this paper, is
consistent with the standard notation used in information theory (see, e.g., the first displayed
equation after (3.2) in \cite{Csiszar_Korner}).
\end{definition}

Contraction coefficients for $f$-divergences play a key role in
strong data-processing inequalities (see
\cite{AhlswedeG76, CalmonPW18, Cohen93},
\cite[Chapter~II]{CohenKZ98}, \cite{MakurP18, MakurZ18,
PolyanskiyW16, PolyanskiyW17, Raginsky16}).
The following are essential definitions and results which are
related to maximal correlation and strong data-processing inequalities.

\begin{definition}
The maximal correlation between two random variables $X$ and $Y$
is defined as
\begin{align}
\label{def: max. correlation}
\rho_{\mathrm{m}}(X;Y) := \sup_{f,g} \, \expectation[f(X) g(Y)],
\end{align}
where the supremum is taken over all real-valued functions $f$ and $g$ such that
\begin{align}
\expectation[f(X)] = \expectation[g(Y)] = 0,
\quad \expectation[f^2(X)] \leq 1, \; \expectation[g^2(Y)] \leq 1.
\end{align}
\end{definition}

\begin{definition} \label{def:chi-squared}
Pearson's $\chi^2$-divergence \cite{Pearson1900x} from $P$ to $Q$ is
defined to be the $f$-divergence from $P$ to $Q$ (see Definition~\ref{def:fD})
with $f(t) = (t-1)^2$ or $f(t) = t^2-1$ for all $t>0$,
\begin{align}
\label{eq: chi-square 1}
\chi^2(P\|Q) &:= D_f(P\|Q) \\
\label{eq: chi-square 1b}
&= \int \frac{(p-q)^2}{q} \, \text{d}\mu \\
\label{eq: chi-square 2}
&= \int \frac{p^2}{q} \, \text{d}\mu - 1
\end{align}
independently of the dominating measure $\mu$ (i.e.,
$P,Q \ll \mu$, e.g., $\mu = P+Q$).

Neyman's $\chi^2$-divergence from $P$ to $Q$ is the
Pearson's $\chi^2$-divergence from $Q$ to $P$, i.e., it
is equal to \cite{Neyman49}
\begin{align}
\label{eq: chi-square 3}
\chi^2(Q\|P) = D_g(P\|Q)
\end{align}
with $g(t) = \frac{(t-1)^2}{t}$ or $g(t) = \frac{1}{t}-t$ for all $t>0$.
\end{definition}

\begin{proposition} (\cite[Theorem 3.2]{Raginsky16}, \cite{Sarmanov62})
\label{propos: maximal correlation}
The contraction coefficient for the $\chi^2$-divergence satisfies
\begin{align}
\mu_{\chi^2}(Q_X, W_{Y|X}) = \rho_{\mathrm{m}}^2(X;Y),
\end{align}
with $X \sim Q_X$ and $Y \sim Q_Y$ (see \eqref{23062019a2}{\em)}.
\end{proposition}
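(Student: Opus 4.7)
The plan is to reduce both sides of the claimed identity to the same variational problem over zero-mean, square-integrable test functions on $\set{X}$. The starting idea is to parametrize the numerator and denominator of the contraction coefficient by the density $\frac{\mathrm{d}P_X}{\mathrm{d}Q_X}$. Writing $\frac{\mathrm{d}P_X}{\mathrm{d}Q_X}(x) = 1 + g(x)$ with $\expectation_{Q_X}[g(X)] = 0$ (so that $P_X$ has total mass one) and $g \geq -1$ $Q_X$-a.e.\ (so that $P_X \geq 0$), a direct substitution into \eqref{eq: chi-square 1b} gives $\chi^2(P_X \| Q_X) = \expectation_{Q_X}[g^2(X)]$.

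Next, I would push this parametrization through the channel $W_{Y|X}$. A short calculation using \eqref{23062019a1}--\eqref{23062019a2} and Bayes' rule under the joint law $Q_{XY} = Q_X W_{Y|X}$ yields
\begin{align*}
\frac{\mathrm{d}P_Y}{\mathrm{d}Q_Y}(y) = 1 + h(y), \qquad h(y) := \expectation[g(X) \mid Y = y],
\end{align*}
so that $\chi^2(P_Y \| Q_Y) = \expectation_{Q_Y}[h^2(Y)] = \expectation\bigl[(\expectation[g(X) \mid Y])^2\bigr]$. Because the ratio $\chi^2(P_Y\|Q_Y)/\chi^2(P_X\|Q_X)$ is invariant under the rescaling $g \mapsto \epsilon g$ for any $\epsilon>0$, the pointwise constraint $g \geq -1$ can be satisfied without altering the value of the supremum; hence
\begin{align*}
\mu_{\chi^2}(Q_X, W_{Y|X}) = \sup_{g} \, \frac{\expectation\bigl[(\expectation[g(X) \mid Y])^2\bigr]}{\expectation_{Q_X}[g^2(X)]},
\end{align*}
where the supremum is over non-constant $g$ with $\expectation_{Q_X}[g(X)] = 0$.

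Finally, I would identify this variational quantity with $\rho_{\mathrm{m}}^2(X;Y)$. For the upper bound, use the tower property to write $\expectation[h^2(Y)] = \expectation[g(X) h(Y)]$ and note that $\expectation[h(Y)] = 0$; applying the definition \eqref{def: max. correlation} gives $\expectation[g(X) h(Y)] \leq \rho_{\mathrm{m}}(X;Y) \sqrt{\expectation[g^2(X)] \expectation[h^2(Y)]}$, which rearranges to $\expectation[h^2(Y)]/\expectation[g^2(X)] \leq \rho_{\mathrm{m}}^2(X;Y)$. For the matching lower bound, fix any admissible pair $(f,g)$ in \eqref{def: max. correlation}; by Cauchy--Schwarz the inner supremum over $g(Y)$ with zero mean and unit variance is attained (in the limit) by $g^\star(Y) \propto \expectation[f(X) \mid Y]$, giving $\rho_{\mathrm{m}}(X;Y) = \sup_f \sqrt{\expectation[(\expectation[f(X) \mid Y])^2] / \expectation[f^2(X)]}$ over zero-mean $f$. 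Squaring recovers the variational expression for $\mu_{\chi^2}$.

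The main obstacle is the careful bookkeeping in the third paragraph: the Cauchy--Schwarz step requires that the conditional expectation $h(Y) = \expectation[g(X)\mid Y]$ be in $L^2(Q_Y)$ whenever $g \in L^2(Q_X)$ (which follows from the conditional Jensen inequality), and the reduction of the constraint $g \geq -1$ in Step~3 relies on the ratio being scale-invariant, which in turn rests on the quadratic homogeneity of $\chi^2$ in the perturbation. Once these technicalities are handled, the argument proceeds cleanly.
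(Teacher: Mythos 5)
Your proof is correct, and it is essentially the classical argument underlying the two sources the paper cites; note that the paper itself offers no proof of this proposition---it is stated as a known preliminary (quoting \cite[Theorem~3.2]{Raginsky16} and \cite{Sarmanov62})---so there is no internal proof to compare against. Your reduction is exactly the standard one: the parametrization $\frac{\mathrm{d}P_X}{\mathrm{d}Q_X} = 1+g$ with $\expectation_{Q_X}[g(X)]=0$ turns \eqref{eq: chi-square 1b} into $\expectation_{Q_X}[g^2(X)]$, the computation of $\frac{\mathrm{d}P_Y}{\mathrm{d}Q_Y}$ via \eqref{23062019a1}--\eqref{23062019a2} and the backward channel correctly identifies the output perturbation as $h(Y)=\expectation[g(X)\mid Y]$, and the two-sided identification with \eqref{def: max. correlation} (tower property plus Cauchy--Schwarz in one direction; optimizing the second test function as $g^\star(Y)\propto \expectation[f(X)\mid Y]$, whose zero mean is automatic, in the other) is exactly how the equality $\mu_{\chi^2} = \rho_{\mathrm{m}}^2$ is established in the literature, via the interpretation of both sides as the squared operator norm of the conditional-expectation operator restricted to zero-mean functions in $L^2(Q_X)$.

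One technical point deserves a patch. Your claim that the constraint $g \geq -1$ ``can be satisfied without altering the value of the supremum'' by rescaling $g \mapsto \epsilon g$ is valid only when $g$ is bounded: for an unbounded $g \in L^2(Q_X)$ (possible once $\set{X}$ is countably infinite, as Definition~\ref{def: contraction} permits), no $\epsilon>0$ makes $1+\epsilon g$ a nonnegative density. The fix is routine: bounded zero-mean functions are dense in the zero-mean subspace of $L^2(Q_X)$, and the Rayleigh quotient $g \mapsto \expectation\bigl[\bigl(\expectation[g(X)\mid Y]\bigr)^2\bigr]/\expectation_{Q_X}[g^2(X)]$ is continuous there (the conditional expectation is an $L^2$-contraction, as you note), so the supremum over admissible densities equals the supremum over all nonzero zero-mean $g \in L^2(Q_X)$. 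In the finite-alphabet setting of \cite[Theorem~3.2]{Raginsky16} this issue does not arise and your argument is complete as written.
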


\begin{proposition} \cite[Theorem~2]{PolyanskiyW17}
\label{propos: mu chi^2 is minimal}
Let $f \colon (0, \infty) \to \Reals$ be convex and twice continuously
differentiable with $f(1)=0$ and $f''(1) > 0$. Then, for any $Q_X$ that
is not a point mass,
\begin{align}
\label{chi^2 contraction is minimal}
\mu_{\chi^2}(Q_X, W_{Y|X}) \leq \mu_f(Q_X, W_{Y|X}),
\end{align}
i.e., the contraction coefficient for the $\chi^2$-divergence is
the minimal contraction coefficient among all $f$-divergences
with $f$ satisfying the above conditions.
\end{proposition}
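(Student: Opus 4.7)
The plan is to prove the inequality by a local perturbation argument around $Q_X$, showing that any $f$-divergence satisfying the stated regularity hypotheses behaves locally like a multiple of Pearson's $\chi^2$-divergence, so the $\chi^2$ contraction coefficient is recovered as a limiting ratio of $D_f$-contractions.

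First I would parametrise admissible perturbations. Fix a bounded measurable function $g$ on $\set{X}$ with $\int g \, \mathrm{d}\mu = 0$ and, say, $\|g/q\|_\infty < \infty$; then for all sufficiently small $\epsilon>0$ the density $p_\epsilon := q + \epsilon g$ defines a probability distribution $P_X^{(\epsilon)}$ with $P_X^{(\epsilon)} \ll Q_X$. Push it through the channel to get $P_Y^{(\epsilon)} = Q_Y + \epsilon \, g W_{Y|X}$. A direct computation then gives
\begin{align*}
\chi^2\bigl(P_X^{(\epsilon)} \| Q_X\bigr) = \epsilon^2 \int \frac{g^2}{q} \, \mathrm{d}\mu,
\qquad
\chi^2\bigl(P_Y^{(\epsilon)} \| Q_Y\bigr) = \epsilon^2 \int \frac{(gW_{Y|X})^2}{q_Y} \, \mathrm{d}\mu_Y,
\end{align*}
so the ratio $\chi^2(P_Y^{(\epsilon)}\|Q_Y)/\chi^2(P_X^{(\epsilon)}\|Q_X)$ is independent of $\epsilon$ and depends only on the direction $g$.

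Next I would Taylor-expand $f$ around $1$. Since $f(1)=0$ and $f$ is $C^2$, for every $\delta>0$ there is $\eta>0$ such that $\bigl|f(1+u) - f'(1)\,u - \tfrac12 f''(1) u^2\bigr| \le \delta u^2$ for $|u| \le \eta$. Applied with $u = \epsilon g/q$ and integrated against $q$, the linear term vanishes because $\int g \, \mathrm{d}\mu = 0$, yielding
\begin{align*}
D_f\bigl(P_X^{(\epsilon)} \| Q_X\bigr) = \tfrac12 f''(1) \, \chi^2\bigl(P_X^{(\epsilon)} \| Q_X\bigr) \bigl(1 + o(1)\bigr)
\quad \text{as } \epsilon \to 0,
\end{align*}
and the same expansion holds for the output pair $(P_Y^{(\epsilon)}, Q_Y)$, using that $gW_{Y|X}$ is bounded relative to $q_Y$. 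Because $f''(1)>0$, both divergences lie in $(0,\infty)$ for small enough $\epsilon$ whenever $g$ is not $Q_X$-a.e.\ zero.

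Finally I would take the limit. Dividing the two expansions,
\begin{align*}
\lim_{\epsilon \to 0^+} \frac{D_f\bigl(P_Y^{(\epsilon)} \| Q_Y\bigr)}{D_f\bigl(P_X^{(\epsilon)} \| Q_X\bigr)}
= \frac{\chi^2\bigl(P_Y^{(\epsilon)} \| Q_Y\bigr)}{\chi^2\bigl(P_X^{(\epsilon)} \| Q_X\bigr)},
\end{align*}
and the right-hand side equals a ratio of $\chi^2$-divergences induced by an honest pair $(P_X^{(\epsilon)}, Q_X)$, hence is at most $\mu_{\chi^2}(Q_X, W_{Y|X})$ in the wrong direction; to obtain the reverse inequality I would choose $g$ so that this ratio is arbitrarily close to $\mu_{\chi^2}(Q_X, W_{Y|X})$ (this is possible because, as noted, the $\chi^2$-ratio is scale-invariant, so every $P_X$ appearing in the supremum defining $\mu_{\chi^2}$ can be approached by rescaling to small $\epsilon$). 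Combined with the expansion above, this forces $\mu_f(Q_X, W_{Y|X}) \ge \mu_{\chi^2}(Q_X, W_{Y|X})$.

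The main technical obstacle is the uniform control needed to pass from the pointwise Taylor expansion of $f$ to the integral expansion of $D_f$: one must show that the remainder integrates to $o(\chi^2)$, which requires that the perturbation $g$ be bounded relative to $q$ so that $\epsilon g/q$ stays in the interval where the $C^2$ bound applies. This is precisely why one restricts attention to bounded $g/q$, and a standard truncation/approximation argument then shows that the supremum defining $\mu_{\chi^2}$ can still be approached within this restricted class of perturbations.
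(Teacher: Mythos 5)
The paper itself offers no proof of this proposition: it is quoted verbatim from \cite[Theorem~2]{PolyanskiyW17} as a known result, so there is no in-paper argument to compare against. Judged on its own merits, your local-perturbation proof is essentially the standard argument for the cited theorem, and it is sound: the quadratic exactness of $\chi^2$ along the line $p_\epsilon = q + \epsilon g$, the vanishing of the linear term by $\int g \, \mathrm{d}\mu = 0$ at both input and output, the uniform second-order Taylor control (valid because $\|g/q\|_\infty < \infty$ forces $|gW_{Y|X}|/q_Y \leq \|g/q\|_\infty$, so both arguments of $f$ stay in a shrinking neighborhood of $1$), and the cancellation of the factor $\tfrac12 f''(1) > 0$ in the ratio are all correct, and each perturbed pair is admissible in the supremum defining $\mu_f$ since $D_f(P_X^{(\epsilon)} \| Q_X) \in (0,\infty)$ for small $\epsilon$. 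The one step you rightly flag as needing work is the last one: the supremum defining $\mu_{\chi^2}$ ranges over all $P_X$ with $\chi^2(P_X\|Q_X) \in (0,\infty)$, whereas your limiting argument only reaches directions $g = p - q$ with $p/q$ bounded. On a finite alphabet with $Q_X$ supported everywhere (the setting in which the present paper actually invokes the proposition, cf.\ \eqref{ratio contractions - 1}, where $\min_x Q_X(x) > 0$) this restriction is vacuous, and the result there is exactly Raginsky's finite-alphabet version mentioned in the paper's Remark~1. In the general-alphabet case the truncation argument does go through, but it is not entirely free: one truncates $p/q$ at level $N$ and renormalizes, gets convergence of the input $\chi^2$ by monotone convergence, and then needs \emph{lower semicontinuity} of $\chi^2$ under the pushforward to conclude $\liminf_N$ of the output-to-input ratios is at least the ratio for $P_X$; stating that mechanism explicitly would close the only genuine gap in your write-up. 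It is also worth noting that your central device --- the second-order expansion $D_f \approx \tfrac12 f''(1)\, \chi^2$ under perturbations with ratios tending to $1$ --- is precisely the local-tightness mechanism the paper itself uses in Theorem~\ref{thm: SDPI-IS}~\ref{Th. 1.e}) (see \eqref{tight} and \eqref{tight-dual}), so your approach is fully consonant with the paper's toolkit even though the paper delegates this particular proposition to the literature.
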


\begin{remark}
A weaker version of \eqref{chi^2 contraction is minimal} was presented in
\cite[Proposition~II.6.15]{CohenKZ98} in the general alphabet setting, and the
result in \eqref{chi^2 contraction is minimal} was obtained in \cite[Theorem~3.3]{Raginsky16}
for finite alphabets.
\end{remark}

The following result provides an upper bound on the contraction coefficient for
a subclass of $f$-divergences in the finite alphabet setting.
\begin{proposition} \cite[Theorem~8]{MakurZ18}
\label{prop.: MakurZ18}
Let $f \colon [0, \infty) \to \Reals$ be a continuous convex function which is three
times differentiable at unity with $f(1)=0$ and $f''(1)>0$, and let it further satisfy
the following conditions:
\begin{enumerate}[a)]
\item  \label{MakurZ18.1}
\begin{align}
\label{Gilardoni}
\Bigl( f(t)-f'(1) \, (t-1) \Bigr) \left(1 - \frac{f^{(3)}(1) (t-1)}{3 f''(1)} \right)
\geq \tfrac12 f''(1) (t-1)^2, \quad \forall \, t > 0.
\end{align}
\item  \label{MakurZ18.2}
The function $g \colon (0, \infty) \to \Reals$, given by $g(t):= \frac{f(t)-f(0)}{t}$
for all $t>0$, is concave.
\end{enumerate}
Then, for a probability mass function $Q_X$ supported over a finite set $\set{X}$,
\begin{align}
\mu_f(Q_X, W_{Y|X}) \leq \left( \frac{f'(1)+f(0)}{f''(1) \;
\underset{x \in \set{X}}{\min} \, Q_X(x)} \right) \mu_{\chi^2}(Q_X, W_{Y|X}).
\end{align}
\end{proposition}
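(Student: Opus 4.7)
The plan is to establish a two-sided comparison between $D_f(P\|Q_X)$ and $\chi^2(P\|Q_X)$, namely
\[
f''(1)\,q_{\min}\,\chi^2(P\|Q_X) \;\leq\; D_f(P\|Q_X) \;\leq\; \bigl(f'(1)+f(0)\bigr)\,\chi^2(P\|Q_X),
\]
valid for every pmf $P$ on the (finite) support of $Q_X$, where $q_{\min} := \min_{x \in \set{X}} Q_X(x) > 0$. Once this sandwich is in hand, for any $P_X$ with $D_f(P_X\|Q_X)\in(0,\infty)$ I can chain
\[
\frac{D_f(P_Y\|Q_Y)}{D_f(P_X\|Q_X)} \;\leq\; \frac{(f'(1)+f(0))\,\chi^2(P_Y\|Q_Y)}{f''(1)\,q_{\min}\,\chi^2(P_X\|Q_X)} \;\leq\; \frac{f'(1)+f(0)}{f''(1)\,q_{\min}}\,\mu_{\chi^2}(Q_X, W_{Y|X}),
\]
and taking the supremum over $P_X$ produces the stated contraction inequality.

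For the upper bound I would use condition (b). Writing $f(t) = f(0) + t\,g(t)$, with $g$ concave, $g(1)=-f(0)$, and $g'(1)=f'(1)+f(0)$, I get $D_f(P\|Q) = f(0) + \expectation_P[g(p/q)]$. Jensen's inequality applied under $P$, combined with the identity $\expectation_P[p/q] = \int p^2/q\,\mathrm{d}\mu = 1 + \chi^2(P\|Q)$, yields $\expectation_P[g(p/q)] \leq g\bigl(1 + \chi^2(P\|Q)\bigr)$. The tangent-line inequality for concave $g$ at $t=1$ then gives $f(0) + g(1+x) \leq (f'(1)+f(0))\,x$, so that $D_f(P\|Q) \leq (f'(1)+f(0))\,\chi^2(P\|Q)$ as desired.

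For the lower bound I would use condition (a) together with the finite-alphabet structure. Introducing the tilted function $\hat{f}(t) := f(t)-f'(1)(t-1)$, which satisfies $D_{\hat f}=D_f$, $\hat f(1)=\hat f'(1)=0$, and $\hat f''(1)=f''(1)$, I would write $D_f(P\|Q_X) = \sum_x Q_X(x)\,\hat{f}(t_x)$ with $t_x := P(x)/Q_X(x) \in [0, 1/q_{\min}]$. Since each weight satisfies $Q_X(x)\geq q_{\min}$ and the argument $t_x$ is confined to this bounded range, the quantitative content of condition (a) is to be translated into the integrated bound $D_f(P\|Q_X) \geq f''(1)\,q_{\min}\,\chi^2(P\|Q_X)$.

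The main obstacle will be precisely this lower bound. Condition (a) is \emph{not} strong enough to yield a pointwise quadratic bound $\hat f(t)\geq c\,(t-1)^2$ with $c=f''(1)\,q_{\min}$: the multiplier $1-f^{(3)}(1)(t-1)/(3f''(1))$ can be less than $1$ or even negative on $[0,1/q_{\min}]$, and simple examples such as $f(t)=t\log t$ show that the pointwise inequality, and indeed the Gilardoni form $D_f\geq\tfrac12 f''(1)\chi^2$, can fail at large $t$. The argument must therefore exploit the zero-mean identity $\sum_x Q_X(x)(t_x-1)=0$ to absorb those $t_x$ at which the pointwise bound fails against those where it holds with slack, using condition (a) to provide quantitative control of the exchange and the lower bound $Q_X(x)\geq q_{\min}$ to produce the specific constant $f''(1)\,q_{\min}$.
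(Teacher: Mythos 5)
Your overall architecture---a two-sided sandwich between multiples of $\chi^2$, applied to the output pair in the numerator and the input pair in the denominator, followed by taking suprema---is exactly the route of the cited proof of \cite[Theorem~8]{MakurZ18} (the paper itself quotes this proposition as background and only describes the proof's structure in Remark~\ref{remark: MakurZ18}). Your upper bound is correct and complete: writing $f(t)=f(0)+t\,g(t)$, applying Jensen's inequality under $P$ to the concave $g$, and then the tangent-line bound at $t=1$ with $g(1)=-f(0)$, $g'(1)=f'(1)+f(0)$, is precisely the argument of \cite[Lemma~A.2]{Raginsky16} by which \cite{MakurZ18} obtains $D_f(P_Y\|Q_Y)\leq\bigl(f'(1)+f(0)\bigr)\,\chi^2(P_Y\|Q_Y)$.

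The genuine gap is the lower bound $D_f(P_X\|Q_X)\geq f''(1)\,q_{\min}\,\chi^2(P_X\|Q_X)$, with $q_{\min}:=\underset{x\in\set{X}}{\min}\,Q_X(x)$, which you state as a target but do not prove: your proposed ``absorption'' of the points where a pointwise quadratic bound fails, via the zero-mean identity $\sum_x Q_X(x)(t_x-1)=0$, is a heuristic with no mechanism for how condition \eqref{Gilardoni} enters quantitatively, and it is doubtful it can be carried out directly, since \eqref{Gilardoni} controls $f(t)-f'(1)(t-1)$ only through a multiplier that is affine in $t$ and can be large on $[0,1/q_{\min}]$. The missing idea is to route through the total variation distance. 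Condition \eqref{Gilardoni} is exactly the hypothesis of the Pinsker-type inequality \cite[Theorem~3]{Gilardoni10} (this is what Remark~\ref{remark: MakurZ18} refers to), which gives $D_f(P\|Q)\geq\tfrac12\,f''(1)\,\|P-Q\|_1^2$ with the sharp constant; on a finite alphabet one then has the elementary bridge
\begin{align*}
\chi^2(P\|Q) \;\leq\; \frac{1}{q_{\min}}\sum_{x}\bigl(P(x)-Q(x)\bigr)^2
\;\leq\; \frac{\|P-Q\|_1^2}{2\,q_{\min}},
\end{align*}
where the last step holds because the positive and negative parts of $P-Q$ each have $\ell_1$-mass $\tfrac12\|P-Q\|_1$, and $\|\cdot\|_2^2\leq\|\cdot\|_1^2$ on each part separately. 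Combining the two displays yields $D_f(P_X\|Q_X)\geq f''(1)\,q_{\min}\,\chi^2(P_X\|Q_X)$---note the factor $\tfrac12$ from the Pinsker-type bound is exactly cancelled by the factor $2$ in the $\ell_1$--$\chi^2$ bridge, producing precisely the stated constant---after which your chaining argument goes through verbatim.
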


For the presentation of our majorization inequalities
for $f$-divergences and related entropy bounds (see
Section~\ref{subsection: Majorization and entropy bounds}),
essential definitions and basic results are next provided
(see, e.g., \cite[Chapter~2]{Bhatia}, \cite{MarshallOA}
and \cite[Chapter~13]{Steele}).
Let $P$ be a probability mass function defined on a finite set $\set{X}$, let
$p_{\max}$ be the maximal mass of $P$, and let $G_P(k)$ be the sum of
the $k$ largest masses of $P$ for $k \in \{1, \ldots, |\set{X}|\}$ (hence,
it follows that $G_P(1) = p_{\max}$ and $G_P(|\set{X}|) = 1$).

\begin{definition}
\label{definition: majorization}
Consider discrete probability mass functions $P$ and $Q$ defined on
a finite set $\set{X}$. It is said that $P$ is majorized by $Q$
(or $Q$ majorizes $P$), and it is denoted by $P \prec Q$, if
$G_P(k) \leq G_Q(k)$ for all $k \in \{1, \ldots, |\set{X}|\}$ (recall
that $G_P(|\set{X}|)=G_Q(|\set{X}|)=1$).
\end{definition}

A unit mass majorizes any other distribution; on the other hand, the
equiprobable distribution on a finite set is majorized by any other
distribution defined on the same set.

\begin{definition}
Let $\set{P}_n$ denote the set of all the probability mass functions
that are defined on $\set{A}_n := \{1, \ldots, n\}$.
A function $f \colon \set{P}_n \to \Reals$ is said to be {\em Schur-convex}
if for every $P,Q \in \set{P}_n$ such that $P \prec Q$, we have $f(P) \leq f(Q)$.
Likewise, $f$ is said to be {\em Schur-concave} if $-f$ is Schur-convex, i.e.,
$P,Q \in \set{P}_n$ and $P \prec Q$ imply that $f(P) \geq f(Q)$.
\end{definition}

Characterization of Schur-convex functions is provided, e.g., in \cite[Chapter~3]{MarshallOA}.
For example, there exist some connections between convexity and Schur-convexity (see, e.g.,
\cite[Section~3.C]{MarshallOA} and \cite[Chapter~2.3]{Bhatia}). However, a
Schur-convex function is not necessarily convex (\cite[Example~2.3.15]{Bhatia}).

Finally, what is the connection between data processing and majorization, and
why these types of inequalities are both considered in the same manuscript ?
This connection is provided in the following fundamental well-known result
(see, e.g., \cite[Theorem~2.1.10]{Bhatia}, \cite[Theorem~B.2]{MarshallOA}
and \cite[Chapter~13]{Steele}):
\begin{proposition}
\label{proposition: majorization and DP}
Let $P$ and $Q$ be probability mass functions defined on a finite set $\set{A}$.
Then, $P \prec Q$ if and only if there is a doubly-stochastic transformation
$W_{Y|X} \colon \set{A} \to \set{A}$
(i.e., $\underset{x \in \set{A}}{\sum} W_{Y|X}(y|x) = 1$ for all $y \in \set{A}$, and
$\underset{y \in \set{A}}{\sum} W_{Y|X}(y|x) = 1$ for all $x \in \set{A}$ with
$W_{Y|X}(\cdot|\cdot) \geq 0$) such that
$Q \rightarrow W_{Y|X} \rightarrow P$. In other words, $P \prec Q$ if and only if
in their representation as column vectors,
there exists a doubly-stochastic matrix ${\bf{W}}$ (i.e., a square matrix with non-negative
entries such that the sum of each column or each row in ${\bf{W}}$ is equal to~1)
such that $P = {\bf{W}} Q$.
\end{proposition}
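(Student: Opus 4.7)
The plan is to prove the two directions separately. For the easy direction ($\Leftarrow$), I would assume $P = \mathbf{W}Q$ for some doubly-stochastic $\mathbf{W}$, fix $k \in \{1, \ldots, |\set{A}|\}$, and (after permuting $P$ so that $P_1 \geq \cdots \geq P_n$, which is permissible because permutation matrices are doubly stochastic and compositions of doubly-stochastic matrices remain doubly stochastic) write
\begin{align}
G_P(k) = \sum_{i=1}^{k} P_i = \sum_{i=1}^{k} \sum_{j} W_{ij} Q_j = \sum_{j} c_j \, Q_j, \qquad c_j := \sum_{i=1}^{k} W_{ij}.
\end{align}
The row- and column-sum constraints on $\mathbf{W}$ force $c_j \in [0,1]$ and $\sum_j c_j = k$. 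After reordering $Q$ so that $Q_1 \geq \cdots \geq Q_n$, a standard rearrangement argument (maximizing a linear functional over the convex polytope $\{c \in [0,1]^n : \sum_j c_j = k\}$, whose vertices are $0/1$ indicators of $k$-subsets) yields $\sum_j c_j Q_j \leq \sum_{j=1}^{k} Q_j = G_Q(k)$, which is the definition of $P \prec Q$.

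For the harder direction ($\Rightarrow$), I would proceed by induction using \emph{$T$-transforms} (elementary doubly-stochastic matrices of the form $T = (1-t)\mathbf{I} + t \Pi_{jk}$, where $\Pi_{jk}$ is the transposition of coordinates $j$ and $k$ and $t \in [0,1]$). After sorting both $P$ and $Q$ into decreasing order, if $P \neq Q$, I would locate the smallest index $j$ with $P_j \neq Q_j$; the majorization hypothesis together with $\sum_{i<j}P_i = \sum_{i<j}Q_i$ forces $P_j < Q_j$, and then mass conservation forces some later index $k > j$ with $P_k > Q_k$. Setting $\varepsilon := \min(Q_j - P_j, \, P_k - Q_k) > 0$ and choosing $t := \varepsilon/(Q_j - Q_k)$, the vector $Q' := TQ$ agrees with $Q$ except that $\varepsilon$ units of mass have been moved from coordinate $j$ to coordinate $k$. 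The key technical verification is that $P \prec Q'$ still holds and that the number of indices where $Q$ and $P$ differ strictly decreases (either $Q'_j = P_j$ or $Q'_k = P_k$ by choice of $\varepsilon$). Iterating at most $|\set{A}|-1$ times produces $P = T^{(N)} \cdots T^{(1)} Q$, and since the product of doubly-stochastic matrices is doubly stochastic, the desired $\mathbf{W}$ is obtained.

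The main obstacle I expect is the bookkeeping in the induction step: showing that a single $T$-transform simultaneously (i) preserves majorization, (ii) strictly reduces a natural progress measure (e.g., the number of disagreement indices, or the $\ell^1$ distance), and (iii) keeps the resulting vector within the sorted cone up to a permutation that can be absorbed into the next iteration. The majorization-preservation step amounts to verifying $G_{Q'}(\ell) \geq G_P(\ell)$ for each $\ell$: for $\ell < j$ or $\ell \geq k$ the partial sums are unchanged, while for $j \leq \ell < k$ the partial sum $G_{Q'}(\ell) = G_Q(\ell) - \varepsilon$ must still dominate $G_P(\ell)$, which follows from the gap $G_Q(\ell) - G_P(\ell) \geq Q_j - P_j \geq \varepsilon$ guaranteed by the choice of $\varepsilon$ and the initial agreement on coordinates $1, \ldots, j-1$.

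As an alternative route, I could invoke Birkhoff's theorem (every doubly-stochastic matrix is a convex combination of permutation matrices) together with the Hardy–Littlewood–Pólya characterization, but the $T$-transform construction above is essentially self-contained and yields an explicit $\mathbf{W}$, which I find more transparent.
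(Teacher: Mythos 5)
Your ``if'' direction is correct, and your $T$-transform strategy for the ``only if'' direction is the classical Hardy--Littlewood--P\'{o}lya argument; note that the paper itself does not prove this proposition but only cites it (see \cite[Theorem~2.1.10]{Bhatia}, \cite{MarshallOA}, \cite{Steele}), and the proof in those references is precisely the $T$-transform induction you outline, so your route matches the standard one rather than diverging from it.

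There is, however, one step that fails as written: you take $k$ to be \emph{some} index greater than $j$ with $P_k > Q_k$, and then claim that $G_Q(\ell) - G_P(\ell) \geq Q_j - P_j \geq \varepsilon$ for $j \leq \ell < k$, justified by ``the initial agreement on coordinates $1, \ldots, j-1$''. That agreement only controls the gap at $\ell = j$; for $\ell > j$ the gap can dip below $\varepsilon$ whenever some intermediate coordinate has $P_i > Q_i$, which is possible when $k$ is not chosen minimal. Concretely, take $P = (0.30, 0.29, 0.21, 0.20)$ and $Q = (0.32, 0.28, 0.22, 0.18)$, both decreasing, with $P \prec Q$ since $G_P = (0.30, 0.59, 0.80, 1)$ and $G_Q = (0.32, 0.60, 0.82, 1)$. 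Here $j=1$, and both $k=2$ and $k=4$ satisfy $P_k > Q_k$; choosing $k=4$ gives $\varepsilon = \min(0.02, 0.02) = 0.02$, while $G_Q(2) - G_P(2) = 0.01 < \varepsilon$, and indeed the transform produces $Q' = (0.30, 0.28, 0.22, 0.20)$ with $G_{Q'}(2) = 0.58 < 0.59 = G_P(2)$, so $P \not\prec Q'$ and the induction stalls. The repair is one line: take $k$ to be the \emph{smallest} index exceeding $j$ with $P_k > Q_k$. Then $P_i \leq Q_i$ for all $j < i < k$, whence $G_Q(\ell) - G_P(\ell) = (Q_j - P_j) + \sum_{i=j+1}^{\ell} (Q_i - P_i) \geq Q_j - P_j \geq \varepsilon$ for every $j \leq \ell < k$, exactly as your verification requires. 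With this choice the remaining bookkeeping you flagged is sound: even though $TQ$ need not be decreasing, its partial sums in the given order lower-bound the corresponding sums of its largest entries, so $P \prec TQ$ can be checked without re-sorting, and any re-sorting permutation is itself doubly stochastic and is absorbed into $\mathbf{W}$, as you say.
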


\subsection{Contributions}
This paper is focused on the derivation of data-processing and majorization
inequalities for $f$-divergences, and it applies these inequalities to
information theory and statistics.

The starting point for obtaining strong data-processing inequalities in this
paper relies on the derivation of bounds on the difference
$D_f(P_X \| Q_X) - D_f(P_Y \| Q_Y)$ where $(P_X, Q_X)$ and $(P_Y, Q_Y)$ denote,
respectively, pairs of input and output probability distributions with a given
stochastic transformation $W_{Y|X}$ (i.e., $P_X \to W_{Y|X} \to P_Y$, and
$Q_X \to W_{Y|X} \to Q_Y$). These bounds are expressed in terms of the respective
difference in the Pearson's or Neyman's $\chi^2$-divergence, and they hold for
all $f$-divergences (see Theorems~\ref{thm: SDPI-IS}--\ref{Thm: DMS-DMC}).
By a different approach, we derive an upper bound on the contraction coefficient
for $f$-divergences of a certain type, which gives an alternative strong
data-processing inequality for the considered type of $f$-divergences (see
Theorems~\ref{theorem: contraction coef}--\ref{theorem: DMS/DMC - ver2}). In
this framework, a parametric subclass of $f$-divergences is introduced, its
interesting properties are studied (see Theorem~\ref{thm: f_alpha-divergence}),
all the data-processing inequalities which are derived in this paper are applied
to this subclass, and these inequalities are exemplified numerically to examine
their tightness (see Section~\ref{subsection: Illustration of Thm. DMS-DMC}).

This paper also derives majorization inequalities for $f$-divergences where
part of these inequalities rely on the earlier data-processing inequalities
(see Theorem~\ref{thm: majorization Df}).
A different approach, which relies on the concept of majorization, serves to
derive tight bounds on the maximal value of an $f$-divergence from a probability
mass function $P$ to an equiprobable distribution; the maximization is
carried over all $P$ with a fixed finite support where the ratio of their maximal
to minimal probability masses does not exceed a given value (see
Theorem~\ref{thm: LB/UB f-div}). These bounds
lead to accurate asymptotic results which apply to general $f$-divergences,
and they strengthen and generalize recent results of this type with respect to the
relative entropy \cite{CicaleseGV18}, and the R\'{e}nyi divergence \cite{Sason18b}.
Furthermore, we explore in Theorem~\ref{thm: LB/UB f-div} the convergence rates to
the asymptotic results. Data-processing and majorization inequalities
also serve to strengthen the Schur-concavity property of the
Tsallis entropy (see Theorem~\ref{thm: bounds Tsallis}), showing by a
comparison to earlier bounds in \cite{HoS-IT10} and \cite{HoS-ISIT15}
that none of these bounds is superseded by the other.
Further analytical results which are related to the specialization of
our central result on majorization inequalities in Theorem~\ref{thm: LB/UB f-div},
applied to several important sub-classes of $f$-divergences, are provided in
Section~\ref{subsection: Illustration of Thm. f_alpha-divergence} (including
Theorem~\ref{theorem: Delta_alpha}). A quantity which is involved in our majorization
inequalities in Theorem~\ref{thm: LB/UB f-div} is interpreted by relying on a
variational representation of $f$-divergences (see Theorem~\ref{thm: conjugate}).

As an application of the data-processing inequalities for $f$-divergences,
the setup of list decoding is further studied, reproducing
in a unified way some known bounds on the list decoding error probability,
and deriving new bounds for fixed and variable list sizes (see
Theorems~\ref{theorem: generalized Fano Df}--\ref{theorem: LB - variable list size}).

As an application of the majorization inequalities in this paper, we study
properties of a measure which is used to quantify the quality of
approximating probability mass functions, induced by the leaves of a
Tunstall tree, by an equiprobable distribution (see
Theorem~\ref{theorem: closeness to equiprobable}). An application of
majorization inequalities for the relative entropy is used to derive
a sufficient condition, expressed in terms of the principal and secondary
real branches of the Lambert $W$ function \cite{Corless96}, for asserting
the proximity of compression rates of finite-length (lossless and
variable-to-fixed) Tunstall codes to the Shannon entropy of a memoryless
and stationary discrete source (see Theorem~\ref{theorem: p_min Tunstall}).

\subsection{Paper Organization}
The paper is structured as follows:
Section~\ref{section: main results} provides our main new results on
data-processing and majorization inequalities for $f$-divergences and related
entropy measures. Illustration of the theorems in Section~\ref{section: main results},
and further mathematical results which follow from these theorems are introduced
in Section~\ref{section: Examples}. Applications in information theory and statistics
are considered in Section~\ref{section: applications}. Proofs of all theorems are
relegated to the appendices, which form a major part of this paper.

\section{Main Results on $f$-divergences}
\label{section: main results}
This section provides strong data-processing inequalities for $f$-divergences
(see Section~\ref{subsection: new DPIs for f-divergences}), followed by a study
of a new subclass of $f$-divergences (see
Section~\ref{subsection: a new class of f-divergences}) which later serves to
exemplify our data-processing inequalities. The third part of this section
(see Section~\ref{subsection: Majorization and entropy bounds}) provides
majorization inequalities for $f$-divergences, and for the Tsallis entropy,
whose derivation relies in part on the new data-processing inequalities.

\subsection{Data-processing inequalities for $f$-divergences}
\label{subsection: new DPIs for f-divergences}

Strong data-processing inequalities are provided in the following,
bounding the difference $D_f(P_X \| Q_X) - D_f(P_Y \| Q_Y)$ and ratio
$\frac{D_f(P_Y \| Q_Y)}{D_f(P_X \| Q_X)}$ where $(P_X, Q_X)$ and
$(P_Y, Q_Y)$ denote, respectively, pairs of input and output probability
distributions with a given stochastic transformation.

\begin{theorem} \label{thm: SDPI-IS}
Let $\set{X}$ and $\set{Y}$ be finite or countably infinite sets,
let $P_X$ and $Q_X$ be probability mass functions that are
supported on $\set{X}$, and let
\begin{align}
& \xi_1 := \inf_{x \in \set{X}} \frac{P_X(x)}{Q_X(x)} \in [0,1], \label{xi1} \\
& \xi_2 := \sup_{x \in \set{X}} \frac{P_X(x)}{Q_X(x)} \in [1, \infty] \label{xi2}.
\end{align}
Let $W_{Y|X} \colon \set{X} \to \set{Y}$ be a
stochastic transformation such that for every $y \in \set{Y}$, there
exists $x \in \set{X}$ with $W_{Y|X}(y|x) > 0$, and let (see
\eqref{23062019a1} and \eqref{23062019a2})
\begin{align}
& P_Y := P_X W_{Y|X}, \label{transP} \\
& Q_Y := Q_X W_{Y|X}. \label{transQ}
\end{align}
Furthermore, let $f \colon (0, \infty) \to \Reals$ be a convex function with
$f(1)=0$, and let the non-negative constant $c_f := c_f(\xi_1, \xi_2)$ satisfy
\begin{align}
\label{condition on c_f}
f'_{+}(v) - f'_{+}(u) \geq 2 c_f \, (v-u), \quad
\forall \, u,v \in \set{I}, \; u<v
\end{align}
where $f'_{+}$ denotes the right-side derivative of $f$, and
\begin{align}
\label{I_interval}
\set{I} := \set{I}(\xi_1, \xi_2) = [\xi_1, \xi_2] \cap (0, \infty).
\end{align}
Then,
\begin{enumerate}[a)]

\item \label{Th. 1.a}
\begin{align}
\label{key}
D_f(P_X \| Q_X) - D_f(P_Y \| Q_Y) &\geq c_f(\xi_1, \xi_2)
\left[ \chi^2(P_X \| Q_X) - \chi^2(P_Y \| Q_Y) \right] \\
&\geq 0, \label{DPI1}
\end{align}
where equality holds in \eqref{key} if $D_f(\cdot \| \cdot)$ is Pearson's
$\chi^2$-divergence with $c_f \equiv 1$.

\item \label{Th. 1.b}
If $f$ is twice differentiable on $\set{I}$, then
the largest possible coefficient in the right side of
\eqref{condition on c_f} is given by
\begin{align}  \label{c_f}
c_f(\xi_1, \xi_2) = \tfrac12 \, \inf_{t \in \set{I}(\xi_1, \xi_2)} f''(t).
\end{align}

\item \label{Th. 1.c}
Under the assumption in Item~\ref{Th. 1.b}), the following dual inequality
also holds:
\begin{align}
\label{key-dual}
D_f(P_X \| Q_X) - D_f(P_Y \| Q_Y) &\geq
c_{f^\ast}\hspace*{-0.1cm}\left(\tfrac1{\xi_2}, \tfrac1{\xi_1} \right) \,
\left[ \chi^2(Q_X \| P_X) - \chi^2(Q_Y \| P_Y) \right] \\
&\geq 0, \label{DPI2}
\end{align}
where $f^\ast \colon (0, \infty) \to \Reals$ is the dual convex
function which is given by
\begin{align}
\label{dual f}
f^\ast(t) := t \, f\biggl(\frac1t\biggr), \quad \forall \, t>0,
\end{align}
and the coefficient in the right side of \eqref{key-dual}
satisfies
\begin{align}
\label{c_f^*}
c_{f^\ast}\hspace*{-0.1cm}\left(\tfrac1{\xi_2}, \tfrac1{\xi_1} \right) =
\tfrac12 \, \inf_{t \in \set{I}(\xi_1, \xi_2)} \{t^3 \, f''(t) \}
\end{align}
with the convention that $\frac1{\xi_1} = \infty$ if $\xi_1 = 0$.
Equality holds in \eqref{key-dual} if $D_f(\cdot \| \cdot)$ is Neyman's
$\chi^2$-divergence (i.e., $D_f(P\|Q) := \chi^2(Q\|P)$ for all $P$ and
$Q$) with $c_{f^\ast} \equiv 1$.

\item \label{Th. 1.d}
Under the assumption in Item~\ref{Th. 1.b}), if
\begin{align}
\label{e_f}
e_f(\xi_1, \xi_2) := \tfrac12 \, \sup_{t \in \set{I}(\xi_1, \xi_2)} f''(t) < \infty,
\end{align}
then,
\begin{align}
\label{UB1}
D_f(P_X \| Q_X) - D_f(P_Y \| Q_Y)
& \leq e_f(\xi_1, \xi_2) \left[ \chi^2(P_X \| Q_X) - \chi^2(P_Y \| Q_Y) \right].
\end{align}
Furthermore,
\begin{align}
\label{UB2}
D_f(P_X \| Q_X) - D_f(P_Y \| Q_Y)
& \leq e_{f^\ast}\hspace*{-0.1cm}\left(\tfrac1{\xi_2}, \tfrac1{\xi_1} \right)
\left[ \chi^2(Q_X \| P_X) - \chi^2(Q_Y \| P_Y) \right]
\end{align}
where the coefficient in the right side of \eqref{UB2} satisfies
\begin{align}
\label{e_f^*}
e_{f^\ast}\hspace*{-0.1cm}\left(\tfrac1{\xi_2}, \tfrac1{\xi_1} \right)
= \tfrac12 \, \sup_{t \in \set{I}(\xi_1, \xi_2)} \{t^3 \, f''(t) \},
\end{align}
which is assumed to be finite.
Equalities hold in \eqref{UB1} and \eqref{UB2} if $D_f(\cdot \| \cdot)$ is
Pearson's or Neyman's $\chi^2$-divergence with $e_f \equiv 1$ or
$e_{f^\ast} \equiv 1$, respectively.

\item  \label{Th. 1.e}
The lower and upper bounds in \eqref{key}, \eqref{key-dual}, \eqref{UB1}
and \eqref{UB2} are locally tight. More precisely, let $\{P_X^{(n)}\}$ be
a sequence of probability mass functions defined on $\set{X}$ and pointwise
converging to $Q_X$ which is supported on $\set{X}$, and let $P_Y^{(n)}$
and $Q_Y$ be the probability mass functions defined on $\set{Y}$ via
\eqref{transP} and \eqref{transQ} with inputs $P_X^{(n)}$ and $Q_X$,
respectively. Suppose that
\begin{align}
\label{13062019a1}
& \lim_{n \to \infty} \inf_{x \in \set{X}} \frac{P_X^{(n)}(x)}{Q_X(x)} = 1, \\
\label{13062019a2}
& \lim_{n \to \infty} \sup_{x \in \set{X}} \frac{P_X^{(n)}(x)}{Q_X(x)} = 1.
\end{align}
If $f$ has a continuous second derivative at unity, then
\begin{align}
\label{tight}
& \lim_{n \to \infty} \frac{D_f(P_X^{(n)} \| Q_X) - D_f(P_Y^{(n)} \| Q_Y)}
{\chi^2(P_X^{(n)} \| Q_X) - \chi^2(P_Y^{(n)} \| Q_Y)} = \tfrac12 f''(1), \\[0.1cm]
\label{tight-dual}
& \lim_{n \to \infty} \frac{D_f(P_X^{(n)} \| Q_X) - D_f(P_Y^{(n)} \| Q_Y)}
{\chi^2(Q_X \| P_X^{(n)}) - \chi^2(Q_Y \| P_Y^{(n)})} = \tfrac12 f''(1),
\end{align}
which indicate the local tightness of the lower
and upper bounds in Items~\ref{Th. 1.a})--\ref{Th. 1.d}).
\end{enumerate}
\end{theorem}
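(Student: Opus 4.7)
The plan is to reduce each of the five items to the classical Csisz\'{a}r data-processing inequality (DPI) for $f$-divergences, applied to auxiliary convex functions built from $f$ and the quadratic $(t-1)^2$. The key structural observation I would exploit is that $\frac{P_Y(y)}{Q_Y(y)} = \sum_x \frac{P_X(x)}{Q_X(x)} \cdot \frac{Q_X(x) W_{Y|X}(y|x)}{Q_Y(y)}$ is a convex combination of the input likelihood ratios, so both the input and output ratios take values in $\set{I}(\xi_1,\xi_2)$. Consequently any auxiliary generator need only be convex on $\set{I}$, since its values off $\set{I}$ do not affect the divergences in \eqref{eq:fD} and may be replaced by any convex extension to $(0,\infty)$.

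For Item~\ref{Th. 1.a}, I would define $g(t) := f(t) - c_f (t-1)^2$ on $\set{I}$. The hypothesis \eqref{condition on c_f} is precisely the statement that $g'_{+}(v) - g'_{+}(u) \geq 0$ for $u<v$ in $\set{I}$, i.e., $g$ is convex on $\set{I}$, and clearly $g(1)=0$. Applying the standard DPI to $D_g$ yields $D_g(P_X\|Q_X) \geq D_g(P_Y\|Q_Y) \geq 0$, which upon rearrangement (using linearity of $g \mapsto D_g$) gives \eqref{key}; the trailing inequality in \eqref{DPI1} follows by invoking DPI once more for the purely quadratic generator, which is convex on all of $\Reals$, to obtain $\chi^2(P_X\|Q_X) \geq \chi^2(P_Y\|Q_Y)$. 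Item~\ref{Th. 1.b} is then immediate, since for twice-differentiable $f$ the condition \eqref{condition on c_f} is equivalent to $f''(t) \geq 2 c_f$ pointwise on $\set{I}$, whose largest admissible constant is $c_f = \tfrac12 \inf_{\set{I}} f''$.

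For Items~\ref{Th. 1.c} and~\ref{Th. 1.d} I would proceed by duality and by a mirror construction. The identity $D_f(P\|Q) = D_{f^\ast}(Q\|P)$, together with the direct computation $(f^\ast)''(s) = s^{-3} f''(1/s)$, allows me to apply Item~\ref{Th. 1.a} to $f^\ast$ with the ratio interval $[1/\xi_2, 1/\xi_1]$; the change of variables $t = 1/s$ converts the infimum of $(f^\ast)''$ over this interval into $\inf_{\set{I}} \{t^3 f''(t)\}$, which produces \eqref{key-dual} with the constant in \eqref{c_f^*}. For Item~\ref{Th. 1.d}, I would set $h(t) := e_f (t-1)^2 - f(t)$; the definition \eqref{e_f} is exactly the condition $h''(t) \geq 0$ on $\set{I}$, so $h$ is convex on $\set{I}$ with $h(1) = 0$, and DPI applied to $D_h$ reverses the direction of the bound and yields \eqref{UB1}. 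The dual upper bound \eqref{UB2} then follows from the same $f \leftrightarrow f^\ast$ duality used in Item~\ref{Th. 1.c}.

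Finally, for Item~\ref{Th. 1.e}, the local tightness follows by sandwiching the ratio $[D_f(P_X^{(n)}\|Q_X) - D_f(P_Y^{(n)}\|Q_Y)]/[\chi^2(P_X^{(n)}\|Q_X) - \chi^2(P_Y^{(n)}\|Q_Y)]$ between $c_f(\xi_1^{(n)}, \xi_2^{(n)})$ and $e_f(\xi_1^{(n)}, \xi_2^{(n)})$ via Items~\ref{Th. 1.a} and~\ref{Th. 1.d}; the hypotheses \eqref{13062019a1}--\eqref{13062019a2} contract $\set{I}(\xi_1^{(n)}, \xi_2^{(n)})$ to $\{1\}$, and continuity of $f''$ at unity forces both bounds to $\tfrac12 f''(1)$, giving \eqref{tight}, while \eqref{tight-dual} is obtained identically from the dual bounds. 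The main subtlety I anticipate is justifying DPI for a generator that is a priori convex only on $\set{I}$ rather than on $(0,\infty)$; this is handled by the extension remark in the first paragraph, since the divergence is insensitive to the generator's values outside the range of the likelihood ratios.
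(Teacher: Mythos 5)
Your proposal is correct, but it proves the central inequalities by a genuinely different mechanism than the paper. For Items~a) and~d), you apply the ordinary data-processing inequality to the auxiliary generators $g := f - c_f\,(\cdot-1)^2$ and $h := e_f\,(\cdot-1)^2 - f$, which are convex on $\set{I}(\xi_1,\xi_2)$ by \eqref{condition on c_f} (respectively by the definition of $e_f$), vanish at $1$, and can be extended convexly to $(0,\infty)$ by tangent lines at the finite endpoints of $\set{I}$ --- a legitimate step, since those endpoints are interior to $(0,\infty)$ where the convex $f$ has finite one-sided derivatives, and since both input and output likelihood ratios lie in $\set{I}$ (your convex-combination observation is exactly the paper's \eqref{min/max ratio}--\eqref{range P_Y/Q_Y}). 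The paper instead invokes an exact identity for the gap $D_f(P_X\|Q_X)-D_f(P_Y\|Q_Y)$ (a generalization of Collet's theorem), expressing it as a weighted sum of Bregman-type differences $\Delta(u,v)=f(u)-f(v)-f'_{+}(v)(u-v)$, bounds each term via the Liese--Vajda Lebesgue--Stieltjes integral representation, and then evaluates the resulting double sum explicitly to obtain the $\chi^2$ gap. Your route is shorter and more elementary --- no exact gap formula and no integral representation are needed, and the $\chi^2$ computation is absorbed into the linearity $D_g = D_f - c_f\,\chi^2$ --- while the paper's route yields the exact gap expression as a reusable intermediate (its $\chi^2$-difference evaluation is cited again in the proof of Theorem~\ref{theorem: refined Fano's inequality}) and makes the equality cases transparent. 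Two small points deserve explicit care in a write-up: the splitting $D_g = D_f - c_f\,\chi^2$ must avoid an $\infty-\infty$ ambiguity (the same implicit finiteness caveat underlies the paper's identity), and in Item~e) one should note that \eqref{13062019a1}--\eqref{13062019a2} place $\set{I}(\xi_{1,n},\xi_{2,n})$ inside the neighborhood of unity where $f''$ exists and is continuous for all large $n$, so that the sandwich between $c_f(\xi_{1,n},\xi_{2,n})$ and $e_f(\xi_{1,n},\xi_{2,n})$ is applicable; with that said, your Items~b), c) and e) coincide in substance with the paper's arguments (duality via $(f^\ast)''(t)=t^{-3}f''(1/t)$ and the change of variables $t\mapsto 1/t$, and the shrinking-interval limit, respectively).
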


\begin{proof}
See Appendix~\ref{appendix: SDPI-IS}.
\end{proof}

An application of Theorem~\ref{thm: SDPI-IS} gives the following result.

\begin{theorem} \label{Thm: DMS-DMC}
Let $\set{X}$ and $\set{Y}$ be finite or countably infinite sets, let $n \in \naturals$
be an arbitrary natural number,
and let $X^n := (X_1, \ldots, X_n)$ and $Y^n := (Y_1, \ldots, Y_n)$ be random vectors
taking values on $\set{X}^n$ and $\set{Y}^n$, respectively. Let $P_{X^n}$ and
$Q_{X^n}$ be the probability mass functions of discrete memoryless sources
where, for all $\underline{x} \in \set{X}^n$,
\begin{align}
\label{2DMS}
P_{X^n}(\underline{x}) = \prod_{i=1}^n P_{X_i}(x_i), \quad
Q_{X^n}(\underline{x}) = \prod_{i=1}^n Q_{X_i}(x_i),
\end{align}
with $P_{X_i}$ and $Q_{X_i}$ supported on $\set{X}$ for all $i \in \{1, \ldots, n\}$.
Let each symbol $X_i$ be independently selected from one of the source outputs at
time instant $i$ with probabilities $\lambda$ and $1-\lambda$, respectively, and let
it be transmitted over a discrete memoryless channel with transition probabilities
\begin{align}
\label{DMC}
W_{Y^n | X^n}(\underline{y} \, | \, \underline{x})
= \prod_{i=1}^n W_{Y_i | X_i}(y_i | x_i), \quad \forall \,
\underline{x} \in \set{X}^n, \; \underline{y} \in \set{Y}^n.
\end{align}
Let $R_{X^n}^{(\lambda)}$ be the probability mass function of the symbols
at the channel input, i.e.,
\begin{align}
\label{prod. RX}
R_{X^n}^{(\lambda)}(\underline{x}) = \prod_{i=1}^n \bigl( \lambda P_{X_i}(x_i)
+ (1-\lambda) Q_{X_i}(x_i) \bigr), \quad \forall \, \underline{x} \in \set{X}^n,
\; \lambda \in [0,1],
\end{align}
let
\begin{align}
\label{MC1 in DMC}
& R_{Y^n}^{(\lambda)} := R_{X^n}^{(\lambda)} \, W_{Y^n | X^n}, \\
\label{MC2 in DMC}
& P_{Y^n} := P_{X^n} W_{Y^n | X^n}, \\
\label{MC3 in DMC}
& Q_{Y^n} := Q_{X^n} W_{Y^n | X^n},
\end{align}
and let $f \colon (0, \infty) \to \Reals$ be a convex and twice
differentiable function with $f(1)=0$. Then,
\begin{enumerate}[a)]
\item \label{Th. 2.a.}
For all $\lambda \in [0,1]$,
\begin{align}
& D_f(R_{X^n}^{(\lambda)} \, \| \, Q_{X^n}) - D_f(R_{Y^n}^{(\lambda)} \, \| \, Q_{Y^n})
\nonumber \\
\label{LB1 - DMC}
& \geq c_f\bigl(\xi_1(n, \lambda), \, \xi_2(n, \lambda) \bigr)
\left[ \, \prod_{i=1}^n \bigl(1 + \lambda^2 \, \chi^2(P_{X_i} \| Q_{X_i}) \bigr)
- \prod_{i=1}^n \bigl(1 + \lambda^2 \, \chi^2(P_{Y_i} \| Q_{Y_i}) \bigr) \right] \\
\label{LB2 - DMC}
& \geq c_f\bigl(\xi_1(n, \lambda), \, \xi_2(n, \lambda) \bigr) \, \lambda^2 \,
\sum_{i=1}^n \bigl[ \chi^2(P_{X_i} \| Q_{X_i}) - \chi^2(P_{Y_i} \| Q_{Y_i}) \bigr]
\geq 0,
\end{align}
where $c_f(\cdot, \cdot)$ in the right sides of \eqref{LB1 - DMC} and \eqref{LB2 - DMC}
is given in \eqref{c_f}, and
\begin{align}
\label{xi1_n}
& \xi_1(n, \lambda) := \prod_{i=1}^n \left(1 - \lambda + \lambda \,
\inf_{x \in \set{X}} \frac{P_{X_i}(x)}{Q_{X_i}(x)} \right) \in [0,1], \\[0.1cm]
\label{xi2_n}
& \xi_2(n, \lambda) := \prod_{i=1}^n \left(1 - \lambda + \lambda \,
\sup_{x \in \set{X}} \frac{P_{X_i}(x)}{Q_{X_i}(x)} \right) \in [1, \infty].
\end{align}

\item  \label{Th. 2.b.}
For all $\lambda \in [0,1]$,
\begin{align}
& D_f(R_{X^n}^{(\lambda)} \, \| \, Q_{X^n}) - D_f(R_{Y^n}^{(\lambda)} \, \| \, Q_{Y^n})
\nonumber \\[0.1cm]
\label{UB1 - DMC}
& \leq e_f\bigl(\xi_1(n, \lambda), \, \xi_2(n, \lambda) \bigr)
\left[ \, \prod_{i=1}^n \bigl(1 + \lambda^2 \, \chi^2(P_{X_i} \| Q_{X_i}) \bigr)
- \prod_{i=1}^n \bigl(1 + \lambda^2 \, \chi^2(P_{Y_i} \| Q_{Y_i}) \bigr) \right]
\end{align}
where $e_f(\cdot, \cdot)$, $\xi_1(\cdot, \cdot)$ and $\xi_2(\cdot, \cdot)$
in the right side of \eqref{UB1 - DMC} are given in \eqref{e_f}, \eqref{xi1_n}
and \eqref{xi2_n}, respectively.

\item  \label{Th. 2.c.}
If $f$ has a continuous second derivative at unity, and
$\underset{x \in \set{X}}{\sup} \frac{P_{X_i}(x)}{Q_{X_i}(x)} < \infty$
for all $i \in \{1, \ldots, n\}$, then
\begin{align}
& \lim_{\lambda \to 0^+}
\frac{D_f(R_{X^n}^{(\lambda)} \, \| \, Q_{X^n})
- D_f(R_{Y^n}^{(\lambda)} \, \| \, Q_{Y^n})}{\lambda^2}
\nonumber \\[0.1cm]
\label{lim - DMC}
& = \tfrac12 \, f''(1) \,
\sum_{i=1}^n \bigl[ \chi^2(P_{X_i} \| Q_{X_i})
- \chi^2(P_{Y_i} \| Q_{Y_i}) \bigr].
\end{align}
The lower bounds in the right sides of \eqref{LB1 - DMC} and
\eqref{LB2 - DMC}, and the upper bound in the right side of
\eqref{UB1 - DMC} are tight as we let $\lambda \to 0^+$,
yielding the limit in the right side of \eqref{lim - DMC}.
\end{enumerate}
\end{theorem}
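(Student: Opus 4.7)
The plan is to reduce Theorem~\ref{Thm: DMS-DMC} to a direct application of Theorem~\ref{thm: SDPI-IS} to the pair $(R_{X^n}^{(\lambda)}, Q_{X^n})$ driven through the product channel $W_{Y^n | X^n}$, and then exploit the product structure to obtain closed-form expressions for the likelihood-ratio bounds and for the $\chi^2$-divergences. First, because both the channel and $R_{X^n}^{(\lambda)}$ factorize across coordinates, a short calculation shows that the product-mixture form is preserved at the output, namely $R_{Y^n}^{(\lambda)}(\underline{y}) = \prod_{i=1}^n (\lambda P_{Y_i}(y_i) + (1-\lambda) Q_{Y_i}(y_i))$. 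Writing
$$\frac{R_{X^n}^{(\lambda)}(\underline{x})}{Q_{X^n}(\underline{x})} = \prod_{i=1}^n \left(1-\lambda + \lambda \, \frac{P_{X_i}(x_i)}{Q_{X_i}(x_i)} \right),$$
the infimum and supremum of this ratio over $\underline{x} \in \set{X}^n$ factorize coordinatewise, yielding precisely $\xi_1(n,\lambda)$ and $\xi_2(n,\lambda)$ from \eqref{xi1_n}--\eqref{xi2_n}.

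Second, I would compute the input and output $\chi^2$-divergences. Using \eqref{eq: chi-square 2} together with the product-mixture form,
$$\chi^2(R_{X^n}^{(\lambda)} \| Q_{X^n}) + 1 = \prod_{i=1}^n \sum_{x_i} Q_{X_i}(x_i) \left(1-\lambda + \lambda \, \frac{P_{X_i}(x_i)}{Q_{X_i}(x_i)}\right)^{\!2} = \prod_{i=1}^n \bigl( 1 + \lambda^2 \, \chi^2(P_{X_i} \| Q_{X_i}) \bigr),$$
after expanding the square and using $\sum_{x_i} P_{X_i}(x_i) = 1$. The identical computation at the output yields $\chi^2(R_{Y^n}^{(\lambda)} \| Q_{Y^n}) + 1 = \prod_{i=1}^n (1 + \lambda^2 \chi^2(P_{Y_i} \| Q_{Y_i}))$. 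Substituting these two identities into the lower bound \eqref{key} of Theorem~\ref{thm: SDPI-IS} gives \eqref{LB1 - DMC}, and substituting them into the upper bound \eqref{UB1} gives \eqref{UB1 - DMC}, with $c_f$ and $e_f$ evaluated at $(\xi_1(n,\lambda), \xi_2(n,\lambda))$ according to \eqref{c_f} and \eqref{e_f}. To pass from \eqref{LB1 - DMC} to \eqref{LB2 - DMC}, I would use the telescoping identity
$$\prod_{i=1}^n (1+a_i) - \prod_{i=1}^n (1+b_i) = \sum_{k=1}^n (a_k - b_k) \prod_{i<k}(1+a_i) \prod_{i>k}(1+b_i),$$
with $a_i := \lambda^2 \chi^2(P_{X_i} \| Q_{X_i})$ and $b_i := \lambda^2 \chi^2(P_{Y_i} \| Q_{Y_i})$; the ordinary data-processing inequality for $\chi^2$ ensures $a_i \geq b_i \geq 0$, so every product on the right side is at least $1$, which lower-bounds the difference by $\sum_k (a_k - b_k) = \lambda^2 \sum_k [\chi^2(P_{X_k} \| Q_{X_k}) - \chi^2(P_{Y_k} \| Q_{Y_k})]$.

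Finally, for the limit in Part~\ref{Th. 2.c.}, I would observe that under the boundedness assumption on the likelihood ratios, both $\xi_1(n,\lambda)$ and $\xi_2(n,\lambda)$ tend to $1$ as $\lambda \to 0^+$, so by continuity of $f''$ at $1$ both $c_f(\xi_1(n,\lambda),\xi_2(n,\lambda))$ and $e_f(\xi_1(n,\lambda),\xi_2(n,\lambda))$ converge to $\tfrac12 f''(1)$. Expanding the products in \eqref{LB1 - DMC} and \eqref{UB1 - DMC} gives $\prod_i(1+\lambda^2 \chi^2(P_{X_i} \| Q_{X_i})) - \prod_i(1+\lambda^2 \chi^2(P_{Y_i} \| Q_{Y_i})) = \lambda^2 \sum_i [\chi^2(P_{X_i} \| Q_{X_i}) - \chi^2(P_{Y_i} \| Q_{Y_i})] + O(\lambda^4)$, so dividing \eqref{LB1 - DMC} and \eqref{UB1 - DMC} by $\lambda^2$ and squeezing produces \eqref{lim - DMC}. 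The main obstacle is simply the clean tensorization of $\chi^2$ for product-mixtures against a product reference in the second step; once that identity is in hand, the rest is telescoping, ordinary data processing for $\chi^2$, and an asymptotic expansion.
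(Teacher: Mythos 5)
Your proposal is correct, and its overall architecture is exactly that of the paper's proof: factorize the likelihood ratio of $R_{X^n}^{(\lambda)}$ to $Q_{X^n}$ coordinatewise to identify $\xi_1(n,\lambda)$ and $\xi_2(n,\lambda)$, verify that the mixture-product structure is preserved at the channel output, apply Theorem~\ref{thm: SDPI-IS} (both \eqref{key} and \eqref{UB1}) to the pair $(R_{X^n}^{(\lambda)}, Q_{X^n})$, and express the $n$-letter $\chi^2$-divergences in product form. You differ from the paper in two local steps, both of which are valid and arguably cleaner. First, you establish $\chi^2(R_{X^n}^{(\lambda)} \| Q_{X^n}) + 1 = \prod_{i=1}^n \bigl(1+\lambda^2 \chi^2(P_{X_i}\|Q_{X_i})\bigr)$ by a single direct expansion of the square, whereas the paper composes two separate facts: the tensorization identity \eqref{1st tensorize chi2} and the mixture-scaling identity $\chi^2(\lambda P + (1-\lambda)Q \,\|\, Q) = \lambda^2 \chi^2(P\|Q)$ cited from an earlier work; your one-line computation is equivalent. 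Second, for the passage from \eqref{LB1 - DMC} to \eqref{LB2 - DMC}, the paper introduces an auxiliary lemma proving convexity of $u \mapsto \prod_i(1+a_i u) - \prod_i(1+b_i u)$ and lower-bounding it by its tangent at $u=0$, together with a Taylor expansion for the second-order statement needed in Item~\ref{Th. 2.c.}); you replace both parts of that lemma with the exact telescoping identity $\prod_{i=1}^n(1+a_i) - \prod_{i=1}^n(1+b_i) = \sum_{k=1}^n (a_k-b_k) \prod_{i<k}(1+a_i) \prod_{i>k}(1+b_i)$, which immediately gives the lower bound (each residual product is at least~$1$ since $a_i \geq b_i \geq 0$ by the data-processing inequality for $\chi^2$) and also delivers the $\lambda^2 \sum_i [\,\cdot\,] + O(\lambda^4)$ expansion used in the squeeze argument for \eqref{lim - DMC}. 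The telescoping route buys you an elementary algebraic identity in place of a convexity argument, at no loss of content; the remainder of your Part~c) argument (continuity of $f''$ at~$1$ forcing $c_f, e_f \to \tfrac12 f''(1)$ as $\xi_1(n,\lambda), \xi_2(n,\lambda) \to 1$, then squeezing between the divided bounds) matches the paper's.
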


\begin{proof}
See Appendix~\ref{appendix: DMS-DMC}.
\end{proof}

\vspace*{0.2cm}
\begin{remark}
Similar upper and lower bounds on
$D_f(P_{X^n} \, \| \, R_{X^n}^{(\lambda)}) -
D_f(P_{Y^n} \, \| \, R_{Y^n}^{(\lambda)})$
can be obtained for all $\lambda \in [0,1]$.
To that end, in \eqref{LB1 - DMC}--\eqref{UB1 - DMC},
one needs to replace $f$ with $f^\ast$, switch between
$P_{X_i}$ and $Q_{X_i}$ for all $i$, and replace $\lambda$
with $1-\lambda$.
\end{remark}

\vspace*{0.1cm}
In continuation to \cite[Theorem~8]{MakurZ18} (see Proposition~\ref{prop.: MakurZ18}
in Section~\ref{subsection: preliminaries}), we next provide an upper bound on the
contraction coefficient for another subclass of $f$-divergences. Although the first
part of the next result is stated for finite or countably infinite alphabets, it is
clear from its proof that it also holds in the general alphabet setting. Connections
to the literature are provided in Remarks~\ref{remark: Raginsky16}--\ref{remark: ISSV16}
(see Appendix~\ref{appendix: contraction coef.}, Part~A).

\begin{theorem}
\label{theorem: contraction coef}
Let $f \colon (0, \infty) \to \Reals$ satisfy the conditions:
\begin{itemize}
\item $f$ is a convex function, differentiable at~1, $f(1)=0$, and
$f(0) := \underset{t \to 0^+}{\lim} f(t) < \infty$;
\item The function $g \colon (0, \infty) \to \Reals$, defined by
$g(t) := \frac{f(t)-f(0)}{t}$ for all $t>0$, is convex.
\end{itemize}
Let
\begin{align}
\label{def: kappa}
\kappa(\xi_1, \xi_2) :=
\sup_{t \in (\xi_1, 1) \cup (1, \xi_2)} \frac{f(t) + f'(1) \, (1-t)}{(t-1)^2}
\end{align}
where, for $P_X$ and $Q_X$ which are non-identical probability mass functions,
$\xi_1 \in [0,1)$ and $\xi_2 \in (1, \infty]$ are given in \eqref{xi1}
and \eqref{xi2}. Then, in the setting of \eqref{transP} and \eqref{transQ},
\begin{align}
\label{13062019c1}
\frac{D_f(P_Y \| Q_Y)}{D_f(P_X \| Q_X)} \leq \frac{\kappa(\xi_1, \xi_2)}{f(0)+f'(1)}
\cdot \frac{\chi^2(P_Y \| Q_Y)}{\chi^2(P_X \| Q_X)}.
\end{align}
Consequently, if $Q_X$ is finitely supported on $\set{X}$,
\begin{align}
\label{13062019c2}
\mu_f(Q_X, W_{Y|X}) \leq
\frac1{f(0)+f'(1)} \cdot \kappa\biggl(0, \frac1{\underset{x \in \set{X}}{\min} \, Q_X(x)}\biggr)
\cdot \mu_{\chi^2}(Q_X, W_{Y|X}).
\end{align}
\end{theorem}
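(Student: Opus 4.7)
The target \eqref{13062019c1} has the form $\frac{D_f(P_Y\|Q_Y)}{D_f(P_X\|Q_X)}\le C\cdot\frac{\chi^2(P_Y\|Q_Y)}{\chi^2(P_X\|Q_X)}$, so the plan is to prove two pointwise-to-integral inequalities relating $D_f$ to $\chi^2$: an upper bound $D_f(P_Y\|Q_Y)\le\kappa(\xi_1,\xi_2)\,\chi^2(P_Y\|Q_Y)$ on the numerator, and a lower bound $D_f(P_X\|Q_X)\ge(f(0)+f'(1))\,\chi^2(P_X\|Q_X)$ on the denominator. Dividing these two estimates immediately yields \eqref{13062019c1}, and \eqref{13062019c2} will then follow by taking the supremum over $P_X$ while exploiting monotonicity of $\kappa(\xi_1,\xi_2)$ in its arguments.

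For the upper bound, I would first note that, for every $y$, the likelihood ratio $P_Y(y)/Q_Y(y)$ is a convex combination of $\{P_X(x)/Q_X(x)\}_{x}$ with weights $Q_X(x)W_{Y|X}(y|x)/Q_Y(y)$, and therefore lies in $[\xi_1,\xi_2]$. The definition \eqref{def: kappa} then gives, for every $t\in[\xi_1,\xi_2]$, the pointwise bound $f(t)+f'(1)(1-t)\le\kappa(\xi_1,\xi_2)\,(t-1)^2$ (the boundary case $t=1$ is trivial since $f(1)=0$). Substituting $t=P_Y/Q_Y$, integrating against $Q_Y$, and using that $P_Y$ and $Q_Y$ are probability measures to kill the linear $f'(1)(1-t)$ contribution yields the claimed upper bound.

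For the lower bound, I would invoke the convexity of $g(t)=(f(t)-f(0))/t$ through the tangent-line inequality at $t=1$. Since $f(1)=0$, a direct computation gives $g(1)=-f(0)$ and $g'(1)=f'(1)+f(0)$, so $g(t)\ge -f(0)+(f'(1)+f(0))(t-1)$ for all $t>0$. Multiplying both sides by $t>0$ produces
\begin{align*}
f(t)\ \ge\ f(0)\,(1-t) + \bigl(f'(1)+f(0)\bigr)\,t(t-1).
\end{align*}
Substituting $t=P_X/Q_X$ and integrating against $Q_X$, the first term vanishes because $P_X$ and $Q_X$ are probability measures, and the second integrates to $(f'(1)+f(0))\,\chi^2(P_X\|Q_X)$ by \eqref{eq: chi-square 2}. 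Combining with the previous step delivers \eqref{13062019c1}.

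For \eqref{13062019c2}, observe that $\kappa(\xi_1,\xi_2)$ is non-increasing in $\xi_1$ and non-decreasing in $\xi_2$, since it is a supremum over an interval that grows as $\xi_1$ decreases or $\xi_2$ increases. For any admissible $P_X$ one has $\xi_1\ge 0$ and $\xi_2\le 1/\min_{x\in\set{X}}Q_X(x)$ (using $P_X(x)\le 1$), hence $\kappa(\xi_1,\xi_2)\le\kappa\bigl(0,\,1/\min_x Q_X(x)\bigr)$; this constant can be pulled out of $\sup_{P_X}$ in \eqref{13062019c1}, leaving the definition of $\mu_{\chi^2}(Q_X,W_{Y|X})$. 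The main obstacle, I expect, is the lower bound step: recognizing that the ``tangent-line-for-$g$, then multiply by $t$'' manipulation is precisely what collapses to a $\chi^2$ integral with the sharp constant $f(0)+f'(1)$ is the crux of the argument, and it is precisely what motivates the hypotheses that $f(0)<\infty$ and that $g$ is convex.
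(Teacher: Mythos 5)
Your proof is correct, and its skeleton is identical to the paper's: bound the numerator by $D_f(P_Y\|Q_Y)\le\kappa(\xi_1,\xi_2)\,\chi^2(P_Y\|Q_Y)$, bound the denominator by $D_f(P_X\|Q_X)\ge\bigl(f(0)+f'(1)\bigr)\chi^2(P_X\|Q_X)$, divide, and then obtain \eqref{13062019c2} from the monotonicity of $\kappa(\cdot,\cdot)$ together with $\xi_1\ge 0$ and $\xi_2\le 1/\min_x Q_X(x)$. Where you differ is in how the two estimates are executed. For the numerator, the paper simply cites \cite[Theorem~5]{ISSV16}; you instead reprove it inline, via the pointwise inequality $f(t)+f'(1)(1-t)\le\kappa(\xi_1,\xi_2)(t-1)^2$ on $[\xi_1,\xi_2]$ (extended to the endpoints by continuity of the convex $f$, and trivially at $t=1$), combined with the observation that the output likelihood ratio $P_Y/Q_Y$ is a $Q$-weighted convex combination of input ratios and hence stays in $[\xi_1,\xi_2]$ — a fact the paper establishes in Appendix~A and uses implicitly when invoking the cited theorem, so making it explicit is a genuine plus of your write-up. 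For the denominator, the paper writes $D_f(P_X\|Q_X)=f(0)+\sum_x P_X(x)\,g\bigl(P_X(x)/Q_X(x)\bigr)$, applies Jensen's inequality to the convex $g$ under $P_X$ to reach $f(0)+g\bigl(1+\chi^2(P_X\|Q_X)\bigr)$, and then uses the tangent of $g$ at~1; you skip Jensen entirely and apply the tangent-line (subgradient) inequality $g(t)\ge -f(0)+\bigl(f'(1)+f(0)\bigr)(t-1)$ pointwise, multiply by $t$, and integrate, with the $f(0)(1-t)$ term vanishing and the quadratic term producing $\chi^2$ via \eqref{eq: chi-square 2}. The two routes use the same hypotheses ($g$ convex, $f$ differentiable at~1, $f(0)<\infty$) and yield the same sharp constant; yours is slightly more elementary, being a purely pointwise inequality that transfers verbatim to general alphabets, while the paper's Jensen step is the more standard idiom in this literature (cf.\ its Remark~\ref{remark: Raginsky16} relating the argument to \cite[Lemma~A.2]{Raginsky16}).
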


\begin{proof}
See Appendix~\ref{appendix: contraction coef.} (Part~A).
\end{proof}

Similarly to the extension of Theorem~\ref{thm: SDPI-IS} to Theorem~\ref{Thm: DMS-DMC},
a similar extension of Theorem~\ref{theorem: contraction coef} leads to the following
result.

\begin{theorem}
\label{theorem: DMS/DMC - ver2}
In the setting of \eqref{2DMS}--\eqref{MC3 in DMC} in Theorem~\ref{Thm: DMS-DMC},
and under the assumptions on $f$ in Theorem~\ref{theorem: contraction coef}, the
following holds for all $\lambda \in (0,1]$:
\begin{align}
\label{13062019d1}
\frac{D_f\bigl(R_{Y^n}^{(\lambda)} \, \| \, Q_{Y^n}\bigr)}{D_f\bigl(R_{X^n}^{(\lambda)} \, \| \, Q_{X^n}\bigr)}
\leq \frac{\kappa\bigl(\xi_1(n, \lambda), \, \xi_2(n, \lambda)\bigr)}{f(0)+f'(1)} \;
\frac{\overset{n}{\underset{i=1}{\prod}} \Bigl( 1+ \lambda^2 \,
\chi^2(P_{Y_i} \, \| \, Q_{Y_i} \bigr) \Bigr) - 1}{\overset{n}{\underset{i=1}{\prod}}
\Bigl( 1+ \lambda^2 \, \chi^2(P_{X_i} \, \| \, Q_{X_i} \bigr) \Bigr) - 1},
\end{align}
with $\xi_1(n,\lambda)$ and $\xi_2(n,\lambda)$ and $\kappa(\cdot, \cdot)$ defined in
\eqref{xi1_n}, \eqref{xi2_n} and \eqref{def: kappa}, respectively.
\end{theorem}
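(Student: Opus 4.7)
The plan is to apply Theorem~\ref{theorem: contraction coef} directly to the pair of input distributions $(R_{X^n}^{(\lambda)}, Q_{X^n})$ together with the stochastic transformation $W_{Y^n|X^n}$, viewed on the product alphabets $\set{X}^n$ and $\set{Y}^n$. The product structure is preserved under mixing and under the memoryless channel, so $R_{X^n}^{(\lambda)} = \prod_{i=1}^n (\lambda P_{X_i} + (1-\lambda) Q_{X_i})$ by \eqref{prod. RX}, and analogously $R_{Y^n}^{(\lambda)} = \prod_{i=1}^n (\lambda P_{Y_i} + (1-\lambda) Q_{Y_i})$ by \eqref{MC1 in DMC}. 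Invoking \eqref{13062019c1} for this pair yields
\[
\frac{D_f(R_{Y^n}^{(\lambda)} \| Q_{Y^n})}{D_f(R_{X^n}^{(\lambda)} \| Q_{X^n})} \leq \frac{\kappa\bigl(\tilde{\xi}_1, \tilde{\xi}_2\bigr)}{f(0)+f'(1)} \cdot \frac{\chi^2(R_{Y^n}^{(\lambda)} \| Q_{Y^n})}{\chi^2(R_{X^n}^{(\lambda)} \| Q_{X^n})},
\]
where $\tilde{\xi}_1$ and $\tilde{\xi}_2$ denote the infimum and supremum of $R_{X^n}^{(\lambda)}(\underline{x})/Q_{X^n}(\underline{x})$ over $\underline{x}\in\set{X}^n$.

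The next step is to identify $\tilde{\xi}_j$ with $\xi_j(n,\lambda)$ from \eqref{xi1_n}--\eqref{xi2_n}. Because the ratio factorizes as
\[
\frac{R_{X^n}^{(\lambda)}(\underline{x})}{Q_{X^n}(\underline{x})} = \prod_{i=1}^n \left(1-\lambda+\lambda \, \frac{P_{X_i}(x_i)}{Q_{X_i}(x_i)}\right),
\]
and each factor is non-negative (as $\lambda\in[0,1]$ and the per-letter ratios are non-negative), the infimum and supremum over $\set{X}^n$ equal the products of the coordinate-wise infima and suprema, giving exactly $\xi_1(n,\lambda)$ and $\xi_2(n,\lambda)$. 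A short direct computation then establishes the tensorization identity
\[
\chi^2(R_{X^n}^{(\lambda)} \| Q_{X^n}) = \prod_{i=1}^n \bigl(1+\lambda^2 \, \chi^2(P_{X_i} \| Q_{X_i})\bigr) - 1,
\]
by expanding $\bigl(\lambda P_{X_i} + (1-\lambda) Q_{X_i}\bigr)^2 / Q_{X_i}$ coordinate-wise: the cross-term integrates to $2\lambda(1-\lambda)$, which together with $(1-\lambda)^2$ and $\lambda^2 \bigl(1+\chi^2(P_{X_i}\|Q_{X_i})\bigr)$ collapses to $1 + \lambda^2 \, \chi^2(P_{X_i}\|Q_{X_i})$; the product structure of $Q_{X^n}$ and $R_{X^n}^{(\lambda)}$ then gives the claimed identity. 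The same formula holds with $X_i$ replaced by $Y_i$, since $R_{Y^n}^{(\lambda)}$ and $Q_{Y^n}$ have the analogous product form. Substituting both $\chi^2$ expressions into the bound above yields \eqref{13062019d1}.

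The main obstacle I anticipate is purely bookkeeping: one must verify that the non-negativity of each per-letter factor justifies passing the $\inf$ and $\sup$ through the product, and confirm that the hypotheses on $f$ from Theorem~\ref{theorem: contraction coef} transfer without change to the product alphabet (the conditions are intrinsic to $f$ and impose no alphabet restriction beyond finiteness or countability). Implicit in the statement is that $R_{X^n}^{(\lambda)} \not\equiv Q_{X^n}$, so that the denominator on the left side of \eqref{13062019d1} is nonzero; this is automatic whenever $\lambda>0$ and at least one $P_{X_i}$ differs from $Q_{X_i}$, while if equality holds throughout then both sides of \eqref{13062019d1} vanish trivially.
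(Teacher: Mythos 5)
Your proposal is correct and takes essentially the same route as the paper: Appendix~\ref{appendix: contraction coef.} (Part~B) likewise applies the two constituent bounds behind Theorem~\ref{theorem: contraction coef} to the pair $\bigl(R_{X^n}^{(\lambda)}, Q_{X^n}\bigr)$ on the product alphabet, identifies $\xi_1(n,\lambda)$ and $\xi_2(n,\lambda)$ as the infimum and supremum of the factorized ratio exactly as you argue (cf. \eqref{xi_1,n DMS}--\eqref{xi_2,n DMS}), and then uses the same tensorization-plus-mixture identities for the $\chi^2$ terms (\eqref{13062019b8}--\eqref{13062019b9}). The only cosmetic difference is that you invoke Theorem~\ref{theorem: contraction coef} as a black box on $\set{X}^n$ (valid, since the product alphabet remains finite or countably infinite), whereas the paper reassembles the lower bound \eqref{13062019b6} and the upper bound \eqref{13062019b7} explicitly.
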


\begin{proof}
See Appendix~\ref{appendix: contraction coef.} (Part~B).
\end{proof}

\subsection{A subclass of $f$-divergences}
\label{subsection: a new class of f-divergences}

A subclass of $f$-divergences with interesting properties is
introduced in Theorem~\ref{thm: f_alpha-divergence}. The data-processing
inequalities in Theorems~\ref{Thm: DMS-DMC} and~\ref{theorem: DMS/DMC - ver2}
are applied to these $f$-divergences in Section~\ref{section: Examples}.

\begin{theorem} \label{thm: f_alpha-divergence}
Let $f_\alpha \colon [0, \infty) \to \Reals$ be given by
\begin{align}
\label{f_alpha}
f_\alpha(t) := (\alpha+t)^2 \log(\alpha+t)
- (\alpha+1)^2 \log(\alpha+1), \quad t \geq 0
\end{align}
for all $\alpha \geq \mathrm{e}^{-\frac32}$. Then,
\begin{enumerate}[a)]
\item \label{Thm. f-1}
$D_{f_\alpha}(\cdot \| \cdot)$ is an $f$-divergence which is
monotonically increasing and concave in $\alpha$, and its first
three derivatives are related to the relative
entropy and $\chi^2$-divergence as follows:
\begin{align}
\label{1st partial der.}
& \frac{\partial}{\partial \alpha} \bigl\{ D_{f_\alpha}(P\|Q) \bigr\}
= 2(\alpha+1) \, D\Bigl( \tfrac{\alpha Q + P}{\alpha+1} \, \| \, Q \Bigr), \\
\label{2nd partial der.}
& \frac{\partial^2}{\partial \alpha^2} \bigl\{ D_{f_\alpha}(P\|Q) \bigr\}
= -2 \, D\Bigl(Q \, \| \, \tfrac{\alpha Q + P}{\alpha+1} \Bigr), \\
\label{3rd partial der.}
& \frac{\partial^3}{\partial \alpha^3} \bigl\{ D_{f_\alpha}(P\|Q) \bigr\}
= \frac{2 \log \mathrm{e}}{\alpha+1} \cdot
\chi^2\Bigl(Q \, \| \, \tfrac{\alpha Q + P}{\alpha+1} \Bigr).
\end{align}

\item \label{Thm. f-1b}
For every $n \in \naturals$,
\begin{align}
\label{generalized}
(-1)^{n-1} \, \frac{\partial^n}{\partial \alpha^n} \bigl\{ D_{f_\alpha}(P\|Q) \bigr\} \geq 0,
\end{align}
and, in addition to \eqref{1st partial der.}--\eqref{3rd partial der.}, for all $n>3$
\begin{align}
& \frac{\partial^n}{\partial \alpha^n} \bigl\{ D_{f_\alpha}(P\|Q) \bigr\} \nonumber \\[0.1cm]
\label{nth der. - RD}
&= \frac{2(-1)^{n-1} (n-3)! \, \log \mathrm{e}}{(\alpha+1)^{n-2}}
\left[ \exp \biggl( (n-2) \, D_{n-1}\Bigl(Q \, \| \, \tfrac{\alpha Q + P}{\alpha+1} \Bigr)
\biggr) -1 \right],
\end{align}
where $D_{n-1}(\cdot \| \cdot)$ in the right side of \eqref{nth der. - RD} denotes the
R\'{e}nyi divergence of order $n-1$.

\item \label{Thm. f-2}
\begin{align} \label{Df-chi2}
D_{f_\alpha}(P\|Q) & \geq k(\alpha) \, \chi^2(P\|Q) \\
\label{Df-KL}
& \geq k(\alpha) \, \left[ \exp\bigl(D(P\|Q)\bigr) - 1 \right]
\end{align}
where the function $k \colon [\mathrm{e}^{-\frac32}, \infty) \to \Reals$
is defined as
\begin{align}
\label{k_alpha}
k(\alpha) := \log (\alpha+1) + \tfrac32 \log \mathrm{e} - \frac{\log \mathrm{e}}{3\alpha},
\end{align}
which is monotonically increasing in $\alpha$, satisfying
$k(\alpha)\geq 0.2075 \log \mathrm{e}$ for all $\alpha \geq \mathrm{e}^{-\frac32}$,
and it tends to infinity as we let $\alpha \to \infty$.
Consequently, unless $P \equiv Q$,
\begin{align} \label{lim Infty}
\lim_{\alpha \to \infty} D_{f_\alpha}(P\|Q) = +\infty.
\end{align}

\item \label{Thm. f-2b}
\begin{align} \label{Df_UB}
D_{f_\alpha}(P\|Q) & \leq \Bigl[ \log(\alpha+1) + \tfrac32 \log \mathrm{e}
- \frac{\log \mathrm{e}}{\alpha+1} \Bigr] \, \chi^2(P\|Q) \nonumber \\
& \hspace*{0.4cm} + \frac{\log \mathrm{e}}{3(\alpha+1)}
\Bigl[ \exp\bigl(2 D_3(P\|Q)\bigr) - 1 \Bigr].
\end{align}

\item \label{Thm. f-2c}
For every $\varepsilon > 0$ and a pair of probability mass
functions $(P,Q)$ where $D_3(P\|Q) < \infty$, there exists
$\alpha^\ast := \alpha(P,Q, \varepsilon)$ such that for all
$\alpha > \alpha^\ast$
\begin{align} \label{asymp.}
\Bigl| D_{f_\alpha}(P\|Q) - \bigl[ \log(\alpha+1) + \tfrac32 \log \mathrm{e} \bigr]
\, \chi^2(P\|Q) \Bigr| < \varepsilon.
\end{align}

\item \label{Thm. f-3}
If a sequence of probability measures $\{P_n\}$ converges to a probability
measure $Q$ such that
\begin{align}
\label{eq: 1st condition}
\lim_{n \to \infty} \text{ess\,sup} \, \frac{\text{d}P_n}{\text{d}Q} \, (Y) = 1,
\quad Y \sim Q,
\end{align}
where $P_n \ll Q$ for all sufficiently large $n$, then
\begin{align} \label{eq: limit of ratio of f-div}
\lim_{n \to \infty} \frac{D_{f_\alpha}(P_n \| Q)}{\chi^2(P_n \| Q)}
= \log(\alpha+1) + \tfrac32 \log \mathrm{e}.
\end{align}

\item \label{Thm. f-4}
If $\alpha > \beta \geq \mathrm{e}^{-\frac32}$, then
\begin{align} \label{diff Df1}
0 & \leq (\alpha-\beta)(\alpha+\beta+2)
\, D\Bigl( \tfrac{\alpha Q + P}{\alpha+1} \, \| \, Q \Bigr) \\
\label{diff Df1b}
& \leq D_{f_\alpha}(P\|Q) - D_{f_\beta}(P\|Q) \\
\label{diff Df2}
& \leq (\alpha-\beta)\, \min \left\{ (\alpha+\beta+2)
\, D\Bigl( \tfrac{\beta Q + P}{\beta+1} \, \| \, Q \Bigr),
\; 2 D(P\|Q) \right\}.
\end{align}

\item \label{Thm. f-5}
The function $f_\alpha \colon [0, \infty) \to \Reals$, as given in
\eqref{f_alpha}, satisfies the conditions in Theorems~\ref{theorem: contraction coef}
and~\ref{theorem: DMS/DMC - ver2} for all $\alpha \geq \mathrm{e}^{-\frac32}$.
Furthermore, the corresponding function in \eqref{def: kappa} is equal to
\begin{align}
\label{kappa_alpha 1}
\kappa_\alpha(\xi_1, \xi_2) &:=
\sup_{t \in (\xi_1, 1) \cup (1, \xi_2)} \frac{f_\alpha(t) + f_\alpha'(1)
\, (1-t)}{(t-1)^2} \\[0.1cm]
\label{kappa_alpha 2}
&= \frac{f_\alpha(\xi_2) + f_\alpha'(1) \, (1-\xi_2)}{(\xi_2-1)^2}
\end{align}
for all $\xi_1 \in [0,1)$ and $\xi_2 \in (1, \infty)$.
\end{enumerate}
\end{theorem}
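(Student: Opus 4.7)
The function $f_\alpha$ is a shift of $\phi(u) := u^2 \log u$ since $f_\alpha(t) = \phi(\alpha+t) - \phi(\alpha+1)$, and essentially every item of the theorem reduces to properties of the first four derivatives $\phi'(u) = 2u\log u + u\log\mathrm{e}$, $\phi''(u) = 2\log u + 3\log\mathrm{e}$, $\phi'''(u) = 2(\log\mathrm{e})/u$, and $\phi^{(4)}(u) = -2(\log\mathrm{e})/u^2$. Convexity of $f_\alpha$ on $[0,\infty)$ is equivalent to $\phi''(\alpha) \geq 0$, i.e.\ $\alpha \geq \mathrm{e}^{-3/2}$, and together with $f_\alpha(1) = 0$ this settles the validity of $D_{f_\alpha}$ as an $f$-divergence in part~a).

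For the derivative identities in~a) and~b) I would differentiate under the integral. Writing $s := \alpha q + p = (\alpha+1) r_\alpha$ with $r_\alpha$ the density of $R_\alpha = \frac{\alpha Q + P}{\alpha+1}$, so that $D_{f_\alpha}(P\|Q) = \int \frac{s^2}{q}\log\frac{s}{q}\,\mathrm{d}\mu - (\alpha+1)^2\log(\alpha+1)$, and using $\partial_\alpha s = q$, one directly gets $\partial_\alpha D_{f_\alpha}(P\|Q) = 2\int s\log(s/q)\,\mathrm{d}\mu - 2(\alpha+1)\log(\alpha+1) = 2(\alpha+1) D(R_\alpha\|Q)$, i.e.\ \eqref{1st partial der.}. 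Iterating with $\partial_\alpha r_\alpha = (q-r_\alpha)/(\alpha+1)$ yields \eqref{2nd partial der.} and \eqref{3rd partial der.}. The general formula \eqref{nth der. - RD} is then obtained by induction; the key point is the identity $\int q^n/r_\alpha^{n-1}\,\mathrm{d}\mu = \exp\bigl((n-1) D_n(Q\|R_\alpha)\bigr)$, which turns the arising polynomial moments into R\'enyi divergences, after which the alternating signs $(-1)^{n-1}\partial_\alpha^n D_{f_\alpha}\geq 0$ are transparent. Monotonicity and concavity of $D_{f_\alpha}$ in $\alpha$ then follow from $\partial_\alpha D_{f_\alpha} \geq 0$ and $\partial_\alpha^2 D_{f_\alpha} \leq 0$.

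For parts~c) and~d) I would expand $f_\alpha$ about $t=1$. Set $R(t) := f_\alpha(t) - f_\alpha'(1)(t-1) - \tfrac12 f_\alpha''(1)(t-1)^2 = \int_1^t (t-s)\bigl[f_\alpha''(s) - f_\alpha''(1)\bigr]\,\mathrm{d}s$. Since $f_\alpha''$ is increasing, $R(t)\geq 0$ for $t\geq 1$. For $t \in [0,1]$, I would bound $-R(t) = \int_t^1 (s-t)\cdot 2\log\frac{\alpha+1}{\alpha+s}\,\mathrm{d}s$ using $\log(1+x) \leq x\log\mathrm{e}$ and $\alpha+s\geq\alpha$, followed by the exact integral $\int_t^1(s-t)(1-s)\,\mathrm{d}s = (1-t)^3/6$, obtaining $|R(t)| \leq \tfrac{\log\mathrm{e}}{3\alpha}(1-t)^3 \leq \tfrac{\log\mathrm{e}}{3\alpha}(t-1)^2$. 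Combined with $\tfrac12 f_\alpha''(1) = \log(\alpha+1) + \tfrac32\log\mathrm{e}$, this yields pointwise $f_\alpha(t)\geq f_\alpha'(1)(t-1) + k(\alpha)(t-1)^2$, hence $D_{f_\alpha}(P\|Q)\geq k(\alpha)\chi^2(P\|Q)$; the second chain in \eqref{Df-KL} then follows from $\chi^2(P\|Q) = \exp(D_2(P\|Q))-1 \geq \exp(D(P\|Q))-1$ by R\'enyi monotonicity. For~d) I would push the expansion one order further: $f_\alpha^{(4)} \leq 0$ makes the Lagrange remainder non-positive, yielding $f_\alpha(t)\leq f_\alpha'(1)(t-1) + \tfrac12 f_\alpha''(1)(t-1)^2 + \tfrac16 f_\alpha'''(1)(t-1)^3$, and the cubic moment converts via $\int q(p/q-1)^3\,\mathrm{d}\mu = \exp(2 D_3(P\|Q)) - 3\chi^2(P\|Q) - 1$ into exactly \eqref{Df_UB}. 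Item~e) is then an immediate squeeze of~c) and~d) as $\alpha\to\infty$ with $D_3(P\|Q)$ fixed, and item~f) follows by the same expansion together with \eqref{eq: 1st condition}, which shrinks the effective range of $p_n/q$ to $\{1\}$ and drives the normalized remainder $|R(t)|/(t-1)^2$ to zero.

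For~g) I would integrate $\partial_\gamma D_{f_\gamma}(P\|Q) = 2(\gamma+1) D(R_\gamma\|Q)$ over $\gamma \in [\beta,\alpha]$. Differentiating \eqref{1st partial der.} once more, one finds $\partial_\gamma D(R_\gamma\|Q) = -\frac{D(R_\gamma\|Q) + D(Q\|R_\gamma)}{\gamma+1}\leq 0$, so $\gamma\mapsto D(R_\gamma\|Q)$ is non-increasing; together with $\int_\beta^\alpha 2(\gamma+1)\,\mathrm{d}\gamma = (\alpha-\beta)(\alpha+\beta+2)$, this gives the sandwich by $D(R_\alpha\|Q)$ and $D(R_\beta\|Q)$ in \eqref{diff Df1} and the first branch of \eqref{diff Df2}; the second branch uses the standard convexity bound $D(R_\gamma\|Q) \leq D(P\|Q)/(\gamma+1)$, which integrates to $2(\alpha-\beta) D(P\|Q)$. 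For~h), convexity of $g(t) = (f_\alpha(t)-f_\alpha(0))/t$ reduces, after clearing denominators, to $\psi(t) - 2 f_\alpha(0) \geq 0$ with $\psi(t) := t^2 f_\alpha''(t) - 2t f_\alpha'(t) + 2 f_\alpha(t)$; the derivative collapses to $\psi'(t) = t^2 f_\alpha'''(t)\geq 0$, and since $\psi(0) = 2 f_\alpha(0)$ the inequality holds throughout $(0,\infty)$. Finally, \eqref{kappa_alpha 2} amounts to showing that $G(t) := (f_\alpha(t)-f_\alpha'(1)(t-1))/(t-1)^2$ is strictly increasing on $(0,\infty)\setminus\{1\}$. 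Writing $H(t) := f_\alpha(t) - f_\alpha'(1)(t-1)$ and $K(t) := H'(t)(t-1) - 2H(t)$, one has $G'(t) = K(t)/(t-1)^3$, $K(1)=K'(1)=0$, and $K''(t) = f_\alpha'''(t)(t-1)$ has the sign of $t-1$; hence $K(t)$ has the sign of $(t-1)^3$ and $G'(t)>0$ throughout, so the supremum in \eqref{kappa_alpha 1} is attained at $t=\xi_2$. The one genuinely delicate point is the sharpness of the constant in~c): the coarse estimate $f_\alpha''(s)\geq f_\alpha''(0)$ yields only $\log\alpha + \tfrac32\log\mathrm{e}$, and recovering the refined $-\tfrac{\log\mathrm{e}}{3\alpha}$ correction in $k(\alpha)$ requires both the logarithmic inequality and the exact evaluation of the cubic integral noted above.
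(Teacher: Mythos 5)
Your proposal is correct on Items a)--e), g) and h), and its architecture coincides with the paper's Appendix~D: differentiation under the integral for a)--b), third- and fourth-order Taylor expansions for c)--e), integration of \eqref{1st partial der.} for g), and a second-derivative computation for h). Within that frame, several of your local arguments are cleaner than the paper's. In g), the paper justifies the monotonicity of $u \mapsto D\bigl(\tfrac{uQ+P}{u+1}\,\|\,Q\bigr)$ only by the informal remark that the mixture ``gets closer to $Q$'', whereas your identity $\partial_\gamma D(R_\gamma\|Q) = -\tfrac{1}{\gamma+1}\bigl[D(R_\gamma\|Q)+D(Q\|R_\gamma)\bigr]$ (which also delivers \eqref{2nd partial der.} en route) proves it rigorously; and you obtain the $2(\alpha-\beta)D(P\|Q)$ branch of \eqref{diff Df2} from joint convexity, $D(R_\gamma\|Q)\le \tfrac{1}{\gamma+1}\,D(P\|Q)$, while the paper decomposes $r_\alpha(t)=2t\log t - s_\alpha(t)$ with $s_\alpha$ convex --- both give the same bound. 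In b), the paper differentiates the closed form $\partial^n f_\alpha/\partial\alpha^n$ and checks convexity of the resulting integrand, while your induction through $\partial_\alpha r_\alpha = (q-r_\alpha)/(\alpha+1)$ and the moment identity $\int q^{n}/r_\alpha^{\,n-1}\,\mathrm{d}\mu = \exp\bigl((n-1)D_n(Q\|R_\alpha)\bigr)$ is equivalent and tidier. Your integral-remainder derivation in c) reproduces the constant $k(\alpha)$ exactly (the paper uses the Lagrange form with $f_\alpha^{(3)}(0)$ and the inequality $(t-1)^3\,1\{t\in[0,1]\}\ge -(t-1)^2$); your d) matches the paper's computation step for step; and in h) your $K(t)$ is the paper's $v_\alpha(t)/(t-1)$, with $K''(t)=f_\alpha'''(t)(t-1)$ playing the role of the paper's $v_\alpha^{(3)}$. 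You omit only trivialities: that $k$ is increasing with $k(\mathrm{e}^{-3/2})\approx 0.2075\log\mathrm{e}$, the deduction of \eqref{lim Infty}, and the remaining hypotheses of Theorem~\ref{theorem: contraction coef} ($f_\alpha(0)<\infty$, differentiability at~1).

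The genuine gap is Item f). The paper proves it purely by citation to \cite[Lemma~4]{Sason18} and \cite[Theorem~3]{PardoV03}; you instead sketch a direct argument whose key step --- that \eqref{eq: 1st condition} ``shrinks the effective range of $p_n/q$ to $\{1\}$'' --- does not follow from the hypothesis as written. Condition \eqref{eq: 1st condition} is one-sided: it controls only $\esssup \mathrm{d}P_n/\mathrm{d}Q$, and mass balance then forces $\int (\mathrm{d}P_n/\mathrm{d}Q-1)^{-}\,\mathrm{d}Q \to 0$, but it does not prevent the ratio from sitting near $0$ on a set of small $Q$-measure, and such a set can carry the dominant share of $\chi^2(P_n\|Q)$. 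Concretely, let $Q$ be uniform on $[0,1]$, let $p_n=0$ on a set of measure $a_n$, $p_n=1+\epsilon_n$ on a set of measure $a_n/\epsilon_n$, and $p_n=1$ elsewhere, with $a_n \ll \epsilon_n \to 0$; then $\esssup p_n/q = 1+\epsilon_n \to 1$, yet $\chi^2(P_n\|Q)=a_n(1+\epsilon_n)$ while $D_{f_\alpha}(P_n\|Q)=a_n\bigl[f_\alpha(0)+f_\alpha'(1)\bigr]+O(a_n\epsilon_n)$, so the ratio tends to $z_\alpha(0)=f_\alpha(0)+f_\alpha'(1)$, which by the strict monotonicity of $z_\alpha$ (your own function $G$) is strictly below $\tfrac12 f_\alpha''(1)$ --- e.g.\ $2\log\mathrm{e}$ versus $\log 2 + \tfrac32\log\mathrm{e}$ at $\alpha=1$. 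Thus your normalized remainder $|R(t)|/(t-1)^2$ does not vanish on the region that matters. Your squeeze does go through under two-sided control, i.e.\ the analogue of \eqref{13062019a1}--\eqref{13062019a2} in Theorem~\ref{thm: SDPI-IS}~e) (equivalently $\esssup|\mathrm{d}P_n/\mathrm{d}Q-1|\to 0$), which is evidently the intended reading; but as a self-contained proof under the literal hypothesis \eqref{eq: 1st condition}, your argument fails, and you would have to either invoke the cited results as the paper does or add control of the below-unity region that the one-sided condition alone does not supply.
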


\begin{proof}
See Appendix~\ref{appendix: f_alpha-divergence}.
\end{proof}

\subsection{$f$-divergence Inequalities via Majorization}
\label{subsection: Majorization and entropy bounds}

Let $U_n$ denote an equiprobable distribution on
$\{1, \ldots, n\}$ ($n \in \naturals$), i.e., $U_n(i) := \tfrac1n$
for all $i \in \{1, \ldots, n\}$.
By majorization theory and Theorem~\ref{thm: SDPI-IS},
the next result strengthens the Schur-convexity property of the
$f$-divergence $D_f(\cdot\|U_n)$ (see \cite[Lemma~1]{CicaleseGV06}).

\begin{theorem} \label{thm: majorization Df}
Let $P$ and $Q$ be probability mass functions which are supported on
$\{1, \ldots, n\}$, and suppose that $P \prec Q$.
Let $f \colon (0, \infty) \to \Reals$ be twice differentiable and convex
with $f(1)=0$, and let $q_{\max}$ and $q_{\min}$ be, respectively, the
maximal and minimal positive masses of $Q$. Then,
\begin{enumerate}[a)]
\item
\begin{align}
n e_f(n q_{\min}, n q_{\max})
\, \bigl( \| Q \|_2^2 - \| P \|_2^2 \bigr)
\label{UB diff Df - equiprob.}
& \geq D_f(Q\|U_n) - D_f(P\|U_n) \\
\label{LB diff Df - equiprob.}
& \geq n c_f(n q_{\min}, n q_{\max})
\, \bigl( \| Q \|_2^2 - \| P \|_2^2 \bigr) \geq 0,
\end{align}
where $c_f(\cdot, \cdot)$ and $e_f(\cdot, \cdot)$ are given
in \eqref{c_f} and \eqref{e_f}, respectively, and
$\| \cdot \|_2$ denotes the Euclidean norm. Furthermore,
\eqref{UB diff Df - equiprob.} and \eqref{LB diff Df - equiprob.}
hold with equality if $D_f(\cdot \| \cdot) = \chi^2(\cdot \| \cdot)$.

\item
If $P \prec Q$ and $\frac{q_{\max}}{q_{\min}} \leq \rho$ for an arbitrary
$\rho \geq 1$, then
\begin{align}
\label{bounds on diff. norms}
0 \leq \| Q \|_2^2 - \| P \|_2^2 \leq \frac{(\rho-1)^2}{4\rho n}.
\end{align}
\end{enumerate}
\end{theorem}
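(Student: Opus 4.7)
The overall plan is to reduce part (a) to Theorem~\ref{thm: SDPI-IS} by choosing the right stochastic transformation, and then to establish part (b) by a short calculus argument. Since $P \prec Q$, Proposition~\ref{proposition: majorization and DP} supplies a doubly stochastic kernel $W_{Y|X}$ on $\{1,\ldots,n\}$ realizing $Q \to W_{Y|X} \to P$. I would invoke Theorem~\ref{thm: SDPI-IS} with $P_X := Q$ and $Q_X := U_n$; the induced outputs are $P_Y = P$ by construction and $Q_Y = U_n$, since a doubly stochastic kernel fixes the uniform distribution. With this choice, $\xi_1 = \inf_x Q(x)/U_n(x) = n\, q_{\min}$ and $\xi_2 = n\, q_{\max}$.

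A one-line computation from \eqref{eq: chi-square 1b} gives $\chi^2(R\|U_n) = n\|R\|_2^2 - 1$ for every probability mass function $R$ on $\{1,\ldots,n\}$, so $\chi^2(Q\|U_n) - \chi^2(P\|U_n) = n\bigl(\|Q\|_2^2 - \|P\|_2^2\bigr)$. Substituting this identity into the lower bound \eqref{key} and the upper bound \eqref{UB1} of Theorem~\ref{thm: SDPI-IS} yields \eqref{LB diff Df - equiprob.} and \eqref{UB diff Df - equiprob.} at once. The innermost non-negativity $\|Q\|_2^2 \geq \|P\|_2^2$, together with the equality claim for $D_f = \chi^2$, both drop out from specializing the same theorem to Pearson's $\chi^2$-divergence (where $c_f = e_f \equiv 1$).

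For part (b), the left inequality is just Schur-convexity of $\|\cdot\|_2^2$ under $P \prec Q$, equivalently the DPI for $\chi^2$ just used. For the right inequality, since $U_n$ is majorized by every distribution, $\|P\|_2^2 \geq \|U_n\|_2^2 = 1/n$, so $\|Q\|_2^2 - \|P\|_2^2 \leq \|Q\|_2^2 - 1/n$. I would then exploit the elementary inequality $\bigl(Q(i) - q_{\min}\bigr)\bigl(q_{\max} - Q(i)\bigr) \geq 0$; summing over $i$ and using $\sum_i Q(i) = 1$ gives $\|Q\|_2^2 \leq q_{\min} + q_{\max} - n\, q_{\min} q_{\max}$. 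Writing $u := n\, q_{\min}$ and $v := n\, q_{\max}$, this rearranges into the factorization $\|Q\|_2^2 - 1/n \leq (1-u)(v-1)/n$ (using $u+v-uv-1 = -(u-1)(v-1)$). The constraint $q_{\max}/q_{\min} \leq \rho$ reads $v \leq \rho u$ with $u \in [1/\rho, 1]$, so a one-variable maximization of $(1-u)(\rho u - 1)$, with maximum attained at $u^\ast = (\rho+1)/(2\rho)$ and value $(\rho-1)^2/(4\rho)$, produces the claimed bound $(\rho-1)^2/(4\rho n)$.

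The only real subtlety is a bookkeeping one: one must verify that the doubly stochastic matrix from Proposition~\ref{proposition: majorization and DP} is plugged into Theorem~\ref{thm: SDPI-IS} in the correct direction, so that the outputs $(P_Y, Q_Y)$ match the intended targets $(P, U_n)$ and the ratios $\xi_1, \xi_2$ indeed pick up the extreme masses of $Q$ rather than those of $P$. Beyond that, everything reduces to the identity $\chi^2(R\|U_n) = n\|R\|_2^2 - 1$ and a one-variable calculus optimization, so no genuine obstacle is expected.
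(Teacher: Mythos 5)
Your proof of part (a) coincides with the paper's own: the same reduction via Proposition~\ref{proposition: majorization and DP} with $P_X := Q$, $Q_X := U_n$ (so that $P_Y = P$, $Q_Y = U_n$, and $\xi_1 = n q_{\min}$, $\xi_2 = n q_{\max}$), followed by the identity $\chi^2(R \| U_n) = n \|R\|_2^2 - 1$ substituted into \eqref{key} and \eqref{UB1}; the equality claim for the $\chi^2$-divergence is also handled identically, and your ``bookkeeping'' worry about the direction of the kernel is resolved exactly as in the paper.

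Where you genuinely depart is part (b). The paper obtains $\|Q\|_2^2 \leq \frac{(1+\rho)^2}{4 n \rho}$ by citing a lower bound on the collision entropy from \cite{Sason18b} (see \eqref{collision entropy}), and obtains $\|Q\|_2^2 \geq \|P\|_2^2$ from the Schur-concavity of the R\'{e}nyi entropy \cite[Theorem~13.F.3.a.]{MarshallOA}. You instead derive the upper bound self-containedly: summing $\bigl(Q(i)-q_{\min}\bigr)\bigl(q_{\max}-Q(i)\bigr) \geq 0$ gives $\|Q\|_2^2 \leq q_{\min}+q_{\max}-n q_{\min} q_{\max}$, the substitution $u := n q_{\min} \in \bigl[\tfrac1\rho, 1\bigr]$, $v := n q_{\max} \leq \rho u$ and the factorization $u+v-uv-1 = (1-u)(v-1)$ reduce the problem to maximizing $(1-u)(\rho u - 1)$, and the maximum $\frac{(\rho-1)^2}{4\rho}$ at $u^\ast = \frac{\rho+1}{2\rho}$ yields \eqref{bounds on diff. norms} --- all of which checks out (note that $\tfrac1n + \tfrac{(\rho-1)^2}{4\rho n} = \tfrac{(\rho+1)^2}{4\rho n}$, so you have in effect reproved the cited bound of \cite{Sason18b}). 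You also get the left inequality of \eqref{bounds on diff. norms} for free from the $\chi^2$ data-processing inequality already invoked in part (a), rather than appealing separately to Schur-concavity of $H_2$. Your route buys a fully self-contained argument with no external entropy bounds, at the cost of a few extra lines; the paper's route is shorter on the page but leans on two outside results.
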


\begin{proof}
See Appendix~\ref{appendix: majorization}.
\end{proof}

\begin{remark}
If $P$ is not supported on $\{1, \ldots, n\}$, then
\eqref{UB diff Df - equiprob.} and \eqref{LB diff Df - equiprob.}
hold if $f$ is also right continuous at zero.
\end{remark}

The next result provides bounds on $f$-divergences
from any probability mass function to an equiprobable distribution. It
relies on majorization theory, and Theorem~\ref{thm: majorization Df}.

\begin{theorem}
\label{thm: LB/UB f-div}
Let $\set{P}_n$ denote the set of all the probability mass functions
that are defined on $\set{A}_n := \{1, \ldots, n\}$. For $\rho \geq 1$,
let $\set{P}_n(\rho)$ be the set of all $Q \in \set{P}_n$ which are
supported on $\set{A}_n$ with $\frac{q_{\max}}{q_{\min}} \leq \rho$,
and let $f \colon (0, \infty) \to \Reals$ be a convex function with
$f(1)=0$. Then,
\begin{enumerate}[a)]
\item \label{Thm. 5-a}
The set $\set{P}_n(\rho)$, for any $\rho \geq 1$, is a non-empty,
convex and compact set.

\item \label{Thm. 5-b}
For a given $Q \in \set{P}_n$, which is supported on $\set{A}_n$, the
$f$-divergences $D_f(\cdot \| Q)$ and $D_f(Q \| \cdot)$ attain their
maximal values over the set $\set{P}_n(\rho)$.

\item \label{Thm. 5-c}
For $\rho \geq 1$ and an integer $n \geq 2$, let
\begin{align}
\label{def: u_f}
& u_f(n, \rho) := \max_{Q \in \set{P}_n(\rho)} D_f(Q \| U_n), \\[0.1cm]
\label{def: v_f}
& v_f(n, \rho) := \max_{Q \in \set{P}_n(\rho)} D_f(U_n \| Q),
\end{align}
let
\begin{align}
\label{Gamma interval}
\Gamma_n(\rho) := \biggl[\frac1{1+(n-1)\rho}, \, \frac1n \biggr],
\end{align}
and let the probability mass function $Q_\beta \in \set{P}_n(\rho)$
be defined on the set $\set{A}_n$ as follows:
\begin{align}
\label{Q_beta}
Q_\beta(j) :=
\begin{dcases}
\rho \beta, & \quad \mbox{if $j \in \{1, \ldots, i_\beta\}$,} \\
1-\bigl(n+i_\beta(\rho-1)-1\bigr) \beta, & \quad \mbox{if $j=i_\beta+1$,} \\
\beta, & \quad \mbox{if $j \in \{i_\beta + 2, \ldots, n\}$}
\end{dcases}
\end{align}
where
\begin{align}
\label{i_beta}
i_\beta := \biggl\lfloor \frac{1-n \beta}{(\rho-1) \beta} \biggr\rfloor.
\end{align}
Then,
\begin{align}
\label{opt1 Df: beta}
&  u_f(n, \rho)
= \max_{\beta \in \Gamma_n(\rho)} D_f(Q_\beta \| U_n), \\
\label{opt2 Df: beta}
&  v_f(n, \rho)
= \max_{\beta \in \Gamma_n(\rho)} D_f(U_n \| Q_\beta).
\end{align}

\item \label{Thm. 5-d}
For $\rho \geq 1$ and an integer $n \geq 2$, let the non-negative
function $g_{f}^{(\rho)} \colon [0,1] \to \Reals_+$ be given by
\begin{align}
\label{def: g_f}
g_f^{(\rho)}(x) := x f\biggl(\frac{\rho}{1+(\rho-1)x} \biggr)
+ (1-x) f\biggl(\frac1{1+(\rho-1)x} \biggr), \quad x \in [0,1].
\end{align}
Then,
\begin{align}
\label{LB/UB u_f}
& \max_{m \in \{0, \ldots, n\}} \;
g_f^{(\rho)}\bigl(\tfrac{m}{n}\bigr) \leq u_f(n, \rho) \leq
\max_{x \in [0,1]} \; g_f^{(\rho)}(x), \\
\label{LB/UB v_f}
& \max_{m \in \{0, \ldots, n\}} \;
g_{f^\ast}^{(\rho)}\bigl(\tfrac{m}{n}\bigr) \leq v_f(n, \rho)
\leq \max_{x \in [0,1]} \; g_{f^\ast}^{(\rho)}(x)
\end{align}
with the convex function $f^\ast \colon (0, \infty) \to \Reals$
in \eqref{dual f}.

\item \label{Thm. 5-e}
The right-side inequalities in \eqref{LB/UB u_f} and \eqref{LB/UB v_f}
are asymptotically tight ($n \to \infty$). More explicitly,
\begin{align}
& \lim_{n \to \infty} u_f(n, \rho) \nonumber \\
\label{asympt. max1}
& = \max_{x \in [0,1]} \left\{ x f\biggl(\frac{\rho}{1+(\rho-1)x} \biggr)
+ (1-x) f\biggl(\frac1{1+(\rho-1)x} \biggr) \right\}, \\[0.1cm]
& \lim_{n \to \infty} v_f(n, \rho) \nonumber \\
\label{asympt. max2}
& = \max_{x \in [0,1]} \Biggl\{ \frac{\rho x}{1+(\rho-1)x} \;
f\biggl(\frac{1+(\rho-1)x}{\rho} \biggr)
+ \frac{(1-x) \; f\bigl(1+(\rho-1)x \bigr)}{1+(\rho-1)x} \Biggr\}.
\end{align}

\item \label{Thm. 5-f}
If $g_{f}^{(\rho)}(\cdot)$ in \eqref{def: g_f} is differentiable on
$(0,1)$ and its derivative is upper bounded by $K_f(\rho) \geq 0$,
then for every integer $n \geq 2$
\begin{align}
\label{convergence rate 1}
0 \leq \lim_{n' \to \infty} \bigl\{ u_f(n', \rho) \bigr\} - u_f(n, \rho)
\leq \frac{K_f(\rho)}{n}.
\end{align}

\item \label{Thm. 5-g}
Let $f(0) := \underset{t \to 0}{\lim} \, f(t) \in (-\infty, +\infty]$,
and let $n \geq 2$ be an integer. Then,
\begin{align}
\label{asympt. max3}
& \lim_{\rho \to \infty} u_f(n, \rho)
= \left(1-\frac1n \right) f(0) + \frac{f(n)}{n}.
\end{align}
Furthermore, if $f(0) < \infty$, $f$ is differentiable on $(0,n)$, and
$ K_n := \underset{t \in (0,n)}{\sup} \, \bigl| f'(t) \bigr| < \infty$,
then, for every $\rho \geq 1$,
\begin{align}
\label{convergence rate 2}
0 \leq \lim_{\rho' \to \infty} \bigl\{ u_f(n, \rho') \bigr\} - u_f(n, \rho)
\leq \frac{2K_n \; (n-1)}{n+\rho-1}.
\end{align}

\item \label{Thm. 5-h}
For $\rho \geq 1$, let the function $f$ be also twice differentiable, and
let $M$ and $m$ be constants such that the following condition holds:
\begin{align}
\label{f'' bounded}
0 \leq m \leq f''(t) \leq M, \quad \forall \,
t \in \bigl[ \tfrac1\rho, \rho \bigr].
\end{align}
Then, for all $Q \in \set{P}_n(\rho)$,
\begin{align}
0 &\leq \tfrac12 m \bigl( n \| Q \|_2^2 - 1 \bigr) \\
\label{LB Df-m}
& \leq D_f(Q \| U_n) \\
\label{UB0 Df-M}
&\leq \tfrac12 M \bigl( n \| Q \|_2^2 - 1 \bigr) \\
\label{UB Df-M}
& \leq \frac{M (\rho-1)^2}{8 \rho}
\end{align}
with equalities in \eqref{LB Df-m} and \eqref{UB0 Df-M} for the
$\chi^2$ divergence (with $M=m=2$).

\item \label{Thm. 5-i}
Let $d>0$. If $f''(t) \leq M_f \in (0, \infty)$ for all $t>0$, then
$D_f(Q \| U_n) \leq d$ for all $Q \in \set{P}_n(\rho)$, if
\begin{align}
\label{UB rho}
\rho \leq 1 + \frac{4 d}{M_f} + \sqrt{\frac{8d}{M_f}
+ \frac{16 d^2}{M_f^2}}.
\end{align}
\end{enumerate}
\end{theorem}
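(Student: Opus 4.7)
The proof naturally splits along parts (a)--(i). The overall plan is to first handle the topological content of (a) and (b), then reduce the maxima in (c) to a one-dimensional family of ``staircase'' distributions $Q_\beta$ via Schur-convexity, and finally analyze that family via $g_f^{(\rho)}$ in (d)--(g) and via Taylor expansion in (h)--(i). For part (a), non-emptiness holds since $U_n\in\set{P}_n(\rho)$; compactness is immediate as $\set{P}_n(\rho)$ is the intersection of the simplex with the closed ratio constraint; convexity follows since, for $R=\lambda Q_1+(1-\lambda)Q_2$ with $Q_1,Q_2\in\set{P}_n(\rho)$, one has $r_{\max}\le\lambda(q_1)_{\max}+(1-\lambda)(q_2)_{\max}\le\rho[\lambda(q_1)_{\min}+(1-\lambda)(q_2)_{\min}]\le\rho\,r_{\min}$. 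Part (b) then follows from continuity of $D_f(\cdot\|Q)$ and $D_f(Q\|\cdot)$ on the set of strictly positive distributions, noting $q_{\min}>0$ throughout $\set{P}_n(\rho)$.

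For part (c), the main engine is Schur-convexity: $D_f(\cdot\|U_n)$ is Schur-convex on $\set{P}_n$ by the classical fact that $P\mapsto\sum_j\phi(p_j)$ is Schur-convex for convex $\phi$, and $D_f(U_n\|\cdot)=D_{f^\ast}(\cdot\|U_n)$ is Schur-convex via the dual in \eqref{dual f}. Hence each of the maxima in \eqref{def: u_f}--\eqref{def: v_f} is attained at an element of $\set{P}_n(\rho)$ that is maximal in the majorization order. I would then show that, within the level set $\{Q\in\set{P}_n(\rho): q_{\min}=\beta\}$, the majorization maximum is precisely $Q_\beta$ in \eqref{Q_beta}: subject to normalization and to the ratio constraint $q_{\max}\le\rho\beta$, the staircase in \eqref{Q_beta} concentrates as much mass as possible at $\rho\beta$, and the integer $i_\beta$ in \eqref{i_beta} is the unique choice consistent with $\sum_jQ_\beta(j)=1$. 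Taking the outer maximum over the admissible range $\beta\in\Gamma_n(\rho)$ then gives \eqref{opt1 Df: beta} and \eqref{opt2 Df: beta}.

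For parts (d)--(g), set $\alpha:=(c-\beta)/((\rho-1)\beta)\in[0,1]$, where $c$ is the middle mass in \eqref{Q_beta}; convexity of $f$ yields $f(nc)\le\alpha\,f(n\rho\beta)+(1-\alpha)\,f(n\beta)$, which collapses the three-level expression for $D_f(Q_\beta\|U_n)$ into the two-level form $x_\beta\,f(n\rho\beta)+(1-x_\beta)\,f(n\beta)$, with $x_\beta:=(i_\beta+\alpha)/n=(1-n\beta)/(n(\rho-1)\beta)\in[0,(n-1)/n]$. The identity $1+(\rho-1)x_\beta=1/(n\beta)$ then identifies this upper bound with $g_f^{(\rho)}(x_\beta)$, yielding the right-hand inequality in \eqref{LB/UB u_f}. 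When $\beta=1/(n+(\rho-1)m)$ the middle mass vanishes (so $\alpha=0$), the convexity step is tight, and $D_f(Q_\beta\|U_n)=g_f^{(\rho)}(m/n)$, giving the left-hand inequality; the statement for $v_f$ follows via $D_f(U_n\|Q)=D_{f^\ast}(Q\|U_n)$. Part (e) follows from continuity of $g_f^{(\rho)}$ and density of $\{m/n\}$ in $[0,1]$; part (f) is a standard Lipschitz argument, since every $x\in[0,1]$ lies within $1/n$ of some $m/n$, so the gap is at most $K_f(\rho)/n$. For (g), as $\rho\to\infty$ with $m\in\{1,\ldots,n\}$ fixed, $g_f^{(\rho)}(m/n)\to(m/n)\,f(n/m)+(1-m/n)\,f(0)$; convexity of $f$ with $f(1)=0$ gives $f(n/m)\le(1/m)\,f(n)+(1-1/m)\,f(0)$, so the limit is maximized at $m=1$, yielding \eqref{asympt. max3}, and the rate in \eqref{convergence rate 2} follows from a mean-value bound on $f$ over $(0,n)$.

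For parts (h) and (i), I would apply Taylor's theorem with Lagrange remainder around $t=1$: since $f(1)=0$, assumption \eqref{f'' bounded} gives $\tfrac{m}{2}(t-1)^2\le f(t)-f'(1)(t-1)\le\tfrac{M}{2}(t-1)^2$ for $t\in[1/\rho,\rho]$. For $Q\in\set{P}_n(\rho)$, $q_{\max}\ge 1/n\ge q_{\min}$ combined with $q_{\max}/q_{\min}\le\rho$ yields $nQ(j)\in[1/\rho,\rho]$; summing with weights $1/n$ and using $\sum_j(nQ(j)-1)=0$ to kill the linear term produces \eqref{LB Df-m} and \eqref{UB0 Df-M} through the identity $\sum_j(nQ(j)-1)^2=n(n\|Q\|_2^2-1)$, while \eqref{UB Df-M} follows from Theorem~\ref{thm: majorization Df}(b) applied with $P=U_n$. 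Part (i) is an algebraic manipulation: solving the quadratic $M_f(\rho-1)^2/(8\rho)\le d$ in $\rho-1$ yields \eqref{UB rho}. The principal technical obstacle is the majorization step in (c), namely the identification of $\{Q_\beta:\beta\in\Gamma_n(\rho)\}$ as the one-parameter family of slice-wise majorization maxima of $\set{P}_n(\rho)$; once this characterization is in place, all remaining parts reduce to elementary estimates on the single-variable function $g_f^{(\rho)}$ or to the Taylor bound.
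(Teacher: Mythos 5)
Most of your proposal is correct, and in parts~c)--e) you take a genuinely different route from the paper. Your collapse of the three-level staircase $Q_\beta$ in \eqref{Q_beta} to the two-level bound $D_f(Q_\beta\|U_n)\le g_f^{(\rho)}(x_\beta)$, via convexity of $f$ at the middle mass and the identity $1+(\rho-1)x_\beta=1/(n\beta)$ with $x_\beta=(1-n\beta)/(n(\rho-1)\beta)\in\bigl[0,\tfrac{n-1}{n}\bigr]$, proves the right-side inequality of \eqref{LB/UB u_f} directly at finite $n$, and then part~e) follows by a squeeze against the grid lower bound. The paper goes the other way: it first proves part~e) (showing the middle mass contributes $O(1/n)$ and reducing to the two-level family $\widetilde{Q}_m$), and only then obtains the right side of \eqref{LB/UB u_f} through the monotonicity $u_f(n,\rho)\le u_f(2n,\rho)$, established by duplicating a maximizer into $\set{P}_{2n}(\rho)$, and a limit along $n2^k$. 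Your argument avoids the doubling trick entirely and is arguably cleaner. In part~c) you re-derive what the paper imports as a lemma from earlier work (the slice-wise majorization maximality of $Q_\beta$ among all $Q\in\set{P}_n(\rho)$ with $q_{\min}=\beta$); your sketch is the right statement, and the verification $G_Q(k)\le\min\{k\rho\beta,\,1-(n-k)\beta\}=G_{Q_\beta}(k)$ is routine. Your direct Taylor argument in part~h) is also valid and is essentially the mechanism behind the paper's route through its majorization theorem. (Minor slip: at $\beta=1/(n+(\rho-1)m)$ the middle mass does not vanish --- it equals $\beta$ and merges with the low level --- but your conclusion that $\alpha=0$ and $D_f(Q_\beta\|U_n)=g_f^{(\rho)}(m/n)$ is correct.)

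There is, however, a genuine gap in part~g). Your argument computes $\lim_{\rho\to\infty}g_f^{(\rho)}(m/n)$ at grid points and shows the $m=1$ value dominates; via the left side of \eqref{LB/UB u_f} this only yields $\liminf_{\rho\to\infty}u_f(n,\rho)\ge\bigl(1-\tfrac1n\bigr)f(0)+\tfrac{f(n)}{n}$. It provides no upper bound on $u_f(n,\rho)$ itself: the only upper bound your machinery gives is $\max_x g_f^{(\rho)}(x)$ (even restricted to $x\in\bigl[0,\tfrac{n-1}{n}\bigr]$, the range of $x_\beta$), and this diverges as $\rho\to\infty$ whenever $f$ grows superlinearly. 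Concretely, for $f(t)=(t-1)^2$ one gets $g_f^{(\rho)}(x)=\frac{(\rho-1)^2\,x(1-x)}{(1+(\rho-1)x)^2}$, which at $x=\frac1{\rho-1}$ is of order $\frac{\rho-1}{4}\to\infty$, while $u_f(n,\rho)=\max_Q\bigl(n\|Q\|_2^2-1\bigr)\le n-1$ for all $\rho$; so pointwise limits at grid points do not control $\limsup_\rho u_f(n,\rho)$, and the interchange of limit and maximum implicit in ``the limit is maximized at $m=1$, yielding \eqref{asympt. max3}'' is unjustified. The missing idea, which the paper supplies, is that $D_f(\cdot\|U_n)$ is convex on the simplex $\set{P}_n$, hence maximized at a vertex, giving the $\rho$-uniform bound $u_f(n,\rho)\le\bigl(1-\tfrac1n\bigr)f(0)+\tfrac{f(n)}{n}$; combined with the monotonicity of $\rho\mapsto u_f(n,\rho)$ (the sets $\set{P}_n(\rho)$ are nested, which also guarantees the limit exists), this closes the sandwich and proves \eqref{asympt. max3}. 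You need this uniform upper bound as well for your mean-value derivation of \eqref{convergence rate 2}, since that derivation starts from the limit value in \eqref{asympt. max3}. The remaining parts a), b), d)--f), h), i) are sound as proposed.
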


\begin{proof}
See Appendix~\ref{appendix: LB/UB f-div}.
\end{proof}

Tsallis entropy was introduced in \cite{Tsallis88} as a generalization of the
Shannon entropy (similarly to the R\'{e}nyi entropy \cite{Renyientropy}), and
it was applied to statistical physics in \cite{Tsallis88}.

\begin{definition} \cite{Tsallis88}
\label{definition: Tsallis entropy}
Let $P_X$  be a probability mass function defined on a discrete set $\set{X}$.
The \textit{Tsallis entropy of order} $\alpha \in (0,1) \cup (1, \infty)$
of $X$, denoted by $S_{\alpha}(X)$ or $S_{\alpha}(P_X)$, is defined as
\begin{align}
\label{eq: Tsallis entropy}
S_{\alpha}(X) & = \frac1{1-\alpha}
\left( \, \sum_{x \in \set{X}} P_X^{\alpha}(x) - 1 \right) \\
\label{eq 2: Tsallis entropy}
&= \frac{\|P_X\|_{\alpha}^{\alpha}-1}{1-\alpha},
\end{align}
where $\|P_X\|_{\alpha} := \left( \, \underset{x \in \set{X}}{\sum}
P_X^{\alpha}(x) \right)^{\frac1\alpha}$.
The Tsallis entropy is continuously
extended at orders $0$, $1$, and $\infty$; at order~1, it
coincides with the Shannon entropy on base $\mathrm{e}$
(expressed in nats).
\end{definition}

Theorem~\ref{thm: majorization Df} enables to strengthen the Schur-concavity
property of the Tsallis entropy (see \cite[Theorem~13.F.3.a.]{MarshallOA})
as follows.
\begin{theorem} \label{thm: bounds Tsallis}
Let $P$ and $Q$ be probability mass functions which are supported on a finite
set, and let $P \prec Q$. Then, for all $\alpha > 0$,
\begin{enumerate}[a)]
\item \label{Thm. 6.a}
\begin{equation}
\vspace*{-0.3cm}
\label{bounds Tsallis}
0 \leq L(\alpha, P, Q) \leq S_\alpha(P) - S_\alpha(Q) \leq U(\alpha, P, Q),
\end{equation}
where
\begin{align}
\label{LB Tsallis}
& L(\alpha, P, Q) :=
\begin{dcases}
\tfrac12 \, \alpha q_{\max}^{\alpha-2} \, \bigl( \| Q \|_2^2 - \| P \|_2^2 \bigr),
& \quad \mbox{if $\alpha \in (0,2]$,} \\
\tfrac12 \, \alpha q_{\min}^{\alpha-2} \, \bigl( \| Q \|_2^2 - \| P \|_2^2 \bigr),
& \quad \mbox{if $\alpha \in (2, \infty)$,}
\end{dcases} \\[0.2cm]
\label{UB Tsallis}
& U(\alpha, P, Q) :=
\begin{dcases}
\tfrac12 \, \alpha q_{\min}^{\alpha-2} \, \bigl( \| Q \|_2^2 - \| P \|_2^2 \bigr),
& \quad \mbox{if $\alpha \in (0,2]$,} \\
\tfrac12 \, \alpha q_{\max}^{\alpha-2} \, \bigl( \| Q \|_2^2 - \| P \|_2^2 \bigr),
& \quad \mbox{if $\alpha \in (2, \infty)$,}
\end{dcases}
\end{align}
and the bounds in \eqref{LB Tsallis} and \eqref{UB Tsallis} are attained at $\alpha=2$.

\item \label{Thm. 6.b}
\begin{equation}
\label{inf,sup}
\inf_{P \prec Q, \, P \neq Q} \frac{S_\alpha(P) - S_\alpha(Q)}{L(\alpha, P, Q)} =
\sup_{P \prec Q, \, P \neq Q} \frac{S_\alpha(P) - S_\alpha(Q)}{U(\alpha, P, Q)} = 1,
\end{equation}
where the infimum and supremum in \eqref{inf,sup} can be restricted to probability mass
functions $P$ and $Q$ which are supported on a binary alphabet.
\end{enumerate}
\end{theorem}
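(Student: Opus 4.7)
The plan is to recognize $S_\alpha(P)-S_\alpha(Q)$ as a scalar multiple of a difference of $f$-divergences against $U_n$, and then invoke Theorem~\ref{thm: majorization Df}. I will use the convex function
\[
f_\alpha(t) := \frac{t^{\alpha}-t}{\alpha-1} \qquad (\alpha>0,\ \alpha\neq 1),
\]
extended by $f_1(t):=t\log t$ at $\alpha=1$; this is strictly convex and twice differentiable on $(0,\infty)$ with $f_\alpha(1)=0$ and $f_\alpha''(t)=\alpha\,t^{\alpha-2}$. A direct evaluation shows that, for every $R\in\set{P}_n$,
\[
D_{f_\alpha}(R\|U_n)=\frac{n^{\alpha-1}\|R\|_{\alpha}^{\alpha}-1}{\alpha-1},
\]
which yields the key identity
\[
S_\alpha(P)-S_\alpha(Q)=n^{1-\alpha}\,\bigl[D_{f_\alpha}(Q\|U_n)-D_{f_\alpha}(P\|U_n)\bigr].
\]

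For part~(a) I will plug this identity into the sandwich \eqref{UB diff Df - equiprob.}--\eqref{LB diff Df - equiprob.} of Theorem~\ref{thm: majorization Df}\,(a) and compute $c_{f_\alpha}$ and $e_{f_\alpha}$ from \eqref{c_f}--\eqref{e_f}. Since $f_\alpha''(t)=\alpha\,t^{\alpha-2}$ is monotonically non-increasing in $t$ for $\alpha\in(0,2]$ and non-decreasing for $\alpha>2$, its infimum and supremum on $[nq_{\min},nq_{\max}]$ sit at opposite endpoints. For $\alpha\in(0,2]$ this gives $c_{f_\alpha}=\tfrac12\alpha(nq_{\max})^{\alpha-2}$ and $e_{f_\alpha}=\tfrac12\alpha(nq_{\min})^{\alpha-2}$, with the roles of $q_{\min}$ and $q_{\max}$ swapped for $\alpha>2$. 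Multiplying the resulting sandwich by $n^{1-\alpha}$ cancels the ambient powers of $n$ exactly, producing $L(\alpha,P,Q)$ and $U(\alpha,P,Q)$ from \eqref{LB Tsallis}--\eqref{UB Tsallis}. At $\alpha=2$ one has $f_2''\equiv 2$, so $c_{f_2}=e_{f_2}=1$ and $L=U=\|Q\|_2^2-\|P\|_2^2=S_2(P)-S_2(Q)$, matching the claim that the bounds are attained there.

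For part~(b), part~(a) already forces $[S_\alpha(P)-S_\alpha(Q)]/L(\alpha,P,Q)\ge 1$ and $[S_\alpha(P)-S_\alpha(Q)]/U(\alpha,P,Q)\le 1$ for $P\prec Q$, $P\neq Q$, so the infimum and supremum in \eqref{inf,sup} are each bounded by~$1$; it remains to exhibit binary pairs realizing ratios arbitrarily close to~$1$. I will take
\[
Q=\bigl(\tfrac12+\eta,\,\tfrac12-\eta\bigr),\qquad P=\bigl(\tfrac12+\eta',\,\tfrac12-\eta'\bigr),\qquad 0\le\eta'<\eta,
\]
so that $P\prec Q$, $q_{\max}=\tfrac12+\eta$, $q_{\min}=\tfrac12-\eta$, and $\|Q\|_2^2-\|P\|_2^2=2(\eta^2-\eta'^2)$. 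Expanding $(\tfrac12\pm x)^\alpha$ about $x=0$ (odd-order terms cancelling by symmetry) yields
\[
S_\alpha(P)-S_\alpha(Q)=\alpha\cdot 2^{1-\alpha}\,\bigl(\|Q\|_2^2-\|P\|_2^2\bigr)+O(\eta^{4}),
\]
while $q_{\max}^{\alpha-2},\,q_{\min}^{\alpha-2}\to 2^{2-\alpha}$ as $\eta\to 0$, so both $L(\alpha,P,Q)$ and $U(\alpha,P,Q)$ are likewise asymptotic to $\alpha\cdot 2^{1-\alpha}(\|Q\|_2^2-\|P\|_2^2)$. Sending $\eta\to 0$ with $0\le\eta'<\eta$ therefore drives both ratios in \eqref{inf,sup} to~$1$, establishing the claim and confirming that the binary alphabet suffices.

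The main technical obstacle is purely bookkeeping: splitting the monotonicity analysis of $f_\alpha''$ at $\alpha=2$, and tracking the $n^{\alpha-1}$ and $n^{1-\alpha}$ factors through the $S_\alpha\leftrightarrow D_{f_\alpha}(\cdot\|U_n)$ conversion so that the $n$-dependence collapses into the clean formulas for $L$ and $U$. The Taylor-expansion step in (b) is routine once the common leading coefficient $\alpha\cdot 2^{1-\alpha}$ has been identified as the joint limit of $[S_\alpha(P)-S_\alpha(Q)]/(\|Q\|_2^2-\|P\|_2^2)$ and of $L/(\|Q\|_2^2-\|P\|_2^2)$, $U/(\|Q\|_2^2-\|P\|_2^2)$ as $\eta\to 0$.
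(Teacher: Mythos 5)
Your proposal is correct and follows essentially the same route as the paper's proof: the paper works with the Alpha-divergence generator $u_\alpha(t)=\frac{t^\alpha-\alpha(t-1)-1}{\alpha(\alpha-1)}$, which differs from your $f_\alpha(t)=\frac{t^\alpha-t}{\alpha-1}$ only by a positive scaling and an affine term $t-1$ that leaves the divergence unchanged, so both yield the same identity linking $S_\alpha(P)-S_\alpha(Q)$ to $D_f(Q\,\|\,U_n)-D_f(P\,\|\,U_n)$, the same computation of $c_f,e_f$ from $f''(t)\propto t^{\alpha-2}$ split at $\alpha=2$, and the same cancellation of the powers of $n$. For part (b) the paper uses the same binary perturbation family (with $P_\varepsilon(0)=\tfrac12+\varepsilon$, $Q_\varepsilon(0)=\tfrac12+\beta\varepsilon$, i.e., your pair with $\eta'$ proportional to $\eta$) and the same Taylor expansion; in your version just fix the ratio $\eta'/\eta$ (or take $\eta'=0$) when sending $\eta\to 0$, since the crude $O(\eta^4)$ remainder need not be negligible relative to $\eta^2-\eta'^2$ if $\eta'$ is allowed to approach $\eta$, and a single such family suffices for the infimum and supremum in \eqref{inf,sup}.
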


\begin{proof}
See Appendix~\ref{appendix: bounds Tsallis}.
\end{proof}

\begin{remark}
\label{remark: Tsallis}
The lower bound in \cite[Theorem~1]{HoS-ISIT15} also strengthens
the Schur-concavity property of the Tsallis entropy. It can be
verified that none of the lower bounds in \cite[Theorem~1]{HoS-ISIT15}
and Theorem~\ref{thm: bounds Tsallis} supersedes the other. For
example, let $\alpha>0$, and let $P_\varepsilon$ and $Q_\varepsilon$
be probability mass functions supported on $\set{A} := \{0,1\}$ with
$P_\varepsilon(0) = \tfrac12 + \varepsilon$ and
$Q_\varepsilon(0) = \tfrac12 + \beta \varepsilon$
where $\beta > 1$ and $0 < \varepsilon < \frac1{2\beta}$. This yields
$P_\varepsilon \prec Q_\varepsilon$. From \eqref{limit 1 - Tsallis}
(see Appendix~\ref{appendix: bounds Tsallis}),
\begin{align}
\label{01072019a2}
& \lim_{\varepsilon \to 0^+} \frac{S_\alpha(P_\varepsilon) -
S_\alpha(Q_\varepsilon)}{L(\alpha, P_\varepsilon, Q_\varepsilon)} = 1.
\end{align}
If $\alpha=1$, then $S_1(P_\varepsilon) - S_1(Q_\varepsilon) =
\tfrac1{\log \mathrm{e}} \bigl(H(P_\varepsilon) - H(Q_\varepsilon)\bigr)$,
and the continuous extension of the lower bound in \cite[Theorem~1]{HoS-ISIT15}
at $\alpha=1$ is specialized to the earlier result by the same authors
in \cite[Theorem~3]{HoS-IT10}; it states that if $P \prec Q$, then
$H(P) - H(Q) \geq D(Q\|P)$.
In contrast to \eqref{01072019a2}, it can be verified that
\begin{align}
\label{01072019a3}
& \lim_{\varepsilon \to 0^+} \frac{S_1(P_\varepsilon) -
S_1(Q_\varepsilon)}{\tfrac1{\log \mathrm{e}} \,
D(Q_\varepsilon\|P_\varepsilon)}
= \frac{\beta+1}{\beta-1} > 1, \quad \forall \, \beta > 1,
\end{align}
which can be made arbitrarily large by selecting $\beta$ to be
sufficiently close to~1 (from above). This provides a case
where the lower bound in Theorem~\ref{thm: bounds Tsallis}
outperforms the one in \cite[Theorem~3]{HoS-IT10}.
\end{remark}

\begin{remark}
Due to the one-to-one correspondence between Tsallis and R\'{e}nyi entropies
of the same positive order, similar to the transition from \cite[Theorem~1]{HoS-ISIT15}
to \cite[Theorem~2]{HoS-ISIT15}, also Theorem~\ref{thm: bounds Tsallis} enables to
strengthen the Schur-concavity property of the R\'{e}nyi entropy.
For information-theoretic implications of the Schur-concavity of the R\'{e}nyi
entropy, the reader is referred to, e.g.,
\cite[Theorem~3]{CicaleseGV19}, \cite[Theorem~11]{ISSV18} and \cite{Sason18b}.
\end{remark}

\section{Illustration of the Main Results and Implications}
\label{section: Examples}

\subsection{Illustration of Theorems~\ref{Thm: DMS-DMC} and~\ref{theorem: DMS/DMC - ver2}}
\label{subsection: Illustration of Thm. DMS-DMC}

We apply here the data-processing inequalities in Theorems~\ref{Thm: DMS-DMC}
and~\ref{theorem: DMS/DMC - ver2} to the new class of $f$-divergences introduced
in Theorem~\ref{thm: f_alpha-divergence}.

In the setup of Theorems~\ref{Thm: DMS-DMC} and~\ref{theorem: DMS/DMC - ver2},
consider communication over a time-varying binary-symmetric channel (BSC).
Consequently, let $\set{X} = \set{Y} = \{0,1\}$, and let
\begin{align} \label{Bernoulli two inputs}
& P_{X_i}(1) = p_i, \quad Q_{X_i}(1) = q_i,
\end{align}
with $p_i \in (0,1)$ and $q_i \in (0,1)$ for every $i \in \{1, \ldots, n\}$.
Let the transition probabilities $P_{Y_i | X_i}(\cdot | \cdot)$ correspond to
$\text{BSC}(\delta_i)$ (i.e., a BSC with a crossover probability $\delta_i$), i.e.,
\begin{align} \label{TV-BSC}
P_{Y_i | X_i}(y | x) =
\begin{dcases}
1-\delta_i  & \quad \mbox{if $x=y$,} \\
\delta_i    & \quad \mbox{if $x \neq y$.}
\end{dcases}
\end{align}
For all $\lambda \in [0,1]$ and $\underline{x} \in \set{X}^n$, the probability
mass function at the channel input is given by
\begin{align}
& R_{X^n}^{(\lambda)}(\underline{x}) =
\prod_{i=1}^n  R_{X_i}^{(\lambda)}(x_i),
\end{align}
with
\begin{align}
\label{Bernoulli input}
R_{X_i}^{(\lambda)}(x) = \lambda P_{X_i}(x) + (1-\lambda) Q_{X_i}(x), \quad x \in \{0,1\},
\end{align}
where the probability mass function in \eqref{Bernoulli input} refers to a Bernoulli
distribution with parameter $\lambda p_i + (1-\lambda) q_i$.
At the output of the time-varying BSC (see \eqref{MC1 in DMC}--\eqref{MC3 in DMC} and
\eqref{TV-BSC}), for all $\underline{y} \in \set{Y}^n$,
\begin{align}
& R_{Y^n}^{(\lambda)}(\underline{y}) =
\prod_{i=1}^n  R_{Y_i}^{(\lambda)}(y_i), \quad
P_{Y^n}(\underline{y}) = \prod_{i=1}^n P_{Y_i}(y_i), \quad
Q_{Y^n}(\underline{y}) = \prod_{i=1}^n Q_{Y_i}(y_i),
\end{align}
where
\begin{align}
& R_{Y_i}^{(\lambda)}(1) = \bigl( \lambda p_i + (1-\lambda) q_i \bigr) \ast \delta_i, \\
& P_{Y_i}(1) = p_i \ast \delta_i, \\
& Q_{Y_i}(1) = q_i \ast \delta_i,
\end{align}
with
\begin{align}
a \ast b := a (1-b) + (1-a) b, \quad 0 \leq a, b \leq 1.
\end{align}

The $\chi^2$-divergence from $\text{Bernoulli}(p)$ to $\text{Bernoulli}(q)$ is given by
\begin{align}
\chi^2\bigl(\text{Bernoulli}(p) \, \| \, \text{Bernoulli}(q)\bigr)
= \frac{(p-q)^2}{q(1-q)},
\end{align}
and since the probability mass functions $P_{X_i}$, $Q_{X_i}$, $P_{Y_i}$ and $Q_{Y_i}$
correspond to Bernoulli distributions with parameters $p_i$, $q_i$, $p_i \ast \delta_i$
and $q_i \ast \delta_i$, respectively, Theorem~\ref{Thm: DMS-DMC} gives that
\begin{align}
& \hspace*{-0.3cm} c_{f_\alpha}\bigl(\xi_1(n, \lambda), \, \xi_2(n, \lambda) \bigr) \,
\left[ \, \prod_{i=1}^n \left(1 + \frac{\lambda^2 (p_i-q_i)^2}{q_i (1-q_i)} \right)
- \prod_{i=1}^n \left(1 + \frac{\lambda^2 (p_i \ast \delta_i
- q_i \ast \delta_i)^2}{(q_i \ast \delta_i) (1-q_i \ast \delta_i)} \right) \right]
\nonumber \\[0.1cm]
& \hspace*{-0.3cm} \leq D_{f_\alpha}(R_{X^n}^{(\lambda)} \, \| \, Q_{X^n})
- D_{f_\alpha}(R_{Y^n}^{(\lambda)} \, \| \, Q_{Y^n}) \label{diff f_alpha} \\[0.1cm]
\label{UB - diff f_alpha}
& \hspace*{-0.3cm} \leq e_{f_\alpha}\bigl(\xi_1(n, \lambda), \, \xi_2(n, \lambda) \bigr)
\left[ \, \prod_{i=1}^n \left(1 + \frac{\lambda^2 (p_i-q_i)^2}{q_i (1-q_i)} \right)
- \prod_{i=1}^n \left(1 + \frac{\lambda^2 (p_i \ast \delta_i
- q_i \ast \delta_i)^2}{(q_i \ast \delta_i) (1-q_i \ast \delta_i)} \right) \right]
\end{align}
for all $\lambda \in [0,1]$ and $n \in \naturals$. From \eqref{c_f}, \eqref{e_f}
and \eqref{f_alpha}, we get that for all $\xi_1 < 1< \xi_2$,
\begin{align}
c_{f_\alpha}(\xi_1, \xi_2) &= \tfrac12 \, \inf_{t \in [\xi_1, \xi_2]} f_\alpha''(t) \\
&= \log(\alpha + \xi_1) + \tfrac32 \, \log \mathrm{e}, \\[0.2cm]
e_{f_\alpha}(\xi_1, \xi_2) &= \tfrac12 \, \sup_{t \in [\xi_1, \xi_2]} f_\alpha''(t) \\
&= \log(\alpha + \xi_2) + \tfrac32 \, \log \mathrm{e},
\end{align}
and, from \eqref{xi1_n}, \eqref{xi2_n} and \eqref{Bernoulli two inputs}, for all
$\lambda \in (0,1]$,
\begin{align}
\label{xi_1,n-BSC}
& \xi_1(n, \lambda) := \prod_{i=1}^n \left(1 - \lambda + \lambda \, \min \left\{
\frac{p_i}{q_i}, \frac{1-p_i}{1-q_i} \right\} \right) \in [0,1), \\[0.1cm]
\label{xi_2,n-BSC}
& \xi_2(n, \lambda) := \prod_{i=1}^n \left(1 - \lambda + \lambda \, \max \left\{
\frac{p_i}{q_i}, \frac{1-p_i}{1-q_i} \right\} \right) \in (1, \infty),
\end{align}
provided that $p_i \neq q_i$ for some $i \in \{1, \ldots, n\}$ (otherwise, both
$f$-divergences in the right side of \eqref{diff f_alpha} are equal to zero since
$P_{X_i} \equiv Q_{X_i}$ and therefore $R_{X_i}^{(\lambda)} \equiv Q_{X_i}$ for
all $i$ and $\lambda \in [0,1]$). Furthermore, from Item~\ref{Th. 2.c.}) of
Theorem~\ref{Thm: DMS-DMC}, for every $n \in \naturals$ and
$\alpha \geq \mathrm{e}^{-\frac32}$,
\begin{align}
\label{limit f_alpha div.}
& \lim_{\lambda \to 0^+} \frac{D_{f_\alpha}(R_{X^n}^{(\lambda)} \, \| \, Q_{X^n})
- D_{f_\alpha}(R_{Y^n}^{(\lambda)} \, \| \, Q_{Y^n})}{\lambda^2} \nonumber \\
& = \left( \log(\alpha+1) + \tfrac32 \, \log \mathrm{e} \right)
\sum_{i=1}^n \left\{ \frac{(p_i-q_i)^2}{q_i (1-q_i)} -
\frac{(p_i \ast \delta_i - q_i \ast \delta_i)^2}{(q_i \ast \delta_i)
(1-q_i \ast \delta_i)} \right\},
\end{align}
and the lower and upper bounds in the left side of \eqref{diff f_alpha}
and the right side of \eqref{UB - diff f_alpha}, respectively, are tight
as we let $\lambda \to 0$, and they both coincide with the limit in the
right side of \eqref{limit f_alpha div.}.

Figure~\ref{figure: 20190512} illustrates the upper and lower bounds
in \eqref{diff f_alpha} and \eqref{UB - diff f_alpha} with $\alpha=1$,
$p_i \equiv \tfrac14$, $q_i \equiv \tfrac12$ and $\delta_i \equiv 0.110$
for all $i$, and $n \in \{1, 10, 50\}$.
In the special case where $\{\delta_i\}$ are fixed for all $i$, the
communication channel is a time-invariant BSC whose capacity is equal
to $\tfrac12$ bit per channel use.
\begin{figure}
\vspace*{-4cm}
\begin{center}
\hspace*{-0.8cm}
\includegraphics[width=9.2cm, angle=0]{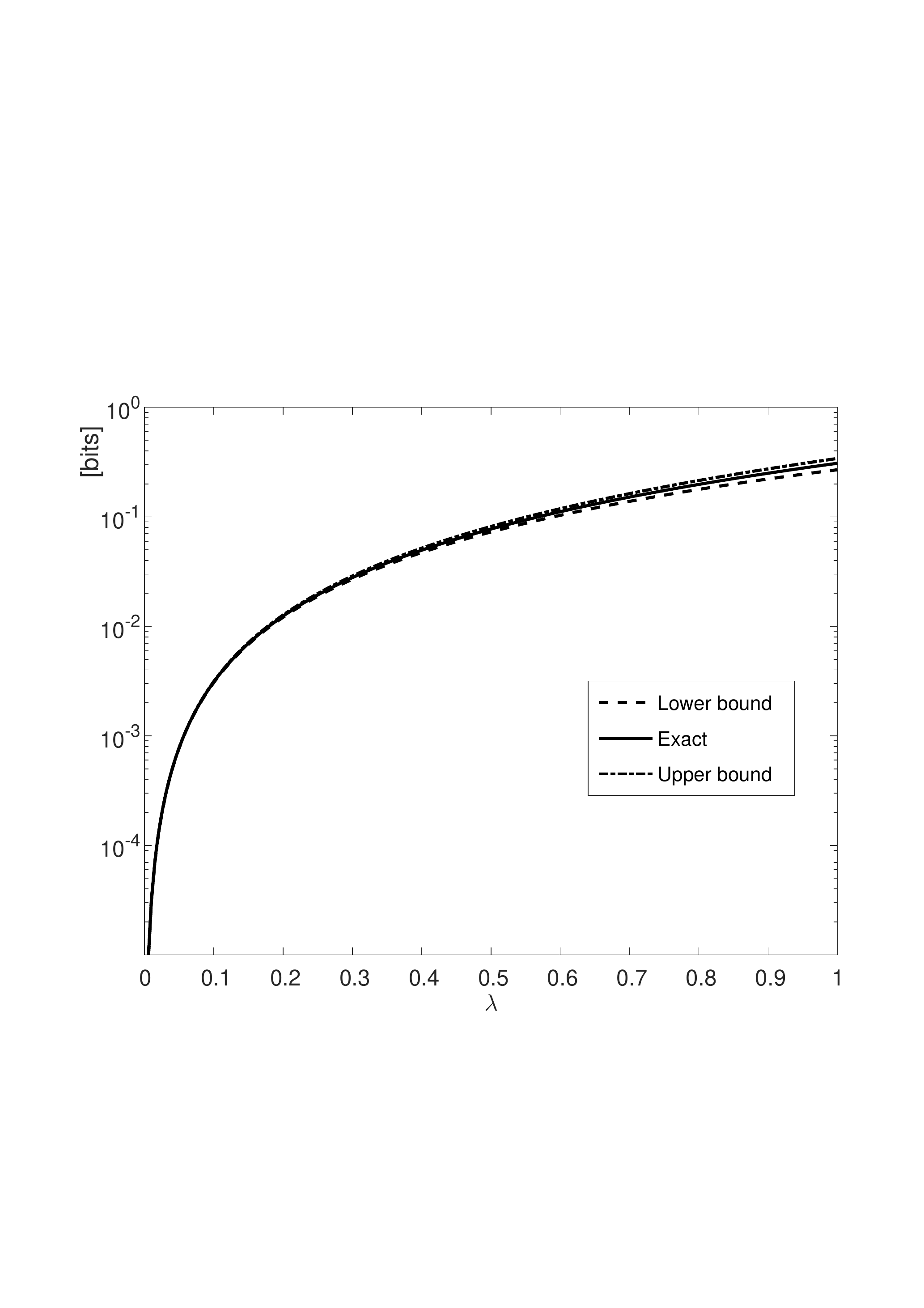}  \\[-6.3cm]
\hspace*{-0.8cm}
\includegraphics[width=9.2cm, angle=0]{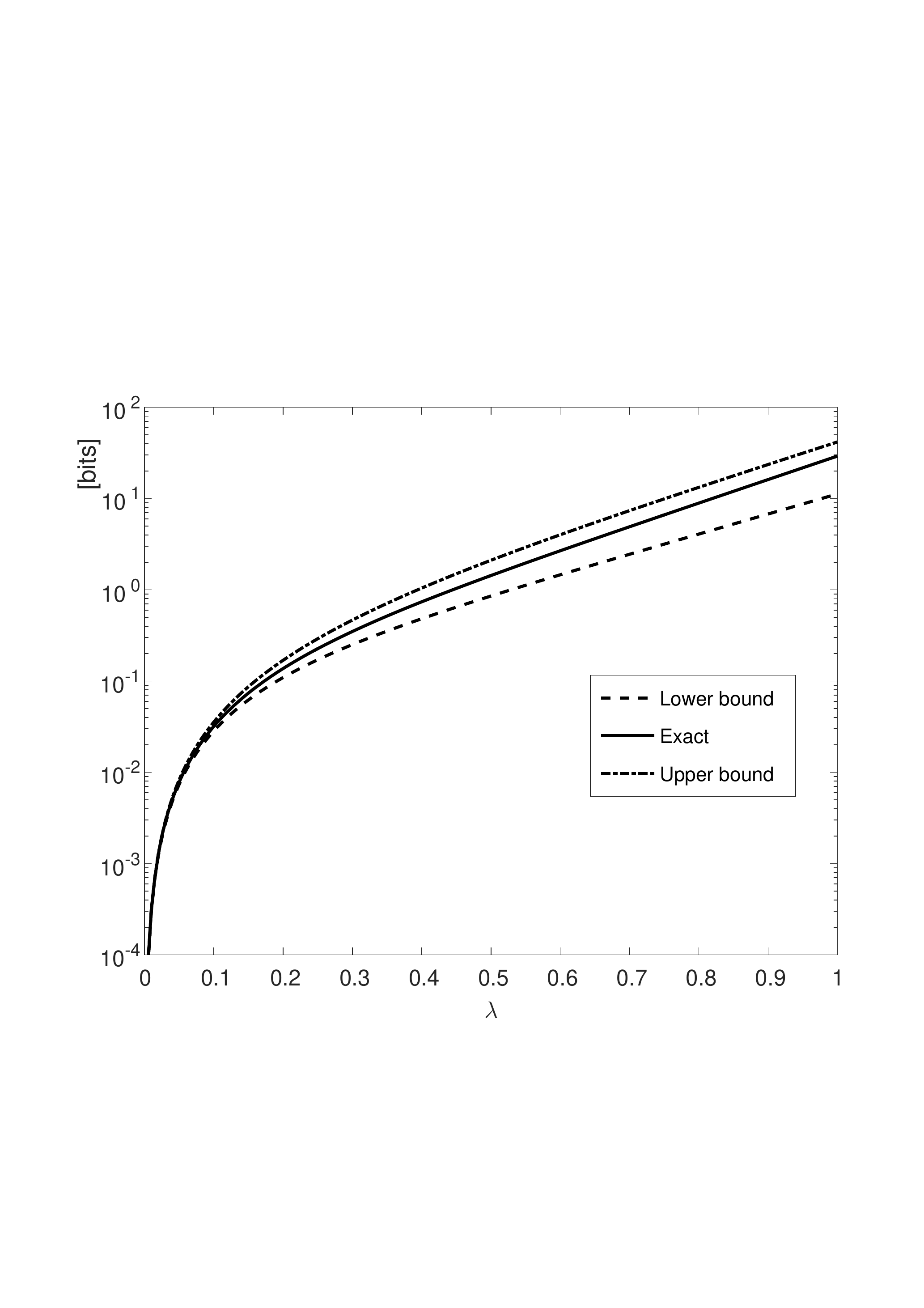} \\[-6.3cm]
\hspace*{-0.8cm}
\includegraphics[width=9.2cm]{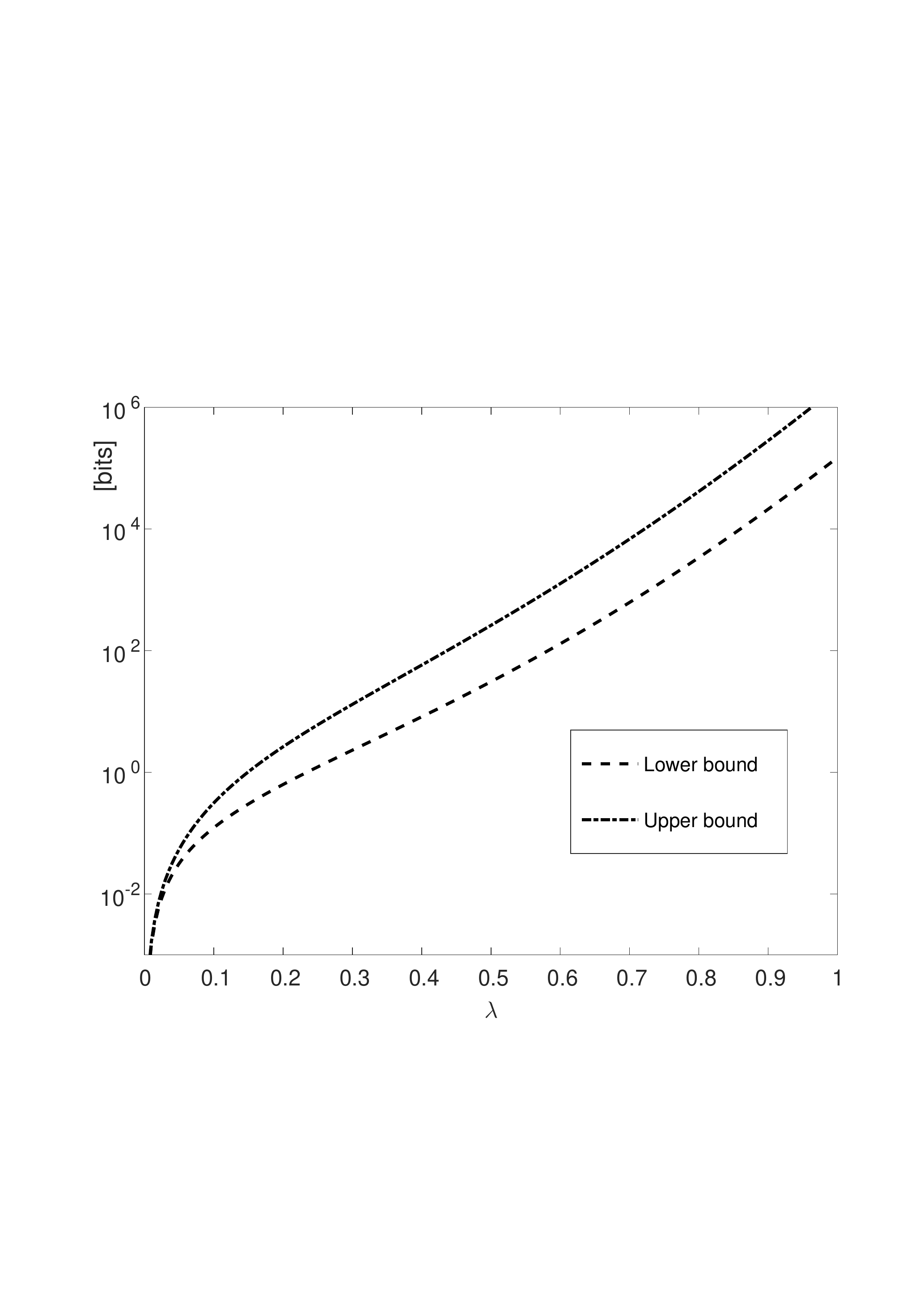}
\end{center}
\vspace*{-3.3cm}
\caption{\label{figure: 20190512} The bounds in
Theorem~\ref{Thm: DMS-DMC} applied to
$D_{f_\alpha}(R_{X^n}^{(\lambda)} \, \| \, Q_{X^n})
- D_{f_\alpha}(R_{Y^n}^{(\lambda)} \, \| \, Q_{Y^n})$ (vertical axis)
versus $\lambda \in [0,1]$ (horizontal axis). The $f_\alpha$-divergence
refers to Theorem~\ref{thm: f_alpha-divergence}.
The probability mass functions $P_{X^n}$ and $Q_{X^n}$ correspond,
respectively, to discrete memoryless sources emitting $n$ i.i.d.
$\text{Bernoulli}(p)$ and $\text{Bernoulli}(q)$ symbols; the symbols
are transmitted over $\text{BSC}(\delta)$ with
$(\alpha, p, q, \delta) = \bigl(1, \tfrac14, \tfrac12, 0.110\bigr)$.
The upper, middle and lower plots correspond, respectively, to $n=1$,
$10$, and $50$. The bounds in the upper and middle plots are compared
to the exact values.}
\end{figure}
By referring to the upper and middle plots of Figure~\ref{figure: 20190512},
if $n=1$ or $n=10$, then the exact values of the differences of
the $f_\alpha$-divergences in the right side of \eqref{diff f_alpha} are
calculated numerically, being compared to the lower and upper bounds
in the left side of \eqref{diff f_alpha} and the right side of
\eqref{UB - diff f_alpha} respectively.
Since the $f_\alpha$-divergence does not tensorize, the
computation of the exact value of each of the two $f_\alpha$-divergences
in the right side of \eqref{diff f_alpha} involves a pre-computation of
$2^n$ probabilities for each of the probability mass functions $P_{X^n}$,
$Q_{X^n}$, $P_{Y^n}$ and $Q_{Y^n}$; this computation is prohibitively
complex unless $n$ is small enough.

We now apply the bound in Theorem~\ref{theorem: DMS/DMC - ver2}.
In view of \eqref{def: kappa}, \eqref{13062019d1}, \eqref{f_alpha} and
\eqref{kappa_alpha 2}, for
all $\lambda \in (0,1]$ and $\alpha \geq \mathrm{e}^{-\frac32}$,
\begin{align}
\label{15062019a1}
& \frac{D_{f_\alpha}\bigl(R_{Y^n}^{(\lambda)} \, \| \,
Q_{Y^n}\bigr)}{D_{f_\alpha} \bigl(R_{X^n}^{(\lambda)} \, \| \,
Q_{X^n}\bigr)} \nonumber \\
& \leq \frac{\kappa_\alpha\bigl(\xi_1(n, \lambda), \,
\xi_2(n, \lambda) \bigr)}{f_\alpha(0)+f_\alpha'(1)} \;
\frac{\overset{n}{\underset{i=1}{\prod}} \Bigl( 1+ \lambda^2 \,
\chi^2(P_{Y_i} \, \| \, Q_{Y_i} \bigr) \Bigr)
- 1}{\overset{n}{\underset{i=1}{\prod}}
\Bigl( 1+ \lambda^2 \, \chi^2(P_{X_i} \, \| \,
Q_{X_i} \bigr) \Bigr) - 1} \\[0.1cm]
\label{15062019a2}
&= \frac{f_\alpha\bigl(\xi_2(n, \lambda)\bigr) + f_\alpha'(1) \,
\bigl(1-\xi_2(n, \lambda)\bigr)}{\bigl(\xi_2(n, \lambda)-1\bigr)^2 \,
\bigl(f_\alpha(0)+f_\alpha'(1)\bigr)} \cdot
\frac{\overset{n}{\underset{i=1}{\prod}} \left(1 + \dfrac{\lambda^2
(p_i \ast \delta_i - q_i \ast \delta_i)^2}{(q_i \ast \delta_i)
(1-q_i \ast \delta_i)} \right) - 1}{\overset{n}{\underset{i=1}{\prod}}
\left(1 + \dfrac{\lambda^2 (p_i-q_i)^2}{q_i (1-q_i)} \right) - 1},
\end{align}
where $\xi_1(n, \lambda) \in [0,1)$ and $\xi_2(n, \lambda)
\in (1, \infty)$ are given in \eqref{xi_1,n-BSC} and
\eqref{xi_2,n-BSC}, respectively, and for $t \geq 0$,
\begin{align}
& f_\alpha(t) + f_\alpha'(1) (1-t) \nonumber \\
&= (\alpha+t)^2 \log(\alpha+t) - (\alpha+1)^2 \log(\alpha+1) \nonumber \\
& \hspace*{0.4cm} + \bigl[2(\alpha+1) \log(\alpha+1)
+ (\alpha+1) \log \mathrm{e} \bigr] (1-t).  \label{15062019a3}
\end{align}

Figure~\ref{figure: 20190616} illustrates the upper bound on
$\frac{D_{f_\alpha}(R_{Y^n}^{(\lambda)} \, \| \,
Q_{Y^n})}{D_{f_\alpha}(R_{X^n}^{(\lambda)} \, \| \, Q_{X^n})}$
(see \eqref{15062019a1}--\eqref{15062019a3})
as a function of $\lambda \in (0,1]$. It refers to the case where
$p_i \equiv \tfrac14$, $q_i \equiv \tfrac12$, and $\delta_i \equiv
0.110$ for all $i$ (similarly to Figure~\ref{figure: 20190512}).
The upper and middle plots correspond to
$n=10$ with $\alpha = 10$ and $\alpha = 100$, respectively; the middle
and lower plots correspond to $\alpha=100$ with $n=10$ and $n=100$,
respectively. The bounds in the upper and middle plots are compared to
their exact values since their numerical computations are feasible for
$n=10$. It is observed from the numerical comparisons for $n=10$ (see
the upper and middle plots in Figure~\ref{figure: 20190616}) that the
upper bounds are informative, especially for large values of $\alpha$
where the $f_\alpha$-divergence becomes closer to a scaled version
of the $\chi^2$-divergence (see Item~\ref{Thm. f-2c}) in
Theorem~\ref{thm: f_alpha-divergence}).

\begin{figure}
\vspace*{-4.1cm}
\begin{center}
\hspace*{-0.8cm}
\includegraphics[width=9.2cm, angle=0]{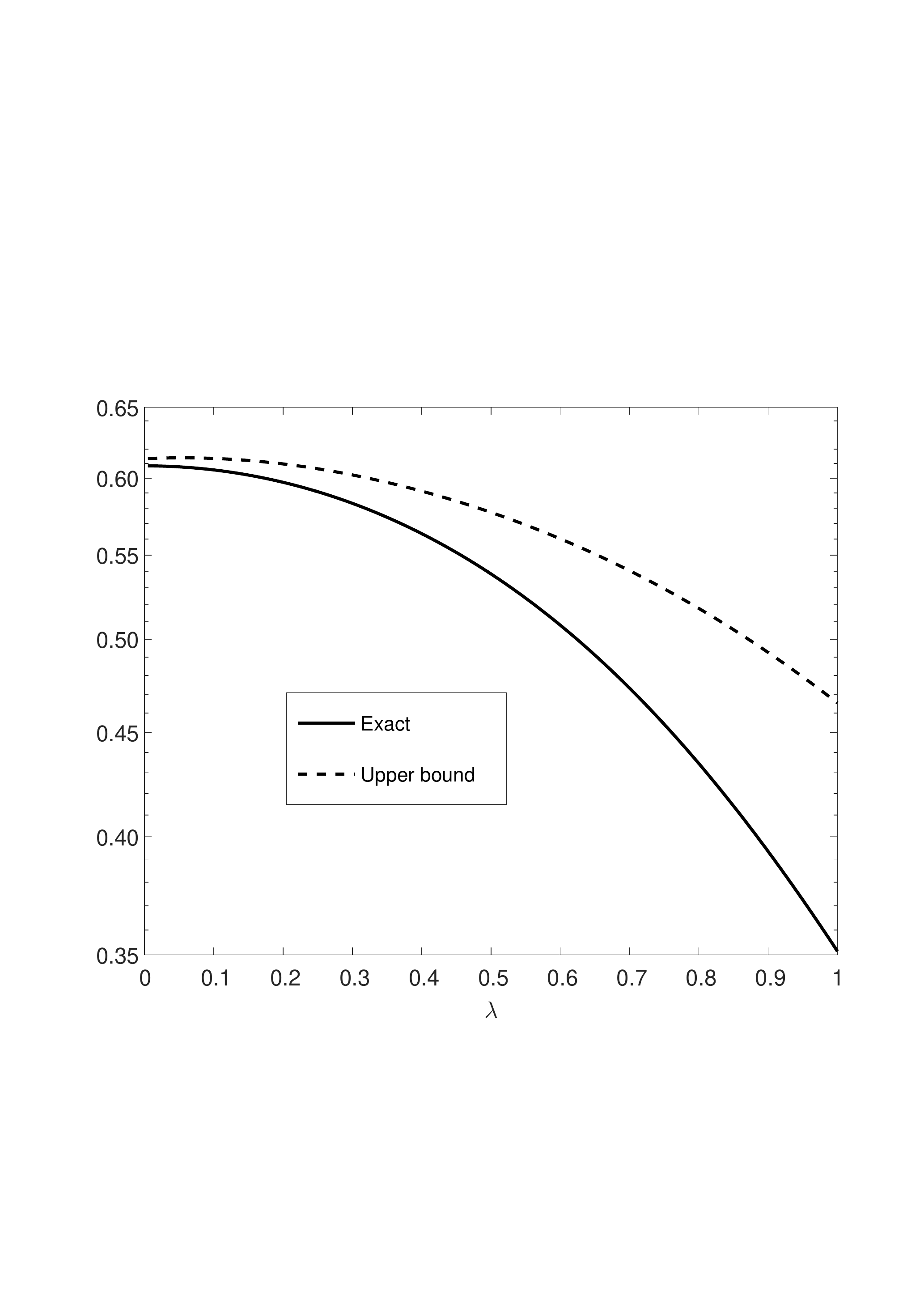}  \\[-6.4cm]
\hspace*{-0.8cm}
\includegraphics[width=9.2cm, angle=0]{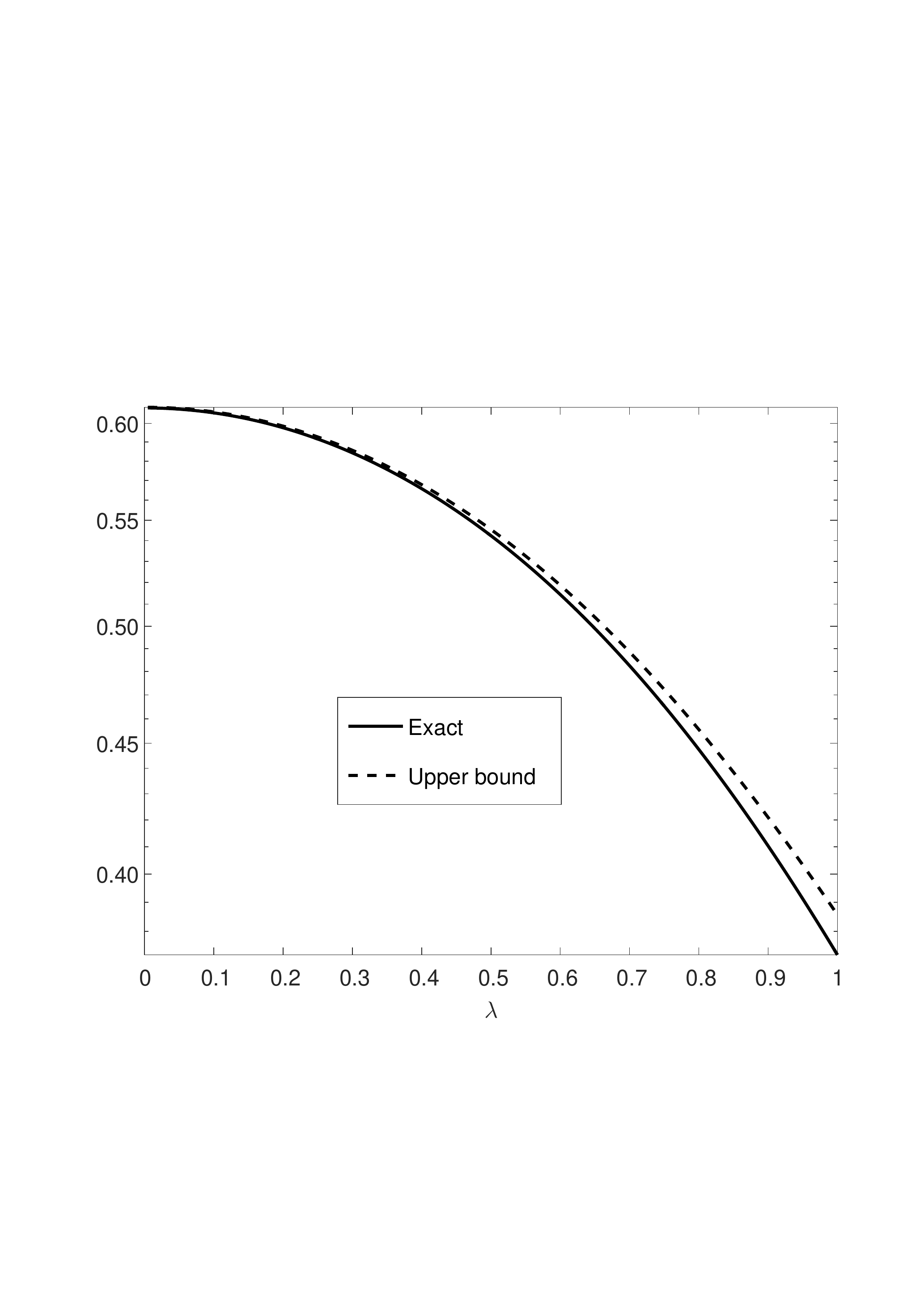} \\[-6.4cm]
\hspace*{-0.8cm}
\includegraphics[width=9.2cm]{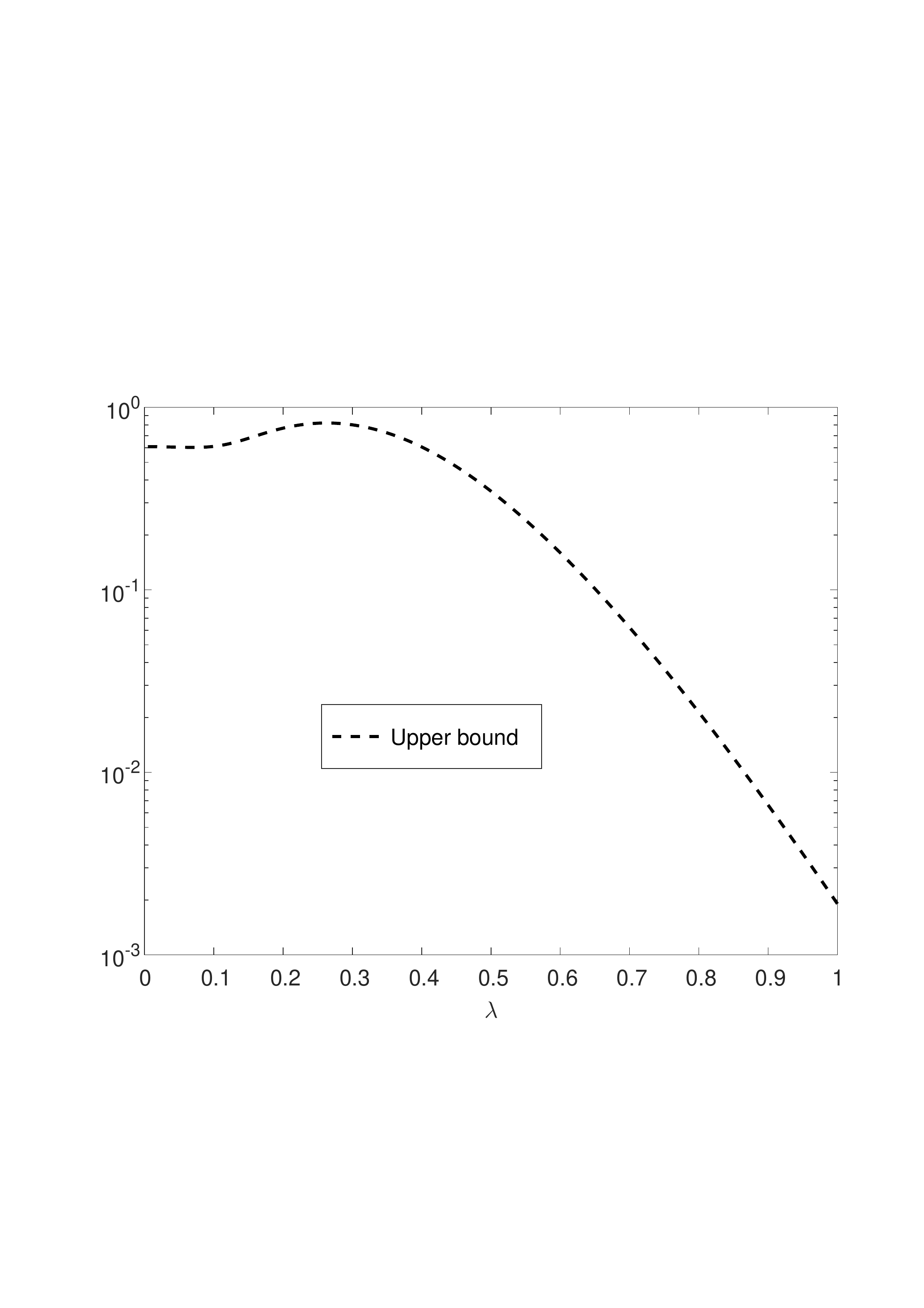}
\end{center}
\vspace*{-3.3cm}
\caption{\label{figure: 20190616} The upper bound on
$\frac{D_{f_\alpha}(R_{Y^n}^{(\lambda)} \, \| \, Q_{Y^n})}{D_{f_\alpha}
(R_{X^n}^{(\lambda)} \, \| \, Q_{X^n})}$ (see
\eqref{15062019a1}--\eqref{15062019a3}),
for the $f_\alpha$-divergence in Theorem~\ref{thm: f_alpha-divergence},
is shown in the vertical axis versus $\lambda \in [0,1]$ in the horizontal
axis. The probability mass functions $P_{X_i}$ and $Q_{X_i}$ are
$\text{Bernoulli}(p)$ and $\text{Bernoulli}(q)$, respectively, for all
$i \in \{1, \ldots, n\}$ with $n$ uses of $\text{BSC}(\delta)$ and
parameters $(p, q, \delta) = \bigl(\tfrac14, \tfrac12, 0.110\bigr)$.
The upper and middle plots correspond to
$n=10$ with $\alpha = 10$ and $\alpha = 100$, respectively; the middle
and lower plots correspond to $\alpha=100$ with $n=10$ and $n=100$,
respectively. The bounds in the upper and middle plots are compared to
their respective exact values, being computationally feasible for $n=10$.}
\end{figure}

\subsection{Illustration of Theorems~\ref{theorem: contraction coef}
and~\ref{thm: f_alpha-divergence}}
\label{subsection: Illustration of Thm. f_alpha-divergence}

Following the application of the data-processing inequalities
in Theorems~\ref{Thm: DMS-DMC} and \ref{theorem: DMS/DMC - ver2}
to a class of $f$-divergences (see
Section~\ref{subsection: Illustration of Thm. DMS-DMC}),
some interesting properties of this class are introduced in
Theorem~\ref{thm: f_alpha-divergence}.

For $\alpha \geq \mathrm{e}^{-\frac32}$, let
$d_{f_\alpha} \colon (0,1)^2 \to [0, \infty)$ be the binary
$f_\alpha$-divergence (see \eqref{f_alpha}), defined as
\begin{align}
d_{f_\alpha}(p\|q) &:= D_{f_\alpha}\bigl( \, \text{Bernoulli}(p)
\, \| \, \text{Bernoulli}(q) \, \bigr) \\
& \hspace*{0.1cm} = q \left(\alpha + \frac{p}{q} \right)^2
\log\left(\alpha + \frac{p}{q} \right) +
(1-q) \left(\alpha + \frac{1-p}{1-q} \right)^2 \log\left(\alpha
+ \frac{1-p}{1-q} \right) \nonumber \\
& \hspace*{0.5cm} - (\alpha+1)^2 \log(\alpha+1),
\quad \forall \, (p,q) \in (0,1)^2.
\end{align}

\begin{figure}[ht]
\vspace*{-4.4cm}
\begin{center}
\centerline{\includegraphics[width=9.5cm]{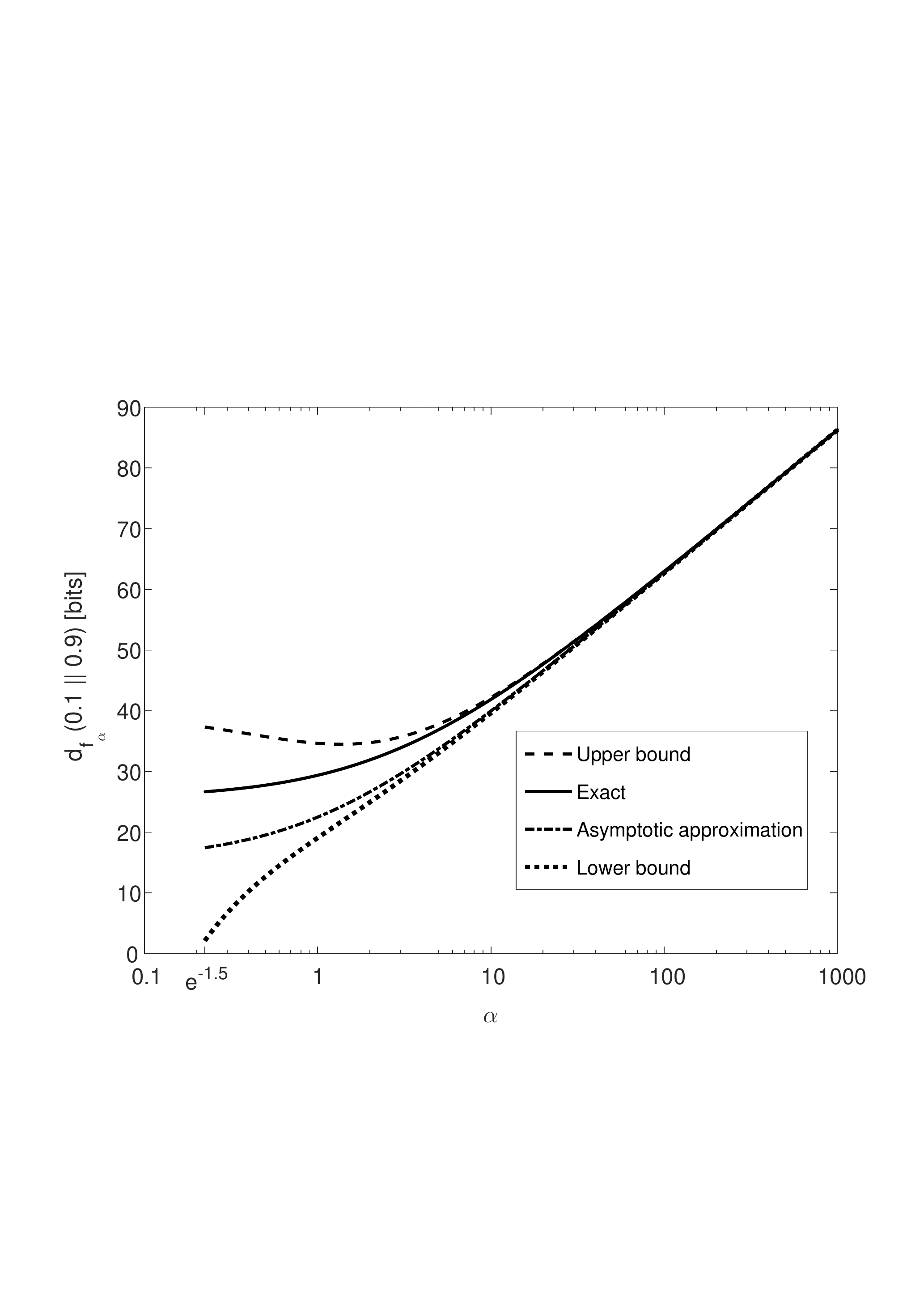}}
\vspace*{-6.4cm}
\centerline{\includegraphics[width=9.5cm]{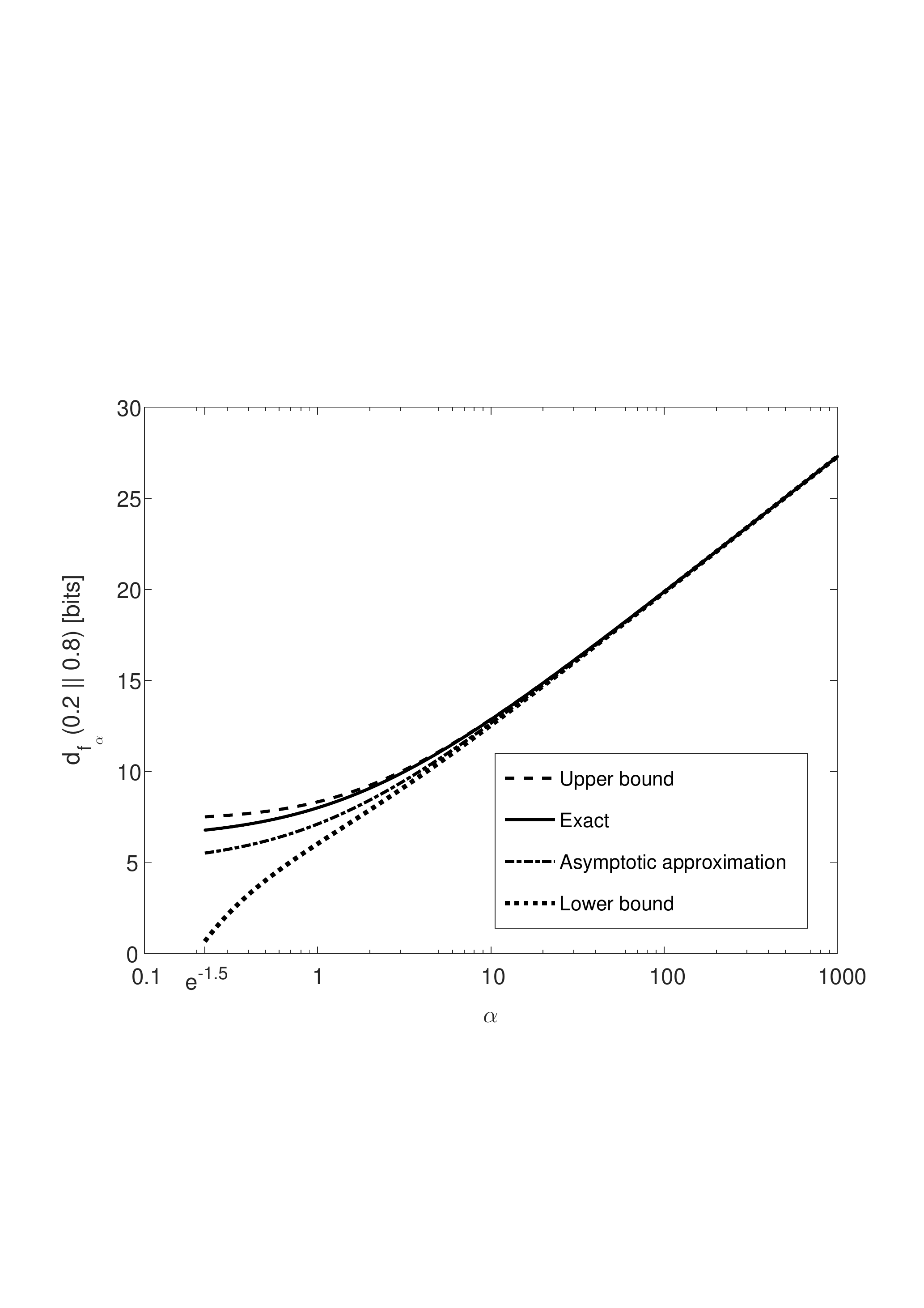}}
\end{center}
\vspace*{-3.7cm}
\caption{\label{figure: f_alpha div}
Plots of $d_{f_\alpha}(p \| q)$, its upper and lower bounds
in \eqref{Df-chi2} and \eqref{Df_UB}, respectively, and its
asymptotic approximation in \eqref{asymp.} for large values
of $\alpha$. The plots are shown as a function of
$\alpha \in \bigl[\mathrm{e}^{-\frac32}, 1000 \bigr]$.
The upper and lower plots refer, respectively, to
$(p,q) = (0.1, 0.9)$ and $(p,q) = (0.2, 0.8)$.}
\end{figure}

Theorem~\ref{thm: f_alpha-divergence} is illustrated in
Figure~\ref{figure: f_alpha div}, showing that
$d_{f_\alpha}(p \| q)$ is monotonically increasing as a
function of $\alpha \geq \mathrm{e}^{-\frac32}$ (note that
the concavity in $\alpha$ is not reflected from these plots
because the horizontal axis of $\alpha$ is in logarithmic
scaling). The binary divergence $d_{f_\alpha}(p \| q)$ is
also compared in Figure~\ref{figure: f_alpha div} with its
lower and upper bounds in \eqref{Df-chi2} and \eqref{Df_UB},
respectively, illustrating that these bounds are both
asymptotically tight for large values of $\alpha$.
The asymptotic approximation of $d_{f_\alpha}(p \| q)$ for
large $\alpha$, expressed as a function of $\alpha$ and
$\chi^2(p\|q)$ (see \eqref{asymp.}), is also depicted in
Figure~\ref{figure: f_alpha div}. The upper and lower plots
in Figure~\ref{figure: f_alpha div} refer, respectively, to
$(p,q) = (0.1, 0.9)$ and $(0.2, 0.8)$; a comparison of these
plots show a better match between the exact value of the binary
divergence, its upper and lower bounds, and its asymptotic
approximation when the values of $p$ and $q$ are getting closer.

In view of the results in \eqref{asymp.} and
\eqref{eq: limit of ratio of f-div}, it is interesting to note
that the asymptotic value of $D_{f_\alpha}(P\|Q)$ for large
values of $\alpha$ is also the exact scaling of this $f$-divergence
for {\em any finite} value of $\alpha \geq \mathrm{e}^{-\frac32}$
when the probability mass functions $P$ and $Q$ are close enough
to each other.

We next consider the ratio of the contraction coefficients
$\frac{\mu_{f_\alpha}(Q_X, W_{Y|X})}{\mu_{\chi^2}(Q_X, W_{Y|X})}$
where $Q_X$ is finitely supported on $\set{X}$ and it is not
a point mass (i.e., $|\set{X}| \geq 2$), and $W_{Y|X}$ is arbitrary.
For all $\alpha \geq \mathrm{e}^{-\frac32}$,
\begin{align}
\label{ratio contractions - 1}
1 \leq \frac{\mu_{f_\alpha}(Q_X, W_{Y|X})}{\mu_{\chi^2}(Q_X, W_{Y|X})}
\leq \frac{f_\alpha(\xi)+f'_\alpha(1) (1-\xi)}{(\xi-1)^2
\bigl(f_\alpha(0)+f'_\alpha(1)\bigr)},
\end{align}
where $f_\alpha \colon (0, \infty) \to \Reals$ is given in \eqref{f_alpha}, and
\begin{align}
\label{xi - reciprocal of min Q_X}
\xi := \frac1{\underset{x \in \set{X}}{\min} \, Q_X(x)} \in [|\set{X}|, \infty).
\end{align}
The left-side inequality in \eqref{ratio contractions - 1} is due to
\cite[Theorem~2]{PolyanskiyW17} (see Proposition~\ref{propos: mu chi^2 is minimal}),
and the right-side inequality in \eqref{ratio contractions - 1} holds due to
\eqref{13062019c2} and \eqref{kappa_alpha 2}.

Figure~\ref{figure: contraction ratio} shows the upper bound on the ratio of the
contraction coefficients $\frac{\mu_{f_\alpha}(Q_X, W_{Y|X})}{\mu_{\chi^2}(Q_X, W_{Y|X})}$,
as it is given in the right-side inequality of \eqref{ratio contractions - 1},
as a function of the parameter $\alpha \geq \mathrm{e}^{-\frac32}$. The curves
in Figure~\ref{figure: contraction ratio} correspond to different values of
$\xi \in [|\set{X}|, \infty)$, as it is given in \eqref{xi - reciprocal of min Q_X};
these upper bounds are monotonically decreasing in $\alpha$, and they
asymptotically tend to~1 as we let $\alpha \to \infty$. Hence, in view of the
left-side inequality in \eqref{ratio contractions - 1}, the upper bound on the
ratio of the contraction coefficients (in the right-side inequality) is asymptotically
tight in $\alpha$. The fact that the ratio of the contraction coefficients in the
middle of \eqref{ratio contractions - 1} tends asymptotically to~1, as $\alpha$
gets large, is not directly implied by Item~\ref{Thm. f-2c}) of
Theorem~\ref{thm: f_alpha-divergence}. The latter implies that, for fixed
probability mass functions $P$ and $Q$ and for sufficiently large $\alpha$,
\begin{align}
\label{approximate}
D_{f_\alpha}(P\|Q) \approx \bigl[ \log(\alpha+1) + \tfrac32 \log \mathrm{e} \bigr]
\, \chi^2(P\|Q);
\end{align}
however, there is no guarantee that for fixed $Q$ and sufficiently large $\alpha$,
the approximation in \eqref{approximate} holds for all $P$. By the upper bound in
the right side of \eqref{ratio contractions - 1}, it follows however that
$\mu_{f_\alpha}(Q_X, W_{Y|X})$ tends asymptotically (as we let $\alpha \to \infty$)
to the contraction coefficient of the $\chi^2$ divergence.

\begin{figure}[ht]
\vspace*{-4.3cm}
\begin{center}
\centerline{\includegraphics[width=10cm]{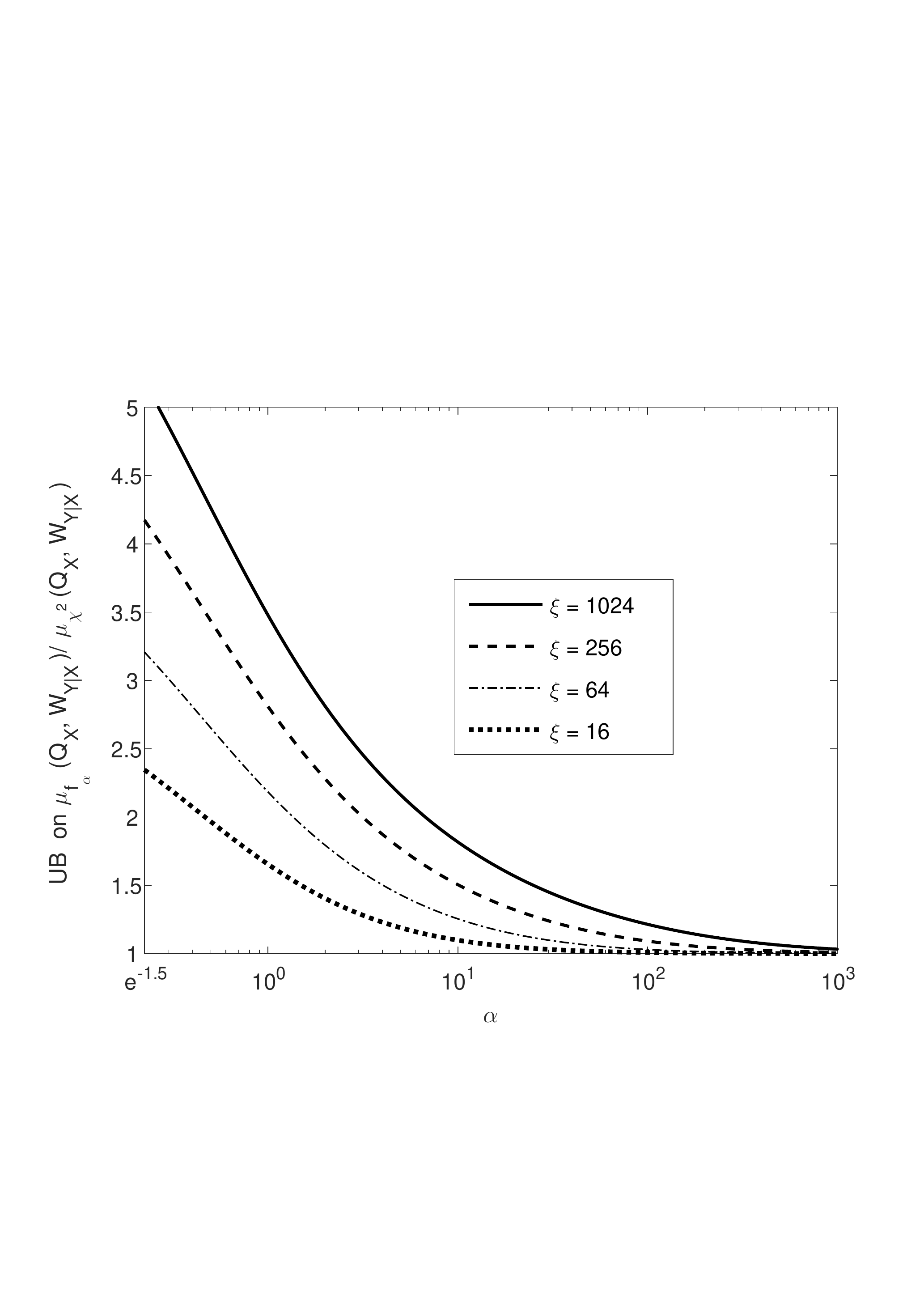}}
\end{center}
\vspace*{-3.8cm}
\caption{\label{figure: contraction ratio}
Curves of the upper bound on the ratio of the contraction coefficients
$\frac{\mu_{f_\alpha}(Q_X, W_{Y|X})}{\mu_{\chi^2}(Q_X, W_{Y|X})}$
(see the right-side inequality of \eqref{ratio contractions - 1})
as a function of the parameter $\alpha \geq \mathrm{e}^{-\frac32}$.
The curves correspond to different values of $\xi$ in
\eqref{xi - reciprocal of min Q_X}.}
\end{figure}

\subsection{Illustration of Theorem~\ref{thm: LB/UB f-div} and Further Results}
\label{subsection: Illustration of Thm. LB/UB f-div}

Theorem~\ref{thm: LB/UB f-div} provides upper and lower bounds on
an $f$-divergence, $D_f(Q \| U_n)$, from any probability
mass function $Q$ supported on a finite set of cardinality $n$
to an equiprobable distribution over this set.

We apply in the following, the exact formula for
\begin{align}
\label{d_f - def}
d_f(\rho) := \underset{n \to \infty}{\lim} \,
\underset{Q \in \set{P}_n(\rho)}{\max} D_f(Q \| U_n),
\quad \rho \geq 1
\end{align}
to several important $f$-divergences. From \eqref{asympt. max1},
\begin{align}
\label{d_f}
d_f(\rho) = \max_{x \in [0,1]} \left\{ x f\biggl(\frac{\rho}{1+(\rho-1)x} \biggr)
+ (1-x) f\biggl(\frac1{1+(\rho-1)x} \biggr) \right\}, \quad \rho \geq 1.
\end{align}
Since $f$ is a convex function on $(0, \infty)$ with $f(1)=0$,
Jensen's inequality implies that the function which is subject to maximization
in the right-side of \eqref{d_f} is non-negative over the interval $[0,1]$.
It is equal to zero at the endpoints of the interval $[0,1]$, so the maximum
over this interval is attained at an interior point.
Note also that, in view of Items~\ref{Thm. 5-d}) and~\ref{Thm. 5-e}) of
Theorem~\ref{thm: LB/UB f-div}, the exact asymptotic expression
in \eqref{d_f} satisfies
\begin{align}
\label{UB - finite n}
\max_{Q \in \set{P}_n(\rho)} D_f(Q \| U_n) \leq d_f(\rho), \quad
\forall \, n \in \{2, 3, \ldots\}, \; \rho \geq 1.
\end{align}

\subsubsection{Total variation distance}
This distance is an $f$-divergence with $f(t) := |t-1|$ for $t > 0$.
Substituting $f$ into \eqref{d_f} gives
\begin{align}
\label{d_f - TV 1}
d_f(\rho) = \max_{x \in [0,1]} \Biggl\{ \frac{2(\rho-1)x(1-x)}{1+(\rho-1)x} \Biggr\}.
\end{align}
By setting to zero the derivative of the function which is subject to maximization
in the right side of \eqref{d_f - TV 1}, it can be verified that the maximizer over
this interval is equal to $x = \frac1{1+\sqrt{\rho}}$, which implies that
\begin{align}
d_f(\rho) = \frac{2(\sqrt{\rho}-1)}{\sqrt{\rho}+1}, \quad \forall \, \rho \geq 1.
\end{align}

\subsubsection{Alpha divergences}
\label{subsubsection: Alpha divergences}
The class of Alpha divergences forms a parametric subclass of the $f$-divergences,
which includes in particular the relative entropy, $\chi^2$-divergence, and the
squared-Hellinger distance. For $\alpha \in \Reals$, let
\begin{align}
\label{Alpha-divergence}
D_{\mathrm{A}}^{(\alpha)}(P \| Q) := D_{u_\alpha}(P\|Q),
\end{align}
where $u_\alpha \colon (0, \infty) \to \Reals$ is a non-negative and convex
function with $u_\alpha(1) = 0$, which is defined for $t>0$ as follows (see
\cite[Chapter~2]{LieseV_book87}, followed by studies in, e.g., \cite{AmariN00},
\cite{CichockiA10}, \cite{LieseV_IT2006}, \cite{PardoV97} and \cite{Sason18}):
\begin{align}
\label{f of Alpha-divergence}
u_\alpha(t) :=
\begin{dcases}
\frac{t^\alpha - \alpha(t-1) - 1}{\alpha (\alpha-1)}, & \quad
\alpha \in (-\infty, 0) \cup (0,1) \cup (1, \infty), \\
t \log_{\mathrm{e}}t + 1-t, & \quad \alpha=1, \\
-\log_{\mathrm{e}}t, & \quad \alpha=0.
\end{dcases}
\end{align}
The functions $u_0$ and $u_1$ are defined in the right side of
\eqref{f of Alpha-divergence} by a continuous extension of $u_\alpha$
at $\alpha=0$ and $\alpha=1$, respectively. The following relations
hold (see, e.g., \cite[(10)--(13)]{CichockiA10}):
\begin{align}
\label{KL1}
& D_{\mathrm{A}}^{(1)}(P\|Q) = \tfrac1{\log \mathrm{e}} \, D(P\|Q), \\
\label{KL2}
& D_{\mathrm{A}}^{(0)}(P\|Q) = \tfrac1{\log \mathrm{e}} \, D(Q\|P), \\
\label{chi2}
& D_{\mathrm{A}}^{(2)}(P\|Q) = \tfrac12 \, \chi^2(P\|Q), \\
\label{chi2b}
& D_{\mathrm{A}}^{(-1)}(P\|Q) = \tfrac12 \, \chi^2(Q\|P), \\
\label{sqr. Hel}
& D_{\mathrm{A}}^{(\frac12)}(P\|Q) = 4 \mathscr{H}^2(P\|Q).
\end{align}

Substituting $f := u_\alpha$ (see \eqref{f of Alpha-divergence}) into the
right side of \eqref{d_f} gives that
\begin{align}
\label{Delta def 0}
\Delta(\alpha,\rho) &:= d_{u_\alpha}(\rho) \\
\label{Delta def 1}
& \; = \lim_{n \to \infty} \max_{Q \in \set{P}_n(\rho)}
D_{\mathrm{A}}^{(\alpha)}(Q \| U_n) \\
\label{Delta 1: alpha-div.}
& \; = \max_{x \in [0,1]}
\left\{ \frac{1+(\rho^\alpha-1)x}{\bigl(1+(\rho-1)x \bigr)^\alpha} - 1 \right\}.
\end{align}
Setting to zero the derivative of the function which is subject to maximization
in the right side of \eqref{Delta 1: alpha-div.} gives
\begin{align}
\label{x: alpha-div.}
x = x^\ast := \frac{1+\alpha(\rho-1)-\rho^\alpha}{(1-\alpha)(\rho-1)(\rho^\alpha-1)},
\end{align}
where it can be verified that $x^\ast \in (0,1)$ for all $\alpha \in (-\infty, 0)
\cup (0,1) \cup (1, \infty)$ and $\rho > 1$. Substituting \eqref{x: alpha-div.}
into the right side of \eqref{Delta 1: alpha-div.} gives that, for all such
$\alpha$ and $\rho$,
\begin{align}
\label{Delta 2: alpha-div.}
\Delta(\alpha,\rho) = \frac1{\alpha(\alpha-1)} \left[ \frac{(1-\alpha)^{\alpha-1}
(\rho^\alpha-1)^\alpha (\rho - \rho^\alpha)^{1-\alpha}}{(\rho-1) \alpha^\alpha} - 1 \right].
\end{align}
By a continuous extension of $\Delta(\alpha,\rho)$ in \eqref{Delta 2: alpha-div.} at
$\alpha=1$ and $\alpha=0$, it follows that for all $\rho > 1$
\begin{align}
\label{continuous extension of Delta}
\Delta(1, \rho) = \Delta(0, \rho) = \frac{\rho \log \rho}{\rho-1}
- \log \left( \frac{\mathrm{e} \rho \log_{\mathrm{e}} \rho}{\rho-1} \right).
\end{align}
Consequently, for all $\rho > 1$,
\begin{align}
& \lim_{n \to \infty} \max_{Q \in \set{P}_n(\rho)} D(Q \| U_n) \nonumber \\
\label{05062019-a1}
&= \log \mathrm{e} \; \lim_{n \to \infty} \max_{Q \in \set{P}_n(\rho)}
D_{\mathrm{A}}^{(1)}(Q \| U_n) \\
\label{05062019-a2}
&=  \Delta(1,\rho) \; \log \mathrm{e} \\
\label{04062019a}
&= \frac{\rho \log \rho}{\rho-1} - \log \left( \frac{\mathrm{e}
\rho \log_{\mathrm{e}} \rho}{\rho-1} \right),
\end{align}
where \eqref{05062019-a1} holds due to \eqref{KL1}; \eqref{05062019-a2}
is due to \eqref{Delta def 1}, and \eqref{04062019a} holds due to
\eqref{continuous extension of Delta}. This sharpens the result in
\cite[Theorem~2]{CicaleseGV18} for the relative entropy from the
equiprobable distribution, $D(Q \| U_n) = \log n - H(Q)$, by showing
that the bound in \cite[(7)]{CicaleseGV18} is asymptotically tight as
we let $n \to \infty$. The result in \cite[Theorem~2]{CicaleseGV18}
can be further tightened for finite $n$ by applying the result in
Theorem~\ref{thm: LB/UB f-div}-~\ref{Thm. 5-d}) with
$$f(t) :=  u_1(t) \, \log \mathrm{e} = t \log t + (1-t) \log \mathrm{e}$$
for all $t>0$ (although, unlike the asymptotic result in
\eqref{Delta 2: alpha-div.}, the refined bound for a finite $n$ does
not lend itself to a closed-form expression as a function of $n$; see
also \cite[Remark~3]{Sason18b}, which provides such a refinement of
the bound on $D(Q\|U_n)$ for finite $n$ in a different approach).

From \eqref{KL2}, \eqref{Delta def 1} and \eqref{continuous extension of Delta},
it follows similarly to \eqref{04062019a} that for all $\rho > 1$
\begin{align}
\label{10062019c1}
\lim_{n \to \infty} \max_{Q \in \set{P}_n(\rho)} D(U_n \| Q)
&=  \Delta(0,\rho) \; \log \mathrm{e} \\
\label{04062019b}
&= \frac{\rho \log \rho}{\rho-1} -
\log \left( \frac{\mathrm{e} \rho \log_{\mathrm{e}} \rho}{\rho-1} \right).
\end{align}
It should be noted that in view of the one-to-one correspondence between
the R\'{e}nyi divergence and the Alpha divergence of the same order $\alpha$
where, for $\alpha \neq 1$,
\begin{align}
\label{1-1 Renyi and Alpha div.}
D_\alpha(P\|Q) = \frac1{\alpha-1} \, \log \Bigl(1 + \alpha(\alpha-1)
D_{\mathrm{A}}^{(\alpha)}(P \| Q) \Bigr),
\end{align}
the asymptotic result in \eqref{Delta 2: alpha-div.} can be obtained from
\cite[Lemma~4]{Sason18b} and vice versa; however, in \cite{Sason18b}, the
focus is on the R\'{e}nyi divergence from the equiprobable distribution,
whereas the result in \eqref{Delta 2: alpha-div.} is obtained by specializing
the asymptotic expression in \eqref{d_f} for a general $f$-divergence. Note
also that the result in \cite[Lemma~4]{Sason18b} is restricted to $\alpha > 0$,
whereas the result in \eqref{Delta 2: alpha-div.} and
\eqref{continuous extension of Delta} covers all values of $\alpha \in \Reals$.

In view of \eqref{Delta def 1}, \eqref{Delta 2: alpha-div.}, \eqref{04062019a},
\eqref{04062019b}, and the special cases of the Alpha divergences in
\eqref{KL1}--\eqref{sqr. Hel}, it follows that for all $\rho > 1$ and for all
integer $n \geq 2$
\begin{align}
\label{KL1 asymp.}
\max_{Q \in \set{P}_n(\rho)} D(Q \| U_n)
& \leq \Delta(1,\rho) \, \log \mathrm{e}
= \frac{\rho \log \rho}{\rho-1} - \log \left( \frac{\mathrm{e} \rho
\log_{\mathrm{e}} \rho}{\rho-1} \right), \\[0.1cm]
\label{KL2 asymp.}
\max_{Q \in \set{P}_n(\rho)} D(U_n \| Q)
& \leq \Delta(0,\rho) \, \log \mathrm{e}
= \frac{\rho \log \rho}{\rho-1} - \log \left( \frac{\mathrm{e} \rho
\log_{\mathrm{e}} \rho}{\rho-1} \right), \\
\label{chi2 asymp.}
\max_{Q \in \set{P}_n(\rho)} \chi^2(Q \| U_n) &
\leq 2 \Delta(2,\rho) = \frac{(\rho-1)^2}{4 \rho}, \\
\label{chi2-b asymp.}
\max_{Q \in \set{P}_n(\rho)} \chi^2(U_n \| Q) & \leq 2 \Delta(-1,\rho)
= \frac{(\rho-1)^2}{4 \rho}, \\
\label{sqr. Hel. asymp.}
\max_{Q \in \set{P}_n(\rho)} \mathscr{H}^2(Q \| U_n)
& \leq \tfrac14 \, \Delta(\tfrac12, \rho)
= \frac{(\sqrt[4]{\rho}-1)^2}{\sqrt{\rho}+1},
\end{align}
and, furthermore, the upper bounds on the right sides of
\eqref{KL1 asymp.}--\eqref{sqr. Hel. asymp.} are asymptotically
tight in the limit where $n$ tends to infinity.

The next result characterizes the function
$\Delta \colon (0, \infty) \times (1, \infty) \to \Reals$
as it is given in \eqref{Delta 2: alpha-div.} and
\eqref{continuous extension of Delta}.
\begin{theorem}
\label{theorem: Delta_alpha}
The function $\Delta$ satisfies the following properties:
\begin{enumerate}[a)]
\item \label{Thm. 7-a}
For every $\rho > 1$, $\Delta(\alpha, \rho)$ is a convex function
of $\alpha$ over the real line, and it is symmetric around
$\alpha = \tfrac12$ with a global minimum at $\alpha = \tfrac12$.

\item \label{Thm. 7-b}
The following inequalities hold:
\begin{align}
\label{mon. 1}
& \alpha \, \Delta(\alpha, \rho) \leq \beta \, \Delta(\beta, \rho),
\hspace*{2.7cm} 0 < \alpha \leq \beta < \infty, \\
\label{mon. 2}
& (1-\beta) \, \Delta(\beta, \rho) \leq (1-\alpha) \, \Delta(\alpha, \rho),
\quad -\infty < \alpha \leq \beta < 1.
\end{align}

\item \label{Thm. 7-c}
For every $\alpha \in \Reals$, $\Delta(\alpha, \rho)$ is monotonically
increasing and continuous in $\rho \in (1, \infty)$, and
$\underset{\rho \to 1^+}{\lim} \Delta(\alpha, \rho) = 0$.
\end{enumerate}
\end{theorem}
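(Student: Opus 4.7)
The plan is to combine the explicit formula~\eqref{Delta 2: alpha-div.} (with the continuous extension \eqref{continuous extension of Delta} at $\alpha \in \{0,1\}$) with the variational description $\Delta(\alpha,\rho) = \underset{x \in [0,1]}{\max}\, \varphi_\alpha(x,\rho)$, where $\varphi_\alpha(x,\rho) := x\, u_\alpha(\rho/s) + (1-x)\, u_\alpha(1/s)$ and $s := 1+(\rho-1)x$, obtained from Item~\ref{Thm. 5-e}) of Theorem~\ref{thm: LB/UB f-div} specialized to $f = u_\alpha$. Item~\ref{Thm. 7-c}) then follows at once: as $\rho$ grows, $\rho/s$ moves upward and $1/s$ moves downward, both departing from~$1$, and since $u_\alpha$ is convex with $u_\alpha'(1)=0$ (whence its minimum is attained at~$1$), both $u_\alpha(\rho/s)$ and $u_\alpha(1/s)$ are non-decreasing in $\rho$; this transfers to $\varphi_\alpha(x,\rho)$ and hence to $\Delta(\alpha,\rho)$. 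Continuity in $\rho$ is Berge's maximum theorem applied on the compact interval $[0,1]$, and $\varphi_\alpha(x,1)\equiv 0$ gives $\underset{\rho \to 1^+}{\lim}\, \Delta(\alpha,\rho) = 0$.

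For Item~\ref{Thm. 7-b}), I will work at the pointwise level. Fix a pair $(P,Q)$ and set $\psi(\alpha) := \sum_i P_i^\alpha Q_i^{1-\alpha}$; this is a sum of exponentials in $\alpha$ and therefore log-convex, hence convex, on $\Reals$, with $\psi(0) = \psi(1) = 1$. Writing $\alpha\, D_{\mathrm{A}}^{(\alpha)}(P\|Q) = (\psi(\alpha)-1)/(\alpha-1)$, a direct differentiation combined with the convexity inequality $\psi(1) \geq \psi(\alpha) + (1-\alpha)\psi'(\alpha)$ gives
\begin{align*}
\tfrac{d}{d\alpha}\bigl[\alpha\, D_{\mathrm{A}}^{(\alpha)}(P\|Q)\bigr]
= \frac{\psi'(\alpha)(\alpha-1) - \bigl(\psi(\alpha)-1\bigr)}{(\alpha-1)^2} \geq 0,
\end{align*}
and the analogous convexity inequality at the base point~$0$ yields $\tfrac{d}{d\alpha}[(1-\alpha)\, D_{\mathrm{A}}^{(\alpha)}(P\|Q)] \leq 0$. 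Applying these pointwise monotonicities with $Q = U_n$ and $P \in \set{P}_n(\rho)$, then taking the supremum in $P$ and letting $n \to \infty$, delivers \eqref{mon. 1} and \eqref{mon. 2}.

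For Item~\ref{Thm. 7-a}), the symmetry $\Delta(\alpha,\rho)=\Delta(1-\alpha,\rho)$ reduces to a direct algebraic check on \eqref{Delta 2: alpha-div.}: under $\alpha \mapsto 1-\alpha$, the identities $\rho^{1-\alpha}-1 = \rho^{-\alpha}(\rho-\rho^\alpha)$ and $\rho - \rho^{1-\alpha} = \rho^{1-\alpha}(\rho^\alpha-1)$ collapse all factors of $\rho$, while the prefactor $1/[\alpha(\alpha-1)]$ is itself invariant. For convexity in $\alpha$, my first attempt will be to observe that $\log\psi(\alpha,x,\rho) = -\alpha\log s + \log(x\, e^{\alpha \log \rho} + 1 - x)$, with $\psi(\alpha,x,\rho) := s^{-\alpha}(x\rho^\alpha + 1-x)$, is convex in $\alpha$ for each fixed $(x,\rho)$ (log-sum-exp plus linear), so that on $\alpha > 1$ and on $\alpha < 0$ the identity $1 + \alpha(\alpha-1)\Delta(\alpha,\rho) = \underset{x}{\max}\,\psi(\alpha,x,\rho)$ renders $\log[1 + \alpha(\alpha-1)\Delta(\alpha,\rho)]$ convex as a pointwise maximum of convex functions. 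Convexity of $\Delta$ on these intervals is to be extracted by unpacking this via~\eqref{Delta 2: alpha-div.}, and convexity on $(0,1)$ then follows by gluing through the already-established symmetry around $\tfrac12$. The global minimum at $\alpha = \tfrac12$ is a corollary of symmetry plus convexity.

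The main obstacle I foresee is precisely this convexity claim in Item~\ref{Thm. 7-a}). For fixed $x$, $\varphi_\alpha(x,\rho)$ is not convex in $\alpha$ in general (a numerical counterexample is easy to produce), so the bare ``maximum of convex'' shortcut does not yield convexity of $\Delta$ directly; moreover, on $\alpha \in (0,1)$ the relevant inner extremum of $\psi(\alpha,x,\rho)$ is a minimum rather than a maximum, and minima of convex functions need not be convex, so the log-convexity route above does not transfer without additional input. My expected fallback is either a direct second-derivative computation from~\eqref{Delta 2: alpha-div.}, manipulating $\partial^2 \Delta / \partial\alpha^2$ into a manifestly non-negative form, or an envelope-theorem argument at the known maximizer $x^\ast(\alpha)$ of~\eqref{x: alpha-div.}, exploiting the vanishing of $\partial_x \varphi_\alpha$ at $x = x^\ast(\alpha)$ to reduce the second derivative in $\alpha$ to a tractable expression.
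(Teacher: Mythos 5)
Your handling of Items b) and c) and of the symmetry claim is correct, and in places more self-contained than the paper's own proof: for Item b) you derive the pointwise inequalities (monotonicity of $\alpha \, D_{\mathrm{A}}^{(\alpha)}(Q\|U_n)$ and of $(1-\alpha) \, D_{\mathrm{A}}^{(\alpha)}(Q\|U_n)$ in $\alpha$) from tangent-line inequalities for the convex function $\psi(\alpha)=\sum_i P_i^\alpha Q_i^{1-\alpha}$ at the base points $1$ and $0$, where the paper simply cites \cite[Proposition~2.7]{LieseV_book87} before maximizing over $Q \in \set{P}_n(\rho)$ and letting $n \to \infty$; for Item c) your variational monotonicity argument (tracking $\rho/s$ upward and $1/s$ downward from $1$) replaces the paper's shorter observation that the sets $\set{P}_n(\rho)$ are nested in $\rho$, together with continuity of the closed form \eqref{Delta 2: alpha-div.}. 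One cosmetic repair is needed in Item c): at $\alpha=0$ the representative $u_0(t)=-\log_{\mathrm{e}}t$ in \eqref{f of Alpha-divergence} has $u_0'(1)=-1$ and no minimum at $t=1$, so your argument must use the affinely shifted generator $t-1-\log_{\mathrm{e}}t$ (same divergence, since affine terms $c(t-1)$ contribute nothing to $D_f$).

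The genuine gap is Item a), and your diagnosis of the obstacle is wrong in a way that matters. Your claim that $\varphi_\alpha(x,\rho)$ is not convex in $\alpha$ for fixed $x$ (with an alleged numerical counterexample) is false: since the affine parts of $u_\alpha$ cancel, $\varphi_\alpha(x,\rho) = \bigl(\expectation[W^\alpha]-1\bigr)/\bigl(\alpha(\alpha-1)\bigr)$, where $W$ takes the value $\rho/s$ with probability $x$ and $1/s$ with probability $1-x$, so that $\expectation[W]=1$; this is precisely the quantity $\lambda_\alpha$ in \eqref{eq: log-convex lambda}, which is log-convex (hence convex) in $\alpha$ over all of $\Reals$ by \cite[Theorem~2.1]{Simic07}. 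That log-convexity of normalized power-mean differences is exactly the key lemma of the paper's proof: the paper applies it with $W = \frac{\mathrm{d}Q}{\mathrm{d}P}$ to conclude that $\alpha \mapsto D_{\mathrm{A}}^{(\alpha)}(Q\|U_n)$ is convex for each $Q$, and then convexity of $\Delta(\cdot,\rho)$ follows because convexity survives the pointwise maximum over $Q \in \set{P}_n(\rho)$ and the pointwise limit $n \to \infty$ — so the ``max of convex'' shortcut does work, once this nontrivial input is available. Without it, nothing in your proposal closes the gap: log-convexity of $\psi$ alone only gives log-convexity of $1+\alpha(\alpha-1)\Delta(\alpha,\rho)$ on $\alpha>1$ and $\alpha<0$, which does not transfer to convexity of $\Delta$; your gluing step fails structurally, because the symmetry $\alpha \mapsto 1-\alpha$ maps $(1,\infty)$ onto $(-\infty,0)$ and maps $(0,1)$ onto itself, so it cannot import convexity into $(0,1)$, and in any case piecewise convexity on three intervals does not yield convexity on $\Reals$ without matching one-sided derivatives at $\alpha=0,1$; and the envelope-theorem fallback gives no sign control precisely because you (incorrectly) discarded convexity of $\varphi_\alpha(x,\rho)$ in $\alpha$. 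If you prefer not to cite \cite{Simic07}, a self-contained substitute is the representation $u_\alpha(t)=\int_1^t (t-u)\,u^{\alpha-2}\,\mathrm{d}u$, which exhibits $\varphi_\alpha(x,\rho)$ as an integral of the log-convex maps $\alpha \mapsto u^{\alpha-2}$ against a fixed non-negative measure, and mixtures of log-convex functions are log-convex; by contrast, a brute-force verification that $\partial^2 \Delta/\partial \alpha^2 \geq 0$ from \eqref{Delta 2: alpha-div.} is not a realistic plan.
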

\begin{proof}
See Appendix~\ref{appendix: Delta_alpha} (Part~A).
\end{proof}

\begin{remark}
\label{remark: symmetry - Alpha-divergence}
The symmetry of $\Delta(\alpha, \rho)$ around $\alpha = \tfrac12$
(see Theorem~\ref{theorem: Delta_alpha}~\ref{Thm. 7-a})) is not
implied by the following symmetry property of the Alpha divergence
around $\alpha=\tfrac12$ (see, e.g., \cite[p.~36]{LieseV_book87}):
\begin{align}
D_{\mathrm{A}}^{(\frac12 + \alpha)}(P \| Q)
= D_{\mathrm{A}}^{(\frac12 - \alpha)}(Q \| P).
\end{align}
\end{remark}

Relying on Theorem~\ref{theorem: Delta_alpha}, the following corollary
gives a similar result to \eqref{Delta def 1} where the order of $Q$
and $U_n$ in $D_{\mathrm{A}}^{(\alpha)}(\cdot \| \cdot)$ is switched.

\begin{corollary}
\label{corollary: Delta}
For all $\alpha \in \Reals$ and $\rho > 1$,
\begin{align}
\label{04062019c}
\lim_{n \to \infty} \max_{Q \in \set{P}_n(\rho)}
D_{\mathrm{A}}^{(\alpha)}(U_n \| Q) = \Delta(\alpha, \rho).
\end{align}
\end{corollary}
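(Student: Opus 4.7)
The plan is to reduce the claim to the already-established asymptotic formula \eqref{Delta def 1} by means of two symmetry arguments.

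First, I would invoke the well-known duality of Alpha divergences, namely $D_{\mathrm{A}}^{(\alpha)}(P\|Q) = D_{\mathrm{A}}^{(1-\alpha)}(Q\|P)$ for all $\alpha \in \Reals$ (this is the restatement of the symmetry $D_{\mathrm{A}}^{(\frac12+s)}(P\|Q) = D_{\mathrm{A}}^{(\frac12-s)}(Q\|P)$ quoted in Remark~\ref{remark: symmetry - Alpha-divergence}, obtained by a direct change of variable in \eqref{f of Alpha-divergence}, and continuously extended at $\alpha=0,1$ via \eqref{KL1} and \eqref{KL2}). Specializing to $P=U_n$ yields
\begin{align*}
D_{\mathrm{A}}^{(\alpha)}(U_n \| Q) = D_{\mathrm{A}}^{(1-\alpha)}(Q \| U_n),
\quad \forall \, Q \in \set{P}_n(\rho).
\end{align*}
Taking the maximum over $Q \in \set{P}_n(\rho)$ on both sides and passing to the limit $n \to \infty$, I would then apply \eqref{Delta def 1} with $\alpha$ replaced by $1-\alpha$ to obtain
\begin{align*}
\lim_{n \to \infty} \max_{Q \in \set{P}_n(\rho)} D_{\mathrm{A}}^{(\alpha)}(U_n \| Q)
= \lim_{n \to \infty} \max_{Q \in \set{P}_n(\rho)} D_{\mathrm{A}}^{(1-\alpha)}(Q \| U_n)
= \Delta(1-\alpha, \rho).
\end{align*}

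Second, I would invoke the symmetry of $\Delta(\cdot, \rho)$ around $\alpha = \tfrac12$ established in Theorem~\ref{theorem: Delta_alpha}~\ref{Thm. 7-a}), which gives $\Delta(1-\alpha, \rho) = \Delta(\alpha, \rho)$ for every $\rho > 1$ and every $\alpha \in \Reals$. Chaining this with the previous display yields \eqref{04062019c}, completing the proof.

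There is no real obstacle here: the only care needed is to make sure the duality $D_{\mathrm{A}}^{(\alpha)}(P\|Q) = D_{\mathrm{A}}^{(1-\alpha)}(Q\|P)$ is verified from \eqref{f of Alpha-divergence} in all three branches ($\alpha \in \Reals \setminus \{0,1\}$, $\alpha=0$, $\alpha=1$), so that the exchange of arguments is licensed for every real~$\alpha$ and not only for those where $u_\alpha$ has the algebraic form in the top line of \eqref{f of Alpha-divergence}.
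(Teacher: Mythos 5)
Your proposal is correct and follows essentially the same route as the paper's own proof in Appendix~\ref{appendix: Delta_alpha} (Part~B): both apply the duality $D_{\mathrm{A}}^{(\alpha)}(P\|Q) = D_{\mathrm{A}}^{(1-\alpha)}(Q\|P)$ with $P = U_n$, invoke \eqref{Delta def 1} at order $1-\alpha$, and conclude via the symmetry of $\Delta(\cdot, \rho)$ around $\tfrac12$ from Theorem~\ref{theorem: Delta_alpha}~\ref{Thm. 7-a}). Your added remark about checking the duality in all three branches of \eqref{f of Alpha-divergence} is a sensible precaution but introduces nothing beyond the paper's argument, which cites this symmetry directly.
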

\begin{proof}
See Appendix~\ref{appendix: Delta_alpha} (Part~B).
\end{proof}

We next further exemplify Theorem~\ref{thm: LB/UB f-div} for the relative
entropy. \newline
Let $f(t) := t \log t + (1-t) \log \mathrm{e}$ for $t>0$. Then,
$f''(t)=\tfrac{\log \mathrm{e}}{t}$, so the bounds on the second derivative of $f$
over the interval $\bigl[\frac1\rho, \rho]$ are given by $M=\rho \log \mathrm{e}$
and $m=\tfrac{\log \mathrm{e}}{\rho}$. Theorem~\ref{thm: LB/UB f-div}~\ref{Thm. 5-h})
gives the following bounds:
\begin{align}
\label{LB/UB Sh. entropy}
\frac{\bigl(n \| Q \|_2^2 - 1 \bigr) \log \mathrm{e}}{2 \rho}
\leq D(Q \| U_n)
\leq \frac{\rho \, \bigl(n \| Q \|_2^2 - 1 \bigr) \log \mathrm{e}}{2}.
\end{align}
From \cite[Theorem~2]{CicaleseGV18} (and \eqref{KL1 asymp.}),
\begin{align}
\label{280519a}
D(Q \| U_n) \leq \frac{\rho \log \rho}{\rho-1}
- \log \left(\frac{\mathrm{e} \rho \log_{\mathrm{e}}\rho}{\rho-1} \right).
\end{align}
Furthermore, \eqref{UB Df-M} gives that
\begin{align}
\label{270519k}
D(Q \| U_n) \leq \tfrac18 (\rho-1)^2 \log \mathrm{e},
\end{align}
which, for $\rho>1$, is a looser bound in comparison to \eqref{280519a}. It can be
verified, however, that the dominant term in the Taylor series expansion (around
$\rho=1$) of the right side of \eqref{280519a} coincides with the right side of
\eqref{270519k}, so the bounds scale similarly for small values of $\rho \geq 1$.

Suppose that we wish to assert that, for every integer $n \geq 2$ and for all
probability mass functions $Q \in \set{P}_n(\rho)$, the condition
\begin{align}
\label{280519b}
D(Q \| U_n) \leq d \log \mathrm{e}
\end{align}
holds with a fixed $d>0$. Due to the left side inequality in \eqref{convergence rate 1},
this condition is equivalent to the requirement that
\begin{align}
\label{10062019a}
\lim_{n \to \infty} \max_{Q \in \set{P}_n(\rho)} D(Q \| U_n) \leq d \log \mathrm{e}.
\end{align}
Due to the asymptotic tightness of the upper bound in the right side
of \eqref{KL1 asymp.} (as we let $n \to \infty$), requiring that this upper
bound is not larger than $d \log \mathrm{e}$ is necessary and sufficient for
the satisfiability of \eqref{280519b} for all $n$ and $Q \in \set{P}_n(\rho)$.
This leads to the analytical solution $\rho \leq \rho_{\max}^{(1)}(d)$ with
(see Appendix~\ref{appendix: Lambert-W})
\begin{align}
\label{280519c}
\rho_{\max}^{(1)}(d)
:= \frac{W_{-1}\bigl(-\mathrm{e}^{-d-1}\bigr)}{W_0\bigl(-\mathrm{e}^{-d-1}\bigr)},
\end{align}
where $W_0$ and $W_{-1}$ denote, respectively, the principal and secondary
real branches of the Lambert $W$ function \cite{Corless96}.
Requiring the stronger condition where the right side of
\eqref{270519k} is not larger than $d \log \mathrm{e}$ leads to the sufficient
solution $\rho \leq \rho_{\max}^{(2)}$ with the simple expression
\begin{align}
\label{270519l}
\rho_{\max}^{(2)}(d) := 1 + \sqrt{8d}.
\end{align}
In comparison to $\rho_{\max}^{(1)}$ in \eqref{280519c}, $\rho_{\max}^{(2)}$
in \eqref{270519l} is more insightful; these values nearly coincide for small
values of $d>0$, providing in that case the same range of possible values of
$\rho$ for asserting the satisfiability of condition \eqref{280519b}. As it is
shown in Figure~\ref{figure:KL-rho vs. d}, for $d \leq 0.01$, the difference
between the maximal values of $\rho$ in \eqref{280519c} and \eqref{270519l} is
marginal, though in general $\rho_{\max}^{(1)}(d) > \rho_{\max}^{(2)}(d)$ for
all $d>0$.
\begin{figure}[ht]
\vspace*{-3.8cm}
\begin{center}
\centerline{\includegraphics[width=9cm]{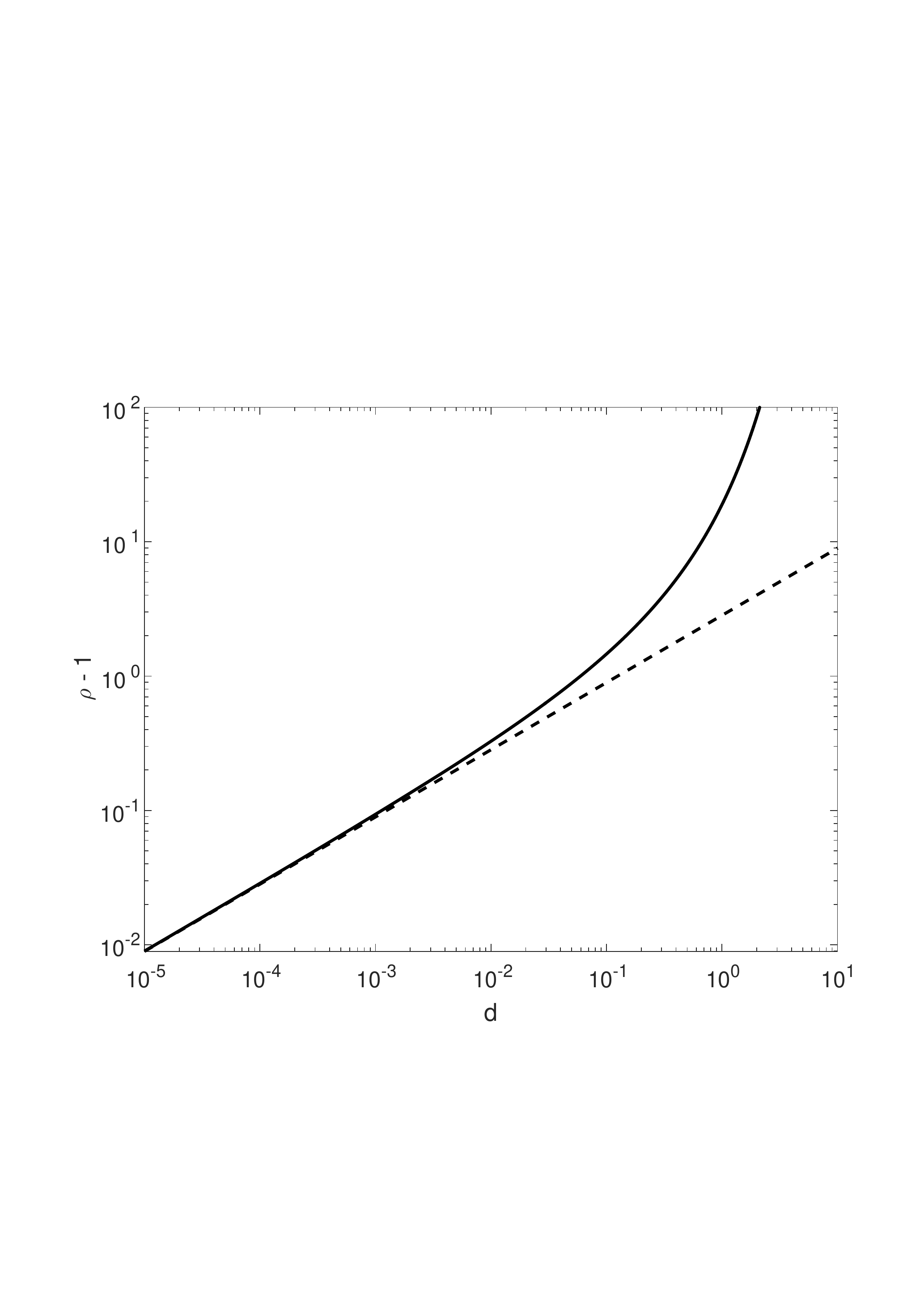}}
\end{center}
\vspace*{-3.7cm}
\caption{\label{figure:KL-rho vs. d}
A comparison of the maximal values of $\rho$ (minus 1) according to
\eqref{280519c} and \eqref{270519l},
asserting the satisfiability of the condition $D(Q \| U_n) \leq d \log \mathrm{e}$,
with an arbitrary $d>0$, for all integers $n \geq 2$ and probability mass functions
$Q$ supported on $\{1, \ldots, n\}$ with $\tfrac{q_{\max}}{q_{\min}} \leq \rho$. The
solid line refers to the necessary and sufficient condition which gives \eqref{280519c},
and the dashed line refers to a stronger condition which gives \eqref{270519l}.}
\end{figure}

\subsubsection{The subclass of $f$-divergences in Theorem~\ref{thm: f_alpha-divergence}}
This example refers to the subclass of $f$-divergences in
Theorem~\ref{thm: f_alpha-divergence}. For these $f_\alpha$-divergences, with
$\alpha \geq \mathrm{e}^{-\frac32}$, substituting $f := f_\alpha$ from \eqref{f_alpha}
into the right side of \eqref{d_f} gives that for all $\rho \geq 1$
\begin{align}
& \Phi(\alpha,\rho) \nonumber \\
\label{Phi def 0}
&:= d_{f_\alpha}(\rho) \\
\label{Phi def 1}
& \; = \lim_{n \to \infty} \max_{Q \in \set{P}_n(\rho)} D_{f_\alpha}(Q \| U_n) \\
& \; = \max_{x \in [0,1]}
\left\{ x \left(\alpha + \frac{\rho}{1+(\rho-1)x} \right)^2 \log\left(\alpha
+ \frac{\rho}{1+(\rho-1)x} \right) - (\alpha+1)^2 \, \log(\alpha+1) \right. \nonumber \\
\label{Phi 1: alpha-div.}
& \hspace*{1.9cm} \left. + \, (1-x) \left(\alpha + \frac1{1+(\rho-1)x} \right)^2
\log \left( \alpha + \frac1{1+(\rho-1)x} \right) \right\}.
\end{align}
The exact asymptotic expression in the right side of
\eqref{Phi 1: alpha-div.} is subject to numerical maximization.

We next provide two alternative closed-form upper bounds, based on
Theorems~\ref{thm: f_alpha-divergence} and~\ref{thm: LB/UB f-div}, and study their
tightness.
The two upper bounds, for all $\alpha \geq \mathrm{e}^{-\frac32}$ and
$\rho \geq 1$, are given by (see Appendix~\ref{appendix: Phi - UBs})
\begin{align}
\Phi(\alpha,\rho) & \leq \left[ \log(\alpha+1) + \tfrac32 \log \mathrm{e}
- \frac{\log \mathrm{e}}{\alpha+1} \right] \frac{(\rho-1)^2}{4 \rho} \nonumber \\[0.1cm]
\label{Phi - UB1}
& \hspace*{0.4cm} + \frac{\log \mathrm{e}}{81(\alpha+1)}
\left( \frac{(\rho-1)(2\rho+1)(\rho+2)}{\rho(\rho+1)} \right)^2,
\end{align}
and
\begin{align}
\label{Phi - UB2}
\Phi(\alpha,\rho) \leq \Bigl[ \log(\alpha+\rho) + \tfrac32 \log \mathrm{e} \Bigr]
\, \frac{(\rho-1)^2}{4 \rho}.
\end{align}

Suppose that we wish to assert that, for every integer $n \geq 2$ and for all
probability mass functions $Q \in \set{P}_n(\rho)$, the condition
\begin{align}
\label{10062019b1}
D_{f_\alpha}(Q \| U_n) \leq d \log \mathrm{e}
\end{align}
holds with a fixed $d>0$ and $\alpha \geq \mathrm{e}^{-\frac32}$. Due to
\eqref{Phi def 0}--\eqref{Phi def 1} and the left side inequality in
\eqref{convergence rate 1}, the satisfiability of the latter condition
is equivalent to the requirement that
\begin{align}
\label{10062019b2}
\Phi(\alpha, \rho) \leq d \log \mathrm{e}.
\end{align}
In order to obtain a sufficient condition for $\rho$ to satisfy
\eqref{10062019b2}, expressed as an explicit function of $\alpha$ and $d$,
the upper bound in the right side of \eqref{Phi - UB1} is slightly
loosened to
\begin{align}
\label{Phi - UB3}
\Phi(\alpha,\rho) \leq a (\rho-1)^2 + b \min\{\rho-1, (\rho-1)^2\},
\end{align}
where
\begin{align}
\label{10062019b3}
&a := \frac{4 \log \mathrm{e}}{81(\alpha+1)}, \\
\label{10062019b4}
&b := \tfrac14 \log(\alpha+1) + \tfrac38 \log \mathrm{e},
\end{align}
for all $\rho \geq 1$ and $\alpha \geq \mathrm{e}^{-\frac32}$.
The upper bounds in the right sides of \eqref{Phi - UB1}, \eqref{Phi - UB2}
and \eqref{Phi - UB3} are derived in Appendix~\ref{appendix: Phi - UBs}.

In comparison to \eqref{10062019b2}, the stronger requirement that the
right side of \eqref{Phi - UB3} is less than or equal to $d \log \mathrm{e}$
gives the sufficient condition
\begin{align}
\label{10062019b5}
\rho \leq \rho_{\max}(\alpha, d)
:= \max\bigl\{\rho_1(\alpha, d), \rho_2(\alpha, d) \bigr\},
\end{align}
with
\begin{align}
\label{10062019b6}
& \rho_1(\alpha, d) := 1 + \frac{\sqrt{b^2+4ad \log \mathrm{e}}-b}{2a}, \\[0.1cm]
\label{10062019b7}
& \rho_2(\alpha, d) := 1 + \sqrt{\frac{d \log \mathrm{e}}{a+b}}.
\end{align}

\begin{figure}[ht]
\vspace*{-4.2cm}
\begin{center}
\centerline{\includegraphics[width=10cm]{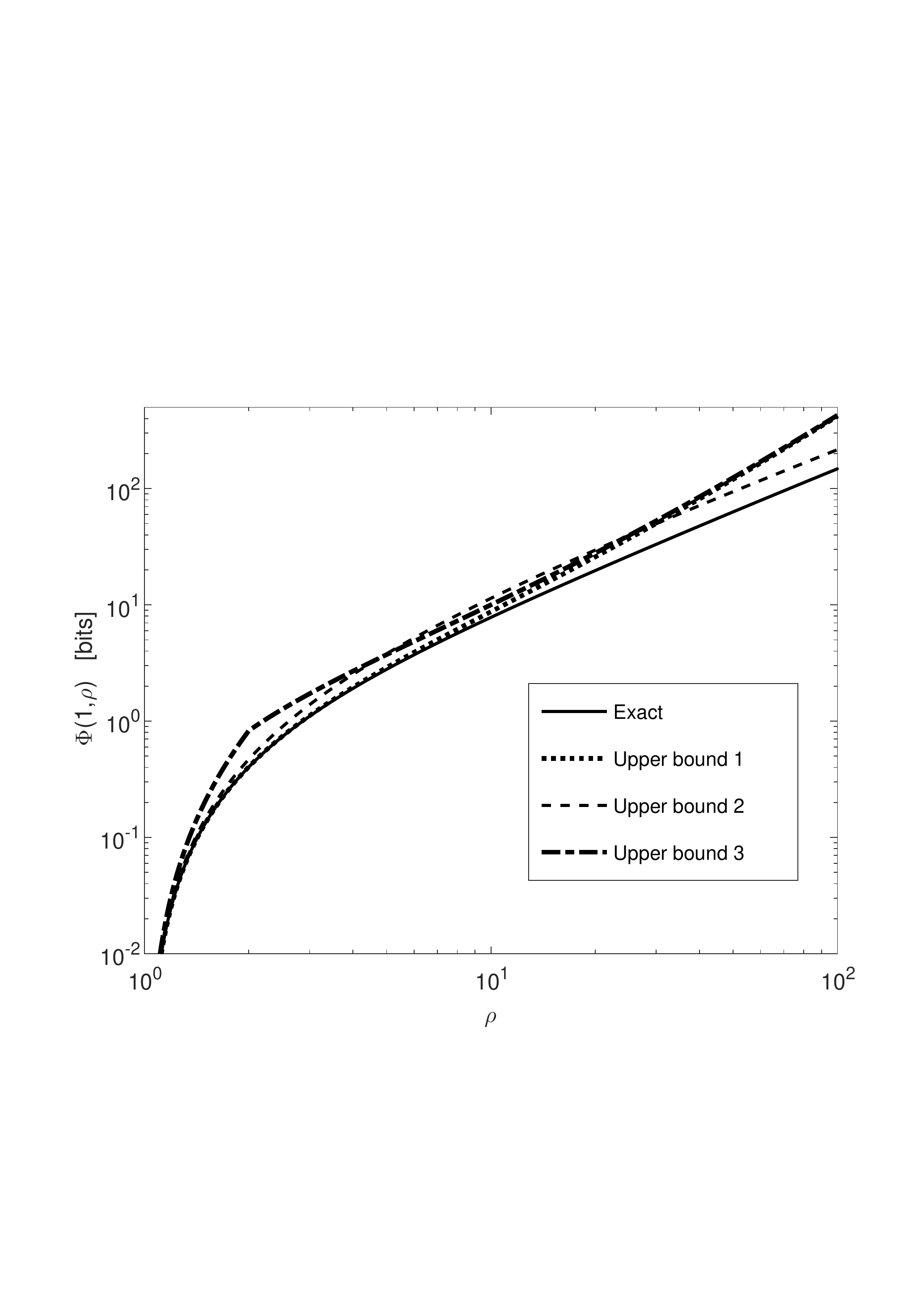}}
\end{center}
\vspace*{-4cm}
\caption{\label{figure:Phi}
A comparison of the exact expression of $\Phi(\alpha, \rho)$ in \eqref{Phi 1: alpha-div.},
with $\alpha=1$, and its three upper bounds in the right sides of \eqref{Phi - UB1},
\eqref{Phi - UB2} and \eqref{Phi - UB3} (called 'Upper bound~1' (dotted line),
'Upper bound~2' (thin dashed line), and 'Upper bound~3' (thick dashed line),
respectively).}
\end{figure}

Figure~\ref{figure:Phi} compares the exact expression in \eqref{Phi 1: alpha-div.}
with its upper bounds in \eqref{Phi - UB1}, \eqref{Phi - UB2} and \eqref{Phi - UB3}.
These bounds show good match with the exact value, and none of the bounds in
\eqref{Phi - UB1} and \eqref{Phi - UB2} is superseded by the other; the bound in
\eqref{Phi - UB3} is looser than \eqref{Phi - UB1}, and it is derived for obtaining
the closed-form solution in \eqref{10062019b5}--\eqref{10062019b7}. The bound in
\eqref{Phi - UB1} is tighter than the bound in \eqref{Phi - UB2} for small values
of $\rho \geq 1$, whereas the latter bound outperforms the first one for sufficiently
large values of $\rho$. It has been observed numerically that the tightness of the
bounds is improved by increasing the value of $\alpha$, and the range of parameters
of $\rho$ over which the bound in \eqref{Phi - UB1} outperforms the second bound in
\eqref{Phi - UB2} is enlarged when $\alpha$ is increased. It is also shown in
Figure~\ref{figure:Phi} that the bound in \eqref{Phi - UB1} and its loosened version
in \eqref{Phi - UB3} almost coincide for sufficiently small values of $\rho$ (i.e.,
for $\rho$ is close to~1), and also for sufficiently large values of $\rho$.

\subsection{An interpretation of $u_f(\cdot, \cdot)$ in Theorem~\ref{thm: LB/UB f-div}}
\label{subsection: u_f}

We provide here an interpretation of $u_f(n, \rho)$ in \eqref{def: u_f},
for $\rho > 1$ and an integer $n \geq 2$; note that $u_f(n, 1) \equiv 0$
since $\set{P}_n(1) = \{U_n\}$.
Before doing so, recall that \eqref{opt1 Df: beta} introduces an identity
which significantly simplifies the numerical calculation of $u_f(n, \rho)$,
and \eqref{LB/UB u_f} gives (asymptotically tight) upper and lower bounds.

The following result relies on the variational representation of $f$-divergences.
\begin{theorem} \label{thm: conjugate}
Let $f \colon (0, \infty) \to \Reals$ be convex with $f(1)=0$, and
let $\overline{f} \colon \Reals \to \Reals \cup \{\infty\}$ be the convex
conjugate function of $f$ (a.k.a. the Fenchel-Legendre transform of $f$), i.e.,
\begin{align} \label{conjugate}
\overline{f}(x) := \sup_{t>0} \bigl\{tx - f(t) \bigr\}, \quad x \in \Reals.
\end{align}
Let $\rho > 1$, and define $\set{A}_n := \{1, \ldots, n\}$ for an integer
$n \geq 2$. Then, the following holds:
\begin{enumerate}[a)]

\item \label{converse}
For every $P \in \set{P}_n(\rho)$, a random variable $X \sim P$,
and a function $g \colon \set{A}_n \to \Reals$,
\begin{align}
\label{eq: converse}
\expectation[g(X)] \leq u_f(n, \rho) + \frac1n \sum_{i=1}^n \overline{f}\bigl(g(i)\bigr).
\end{align}

\item \label{achievability}
There exists $P \in \set{P}_n(\rho)$ such that, for every $\varepsilon > 0$,
there is a function $g_\varepsilon \colon \set{A}_n \to \Reals$ which satisfies
\begin{align}
\label{eq: achievability}
\expectation[g_\varepsilon(X)] \geq u_f(n, \rho) +
\frac1n \sum_{i=1}^n \overline{f}\bigl(g_\varepsilon(i)\bigr) - \varepsilon,
\end{align}
with $X \sim P$.
\end{enumerate}
\end{theorem}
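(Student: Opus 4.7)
The plan is to reduce Theorem~\ref{thm: conjugate} to the classical variational (Fenchel) representation of $f$-divergences, specialized to $Q = U_n$. The starting point is the pointwise Young--Fenchel inequality that follows immediately from \eqref{conjugate}: for every $t > 0$ and $x \in \Reals$,
\begin{equation}
f(t) \;\geq\; t x - \overline{f}(x),
\end{equation}
with equality approached arbitrarily closely by choosing $x$ in the approximate subdifferential of $f$ at $t$. Combined with the identity $D_f(P \| U_n) = \tfrac1n \sum_{i=1}^{n} f\bigl(n P(i)\bigr)$, which is immediate from Definition~\ref{def:fD} together with $U_n(i) = 1/n$, this will handle both parts.

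For part \ref{converse}, I would evaluate the Young--Fenchel inequality at $t = n P(i)$ with $x = g(i)$, divide by $n$, and sum over $i \in \set{A}_n$, obtaining
\begin{equation}
D_f(P \| U_n) \;\geq\; \sum_{i=1}^{n} P(i)\, g(i) - \frac1n \sum_{i=1}^{n} \overline{f}\bigl(g(i)\bigr) \;=\; \expectation[g(X)] - \frac1n \sum_{i=1}^{n} \overline{f}\bigl(g(i)\bigr).
\end{equation}
Rearranging, and then applying the defining inequality $D_f(P \| U_n) \leq u_f(n, \rho)$, which holds for every $P \in \set{P}_n(\rho)$ by \eqref{def: u_f}, would directly yield \eqref{eq: converse}.

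For part \ref{achievability}, I would first invoke Theorem~\ref{thm: LB/UB f-div}-\ref{Thm. 5-b}), which guarantees that the maximum in \eqref{def: u_f} is attained by some $P^\star \in \set{P}_n(\rho)$, so that $D_f(P^\star \| U_n) = u_f(n, \rho)$. For any given $\varepsilon > 0$ and each $i \in \set{A}_n$, the definition of the supremum in \eqref{conjugate} (applied at $t = n P^\star(i)$) permits selecting $x_i \in \Reals$ such that
\begin{equation}
n P^\star(i)\, x_i - \overline{f}(x_i) \;\geq\; f\bigl(n P^\star(i)\bigr) - \varepsilon.
\end{equation}
Setting $g_\varepsilon(i) := x_i$, dividing by $n$, summing over $i$, and using the same identity for $D_f(P^\star \| U_n)$ that was used in part \ref{converse}, would yield \eqref{eq: achievability} with $X \sim P^\star$.

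The argument is essentially mechanical once the Fenchel representation is invoked; the only mildly delicate point is ensuring, in part \ref{achievability}, that the $x_i$ can be chosen so that $\overline{f}(x_i)$ is finite. This is not a genuine obstacle in the present setting: since $f$ is proper convex with $f(1) = 0$ and the alphabet $\set{A}_n$ is finite, one has $n P^\star(i) \in (0, \infty)$ for every $i$ in the support of $P^\star$, and the supremum in \eqref{conjugate} at any such $t$ is approached at a finite $x$ belonging to the subdifferential of $f$ at $t$; for $i$ outside the support, the term $n P^\star(i)\, x_i$ vanishes, and one may simply take $x_i$ minimizing $\overline{f}$.
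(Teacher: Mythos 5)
Your proposal is correct and follows essentially the same route as the paper's proof: part~\ref{converse}) is the easy (Young--Fenchel) direction of the variational representation of $D_f(P \| U_n)$ evaluated coordinatewise with $Q = U_n$ (the paper cites this representation as a black box, while you derive it inline), and part~\ref{achievability}) combines a maximizer of $D_f(\cdot \| U_n)$ over $\set{P}_n(\rho)$ --- the paper takes the explicit $Q_{\beta^\ast}$ via Theorem~\ref{thm: LB/UB f-div}~\ref{Thm. 5-c}), while your appeal to Theorem~\ref{thm: LB/UB f-div}~\ref{Thm. 5-b}) is equally valid --- with an $\varepsilon$-good selection of the $x_i$, which the paper justifies through the biconjugation $\overline{(\overline{f})} \equiv f$ and you justify, correctly, via subgradients of $f$ at $t_i = n P^\star(i) > 0$ (note that this step needs more than ``the definition of the supremum in \eqref{conjugate}'', which by itself only gives $f(t) \geq tx - \overline{f}(x)$; your subdifferential remark supplies the missing reverse direction and simultaneously ensures $\overline{f}(x_i) = t_i x_i - f(t_i) < \infty$). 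Your closing caveat about indices outside the support is vacuous, since every $P \in \set{P}_n(\rho)$ is by definition supported on all of $\set{A}_n$.
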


\begin{proof}
See Appendix~\ref{appendix: conjugate}.
\end{proof}

\begin{remark}
The proof suggests a constructive way to obtain, for an arbitrary $\varepsilon > 0$,
a function $g_\varepsilon$ which satisfies \eqref{eq: achievability}.
\end{remark}

\section{Applications in Information Theory and Statistics}
\label{section: applications}

\subsection{Bounds on the List Decoding Error Probability
with $f$-divergences}
\label{subsection: Fano - list decoder}

The minimum probability of error of a random variable $X$ given
$Y$, denoted by $\varepsilon_{X|Y}$,
can be achieved by a deterministic function
(\textit{maximum-a-posteriori} decision rule)
$\mathcal{L}^\ast \colon \set{Y} \to \set{X}$ (see \cite{ISSV18}):
\begin{align}
\varepsilon_{X|Y}
&= \min_{\mathcal{L} \colon \set{Y} \to \set{X}}
\mathbb{P} [ X \neq \mathcal{L} (Y) ] \label{20170904} \\
&= \mathbb{P} [ X \neq \mathcal{L}^\ast (Y) ] \label{eq:MAP}\\
&= 1- \mathbb{E} \left[ \max_{x \in \set{X}}
P_{X|Y}(x|Y) \right].
\label{eq1: cond. epsilon}
\end{align}
Fano's inequality \cite{Fano52} gives an upper bound on the
conditional entropy $H(X|Y)$ as a function of $\varepsilon_{X|Y}$
(or, otherwise, providing a lower bound on $\varepsilon_{X|Y}$ as
a function of $H(X|Y))$ when $X$ takes a finite number of possible values.

The list decoding setting, in which the hypothesis tester is allowed
to output a subset of given cardinality, and an error occurs if the
true hypothesis is not in the list, has great interest in
information theory. A generalization of Fano's inequality to list decoding,
in conjunction with the blowing-up lemma \cite[Lemma~1.5.4]{Csiszar_Korner},
leads to strong converse results in multi-user information theory. This
approach was initiated in \cite[Section~5]{Ahlswede_Gacs_Korner} (see also
\cite[Section~3.6]{RS_FnT19}). The main idea of the successful combination
of these two tools is that, given a code, it is possible to blow-up the
decoding sets in a way that the probability of decoding error can be as small
as desired for sufficiently large blocklengths; since the blown-up decoding
sets are no longer disjoint, the resulting setup is a list decoder with
sub-exponential list size (as a function of the block length).

In statistics, Fano's-type lower bounds on Bayes and minimax risks,
expressed in terms of $f$-divergences, are derived in \cite{ChenGZ16} and
\cite{Guntuboyina11}.

In this section, we further study the setup of list decoding, and derive
bounds on the average list decoding error probability. We first
consider the special case where the list size is fixed (see
Section~\ref{subsubsection: fixed-size list decoding}), and then
move to the more general case of a list size which depends on the channel
observation (see Section~\ref{subsubsection: variable-size list decoding}).

\subsubsection{Fixed-Size List Decoding}
\label{subsubsection: fixed-size list decoding}

A generalization of Fano's inequality for fixed-size list decoding is
given in \cite[(139)]{ISSV18}, expressed as a function of the conditional
Shannon entropy (strengthening \cite[Lemma~1]{Kim_Sutivong_Cover}).
A further generalization in this setup, which is expressed as a function
of the Arimoto-R\'enyi conditional entropy with an arbitrary positive order
(see Definition~\ref{definition: AR conditional entropy}), is provided
in \cite[Theorem~8]{ISSV18}.

The next result provides a generalized Fano's inequality for fixed-size
list decoding, expressed in terms of an arbitrary $f$-divergence. Some
earlier results in the literature are reproduced from the next result,
followed by its strengthening as an application of Theorem~\ref{thm: SDPI-IS}.

\begin{theorem}
\label{theorem: generalized Fano Df}
Let $P_{XY}$ be a probability measure defined on
$\set{X} \times \set{Y}$ with $|\set{X}|=M$. Consider
a decision rule $\set{L} \colon \set{Y} \to \binom{\set{X}}{L}$,
where $\binom{\set{X}}{L}$ stands for the set of subsets
of $\set{X}$ with cardinality $L$, and $L < M$ is fixed.
Denote the list decoding error probability by
$P_{\set{L}} := \prob \bigl[ X \notin \set{L}(Y) \bigr]$.
Let $U_M$ denote an equiprobable probability mass
function on $\set{X}$. Then, for every convex function
$f \colon (0, \infty) \to \Reals$ with $f(1)=0$,
\begin{align}
\label{generalized Fano Df}
\expectation\Bigl[D_f \bigl(P_{X|Y}(\cdot|Y) \, \| \, U_M \bigr) \Bigr]
\geq \frac{L}{M} \; f\biggl(\frac{M \, (1-P_{\set{L}})}{L} \biggr)
+ \biggl(1-\frac{L}{M}\biggr) \; f\biggl(\frac{M P_{\set{L}}}{M-L} \biggr).
\end{align}
\end{theorem}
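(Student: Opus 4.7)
The plan is to combine the data-processing inequality (DPI) for $f$-divergences with Jensen's inequality, via a two-point quantization of $\set{X}$ induced by the list.

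First, for each $y \in \set{Y}$, I would define the deterministic quantizer $\phi_y \colon \set{X} \to \{0,1\}$ by $\phi_y(x) := \mathds{1}\{x \notin \set{L}(y)\}$. Let $\alpha(y) := \prob\bigl[X \notin \set{L}(y) \,\big|\, Y=y\bigr]$. Under $P_{X|Y}(\cdot|y)$, the push-forward of $\phi_y$ is the Bernoulli distribution $(1-\alpha(y),\alpha(y))$; under the equiprobable $U_M$, since $|\set{L}(y)|=L$, the push-forward is the Bernoulli distribution $(L/M,\,1-L/M)$. Viewing $\phi_y$ as a (degenerate) stochastic transformation and applying the DPI for $f$-divergences yields
\begin{align*}
D_f\bigl(P_{X|Y}(\cdot|y) \,\|\, U_M\bigr)
\geq \frac{L}{M}\, f\!\left(\frac{M(1-\alpha(y))}{L}\right)
+ \left(1-\frac{L}{M}\right) f\!\left(\frac{M\alpha(y)}{M-L}\right),
\end{align*}
where the boundary cases $\alpha(y)\in\{0,1\}$ are covered by the conventions on $f(0)$ in Definition~\ref{def:fD}.

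Second, I would take expectation over $Y$. The left-hand side becomes the target quantity $\expectation[D_f(P_{X|Y}(\cdot|Y)\,\|\,U_M)]$. For each of the two terms on the right-hand side, since $f$ is convex and the inner arguments are affine in $\alpha(y)$, Jensen's inequality gives
\begin{align*}
\expectation\!\left[ f\!\left(\frac{M(1-\alpha(Y))}{L}\right)\right] \geq f\!\left(\frac{M(1-\expectation[\alpha(Y)])}{L}\right),
\quad
\expectation\!\left[ f\!\left(\frac{M\alpha(Y)}{M-L}\right)\right] \geq f\!\left(\frac{M\,\expectation[\alpha(Y)]}{M-L}\right).
\end{align*}
Finally, using the identity $\expectation[\alpha(Y)] = \prob[X \notin \set{L}(Y)] = P_{\set{L}}$ turns the right-hand side into exactly the claimed bound \eqref{generalized Fano Df}.

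There is no substantive obstacle: the DPI step is a textbook application to a deterministic kernel, and the Jensen step is immediate from convexity of $f$. The only care needed is bookkeeping at the boundary values of $\alpha(y)$, which is handled by the standard extension of $f$ at $0$ via $f(0):=\lim_{t\to 0^+}f(t)$; since $L<M$, the reference Bernoulli distribution $(L/M,1-L/M)$ has strictly positive entries, so the binary $f$-divergence is well-defined throughout.
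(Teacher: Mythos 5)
Your proposal is correct and follows essentially the same route as the paper's proof in Appendix~M: the data-processing inequality applied to the deterministic binary quantization of $\set{X}$ into $\set{L}(y)$ versus its complement (your $\alpha(y)$ is the paper's conditional error probability $P_{\set{L}}(y)$), followed by taking the expectation over $Y$ and invoking Jensen's inequality on each of the two convex terms to replace $\alpha(Y)$ by $P_{\set{L}} = \expectation[\alpha(Y)]$. No gap; even your boundary-case remark about $f(0)$ matches the conventions of Definition~1.
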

\begin{proof}
See Appendix~\ref{appendix: generalized Fano Df}.
\end{proof}

\begin{remark}
The special case where $L=1$ (i.e., a decoder with a single
output) gives \cite[(5)]{Guntuboyina11}.
\end{remark}

As consequences of Theorem~\ref{theorem: generalized Fano Df}, we first
reproduce some earlier results as special cases.
\begin{corollary} \cite[(139)]{ISSV18}
\label{corollary: Fano - list}
Under the assumptions in Theorem~\ref{theorem: generalized Fano Df},
\begin{align}
\label{ISSV18 - Fano}
H(X|Y) \leq \log M - d\biggl(P_{\set{L}} \, \| \, 1-\frac{L}{M} \biggr)
\end{align}
where $d(\cdot \| \cdot) \colon [0,1] \times [0,1] \to [0, +\infty]$ denotes
the binary relative entropy, defined as the continuous extension of
$D([p, 1-p] \| [q, 1-q]) := p \log \frac{p}{q} + (1-p) \log \frac{1-p}{1-q}$
for $p,q \in (0,1)$.
\end{corollary}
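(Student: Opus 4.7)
The plan is to specialize Theorem~\ref{theorem: generalized Fano Df} to the particular convex function $f(t) := t \log t$ for $t>0$ (with $f(0):=0$), which satisfies $f(1)=0$ and for which $D_f(\cdot\|\cdot)$ coincides with the relative entropy $D(\cdot\|\cdot)$. Thus this corollary is an immediate specialization of the general inequality: the point is simply to check that under this choice of $f$ the two sides of \eqref{generalized Fano Df} rearrange to \eqref{ISSV18 - Fano}.

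First I would handle the left-hand side. For any probability mass function $P$ supported on a set of cardinality $M$, a direct computation from Definition~\ref{def:fD} yields $D(P\|U_M) = \log M - H(P)$. Taking the conditional-expectation version with $P = P_{X|Y}(\cdot|Y)$ and averaging over $Y$ gives
\begin{align}
\expectation\Bigl[D\bigl(P_{X|Y}(\cdot|Y)\,\|\,U_M\bigr)\Bigr] = \log M - H(X|Y),
\end{align}
which is exactly the left-hand side of \eqref{generalized Fano Df} for the chosen $f$.

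Next I would simplify the right-hand side of \eqref{generalized Fano Df} for $f(t)=t\log t$. Substitution yields
\begin{align}
(1-P_{\set{L}}) \log\frac{M(1-P_{\set{L}})}{L} + P_{\set{L}} \log\frac{M P_{\set{L}}}{M-L},
\end{align}
and splitting each logarithm into the entropy-like piece and the piece involving $L/M$ or $(M-L)/M$ regroups the expression as
\begin{align}
P_{\set{L}} \log\frac{P_{\set{L}}}{1-L/M} + (1-P_{\set{L}}) \log\frac{1-P_{\set{L}}}{L/M},
\end{align}
which is precisely $d\bigl(P_{\set{L}}\,\|\,1-\tfrac{L}{M}\bigr)$ by definition of the binary relative entropy.

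Combining these two identities with Theorem~\ref{theorem: generalized Fano Df} yields $\log M - H(X|Y) \geq d\bigl(P_{\set{L}} \,\|\, 1-\tfrac{L}{M}\bigr)$, and a rearrangement gives \eqref{ISSV18 - Fano}. There is no real obstacle here: once the correct $f$ is identified, everything reduces to the routine algebraic identities above and to the standard relation between $D(\cdot\|U_M)$ and Shannon entropy on a set of size $M$.
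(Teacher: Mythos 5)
Your proposal is correct and follows essentially the same route as the paper: specialize Theorem~\ref{theorem: generalized Fano Df} to an $f$ generating the relative entropy, identify the left side of \eqref{generalized Fano Df} with $\log M - H(X|Y)$ via $D(P\|U_M)=\log M - H(P)$, and verify algebraically that the right side equals $d\bigl(P_{\set{L}} \,\|\, 1-\tfrac{L}{M}\bigr)$. The only cosmetic difference is that the paper takes $f(t) := t\log t + (1-t)\log\mathrm{e}$ (i.e., $u_1(t)\log\mathrm{e}$) while you take $f(t) := t\log t$; these differ by an affine term that contributes nothing either to the $f$-divergence or to the right side of \eqref{generalized Fano Df} (since the weights $\tfrac{L}{M}$, $1-\tfrac{L}{M}$ applied to the two arguments sum to~$1$), so both choices yield the identical computation.
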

\begin{proof}
The choice $f(t) := t \log t + (1-t) \log \mathrm{e}$, for all $t>0$, (so the
equality $f(t) = u_1(t) \log \mathrm{e}$ holds, for $t>0$, with $u_1(\cdot)$
defined in \eqref{f of Alpha-divergence}) gives
\begin{align}
\label{21062019a6}
\expectation\Bigl[D_f \bigl(P_{X|Y}(\cdot|Y) \, \| \, U_M \bigr) \Bigr]
& = \int_{\set{Y}} \mathrm{d}P_Y(y) \, D\bigl( P_{X|Y}(\cdot | y) \, \| U_M \bigr) \\
\label{21062019a7}
& = \int_{\set{Y}} \mathrm{d}P_Y(y) \, \bigl[ \log M - H(X|Y=y) \bigr] \\
\label{21062019a8}
& = \log M - H(X|Y),
\end{align}
and
\begin{align}
\label{21062019a9}
\frac{L}{M} \; f\biggl(\frac{M \, (1-P_{\set{L}})}{L} \biggr)
+ \biggl(1-\frac{L}{M}\biggr) \; f\biggl(\frac{M P_{\set{L}}}{M-L} \biggr)
= d\biggl(P_{\set{L}} \, \| \, 1-\frac{L}{M} \biggr).
\end{align}
Substituting \eqref{21062019a6}--\eqref{21062019a9} into
\eqref{generalized Fano Df} gives \eqref{ISSV18 - Fano}.
\end{proof}

Theorem~\ref{theorem: generalized Fano Df} enables to reproduce a result
in \cite{ISSV18} which generalizes Corollary~\ref{corollary: Fano - list}.
It relies on R\'{e}nyi information measures, and we first provide definitions
for a self-contained presentation.

\begin{definition} \cite{Renyientropy} \label{definition: Renyi entropy}
Let $P_X$  be a probability mass function defined on a discrete set $\set{X}$.
The \textit{R\'{e}nyi entropy of order} $\alpha \in (0,1) \cup (1, \infty)$
of $X$, denoted by $H_{\alpha}(X)$ or $H_{\alpha}(P_X)$, is defined as
\begin{align} \label{eq: Renyi entropy}
H_{\alpha}(X) & := \frac1{1-\alpha} \, \log
\sum_{x \in \set{X}} P_X^{\alpha}(x) \\
\label{eq2: Renyi entropy}
& \hspace*{0.1cm} = \frac{\alpha}{1-\alpha} \, \log \| P_X \|_{\alpha}.
\end{align}
The R\'enyi entropy is continuously
extended at orders $0$, $1$, and $\infty$; at order~1, it
coincides with the Shannon entropy $H(X)$.
\end{definition}

\begin{definition} \cite{Arimoto75}
\label{definition: AR conditional entropy}
Let $P_{XY}$ be defined on $\set{X} \times \set{Y}$,
where $X$ is a discrete random variable.
The \textit{Arimoto-R\'{e}nyi conditional entropy of order
$\alpha \in [ 0, \infty]$} of $X$ given $Y$ is defined as follows:
\begin{itemize}
\item
If $\alpha \in (0,1) \cup (1, \infty) $, then
\begin{align}
\label{eq1: Arimoto - cond. RE}
H_{\alpha}(X | Y) &= \frac{\alpha}{1-\alpha} \,
\log \, \mathbb{E} \left[
\left( \, \sum_{x \in \set{X}} P_{X|Y}^{\alpha}(x|Y)
\right)^{\frac1{\alpha}} \right] \\
&= \frac{\alpha}{1-\alpha} \, \log \mathbb{E}
\left[ \| P_{X|Y} ( \cdot | Y ) \|_\alpha \right] \\
\label{eq2: Arimoto - cond. RE}
&= \frac{\alpha}{1-\alpha} \, \log
\, \int_{\set{Y}} \mathrm{d}P_Y(y) \, \exp \left(
\frac{1-\alpha}{\alpha} \;
H_{\alpha}(X | Y=y) \right).
\end{align}
\item The Arimoto-R\'enyi conditional entropy is continuously
extended at orders $0$, $1$, and $\infty$; at order~1, it
coincides with the conditional Shannon entropy $H(X|Y)$.
\end{itemize}
\end{definition}

\begin{definition} \label{definition: binary RD} \cite{ISSV18}
For all $\alpha \in (0,1) \cup (1, \infty)$, the {\em binary R\'{e}nyi
divergence of order $\alpha$}, denoted by $d_{\alpha}(p\|q)$ for
$(p,q) \in [0,1]^2$, is defined as $D_{\alpha}([p, 1-p] \, \| \, [q, 1-q])$.
It is the continuous extension to $[0,1]^2$ of
\begin{align}
\label{eq1: binary RD}
d_\alpha (p \| q ) =\frac1{\alpha-1} \; \log \Bigl(p^{\alpha} q^{1-\alpha}
+ (1-p)^{\alpha} (1-q)^{1-\alpha} \Bigr).
\end{align}
For $\alpha=1$,
\begin{align}
d_1(p\|q) := \lim_{\alpha \to 1} d_\alpha(p\|q) = d(p\|q).
\end{align}
\end{definition}

The following result, generalizing Corollary~\ref{corollary: Fano - list},
is shown to be a consequence of Theorem~\ref{theorem: generalized Fano Df}.
It has been originally derived in \cite[Theorem~8]{ISSV18} in a different
way. The alternative derivation of this inequality relies on
Theorem~\ref{theorem: generalized Fano Df}, applied to the family of
Alpha-divergences (see \eqref{Alpha-divergence}) as a subclass of
the $f$-divergences.

\begin{corollary} \cite[Theorem~8]{ISSV18}
\label{corollary: generalized Fano-Renyi inequality}
Under the assumptions in Theorem~\ref{theorem: generalized Fano Df},
then for every $\alpha \in (0,1) \cup (1, \infty)$,
\begin{align}
\label{eq: generalized Fano-Renyi - list decoding}
H_{\alpha}(X | Y) & \leq \log M - d_{\alpha}\biggl( P_{\set{L}} \,
\| \, 1-\frac{L}{M} \biggr) \\
\label{eq2: generalized Fano-Renyi - list decoding}
& = \frac1{1-\alpha} \;
\log \Bigl( L^{1-\alpha} \, \bigl(1-P_{{\set{L}}}\bigr)^{\alpha}
+ (M-L)^{1-\alpha} \, P_{{\set{L}}}^{\alpha} \Bigr),
\end{align}
with equality in \eqref{eq: generalized Fano-Renyi - list decoding}
if and only if
\begin{align} \label{eq: tight Fano-Renyi - list decoding}
P_{X|Y}(x|y) =
\begin{dcases}
\frac{P_{\set{L}}}{M-L}, & \quad x \notin \set{L}(y), \\[0.2cm]
\frac{1-P_{\set{L}}}{L}, & \quad x \in \set{L}(y).
\end{dcases}
\end{align}
\end{corollary}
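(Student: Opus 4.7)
The plan is to specialize Theorem~\ref{theorem: generalized Fano Df} to $f := u_\alpha$ from \eqref{f of Alpha-divergence} (so that $D_f \equiv D_{\mathrm{A}}^{(\alpha)}$ is the Alpha-divergence of order $\alpha$), apply it pointwise in $Y = y$, and then convert the resulting inequality for the Alpha-divergence into one on the Arimoto–R\'enyi conditional entropy $H_\alpha(X|Y)$ via the identity $D_{\mathrm{A}}^{(\alpha)}(P\|U_M) = \tfrac{M^{\alpha-1}\|P\|_\alpha^\alpha - 1}{\alpha(\alpha-1)}$ together with one additional Jensen step.

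First, I would view Theorem~\ref{theorem: generalized Fano Df} pointwise (equivalently, as applied to a $Y$ concentrated at a given $y$) with $f = u_\alpha$. The affine pieces $\alpha(t-1)+1$ in $u_\alpha(t) = \tfrac{t^\alpha - \alpha(t-1) - 1}{\alpha(\alpha-1)}$ telescope to~$1$ against $\tfrac{L}{M}\!\cdot\!\tfrac{M(1-P_{\set{L}}(y))}{L} + (1-\tfrac{L}{M})\!\cdot\!\tfrac{M P_{\set{L}}(y)}{M-L} = 1$, so after clearing the common factor $\alpha(\alpha-1)$, whose sign flips at $\alpha = 1$, the pointwise bound reads
\begin{align*}
\sum_{x \in \set{X}} P_{X|Y}^\alpha(x|y) \;\bowtie\; L^{1-\alpha}\bigl(1-P_{\set{L}}(y)\bigr)^\alpha + (M-L)^{1-\alpha}\,P_{\set{L}}^\alpha(y),
\end{align*}
where $P_{\set{L}}(y):=\prob[X \notin \set{L}(y) \mid Y = y]$ and $\bowtie$ is ``$\geq$'' for $\alpha > 1$ and ``$\leq$'' for $\alpha \in (0,1)$. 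Raising both sides to the $1/\alpha$-power, which is monotone on $\Reals_+$, preserves the ordering and yields $\|P_{X|Y}(\cdot|y)\|_\alpha \bowtie \phi\bigl(P_{\set{L}}(y)\bigr)$ with
\[
\phi(p) \;:=\; \bigl(L^{1-\alpha}(1-p)^\alpha + (M-L)^{1-\alpha} p^\alpha\bigr)^{1/\alpha}.
\]

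The next step is to take expectation in $Y$ and invoke Definition~\ref{definition: AR conditional entropy}. The key observation is that $\phi(p)$ is the $\ell^\alpha$-norm of a non-negative vector depending affinely on $p$, so by Minkowski's inequality $\phi$ is convex on $[0,1]$ when $\alpha \geq 1$ and, by the reverse Minkowski inequality on non-negative vectors, concave on $[0,1]$ when $\alpha \in (0,1]$. Jensen's inequality applied to $\phi$, together with $\expectation[P_{\set{L}}(Y)] = P_{\set{L}}$, then produces the \emph{same} ordering $\expectation\bigl[\|P_{X|Y}(\cdot|Y)\|_\alpha\bigr] \bowtie \phi(P_{\set{L}})$ in both regimes. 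Taking logarithms and multiplying by $\tfrac{\alpha}{1-\alpha}$ — positive for $\alpha \in (0,1)$, negative for $\alpha > 1$ — flips the direction precisely once, producing in both cases
\begin{align*}
H_\alpha(X|Y) \;\leq\; \tfrac{\alpha}{1-\alpha}\log\phi(P_{\set{L}}) \;=\; \tfrac{1}{1-\alpha}\log\!\bigl(L^{1-\alpha}(1-P_{\set{L}})^\alpha + (M-L)^{1-\alpha}P_{\set{L}}^\alpha\bigr),
\end{align*}
which is precisely \eqref{eq2: generalized Fano-Renyi - list decoding}; factoring $M^{\alpha-1}$ out of the argument of the logarithm recasts it as \eqref{eq: generalized Fano-Renyi - list decoding} via Definition~\ref{definition: binary RD}.

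The main obstacle will be the triple bookkeeping of signs — the inequality direction in the $f$-divergence step, the convex vs.\ concave direction of $\phi$, and the sign of $\tfrac{\alpha}{1-\alpha}$ all flip at $\alpha = 1$ — and I would have to verify that the three flips compose to an upper bound on $H_\alpha(X|Y)$ in both regimes; a continuous extension to $\alpha = 1$ should recover Corollary~\ref{corollary: Fano - list} as a consistency check. The equality characterisation \eqref{eq: tight Fano-Renyi - list decoding} would then follow by tracing tightness through each step: Theorem~\ref{theorem: generalized Fano Df} is tight iff $P_{X|Y}(\cdot|y)$ takes only the two values $\tfrac{1-P_{\set{L}}(y)}{L}$ on $\set{L}(y)$ and $\tfrac{P_{\set{L}}(y)}{M-L}$ on its complement, while the Jensen step on $\phi$ is tight iff $P_{\set{L}}(Y)$ is $P_Y$-almost surely constant, and together these give exactly the two-level form in \eqref{eq: tight Fano-Renyi - list decoding}.
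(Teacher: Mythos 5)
Your proposal is correct and takes essentially the same route as the paper's proof in Appendix~\ref{appendix: generalized Fano-Renyi inequality}: specialize Theorem~\ref{theorem: generalized Fano Df} pointwise to $f=u_\alpha$, observe that the resulting bound on $\sum_{x} P_{X|Y}^\alpha(x|y)$ is exactly a bound on $\exp\bigl(\tfrac{1-\alpha}{\alpha}\,H_\alpha(X|Y=y)\bigr)=\|P_{X|Y}(\cdot|y)\|_\alpha$ against your $\phi\bigl(P_{\set{L}}(y)\bigr)$, and close with the same Jensen step in which the convex/concave dichotomy of $\phi$ and the sign of $\tfrac{\alpha}{1-\alpha}$ flip together at $\alpha=1$, with the equality case traced through the data-processing and Jensen steps just as in the paper. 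The only differences are cosmetic: you justify the convexity/concavity of $\phi$ directly via Minkowski and reverse Minkowski on non-negative vectors, where the paper cites \cite[Lemma~1]{ISSV18} for precisely this fact, and the constant factored out of the logarithm in the final identification with \eqref{eq: generalized Fano-Renyi - list decoding} should be $M^{1-\alpha}$ rather than $M^{\alpha-1}$.
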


\begin{proof}
See Appendix~\ref{appendix: generalized Fano-Renyi inequality}.
\end{proof}

Another application of Theorem~\ref{theorem: generalized Fano Df} with
the selection $f(t) := |t-1|^s$, for $t \in [0, \infty)$ and a parameter
$s \geq 1$, gives the following result.
\begin{corollary}
\label{corollary: list dec. s>=1}
Under the assumptions in Theorem~\ref{theorem: generalized Fano Df},
for all $s \geq 1$,
\begin{align}
\label{29062019a4}
P_\set{L} \geq 1-\frac{L}{M} - \Bigl(L^{1-s}+(M-L)^{1-s} \Bigr)^{-\frac1s}
\left( \expectation \Biggl[ \, \underset{x \in \set{X}}{\sum}
\, \biggl|P_{X|Y}(x|Y) - \frac1M \biggr|^s \Biggr] \right)^{\frac1s},
\end{align}
where \eqref{29062019a4} holds with equality if $X$ and $Y$ are independent
with $X$ being equiprobable. For $s=1$ and $s=2$, \eqref{29062019a4} respectively
gives that
\begin{align}
\label{29062019a5}
& P_\set{L} \geq 1-\frac{L}{M} - \frac12 \,
\expectation \Biggl[ \, \underset{x \in \set{X}}{\sum}
\, \biggl|P_{X|Y}(x|Y) - \frac1M \biggr| \Biggr], \\[0.2cm]
\label{29062019a6}
& P_\set{L} \geq 1-\frac{L}{M} - \sqrt{\frac{L}{M}
\biggl(1-\frac{L}{M}\biggr) \bigl( M \, \expectation[P_{X|Y}(X|Y)] - 1 \bigr)}.
\end{align}
\end{corollary}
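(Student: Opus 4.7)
The plan is to specialize Theorem~\ref{theorem: generalized Fano Df} to the convex function $f(t) := |t-1|^s$ on $[0,\infty)$ with $s \geq 1$, which satisfies $f(1)=0$ and whose convexity follows from composing $|\cdot|^s$ (convex for $s\geq 1$) with an affine shift. For this $f$, since $U_M(x) = 1/M$, a direct computation gives
\begin{align*}
D_f\bigl(P_{X|Y}(\cdot|y) \, \| \, U_M\bigr)
= \frac{1}{M} \sum_{x \in \set{X}} \bigl|M P_{X|Y}(x|y) - 1\bigr|^s
= M^{s-1} \sum_{x \in \set{X}} \Bigl|P_{X|Y}(x|y) - \tfrac{1}{M}\Bigr|^s ,
\end{align*}
so the left-hand side of \eqref{generalized Fano Df} equals $M^{s-1}$ times the expectation that appears inside the bracket of \eqref{29062019a4}.

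Next I would simplify the right-hand side of \eqref{generalized Fano Df}. Observing that $\tfrac{M(1-P_\set{L})}{L} - 1 = \tfrac{M(1-P_\set{L}) - L}{L}$ and $\tfrac{M P_\set{L}}{M-L} - 1 = -\tfrac{M(1-P_\set{L}) - L}{M-L}$, the two absolute values share the same absolute argument $|M(1-P_\set{L})-L|$, so raising to the $s$-th power factors out a common $|M(1-P_\set{L})-L|^s$ and reduces the right-hand side of \eqref{generalized Fano Df} to
\begin{align*}
\frac{|M(1-P_\set{L})-L|^s}{M} \bigl( L^{1-s} + (M-L)^{1-s} \bigr).
\end{align*}
Combining the two sides, canceling the factor $M^{s-1}$ on the left against $1/M$ on the right (so that only $M^s$ remains as a denominator), taking $s$-th roots (permissible because $s \geq 1$ and both sides are non-negative), and using $|M(1-P_\set{L}) - L| = M\,|(1-L/M)-P_\set{L}|$ to isolate $P_\set{L}$, one obtains \eqref{29062019a4}; the case $P_\set{L} \geq 1-L/M$ makes the bound trivial, so only the opposite case requires the rearrangement.

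The specializations \eqref{29062019a5} for $s=1$ and \eqref{29062019a6} for $s=2$ follow by direct substitution. For $s=2$ I would use the identity $\sum_{x} \bigl(P_{X|Y}(x|y) - \tfrac{1}{M}\bigr)^2 = \sum_{x} P_{X|Y}^2(x|y) - \tfrac{1}{M}$ together with $\expectation\bigl[P_{X|Y}(X|Y)\bigr] = \expectation\bigl[\sum_x P_{X|Y}^2(x|Y)\bigr]$ to rewrite the bracket as $M\expectation[P_{X|Y}(X|Y)] - 1$, after which the coefficient $(L^{-1} + (M-L)^{-1})^{-1/2} = \sqrt{L(M-L)/M}$ absorbs into the square root and produces exactly \eqref{29062019a6}.

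Finally, the equality claim is read off immediately: when $X$ and $Y$ are independent with $X$ equiprobable, one has $P_{X|Y}(\cdot|Y) = U_M$ almost surely, so the bracket on the right side of \eqref{29062019a4} vanishes, while any list of size $L$ gives $P_\set{L} = 1 - L/M$, yielding equality. The whole argument is algebraic once Theorem~\ref{theorem: generalized Fano Df} is invoked; the main obstacle is not conceptual but careful bookkeeping of the $M$-factors when passing between the $f$-divergence representation and the $\ell^s$-norm of $P_{X|Y}(\cdot|Y)-U_M$, and tracking the cancellation that produces the dimensionless coefficient $\bigl(L^{1-s}+(M-L)^{1-s}\bigr)^{-1/s}$.
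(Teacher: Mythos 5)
Your proposal is correct and follows essentially the same route as the paper, which states this corollary as a direct application of Theorem~\ref{theorem: generalized Fano Df} with $f(t) := |t-1|^s$ and leaves the algebra implicit: your computation of $D_f(P_{X|Y}(\cdot|y)\,\|\,U_M) = M^{s-1}\sum_x |P_{X|Y}(x|y)-\tfrac1M|^s$, the factoring of the common term $|M(1-P_{\set{L}})-L|^s$ on the right side of \eqref{generalized Fano Df}, and the $s=2$ reduction via $\sum_x \bigl(P_{X|Y}(x|y)-\tfrac1M\bigr)^2 = \sum_x P_{X|Y}^2(x|y) - \tfrac1M$ together with $\expectation\bigl[\sum_x P_{X|Y}^2(x|Y)\bigr] = \expectation\bigl[P_{X|Y}(X|Y)\bigr]$ are all exactly the intended steps, and the equality case is verified correctly. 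No gaps.
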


\vspace*{0.2cm}
The following refinement of the generalized Fano's inequality in
Theorem~\ref{theorem: generalized Fano Df} relies on the version
of the strong data-processing inequality in Theorem~\ref{thm: SDPI-IS}.

\begin{theorem}
\label{theorem: refined Fano's inequality}
Under the assumptions in Theorem~\ref{theorem: generalized Fano Df},
let $f \colon (0, \infty) \to \Reals$ be twice
differentiable, and assume that there exists a constant $m_f>0$ such that
\begin{align}
\label{m_f}
f''(t) \geq m_f, \quad \forall \,
t \in \set{I}(\xi_1^\ast, \xi_2^\ast),
\end{align}
where
\begin{align}
\label{28062019a1}
& \xi_1^\ast := M \inf_{(x,y) \in \set{X} \times \set{Y}} P_{X|Y}(x|y), \\
\label{28062019a2}
& \xi_2^\ast := M \sup_{(x,y) \in \set{X} \times \set{Y}} P_{X|Y}(x|y),
\end{align}
and the interval $\set{I}(\cdot, \cdot)$ is defined in \eqref{I_interval}.
Let $u^+ := \max\{u, 0\}$ for $u \in \Reals$. Then,
\begin{enumerate}[a)]
\item \label{Part a - refined Fano's inequality}
\begin{align}
\label{list dec.-26062019a}
\expectation\Bigl[D_f \bigl(P_{X|Y}(\cdot|Y) \, \| \, U_M \bigr) \Bigr]
& \geq \frac{L}{M} \; f\biggl(\frac{M \, (1-P_{\set{L}})}{L} \biggr)
+ \left(1-\frac{L}{M}\right) \; f\biggl(\frac{M P_{\set{L}}}{M-L} \biggr) \nonumber \\
& \hspace*{0.4cm} + \tfrac12 m_f \, M \left( \expectation\bigl[P_{X|Y}(X|Y)\bigr]
-\frac{1-P_{\set{L}}}{L} - \frac{P_{\set{L}}}{M-L} \right)^+.
\end{align}

\item \label{Part b - refined Fano's inequality}
If the list decoder selects the $L$ most probable elements from $\set{X}$,
given the value of $Y \in \set{Y}$, then \eqref{list dec.-26062019a} is
strengthened to
\begin{align}
\label{list dec.-26062019b}
\expectation\Bigl[D_f \bigl(P_{X|Y}(\cdot|Y) \, \| \, U_M \bigr) \Bigr]
& \geq \frac{L}{M} \; f\biggl(\frac{M \, (1-P_{\set{L}})}{L} \biggr)
+ \biggl(1-\frac{L}{M}\biggr) \; f\biggl(\frac{M P_{\set{L}}}{M-L} \biggr) \nonumber \\
& \hspace*{0.4cm} + \tfrac12 m_f \, M \left( \expectation\bigl[P_{X|Y}(X|Y)\bigr]
-\frac{1-P_{\set{L}}}{L} \right),
\end{align}
where the last term in the right side of \eqref{list dec.-26062019b} is
necessarily non-negative.
\end{enumerate}
\end{theorem}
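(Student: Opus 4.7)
The plan is to apply Theorem~\ref{thm: SDPI-IS}(a) conditionally on each realization $y$ of $Y$ to the deterministic two-cell quantizer induced by the list, and then average over $Y$. Concretely, fix $y \in \set{Y}$ and let $W_{Z|X}(1|x) := \mathds{1}\{x \in \set{L}(y)\}$; pushing $P_{X|Y=y}$ and $U_M$ through $W_{Z|X}$ produces $P_{Z|Y=y} = \mathrm{Bern}\bigl(p(y)\bigr)$ and $Q_Z = \mathrm{Bern}(L/M)$, where $p(y) := \prob[X \in \set{L}(y)\,|\,Y=y]$. The ratio $P_{X|Y}(x|y)/U_M(x) = M P_{X|Y}(x|y)$ lies in $[\xi_1^{\ast}, \xi_2^{\ast}]$ for every $(x,y)$, so hypothesis \eqref{m_f} together with formula \eqref{c_f} for the optimal constant in Theorem~\ref{thm: SDPI-IS} yields $c_f \geq \tfrac12 m_f$ uniformly in $y$; hence, pointwise in $y$,
\[
D_f(P_{X|Y=y} \| U_M) - D_f(P_{Z|Y=y} \| Q_Z) \;\geq\; \tfrac12 m_f \bigl[\chi^2(P_{X|Y=y} \| U_M) - \chi^2(P_{Z|Y=y} \| Q_Z) \bigr] \;\geq\; 0.
\]

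Averaging over $Y$ and bounding $\expectation[D_f(P_{Z|Y} \| Q_Z)] \geq D_f(\mathrm{Bern}(1-P_\set{L}) \| Q_Z)$ by convexity of $D_f(\cdot \| Q_Z)$ (using $\expectation[P_{Z|Y}(1)] = \expectation[p(Y)] = 1-P_\set{L}$) recovers the Fano two-term right-hand side of Theorem~\ref{theorem: generalized Fano Df}. The identity $\expectation[\chi^2(P_{X|Y} \| U_M)] = M \expectation[P_{X|Y}(X|Y)] - 1$, obtained by expanding definitions, handles the upstream $\chi^2$ term, so everything reduces to an upper bound on $\expectation[\chi^2(P_Z \| Q_Z)] = \tfrac{M \expectation[p(Y)^2]}{L} + \tfrac{M \expectation[(1-p(Y))^2]}{M-L} - 1$.

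For part~\ref{Part a - refined Fano's inequality}, the crude pointwise inequalities $p^2 \leq p$ and $(1-p)^2 \leq 1-p$ on $[0,1]$ give $\expectation[\chi^2(P_Z \| Q_Z)] \leq \tfrac{M(1-P_\set{L})}{L} + \tfrac{M P_\set{L}}{M-L} - 1$. Subtracting from $ME - 1$, with $E := \expectation[P_{X|Y}(X|Y)]$, yields a lower bound on $\expectation[\chi^2\text{-diff}]$ equal to $M \bigl(E - \tfrac{1-P_\set{L}}{L} - \tfrac{P_\set{L}}{M-L}\bigr)$, which need not be non-negative; combining with the independent lower bound $\expectation[\chi^2\text{-diff}] \geq 0$ supplied by the $\chi^2$ data-processing inequality produces the positive part $(\cdot)^{+}$ in \eqref{list dec.-26062019a}. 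For part~\ref{Part b - refined Fano's inequality}, the MAP property forces $p(y) \geq L/M$ pointwise, since the top-$L$ average $p(y)/L$ must dominate the bottom-$(M-L)$ average $(1-p(y))/(M-L)$; this is equivalent to the pointwise inequality $L(1-p)^2 \leq (M-L)\,p(1-p)$, which sharpens the $\chi^2$ bound to $\expectation[\chi^2(P_Z \| Q_Z)] \leq \tfrac{M(1-P_\set{L})}{L} - 1$ and eliminates the $\tfrac{P_\set{L}}{M-L}$ term, delivering the refined correction $\tfrac12 m_f M\bigl(E - \tfrac{1-P_\set{L}}{L}\bigr)$ in \eqref{list dec.-26062019b}.

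The main obstacle I anticipate is establishing the claimed non-negativity of $E - \tfrac{1-P_\set{L}}{L}$ in part~\ref{Part b - refined Fano's inequality}. The Cauchy--Schwarz bounds $\sum_{x \in \set{L}(y)} P_{X|Y}(x|y)^2 \geq p(y)^2/L$ and $\sum_{x \notin \set{L}(y)} P_{X|Y}(x|y)^2 \geq (1-p(y))^2/(M-L)$, followed by Jensen, yield only $E \geq \tfrac{(1-P_\set{L})^2}{L} + \tfrac{P_\set{L}^2}{M-L}$, which dominates $\tfrac{1-P_\set{L}}{L}$ precisely when $P_\set{L} \geq 1 - L/M$. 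Concluding non-negativity in the relevant regime $P_\set{L} < 1 - L/M$ thus requires a rearrangement-type strengthening exploiting the stronger structural fact that $\min_{x \in \set{L}(y)} P_{X|Y}(x|y) \geq \max_{x \notin \set{L}(y)} P_{X|Y}(x|y)$ under the MAP rule; all remaining steps are routine substitutions into Theorem~\ref{thm: SDPI-IS} and the two-cell convexity argument already established in Theorem~\ref{theorem: generalized Fano Df}.
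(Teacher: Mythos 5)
Your derivation of the two inequalities is correct and coincides, step for step, with the paper's proof in Appendix~\ref{appendix: refined Fano's inequality}: conditioning on $Y=y$ and applying Theorem~\ref{thm: SDPI-IS} to the two-cell quantizer induced by $\set{L}(y)$; the uniform bound $c_f\bigl(\xi_1(y),\xi_2(y)\bigr) \geq \tfrac12 m_f$ via \eqref{c_f} and \eqref{m_f}; convexity/Jensen to reduce $\expectation\bigl[D_f\bigl(P_{Z|Y}\|Q_Z\bigr)\bigr]$ to the two-term Fano expression; the identity $\expectation\bigl[\chi^2\bigl(P_{X|Y}(\cdot|Y)\,\|\,U_M\bigr)\bigr] = M\,\expectation\bigl[P_{X|Y}(X|Y)\bigr]-1$; for part~a), your pointwise bounds $p^2\le p$ and $(1-p)^2\le 1-p$ are in effect the paper's single bound $\expectation\bigl[P_{\set{L}}^2(Y)\bigr]\le P_{\set{L}}$ (cf.\ \eqref{27062019a19}), combined with the $\chi^2$ data-processing inequality to produce the positive part; and for part~b), your pointwise MAP fact $p(y)\ge L/M$ is exactly the paper's $P_{\set{L}}(y)\le 1-\frac{L}{M}$ used in \eqref{27062019a26}, and your algebra delivers the same bound $M\bigl(E-\frac{1-P_{\set{L}}}{L}\bigr)$, with $E:=\expectation\bigl[P_{X|Y}(X|Y)\bigr]$, with no positive part needed — so inequality \eqref{list dec.-26062019b} itself is fully proved by your argument.

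The one item you leave open — non-negativity of $E-\frac{1-P_{\set{L}}}{L}$ — you were right to distrust, and in fact no ``rearrangement-type strengthening'' can close it: the claim is \emph{false}. Take $Y$ constant, $M=3$, $L=1$, $P_{X|Y}=(0.4,\,0.3,\,0.3)$ with the MAP list $\{x_1\}$; then $E=0.16+0.09+0.09=0.34$ while $\frac{1-P_{\set{L}}}{L}=0.4$, so the last term in \eqref{list dec.-26062019b} is negative. The paper's own proof of this claim, \eqref{09072019a1}--\eqref{09072019a7}, is flawed at step \eqref{09072019a4}: Cauchy--Schwarz gives $\bigl(\sum_{\ell=1}^{L}a_\ell\bigr)^2 \le L\sum_{\ell=1}^{L}a_\ell^2$, which yields the asserted $\sum_\ell a_\ell \le L\sum_\ell a_\ell^2$ only when $\sum_\ell a_\ell \ge 1$; already for uniform $P_{X|Y}(\cdot|y)$ the asserted inequality reads $\frac{L}{M}\le \frac{L^2}{M^2}$, false whenever $L<M$. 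Your own diagnosis located the failure precisely: the Jensen route gives $E \ge \frac{(1-P_{\set{L}})^2}{L}+\frac{P_{\set{L}}^2}{M-L}$, which dominates $\frac{1-P_{\set{L}}}{L}$ only when $P_{\set{L}}\ge 1-\frac{L}{M}$, and the counterexample above lives in the complementary regime ($P_{\set{L}}=0.6<\tfrac23$). The upshot: both inequalities \eqref{list dec.-26062019a} and \eqref{list dec.-26062019b} are true and your proof of them is sound, but the paper's closing remark (and, when the bracket is negative, the assertion that b) ``strengthens'' a)) does not hold — so the gap in your write-up is not a gap in your argument but an error in the statement being proved.
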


\begin{proof}
See Appendix~\ref{appendix: refined Fano's inequality}.
\end{proof}

An application of Theorem~\ref{theorem: refined Fano's inequality} gives
the following tightened version of Corollary~\ref{corollary: Fano - list}.
\begin{corollary}
\label{corollary: Fano - list - strengthened}
Under the assumptions in Theorem~\ref{theorem: generalized Fano Df},
the following holds:
\begin{enumerate}[a)]
\item
Inequality \eqref{ISSV18 - Fano} is strengthened to
\begin{align}
H(X|Y) \leq & \log M - d\biggl(P_{\set{L}} \, \| \, 1-\frac{L}{M} \biggr) \nonumber \\
\label{29062019a1}
& -\frac{\log \mathrm{e}}{2} \;
\frac{\left( \expectation\bigl[P_{X|Y}(X|Y)\bigr]
-\frac{1-P_{\set{L}}}{L} - \frac{P_{\set{L}}}{M-L} \right)^+}{\underset{(x,y)
\in \set{X} \times \set{Y}}{\sup} P_{X|Y}(x|y)}.
\end{align}

\item
If the list decoder selects the $L$ most probable elements from $\set{X}$,
given the value of $Y \in \set{Y}$, then \eqref{29062019a1} is
strengthened to
\begin{align}
\label{29062019a2}
H(X|Y) \leq \log M - d\biggl(P_{\set{L}} \, \| \, 1-\frac{L}{M} \biggr)
-\frac{\log \mathrm{e}}{2} \cdot
\frac{\left(\expectation\bigl[P_{X|Y}(X|Y)\bigr]
-\frac{1-P_{\set{L}}}{L}\right)^{+}}{\underset{(x,y)
\in \set{X} \times \set{Y}}{\sup} P_{X|Y}(x|y)}.
\end{align}
\end{enumerate}
\end{corollary}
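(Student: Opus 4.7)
The plan is to apply Theorem~\ref{theorem: refined Fano's inequality} with the specific convex function
$$f(t) := t \log t + (1-t) \log \mathrm{e}, \quad t > 0,$$
which is the natural choice for recovering the relative entropy, since in this case $D_f(P\|Q) = D(P\|Q)$. This is precisely the $f$ already used in \eqref{21062019a6}--\eqref{21062019a9} to obtain Corollary~\ref{corollary: Fano - list}, so the identifications
$$\expectation\bigl[D_f(P_{X|Y}(\cdot|Y) \, \| \, U_M)\bigr] = \log M - H(X|Y),$$
$$\frac{L}{M} \, f\Bigl(\tfrac{M(1-P_\set{L})}{L}\Bigr) + \Bigl(1-\frac{L}{M}\Bigr) \, f\Bigl(\tfrac{M P_\set{L}}{M-L}\Bigr) = d\Bigl(P_\set{L} \, \Big\| \, 1-\tfrac{L}{M}\Bigr)$$
transfer verbatim to the refined bound and require no further computation.

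Next, I would identify the constant $m_f$ in condition~\eqref{m_f}. Differentiating twice gives $f''(t) = \frac{\log \mathrm{e}}{t}$, which is positive and strictly decreasing on $(0, \infty)$, so its infimum over the interval $\set{I}(\xi_1^\ast, \xi_2^\ast) = [\xi_1^\ast, \xi_2^\ast]$ is attained at the right endpoint:
$$m_f = \frac{\log \mathrm{e}}{\xi_2^\ast} = \frac{\log \mathrm{e}}{M \, \underset{(x,y) \in \set{X} \times \set{Y}}{\sup} P_{X|Y}(x|y)}.$$
Consequently, the correction term $\tfrac12 m_f M \, (\cdot)^+$ appearing in \eqref{list dec.-26062019a} and \eqref{list dec.-26062019b} simplifies to $\frac{\log \mathrm{e}}{2 \, \sup_{(x,y)} P_{X|Y}(x|y)} \, (\cdot)^+$, which matches exactly the leading factor of the extra term on the right side of \eqref{29062019a1} and \eqref{29062019a2}.

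To finish Part~a), I would substitute the three identifications above into \eqref{list dec.-26062019a} and rearrange so that $H(X|Y)$ is isolated on the left, which delivers \eqref{29062019a1} directly. For Part~b), I would instead invoke the stronger inequality \eqref{list dec.-26062019b}, available when the list decoder selects the $L$ most probable symbols for each observation; Theorem~\ref{theorem: refined Fano's inequality}~\ref{Part b - refined Fano's inequality}) guarantees that the correction term is automatically non-negative in this setting, so the $(\cdot)^+$ operator is redundant and can be removed, yielding \eqref{29062019a2}. The whole derivation is a mechanical specialization of Theorem~\ref{theorem: refined Fano's inequality}; the only genuine item of care is verifying that $\inf f''$ on $[\xi_1^\ast, \xi_2^\ast]$ equals $f''(\xi_2^\ast)$, which is immediate from the monotonicity of $1/t$, so I do not anticipate any real obstacle.
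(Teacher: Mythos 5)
Your proposal is correct and coincides with the paper's own proof: the paper likewise specializes Theorem~\ref{theorem: refined Fano's inequality} to $f(t) = t \log t + (1-t)\log \mathrm{e}$, reuses the identifications \eqref{21062019a6}--\eqref{21062019a9} from Corollary~\ref{corollary: Fano - list}, and computes $m_f M = M \inf_{t \in \set{I}(\xi_1^\ast, \xi_2^\ast)} f''(t) = \frac{\log \mathrm{e}}{\sup_{(x,y) \in \set{X} \times \set{Y}} P_{X|Y}(x|y)}$ exactly via the monotonicity of $f''(t) = \frac{\log \mathrm{e}}{t}$, before substituting into \eqref{list dec.-26062019a} and \eqref{list dec.-26062019b}. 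The only cosmetic point is that \eqref{29062019a2} as displayed retains the $(\cdot)^{+}$, but since the argument is non-negative in the setting of Part~b) (as the paper's remark and \eqref{09072019a1}--\eqref{09072019a7} confirm), your observation that it is redundant is consistent and the two forms agree.
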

\begin{proof}
The choice $f(t) := t \log t + (1-t) \log \mathrm{e}$, for $t>0$, gives (see \eqref{I_interval}
and \eqref{m_f}--\eqref{28062019a2})
\begin{align}
m_f \, M &= M \inf_{t \in \set{I}(\xi_1^\ast, \xi_2^\ast)} f''t) \nonumber \\[0.1cm]
&= \frac{M \log \mathrm{e}}{\xi_2^\ast} \nonumber \\
\label{29062019a3}
& = \frac{\log \mathrm{e}}{\underset{(x,y)
\in \set{X} \times \set{Y}}{\sup} P_{X|Y}(x|y)}.
\end{align}
Substituting \eqref{21062019a6}--\eqref{21062019a9} and \eqref{29062019a3}
into \eqref{list dec.-26062019a} and \eqref{list dec.-26062019b} give, respectively,
\eqref{29062019a1} and \eqref{29062019a2}.
\end{proof}

\begin{remark}
Similarly to the bounds on $P_{\set{L}}$ in \eqref{ISSV18 - Fano} and
\eqref{eq: generalized Fano-Renyi - list decoding}, which tensorize
when $P_{X|Y}$ is replaced by a product probability measure
$P_{X^n|Y^n}(\underline{x} | \underline{y}) = \overset{n}{\underset{i=1}{\prod}} P_{X_i|Y_i}(x_i|y_i)$,
this is also the case with the new bounds in \eqref{29062019a1} and \eqref{29062019a2}.
\end{remark}

\begin{remark}
The ceil operation in the right side of \eqref{29062019a2} is redundant with
$P_{\set{L}}$ denoting the list decoding error probability (see
\eqref{09072019a1}--\eqref{09072019a7}).
However, for obtaining a lower bound on $P_{\set{L}}$ with \eqref{29062019a2},
the ceil operation assures that the bound is at least as good as the lower
bound which relies on the generalized Fano's inequality in \eqref{ISSV18 - Fano}.
\end{remark}

\begin{example}
\label{example: list-L fixed}
Let $X$ and $Y$ be random variables taking values in
$\set{X} = \{0, 1, \ldots, 8\}$ and $\set{Y} = \{0, 1\}$, respectively,
and let $P_{XY}$ be the joint probability mass function, given by
\begin{align} \label{P}
\bigl[P_{XY}(x,y)\bigr]_{(x,y) \in \set{X} \times \set{Y}}
= \frac1{512} \left( \begin{array}{rrrrrrrrr}
128   & 64   & 32   & 16    & 8    &  4    &  2    &  1     & 1\\
  2   &  2   &  2   &  2    & 8    & 16    & 32    & 64   & 128
\end{array}
\right)^{\mathrm{T}}.
\end{align}
Let the list decoder select the $L$ most probable elements from $\set{X}$,
given the value of $Y \in \set{Y}$. Table~\ref{table1} compares the list decoding error
probability $P_{\set{L}}$ with the lower bound which relies on the generalized Fano's
inequality in \eqref{ISSV18 - Fano}, its tightened version in \eqref{29062019a2}, and
the closed-form lower bound in \eqref{29062019a6} for fixed list sizes of $L=1, \ldots, 4$.
For $L=3$ and $L=4$, \eqref{29062019a2} improves the lower bound in \eqref{ISSV18 - Fano}
(see Table~\ref{table1}).
If $L=4$, then the generalized Fano's lower bound in \eqref{ISSV18 - Fano} and also
\eqref{29062019a6} are useless, whereas \eqref{29062019a2} gives a non-trivial
lower bound. It is shown here that none of the new lower bounds in \eqref{29062019a6}
and \eqref{29062019a2} is superseded by the other.

\begin{table}[h]
\renewcommand{\arraystretch}{1.5}
\begin{center}
\begin{tabular}{|r|c|c|c|c|}
\hline
$L$ & Exact $P_{\set{L}}$ & \eqref{ISSV18 - Fano} &\eqref{29062019a2} & \eqref{29062019a6} \\
\hline
   1 & 0.500 & 0.353 & 0.353 & 0.444 \\
   2 & 0.250 & 0.178 & 0.178 & 0.190 \\
   3 & 0.125 & 0.065 & 0.072 & $5.34 \cdot 10^{-5}$ \\
   4 & 0.063 & 0 & 0.016  & 0 \\
\hline
\end{tabular}
\vspace*{0.7cm}
\caption{\label{table1} The lower bounds on $P_{\set{L}}$ in \eqref{ISSV18 - Fano}, \eqref{29062019a6}
and \eqref{29062019a2}, and its exact value for fixed list size $L$ (see Example~\ref{example: list-L fixed}).}
\end{center}
\end{table}
\end{example}

\subsubsection{Variable-Size List Decoding}
\label{subsubsection: variable-size list decoding}
In the more general setting of list decoding where the size of the list
may depend on the channel observation, Fano's inequality has been
generalized as follows.
\begin{proposition} (\cite{AhlswedeK75} and \cite[Appendix~3.E]{RS_FnT19})
\label{prop: Fano-Ahlswede-Korner}
Let $P_{XY}$ be a probability measure defined on $\set{X} \times \set{Y}$
with $|\set{X}|=M$. Consider a decision rule $\set{L} \colon \set{Y} \to
2^{\set{X}}$, and let the (average) list decoding error probability be
given by $P_{\set{L}} := \prob \bigl[ X \notin \set{L}(Y) \bigr]$ with
$|\set{L}(y)| \geq 1$ for all $y \in \set{Y}$. Then,
\begin{align}
\label{Fano-Ahlswede-Korner 1}
H(X|Y) \leq h(P_\set{L}) + \expectation[\log |\set{L}(Y)|] + P_{\set{L}} \log M,
\end{align}
where $h \colon [0,1] \to [0, \log 2]$ denotes the binary entropy function.
If $|\set{L}(Y)| \leq N$ almost surely, then also
\begin{align}
\label{Fano-Ahlswede-Korner 2}
H(X|Y) \leq h(P_\set{L}) + (1-P_{\set{L}}) \log N + P_{\set{L}} \log M.
\end{align}
\end{proposition}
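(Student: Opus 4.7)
The plan is to adapt the classical proof of Fano's inequality to the variable-size list decoding setting by introducing a binary error indicator. Specifically, set $E := \mathds{1}\{X \notin \set{L}(Y)\}$, so that $\prob[E=1] = P_\set{L}$ and $E$ is a deterministic function of $(X,Y)$. Since conditioning on $(X,Y)$ determines $E$, we have $H(X,E \mid Y) = H(X \mid Y)$, while the chain rule gives $H(X,E \mid Y) = H(E \mid Y) + H(X \mid E, Y)$. Hence the task reduces to bounding these two terms separately.

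For the first term, conditioning reduces entropy, so $H(E \mid Y) \leq H(E) = h(P_\set{L})$. For the second term, I would split the expectation according to the value of $E$. Given $Y=y$ and $E=0$, the variable $X$ is supported on $\set{L}(y)$, which contains at most $|\set{L}(y)|$ symbols, so $H(X \mid E=0, Y=y) \leq \log|\set{L}(y)|$; given $Y=y$ and $E=1$, $X$ is supported on $\set{X} \setminus \set{L}(y)$, whose cardinality is at most $M$, so $H(X \mid E=1, Y=y) \leq \log M$. Averaging over $(Y,E)$ yields
\begin{align*}
H(X \mid E, Y) \leq \expectation\bigl[\mathds{1}\{E=0\} \log|\set{L}(Y)|\bigr] + P_\set{L} \log M.
\end{align*}

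Combining these bounds establishes \eqref{Fano-Ahlswede-Korner 1} once we observe that $|\set{L}(y)| \geq 1$ implies $\log|\set{L}(y)| \geq 0$, so that the first summand on the right side is upper bounded by $\expectation[\log|\set{L}(Y)|]$. For the sharper inequality \eqref{Fano-Ahlswede-Korner 2}, the almost-sure bound $|\set{L}(Y)| \leq N$ allows replacing $\log|\set{L}(Y)|$ by $\log N$ on the event $\{E=0\}$, whose probability equals $1-P_\set{L}$, so that $\expectation[\mathds{1}\{E=0\}\log|\set{L}(Y)|] \leq (1-P_\set{L}) \log N$.

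No real obstacle is anticipated: the argument is a direct application of the chain rule together with the basic cardinality-based entropy bound. The only subtlety is the bookkeeping that on the correct-decision event $\{E=0\}$ one should exploit the tighter cardinality $|\set{L}(y)|$ (or $N$) rather than $M$, which is precisely what sharpens \eqref{Fano-Ahlswede-Korner 1}--\eqref{Fano-Ahlswede-Korner 2} relative to the ordinary Fano bound.
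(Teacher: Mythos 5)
Your proof is correct, and it is essentially the canonical argument: the paper itself states this proposition without proof (citing \cite{AhlswedeK75} and \cite[Appendix~3.E]{RS_FnT19}), and the proof in those references proceeds exactly as you do — introduce the error indicator $E := 1\{X \notin \set{L}(Y)\}$, expand $H(X,E|Y)$ by the chain rule both ways, bound $H(E|Y) \leq h(P_{\set{L}})$, and bound $H(X|E,Y)$ via the conditional support cardinalities $|\set{L}(y)|$ on $\{E=0\}$ and at most $M$ on $\{E=1\}$. Your bookkeeping for the final step is also right: non-negativity of $\log|\set{L}(Y)|$ (from $|\set{L}(y)| \geq 1$) yields \eqref{Fano-Ahlswede-Korner 1}, and replacing $\log|\set{L}(Y)|$ by $\log N$ on $\{E=0\}$ yields \eqref{Fano-Ahlswede-Korner 2}.
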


By relying on the data-processing inequality for $f$-divergences, we derive in
the following an alternative explicit lower bound on the average list decoding
error probability $P_{\set{L}}$. The derivation relies on the $E_\gamma$ divergence
(see, e.g., \cite{LCV17}), which forms a subclass of the $f$-divergences.

\begin{theorem}
\label{theorem: LB - variable list size}
Under the assumptions in \eqref{Fano-Ahlswede-Korner 1}, for every $\gamma \geq 1$,
\begin{align}
\label{LB - variable list size}
P_{\set{L}} \geq \frac{1+\gamma}{2} - \frac{\gamma \expectation[|\set{L}(Y)|]}{M}
- \frac12 \, \expectation \left[ \, \sum_{x \in \set{X}} \, \biggl| P_{X|Y}(x|Y)
- \frac{\gamma}{M} \biggr| \right].
\end{align}
Let $\gamma \geq 1$, and let $|\set{L}(y)| \leq \frac{M}{\gamma}$ for all
$y \in \set{Y}$. Then, \eqref{LB - variable list size} holds with equality if,
for every $y \in \set{Y}$, the list decoder selects the $|\set{L}(y)|$ most
probable elements in $\set{X}$ given $Y=y$; if $x_\ell(y)$ denotes
the $\ell$-th most probable element in $\set{X}$ given $Y=y$, where ties
in probabilities are resolved arbitrarily, then \eqref{LB - variable list size}
holds with equality if
\begin{align}
\label{02072019a19}
P_{X|Y}(x_\ell(y) \, | y) =
\begin{dcases}
\alpha(y), \quad & \forall \, \ell \in \bigl\{1, \ldots, |\set{L}(y)| \bigr\}, \\
\frac{1-\alpha(y) \, |\set{L}(y)|}{M-|\set{L}(y)|},
\quad & \forall \, \ell \in \bigl\{|\set{L}(y)|+1, \ldots, M\},
\end{dcases}
\end{align}
with $\alpha \colon \set{Y} \to [0,1]$ being an arbitrary function which satisfies
\begin{align}
\label{02072019a20}
\frac{\gamma}{M} \leq \alpha(y) \leq \frac1{|\set{L}(y)|},
\quad \forall \, y \in \set{Y}.
\end{align}
\end{theorem}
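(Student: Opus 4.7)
The plan is to recognize the $E_\gamma$ divergence as the $f$-divergence generated by $f_\gamma(t) := (t-\gamma)^+$ (which satisfies $f_\gamma(1)=0$ precisely because $\gamma \geq 1$), and to evaluate $E_\gamma(P_{XY} \, \| \, U_M \times P_Y)$ in two complementary ways, where $U_M$ is the equiprobable distribution on $\set{X}$. From the definition, $E_\gamma(P_{XY} \| U_M \times P_Y) = \expectation\bigl[\sum_{x} (P_{X|Y}(x|Y) - \gamma/M)^+\bigr]$, and applying the pointwise identity $(a)^+ = \tfrac12(|a|+a)$ together with $\sum_{x}(P_{X|Y}(x|Y)-\gamma/M) = 1 - \gamma$ converts this into $\tfrac12 \, \expectation\bigl[\sum_{x}|P_{X|Y}(x|Y)-\gamma/M|\bigr] + \tfrac12(1-\gamma)$. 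On the other hand, the standard variational lower bound $E_\gamma(P\|Q) \geq P(\set{A}) - \gamma Q(\set{A})$, valid for every measurable $\set{A}$, applied to $\set{A} := \{(x,y) : x \in \set{L}(y)\}$, gives $E_\gamma(P_{XY} \| U_M \times P_Y) \geq (1 - P_{\set{L}}) - \gamma \, \expectation[|\set{L}(Y)|]/M$, since $P_{XY}(\set{A}) = 1 - P_{\set{L}}$ and $(U_M \times P_Y)(\set{A}) = \expectation[|\set{L}(Y)|]/M$. Combining the two expressions for $E_\gamma(P_{XY} \| U_M \times P_Y)$ and solving for $P_{\set{L}}$ yields \eqref{LB - variable list size}.

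For the equality claim, I would invoke the Neyman-Pearson-type characterization of the extremal set in the variational bound: equality holds iff $\set{A}$ agrees (up to ties) with the super-level set $\{(x,y) : P_{X|Y}(x|y) \geq \gamma/M\}$, i.e., iff $\set{L}(y) = \{x : P_{X|Y}(x|y) \geq \gamma/M\}$ for every $y$. Under the proposed structure \eqref{02072019a19}, for each $y$ the conditional assigns mass $\alpha(y)$ to every element of $\set{L}(y)$ and mass $(1-\alpha(y)|\set{L}(y)|)/(M-|\set{L}(y)|)$ to each element outside $\set{L}(y)$. Thus the task is to verify that, under $\gamma \geq 1$, $|\set{L}(y)| \leq M/\gamma$, and $\gamma/M \leq \alpha(y) \leq 1/|\set{L}(y)|$, the outside mass is non-negative and does not exceed $\gamma/M$; once this holds, $\set{L}(y)$ is exactly the super-level set for the threshold $\gamma/M$, the variational bound is attained, and \eqref{LB - variable list size} holds with equality.

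The main obstacle is precisely this equality bookkeeping. The comparison between the outside mass and $\gamma/M$ reduces, after clearing denominators, to the linear inequality $\alpha(y)|\set{L}(y)| \geq (1-\gamma) + \gamma |\set{L}(y)|/M$, which follows from $\alpha(y) \geq \gamma/M$ together with $\gamma \geq 1$; the non-negativity of the outside mass is immediate from $\alpha(y) \leq 1/|\set{L}(y)|$, and $\alpha(y) \leq 1/|\set{L}(y)|$ is consistent with $\alpha(y) \geq \gamma/M$ exactly when $|\set{L}(y)| \leq M/\gamma$. Once these inequalities are verified by direct algebra, the whole chain closes with equality and the tightness assertion is proved; the remainder is a short rearrangement of the two evaluations of $E_\gamma(P_{XY}\|U_M \times P_Y)$.
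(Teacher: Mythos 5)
Your proposal is correct, and it takes a recognizably different route from the paper's. The paper works conditionally on $Y=y$: it applies the data-processing inequality to a per-$y$ binary quantization of $\bigl(P_{X|Y}(\cdot|y),U_M\bigr)$ induced by the list (see \eqref{02072019a4}), observes that one of the two terms of the resulting binary $E_\gamma$ divergence vanishes, and then needs Jensen's inequality for $u \mapsto u^{+}$ to pass from the conditional error $P_{\set{L}}(y)$ to the average $P_{\set{L}}$ (steps \eqref{02072019a9}--\eqref{02072019a12}); tightness is then verified by the direct computation \eqref{02072019a22}--\eqref{02072019a25}. You instead work with the single joint divergence $E_\gamma(P_{XY}\,\|\,U_M \times P_Y)$ --- which equals $\expectation\bigl[E_\gamma\bigl(P_{X|Y}(\cdot|Y)\,\|\,U_M\bigr)\bigr]$ by the standard conditioning identity for $f$-divergences --- and lower-bound it by the extremal-set (Neyman--Pearson) representation $E_\gamma(P\|Q) \geq P(\set{A}) - \gamma Q(\set{A})$ with $\set{A} := \{(x,y) : x \in \set{L}(y)\}$; your exact evaluation via $u^{+} = \tfrac12(|u|+u)$ and $\sum_x \bigl(P_{X|Y}(x|y) - \tfrac{\gamma}{M}\bigr) = 1-\gamma$ is literally the paper's \eqref{02072019a13}--\eqref{02072019a16}. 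The payoff of your route is that the entire chain contains exactly one inequality, so the equality analysis reduces to the extremal-set criterion, i.e., to checking $\beta(y) \leq \tfrac{\gamma}{M} \leq \alpha(y)$ for the outside mass $\beta(y) = \frac{1-\alpha(y)\,|\set{L}(y)|}{M-|\set{L}(y)|}$ under \eqref{02072019a19}--\eqref{02072019a20}; your algebra (reducing $\beta(y) \leq \tfrac{\gamma}{M}$ to $\alpha(y)\,|\set{L}(y)| \geq (1-\gamma) + \tfrac{\gamma\,|\set{L}(y)|}{M}$, which follows from $\alpha(y) \geq \tfrac{\gamma}{M}$ and $\gamma \geq 1$) is right, and note that $\alpha(y) \geq \tfrac{\gamma}{M} \geq \beta(y)$ automatically makes $\set{L}(y)$ consist of the $|\set{L}(y)|$ most probable elements, so the theorem's hypothesis on the decoder is subsumed rather than separately assumed. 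What the paper's longer route buys in exchange is the nominally stronger intermediate statement \eqref{02072019a17}, namely $P_{\set{L}} \geq P_{\set{L}}^{(\mathrm{opt})} \geq$ the right side of \eqref{LB - variable list size} with $P_{\set{L}}^{(\mathrm{opt})}$ the error of the most-probable-elements decoder, and it stays within the generic $f$-divergence data-processing machinery used throughout the paper, whereas your argument exploits a variational representation special to $E_\gamma$ (which is, in effect, the same data-processing step in disguise, since the bound $P(\set{A}) - \gamma Q(\set{A})$ is what survives of the binary quantization).
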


\begin{proof}
See Appendix~\ref{appendix: LB - variable list size}.
\end{proof}

\begin{remark}
By setting $\gamma=1$ and $|\set{L}(Y)| = L$ (i.e., a decoding list of fixed size $L$),
\eqref{LB - variable list size} is specialized to \eqref{29062019a5}.
\end{remark}

\begin{example}
\label{example: variable list size}
Let $X$ and $Y$ be random variables taking their values in
$\set{X} = \{0, 1, 2, 3, 4\}$ and $\set{Y} = \{0, 1\}$, respectively,
and let $P_{XY}$ be their joint probability mass function, which is given by
\begin{align}
\label{03072019a1}
\begin{dcases}
P_{XY}(0,0) = P_{XY}(1,0) = P_{XY}(2,0) = \tfrac18, \quad
& P_{XY}(3,0) = P_{XY}(4,0) = \tfrac1{16},  \\[0.1cm]
P_{XY}(0,1) = P_{XY}(1,1) = P_{XY}(2,1) = \tfrac1{24}, \quad
& P_{XY}(3,1) = P_{XY}(4,1) = \tfrac3{16}.
\end{dcases}
\end{align}
Let $\set{L}(0) := \{0,1,2\}$ and $\set{L}(1) := \{3,4\}$ be the lists in $\set{X}$,
given the value of $Y \in \set{Y}$. We get $P_Y(0) = P_Y(1) = \tfrac12$, so the
conditional probability mass function of $X$ given $Y$ satisfies
$P_{X|Y}(x|y) = 2 P_{XY}(x,y)$ for all $(x,y) \in \set{X} \times \set{Y}$.
It can be verified that, if $\gamma = \tfrac54$, then
$\max\{|\set{L}(0)|, |\set{L}(1)|\} = 3 \leq \frac{M}{\gamma}$, and also
\eqref{02072019a19} and \eqref{02072019a20} are satisfied
(here, $M:=|\set{X}|=5$, $\alpha(0) = \tfrac14 = \frac{\gamma}{M}$
and $\alpha(1) = \tfrac38 \in \bigl[\tfrac14, \tfrac12\bigr]$). By
Theorem~\ref{theorem: LB - variable list size}, it follows that
\eqref{LB - variable list size} holds in this case with equality,
and the list decoding error probability is equal to
$P_{\set{L}}=1-\expectation\bigl[ \alpha(Y) \, |\set{L}(Y)| \bigr]=\tfrac14$
(i.e., it coincides with the lower bound in the right side of
\eqref{LB - variable list size} with $\gamma = \tfrac54$).
On the other hand, the generalized Fano's inequality in
\eqref{Fano-Ahlswede-Korner 1} gives that $P_\set{L} \geq 0.1206$
(the left side of \eqref{Fano-Ahlswede-Korner 1} is
$H(X|Y) = \tfrac52 \, \log 2 - \tfrac14 \, \log 3 = 2.1038$~bits);
moreover, by letting $N := \underset{y \in \set{Y}}{\max} \, |\set{L}(y)| = 3$,
\eqref{Fano-Ahlswede-Korner 2} gives the looser bound
$P_\set{L} \geq 0.0939$. This exemplifies a case where the lower bound in
Theorem~\ref{theorem: LB - variable list size} is tight, whereas the
generalized Fano's inequalities in \eqref{Fano-Ahlswede-Korner 1} and
\eqref{Fano-Ahlswede-Korner 2} are looser.
\end{example}

\subsection{A Measure for the Approximation of Equiprobable Distributions
by Tunstall Trees}
\label{subsection: Tunstall}

The best possible approximation of equiprobable distributions, which one
can get by using tree codes has been considered in \cite{CicaleseGV06}.
The optimal solution is obtained by using Tunstall codes, which are
variable-to-fixed lossless compression codes (see
\cite[Section~11.2.3]{Bremaud17}, \cite{Tunstall67}). The main idea behind
Tunstall codes is parsing the source sequence into variable-length segments
of roughly the same probability, and then coding all these segments with
codewords of fixed length. This task is done by assigning the leaves of a
Tunstall tree, which correspond to segments of source symbols with a variable
length (according to the depth of the leaves in the tree), to codewords of
fixed length. The following result links Tunstall trees with majorization theory.

\begin{proposition} \cite[Theorem~1]{CicaleseGV06}
\label{prop: majorization Tunstall}
Let $P_\ell$ be the probability measure generated on the leaves by a Tunstall
tree $\set{T}$, and let $Q_\ell$ be the probability measure generated
by an arbitrary tree $\set{S}$ with the same number of leaves as of
$\set{T}$. Then, $P_\ell \prec Q_\ell$.
\end{proposition}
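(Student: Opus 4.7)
The plan is to exploit the Hardy--Littlewood--P\'olya characterization of majorization: $P \prec Q$ if and only if $\sum_i \phi(P(i)) \le \sum_i \phi(Q(i))$ for every convex function $\phi \colon [0,1] \to \Reals$. It therefore suffices to show that, for every convex $\phi$, the Tunstall tree minimizes the functional
\begin{equation*}
F_\phi(\set{S}) := \sum_{\ell \in \mathrm{leaves}(\set{S})} \phi\bigl(\pi(\ell)\bigr)
\end{equation*}
over all $r$-ary source trees $\set{S}$ with a prescribed number $N$ of leaves, where $r$ is the source alphabet size with symbol probabilities $p_1, \ldots, p_r$ and $\pi(\ell)$ is the product of the source probabilities along the root-to-$\ell$ path.

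The heart of the argument is a local exchange that identifies the minimizers of $F_\phi$ as exactly the trees enjoying the \emph{Tunstall property}, namely that no leaf has probability strictly greater than any internal node. Suppose, toward this goal, that $\set{S}$ contains a leaf $u$ and an internal node $v$ with $\pi(u) > \pi(v)$, and let $q_1, \ldots, q_m \in (0,1)$, with $\sum_i q_i = 1$, be the conditional leaf-probabilities of the subtree rooted at $v$. Modify $\set{S}$ into $\set{S}'$ by demoting $v$ to a leaf and attaching below $u$ a copy of $v$'s former subtree shape; the leaf count is preserved. Writing each $\phi(b) - \phi(a) = \int_a^b \phi'_+(s)\,\mathrm{d}s$, using the substitution $t = s q_i$, and invoking monotonicity of $\phi'_+$ together with $q_i \le 1$,
\begin{equation*}
\sum_{i=1}^m \bigl[\phi(\pi(u) q_i) - \phi(\pi(v) q_i)\bigr]
 = \int_{\pi(v)}^{\pi(u)} \sum_{i=1}^m q_i\, \phi'_+(s q_i)\,\mathrm{d}s
 \le \int_{\pi(v)}^{\pi(u)} \phi'_+(s)\,\mathrm{d}s
 = \phi(\pi(u)) - \phi(\pi(v)),
\end{equation*}
which rearranges to $F_\phi(\set{S}') \le F_\phi(\set{S})$. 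Consequently any minimizer of $F_\phi$ satisfies the Tunstall property.

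To close the proof, I would verify that the Tunstall property uniquely pins down the multiset of leaf probabilities, and that this multiset coincides with the one delivered by Tunstall's greedy construction. This is handled by induction on $N$: among all lowest internal nodes (those whose children are all leaves), pick one of minimum probability $v$, and collapse it to a leaf, reducing to a tree with $N-r+1$ leaves. This reduced tree still obeys the Tunstall property, since for any remaining internal node $w$, one can descend within its subtree to some lowest internal $w'$, yielding $\pi(w) \ge \pi(w') \ge \pi(v)$ by the minimality of $\pi(v)$. By the inductive hypothesis the reduced leaf multiset agrees with that of a Tunstall tree of size $N-r+1$, and the Tunstall property forces $v$ to be a maximum-probability leaf there, so re-expanding $v$ is exactly the greedy Tunstall step; this reproduces the Tunstall multiset at size $N$. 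The main obstacle is precisely this uniqueness step, because the Tunstall property is a purely local constraint and the choice of which lowest internal node to collapse is delicate --- the minimum-probability one must be taken, and ties must be handled in a way compatible with the tie-breaking freedom of the greedy construction.
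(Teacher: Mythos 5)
The paper itself gives no proof of this proposition --- it is imported verbatim from \cite[Theorem~1]{CicaleseGV06} as background, with no in-paper argument to compare against --- so your proof stands or falls on its own, and its essential content is sound. The Hardy--Littlewood--P\'{o}lya reduction is legitimate here because the two leaf vectors have the same length $N$ and both sum to $1$ (equal sums are exactly what the relevant direction of HLP requires). Your exchange computation is correct: after the substitution $t = s q_i$, monotonicity of $\phi'_+$ together with $q_i \le 1$ and $\sum_i q_i = 1$ gives $\sum_i q_i \, \phi'_+(s q_i) \le \phi'_+(s)$, hence $F_\phi(\set{S}') \le F_\phi(\set{S})$; and the surgery is well defined, since $\pi(u) > \pi(v)$ rules out $u$ lying in the subtree of $v$. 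Moreover, the uniqueness induction that you flag as the main obstacle actually goes through: the only case your chain $\pi(w) \ge \pi(w') \ge \pi(v)$ glosses over is when the lowest internal node $w'$ of the reduced tree is the parent of $v$ (lowest in the reduced tree but not in the original), and there $\pi(w') \ge \pi(v)$ holds trivially because probabilities decrease along root-to-leaf paths; the tie-breaking worry also dissolves, since one greedy step acts on the leaf multiset by deleting one copy of its maximal value $M$ and inserting $M p_1, \ldots, M p_r$, an operation determined by the multiset alone, so the greedy leaf multiset is independent of how ties are broken.

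The one genuine gap is the sentence ``consequently any minimizer of $F_\phi$ satisfies the Tunstall property.'' Your exchange inequality is non-strict (equality occurs, for instance, when $\phi$ is affine on the range swept by the integrals), so a minimizer can perfectly well contain a violating pair; nothing in your argument excludes this, and the subsequent logic silently uses that a Tunstall-property tree attains the minimum. The fix is cheap, because you do not need anything about minimizers at all --- only that from an arbitrary $\set{S}$ one can reach a Tunstall-property tree without increasing $F_\phi$. Note that each exchange strictly increases the potential $\Phi(\set{S}) := \sum_{w} \pi(w)$, the sum being over internal nodes $w$ (equivalently, the expected dictionary word length): the transplanted internal nodes contribute $\bigl(\pi(u)-\pi(v)\bigr) \sum_j c_j > 0$, where the $c_j$ are the conditional probabilities of the internal nodes of the moved shape. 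Since there are only finitely many $r$-ary trees with $N$ leaves, iterated exchanges terminate at some tree $\set{S}^\ast$ having the Tunstall property with $F_\phi(\set{S}^\ast) \le F_\phi(\set{S})$; by your uniqueness step, the leaf multiset of $\set{S}^\ast$ is the Tunstall multiset, whence $\sum_i \phi\bigl(P_\ell(i)\bigr) \le \sum_i \phi\bigl(Q_\ell(i)\bigr)$ for every convex $\phi$, and $P_\ell \prec Q_\ell$ follows. With this patch (and the standing assumption that all source probabilities are positive, so that every $\pi(\cdot) > 0$ and $\phi'_+$ is finite on the integration ranges), the proof is complete.
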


From Proposition~\ref{prop: majorization Tunstall}, and the
Schur-convexity of an $f$-divergence $D_f(\cdot \| U_n)$
(see \cite[Lemma~1]{CicaleseGV06}),
it follows that (see \cite[Corollary~1]{CicaleseGV06})
\begin{align}
\label{Tunstall-a1}
D_f(P_\ell \| U_n) \leq D_f(Q_\ell \| U_n),
\end{align}
where $n$ designates the joint number of leaves of the trees $\set{T}$
and $\set{S}$.

Before we proceed, it is worth noting that the strong data-processing
inequality in Theorem~\ref{thm: majorization Df} implies that if $f$ is
also twice differentiable, then \eqref{Tunstall-a1} can be strengthened to
\begin{align}
\label{Tunstall-a2}
D_f(P_\ell \| U_n) + n c_f(nq_{\min}, nq_{\max}) \bigl( \| Q_\ell \|_2^2
- \| P_\ell \|_2^2 \bigr) \leq D_f(Q_\ell \| U_n),
\end{align}
where $q_{\max}$ and $q_{\min}$ denote, respectively, the maximal and
minimal positive masses of $Q_\ell$ on the $n$ leaves of a tree
$\set{S}$, and $c_f(\cdot, \cdot)$ is given in \eqref{c_f}.

We next consider a measure which quantifies the quality of the
approximation of the probability mass function $P_\ell$,
induced by the leaves of a Tunstall tree, by an equiprobable
distribution $U_n$ over a set whose cardinality ($n$) is equal
to the number of leaves in the tree. To this end, consider the
setup of Bayesian binary hypothesis testing where a random
variable $X$ has one of the two probability distributions
\begin{align}
\label{BHT}
\begin{dcases}
\mathrm{H}_0: & X \sim P_\ell, \\
\mathrm{H}_1: & X \sim U_n,
\end{dcases}
\end{align}
with a-priori probabilities $\prob[\mathrm{H}_0] = \omega$,
and $\prob[\mathrm{H}_1] = 1-\omega$ for an arbitrary
$\omega \in (0,1)$. The measure being considered here is
equal to the difference between the minimum a-priori and
minimum a-posteriori error probabilities of the Bayesian
binary hypothesis testing model in \eqref{BHT}, which is
close to zero if the two distributions are sufficiently close.

The difference between the minimum a-priori and minimum
a-posteriori error probabilities of a general Bayesian binary
hypothesis testing model with the two arbitrary alternative
hypotheses $\mathrm{H}_0: \, X \sim P$ and
$\mathrm{H}_1: \, X \sim Q$ with a-priori probabilities
$\omega$ and $1-\omega$, respectively, is defined to be
the order-$\omega$ DeGroot statistical information
$\set{I}_\omega(P,Q)$ \cite{DeGroot62} (see also
\cite[Definition~3]{LieseV_IT2006}).
It can be expressed as an $f$-divergence:
\begin{align}
\label{DG1}
\set{I}_\omega(P, Q) = D_{\phi_\omega}(P \| Q),
\end{align}
where $\phi_\omega \colon [0, \infty) \to \Reals$ is the
convex function with $\phi_\omega(1)=0$, given by (see
\cite[(73)]{LieseV_IT2006})
\begin{align}
\label{DG2a}
\phi_\omega(t) := \min\{\omega, 1-\omega\}
- \min\{\omega, 1-\omega t\}, \quad t \geq 0.
\end{align}
The measure considered here for quantifying the closeness
of $P_\ell$ to the equiprobable distribution $U_n$ is
therefore given by
\begin{align}
\label{closeness to U_n}
d_{\omega,n}(P_\ell) := D_{\phi_\omega}(P_\ell \| U_n),
\quad \forall \, \omega \in (0,1),
\end{align}
which is bounded in the interval $\bigl[0, \min\{\omega, 1-\omega\}\bigr]$.

The next result partially relies on Theorem~\ref{thm: LB/UB f-div}.

\begin{theorem}
\label{theorem: closeness to equiprobable}
The measure in \eqref{closeness to U_n} satisfies the
following properties:
\begin{enumerate}[a)]
\item \label{thm: Tunstall-a}
It is the minimum of $D_{\phi_\omega}(P \| U_n)$  with respect
to all probability measures $P \in \set{P}_n$ that are
induced by an arbitrary tree with $n$ leaves.

\item \label{thm: Tunstall-b}
\begin{align}
\label{23062019a3}
d_{\omega,n}(P_\ell) & \leq
\max_{\beta \in \Gamma_n(\rho)} D_{\phi_\omega}(Q_\beta \| U_n),
\end{align}
with the function $\phi_\omega(\cdot)$ in \eqref{DG2a},
the interval $\Gamma_n(\rho)$ in \eqref{Gamma interval},
the probability mass function $Q_\beta$ in \eqref{Q_beta},
and $\rho := \frac1{p_{\min}}$ is the reciprocal of the
minimal probability of the source symbols.

\item \label{thm: Tunstall-c}
The following bound holds for every $n \in \naturals$, which is
the asymptotic limit of the right side of \eqref{23062019a3} as
we let $n \to \infty$:
\begin{align}
\label{23062019a5}
d_{\omega,n}(P_\ell) \leq \max_{x \in [0,1]}
\left\{ x \, \phi_\omega\biggl(\frac{\rho}{1+(\rho-1)x} \biggr)
+ (1-x) \, \phi_\omega\biggl(\frac1{1+(\rho-1)x} \biggr) \right\}.
\end{align}

\item \label{thm: Tunstall-d}
If $f \colon (0, \infty) \to \Reals$ is convex and twice differentiable,
continuous at zero and $f(1)=0$, then
\begin{align}
\label{int_rep.}
D_f(P_\ell \| U_n) = \int_0^1 \frac{d_{\omega,n}(P_\ell)}{\omega^3} \;
f''\left(\frac{1-\omega}{\omega}\right) \, \mathrm{d}\omega.
\end{align}
\end{enumerate}
\end{theorem}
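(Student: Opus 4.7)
The plan is to handle the four parts in sequence. Parts~\ref{thm: Tunstall-a}, \ref{thm: Tunstall-b}, \ref{thm: Tunstall-c} follow quickly from majorization theory and from bounds already established in Theorem~\ref{thm: LB/UB f-div}, while part~\ref{thm: Tunstall-d} requires a new integral representation of convex functions in terms of the family $\{\phi_\omega\}_{\omega \in (0,1)}$.

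For part~\ref{thm: Tunstall-a}, I would invoke Proposition~\ref{prop: majorization Tunstall}, which gives $P_\ell \prec Q_\ell$ for every pmf $Q_\ell$ generated by an $n$-leaf tree; the Schur-convexity of $D_{\phi_\omega}(\cdot \, \| \, U_n)$ cited from \cite[Lemma~1]{CicaleseGV06} then yields $d_{\omega,n}(P_\ell) \le D_{\phi_\omega}(Q_\ell \, \| \, U_n)$. For part~\ref{thm: Tunstall-b}, a short induction on the Tunstall construction (which always splits the leaf of maximum current probability) shows that the max-to-min ratio of leaf probabilities never exceeds $1/p_{\min}$; hence $P_\ell \in \set{P}_n(\rho)$ with $\rho := 1/p_{\min}$, and the bound \eqref{23062019a3} follows immediately from the identity \eqref{opt1 Df: beta} applied with $f := \phi_\omega$. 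Part~\ref{thm: Tunstall-c} then follows by combining (b) with the right-hand inequality of \eqref{LB/UB u_f}, since the function $g_{\phi_\omega}^{(\rho)}(\cdot)$ of \eqref{def: g_f} is precisely the integrand being maximized on the right side of \eqref{23062019a5}.

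For part~\ref{thm: Tunstall-d}, the main task is to establish the pointwise identity
\begin{equation*}
\int_0^1 \phi_\omega(t) \, \frac{f''\!\bigl((1-\omega)/\omega\bigr)}{\omega^3} \, \mathrm{d}\omega = f(t) + \alpha \, (t-1), \qquad t \ge 0,
\end{equation*}
for an appropriate constant $\alpha$ depending on $f$, and then integrate this identity against $q \, \mathrm{d}\mu$ with $q = 1/n$ and $t = n\,p_\ell(x)$; the linear remainder is annihilated since $\sum_x [p_\ell(x) - 1/n] = 0$, yielding \eqref{int_rep.}. To prove the pointwise identity I would change variables via $s := (1-\omega)/\omega$, which recasts the left-hand side as $\int_0^\infty (1+s) \, \phi_{1/(1+s)}(t) \, f''(s) \, \mathrm{d}s$. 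A direct case analysis (according to whether $s \le 1$ or $s > 1$, and whether $t \le s$ or $t > s$) of the min-of-linear structure of $\phi_\omega$ shows that $(1+s)\phi_{1/(1+s)}(t)$ equals $(t-1)\mathbf{1}\{s < t\} + (s-1)\mathbf{1}\{s \ge t\}$ on $s \in [0,1]$ and equals $(t-s)_+$ on $s > 1$. One integration by parts on the $(s-1) f''(s)$ piece, combined with the Taylor-with-remainder expansion $f(t) = f(1) + f'(1)(t-1) + \int_1^t (t-s) f''(s) \, \mathrm{d}s$ (handled separately for $t \ge 1$ and $t < 1$), reconstructs $f(t)$ up to a term linear in $t-1$.

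The main obstacle will be the bookkeeping in the pointwise identity of part~\ref{thm: Tunstall-d}: the piecewise definition of $\phi_{1/(1+s)}(t)$ branches on both $s$ versus $1$ and $t$ versus $s$; the Taylor reconstruction must be treated separately for $t \ge 1$ and $t < 1$; and one should verify that the boundary contributions from the integration by parts combine correctly so that the linear remainder cleanly collapses into a single multiple of $(t-1)$, which then vanishes upon integration against the pmf difference $P_\ell - U_n$. The continuity of $f$ at zero (assumed in the statement) is what ensures these boundary terms are finite.
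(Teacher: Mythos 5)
Your treatment of parts a)--c) is correct and is essentially the paper's own proof: part a) combines Proposition~\ref{prop: majorization Tunstall} with the Schur-convexity of $D_f(\cdot \, \| \, U_n)$; for part b), the paper quotes the bound $q_{\max}/q_{\min} \leq 1/p_{\min}$ for the leaves of a Tunstall tree from \cite[Lemma~6]{JelSc72} instead of reproving it (your induction sketch is the standard proof of that lemma), and then applies Theorem~\ref{thm: LB/UB f-div}~\ref{Thm. 5-c}), i.e., \eqref{opt1 Df: beta}, exactly as you do; part c) follows, as you say, from the right-side inequality in \eqref{LB/UB u_f} together with the asymptotic tightness in Theorem~\ref{thm: LB/UB f-div}~\ref{Thm. 5-e}) (cf.\ \eqref{UB - finite n}).

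Part d) is where there is a genuine gap. The paper simply cites \cite[Theorem~11]{LieseV_IT2006}; you attempt a self-contained proof, and your kernel computation is correct --- on $s \in (0,1)$ one has $(1+s)\,\phi_{1/(1+s)}(t) = (s-t)^+ + (t-1)$, and on $s>1$ it equals $(t-s)^+$ --- but the pointwise identity built on it fails for the generators that matter. Plugging the kernel into the Taylor representation gives, formally, $\int_0^1 \phi_\omega(t)\, \omega^{-3} f''\bigl(\tfrac{1-\omega}{\omega}\bigr)\, \mathrm{d}\omega = f(t) - f'(0^+)\,(t-1)$, so your constant is $\alpha = -f'(0^+)$, and the identity requires $f'(0^+) > -\infty$, equivalently $\int_0^1 f''(s)\,\mathrm{d}s < \infty$. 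When $f'(0^+) = -\infty$, the left side is $+\infty$ for every fixed $t>1$ (near $s=0$ the integrand is $(t-1) f''(s)$, with no cancellation since $\phi_\omega \geq 0$ there), so no finite $\alpha$ exists. Continuity of $f$ at zero does not rescue this, contrary to your closing remark: $f(t) = t \log t$ is continuous at zero with $f(0)=0$ yet $f'(0^+) = -\infty$, and relative entropy is precisely the case the representation \eqref{int_rep.} is applied to in Theorem~\ref{theorem: p_min Tunstall}. Correspondingly, your final step --- integrating the pointwise identity against $P_\ell$ and killing the linear remainder via $\sum_x \bigl(p_\ell(x) - \tfrac1n\bigr) = 0$ --- is an illegitimate interchange: each individual term $\int_0^1 \phi_\omega\bigl(n p_\ell(x)\bigr)\, \omega^{-3} f''(\cdot)\, \mathrm{d}\omega$ diverges, and \eqref{int_rep.} is finite only because $d_{\omega,n}(P_\ell)$ vanishes identically for $\omega$ near $0$ and $1$ (the ratios $n p_\ell(x)$ being bounded away from $0$ and $\infty$), i.e., the cancellation over $x$ must occur before the $\omega$-integration. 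The repair is to center first: write $f(t) = f'(1)(t-1) + \int_0^\infty k(t,s) f''(s)\, \mathrm{d}s$ with the nonnegative kernel $k(t,s) := (s-t)^+ \, 1\{s \leq 1\} + (t-s)^+ \, 1\{s > 1\}$, apply Tonelli to $\sum_x \tfrac1n \, k\bigl(n p_\ell(x), s\bigr)$, and verify, using $\sum_x \bigl(n p_\ell(x) - 1\bigr) = 0$, that this sum equals $(1+s)\, d_{1/(1+s),n}(P_\ell)$; the substitution $s = (1-\omega)/\omega$ then yields \eqref{int_rep.}. This is, in substance, the Liese--Vajda argument that the paper invokes.
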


\begin{proof}
See Appendix~\ref{appendix: Tunstall}.
\end{proof}

\begin{remark}
\label{remark: integral repr.}
The integral representation in \eqref{int_rep.} provides another
justification for quantifying the closeness of $P_\ell$ to an
equiprobable distribution by the measure in \eqref{closeness to U_n}.
\end{remark}

Figure~\ref{figure: Tunstall_de_Groot_measure} refers to the upper
bound on the closeness-to-equiprobable measure $d_{\omega,n}(P_\ell)$
in \eqref{23062019a5} for Tunstall trees with $n$ leaves. The bound
holds for all $n \in \naturals$, and it is shown as a
function of $\omega \in [0,1]$ for several values of $\rho \in [1, \infty]$.
In the limit where $\rho \to \infty$, the upper bound is equal to
$\min\{\omega, 1-\omega\}$ since the minimum a-posteriori error
probability of the Bayesian binary hypothesis testing model in \eqref{BHT}
tends to zero. On the other hand, if $\rho=1$, then the right side of
\eqref{23062019a5} is identically equal to zero (since $\phi_\omega(1) = 0$).

\begin{figure}[ht]
\vspace*{-4cm}
\begin{center}
\centerline{\includegraphics[width=10cm]{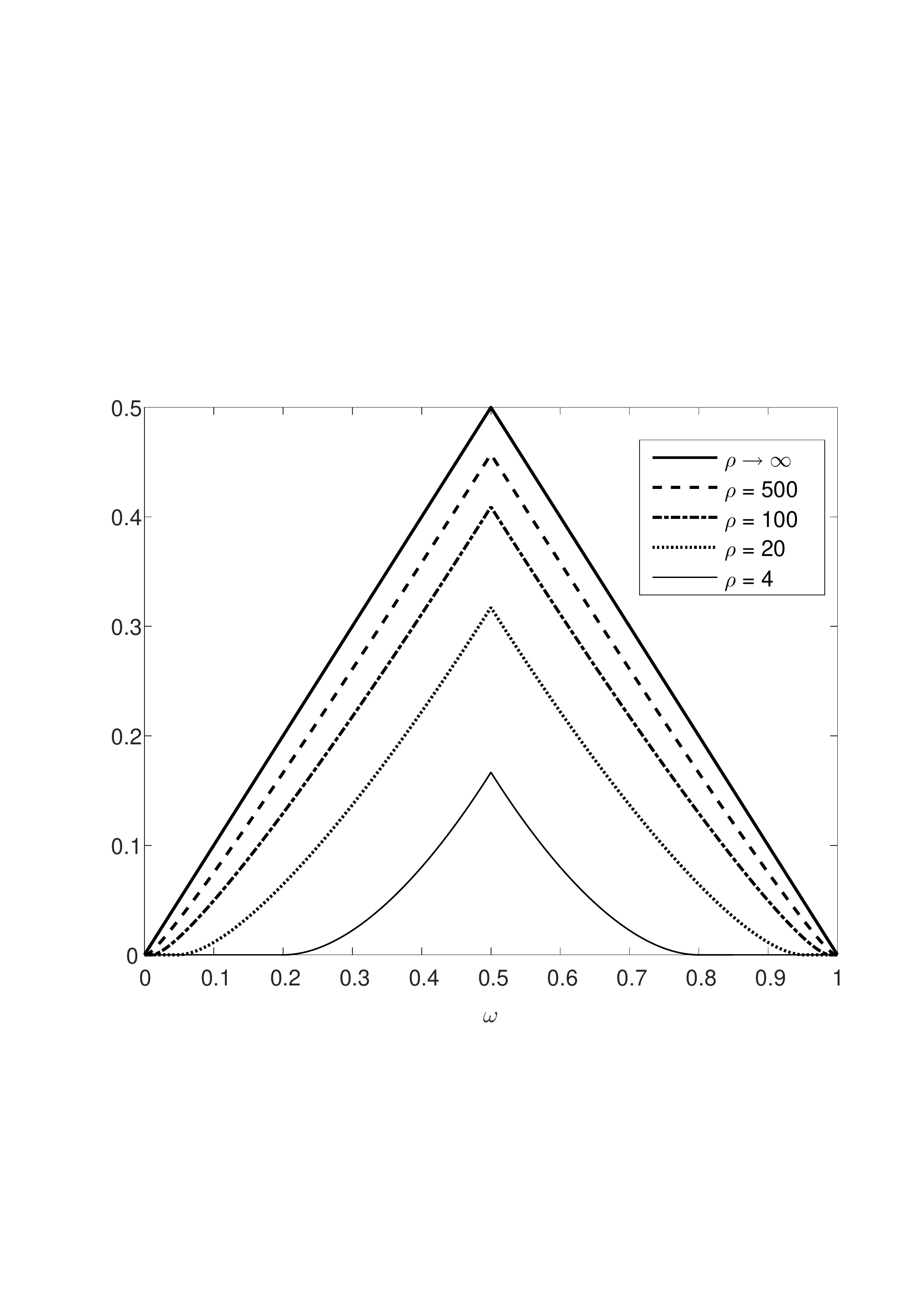}}
\end{center}
\vspace*{-4cm}
\caption{\label{figure: Tunstall_de_Groot_measure}
Curves of the upper bound on the measure $d_{\omega,n}(P_\ell)$
in \eqref{23062019a5}, valid for all $n \in \naturals$, as a
function of $\omega \in [0,1]$ for different values of $\rho := \frac1{p_{\min}}$.}
\end{figure}

Theorem~\ref{theorem: closeness to equiprobable} gives an upper
bound on the measure in \eqref{closeness to U_n}, for the closeness
of the probability mass function generated on the leaves by a Tunstall
tree to the equiprobable distribution, where this bound is expressed
as a function of the minimal probability mass of the source. The
following result, which relies on \cite[Theorem~4]{CicaleseGV18}
and our earlier analysis related to Theorem~\ref{thm: LB/UB f-div},
provides a sufficient condition on the minimal probability mass for
asserting the closeness of the compression rate to the Shannon entropy
of a stationary and memoryless discrete source.
\begin{theorem}
\label{theorem: p_min Tunstall}
Let $P$ be a probability mass function of a stationary and memoryless
discrete source, and let the emitted source symbols be from an alphabet
of size $D \geq 2$. Let $\set{C}$ be a Tunstall code which is used for
source compression; let $m$ and $\set{X}$ denote, respectively, the
fixed length and the alphabet of the codewords of $\set{C}$ (where
$|\set{X}| \geq 2$), referring to a Tunstall tree of $n$ leaves with
$n \leq |\set{X}|^m < n+(D-1)$. Let $p_{\min}$ be the minimal
probability mass of the source symbols, and let
\begin{align}
\label{13072019a1}
d = d(m, \varepsilon) :=
\begin{dcases}
\frac{m \varepsilon \, \log_{\mathrm{e}}|\set{X}|}{1+\varepsilon}
+ \log_{\mathrm{e}}\biggl(1-\frac{D-1}{|\set{X}|^m}\biggr), \quad \mbox{if $D > 2$}, \\
\frac{m \varepsilon \, \log_{\mathrm{e}}|\set{X}|}{1+\varepsilon},
\hspace*{4.1cm} \mbox{if $D=2$,}
\end{dcases}
\end{align}
with an arbitrary $\varepsilon > 0$ such that $d>0$. If
\begin{align}
\label{12072019a2}
p_{\min} \geq
\frac{W_0\bigl(-\mathrm{e}^{-d-1}\bigr)}{W_{-1}\bigl(-\mathrm{e}^{-d-1}\bigr)},
\end{align}
where $W_0$ and $W_{-1}$ denote, respectively, the principal and secondary
real branches of the Lambert $W$ function \cite{Corless96}, then the
compression rate of the Tunstall code is larger than the Shannon entropy
of the source by a factor which is at most $1+\varepsilon$.
\end{theorem}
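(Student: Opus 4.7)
The plan is to combine three ingredients: the entropy-based compression-rate bound for Tunstall codes in \cite[Theorem~4]{CicaleseGV18}, the majorization-based upper bound on $D(\cdot \| U_n)$ derived earlier in this paper (see \eqref{KL1 asymp.}--\eqref{280519c}), and the greedy property of Tunstall's construction. First I would reduce the target inequality to a bound on $D(P_\ell \| U_n)$, where $P_\ell$ is the probability mass function on the $n$ leaves of the Tunstall tree. Writing the compression rate as $R = m \log|\set{X}|/L(\set{T})$, with $L(\set{T})$ the expected number of source symbols per parsed block, and invoking the standard identity $H(P_\ell) = L(\set{T}) \, H(P)$ valid for a stationary memoryless source, the requirement $R \leq (1+\varepsilon)\, H(P)$ becomes equivalent to $D(P_\ell \| U_n) \leq \log n - m \log|\set{X}|/(1+\varepsilon)$. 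Since $n = |\set{X}|^m$ when $D = 2$ and $n > |\set{X}|^m - (D-1)$ when $D > 2$, lower-bounding $\log n$ and matching the $d \log \mathrm{e}$ convention of \eqref{280519b} reduces the goal to $D(P_\ell \| U_n) \leq d(m,\varepsilon)\, \log \mathrm{e}$ with $d$ as in \eqref{13072019a1}.

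Next I would invoke the analysis around \eqref{280519b}--\eqref{280519c}: for every probability mass function $Q$ supported on $\{1, \ldots, n\}$, the inequality $D(Q \| U_n) \leq d \log \mathrm{e}$ is implied by $q_{\max}/q_{\min} \leq \rho_{\max}^{(1)}(d) = W_{-1}\bigl(-\mathrm{e}^{-d-1}\bigr)/W_0\bigl(-\mathrm{e}^{-d-1}\bigr)$, obtained by inverting the asymptotically tight relation $\Delta(1,\rho)\, \log \mathrm{e} \leq d \log \mathrm{e}$ of \eqref{KL1 asymp.} through the two real branches of the Lambert $W$ function. It then remains to control the ratio of leaf probabilities by $1/p_{\min}$. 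This is where Tunstall's greedy rule enters: the parent $v$ of the least probable leaf $\ell^{\ast}$ was, at the moment of its expansion, the highest-probability leaf in the tree; since every leaf of the \emph{final} tree is a descendant (in the broad sense) of some then-existing leaf, every final leaf has probability at most $p(v)$. Hence $p_{\min}^{\mathrm{leaf}} = p(\ell^{\ast}) \geq p_{\min} \cdot p(v) \geq p_{\min} \cdot p_{\max}^{\mathrm{leaf}}$, which gives $p_{\max}^{\mathrm{leaf}}/p_{\min}^{\mathrm{leaf}} \leq 1/p_{\min}$.

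Chaining the three steps, the hypothesis $p_{\min} \geq W_0\bigl(-\mathrm{e}^{-d-1}\bigr)/W_{-1}\bigl(-\mathrm{e}^{-d-1}\bigr)$ forces $p_{\max}^{\mathrm{leaf}}/p_{\min}^{\mathrm{leaf}} \leq \rho_{\max}^{(1)}(d)$, hence $D(P_\ell \| U_n) \leq d \log \mathrm{e}$, and the reduction in the first step delivers $R \leq (1+\varepsilon)\, H(P)$. I expect the main obstacle to be bookkeeping rather than any new idea: one must carefully match the two logarithmic conventions (the factor $\log \mathrm{e}$ converting between nats and the working base) and handle the $D=2$ versus $D>2$ split in \eqref{13072019a1}, which stems from whether the gap $|\set{X}|^m - n \in \{0, \ldots, D-2\}$ is forced to be zero or contributes the correction $\log_\mathrm{e}\bigl(1-(D-1)/|\set{X}|^m\bigr)$ to the lower bound on $\log n$.
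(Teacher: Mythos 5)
Your proposal is correct, and at the top level it follows the same chain as the paper's proof: reduce the rate requirement to $D(P_\ell \,\|\, U_n) \leq d \log \mathrm{e}$, bound the ratio of leaf probabilities by $\rho := \frac{1}{p_{\min}}$, apply the bound \eqref{KL1 asymp.}, and invert via the Lambert-$W$ analysis of Appendix~\ref{appendix: Lambert-W} that yields \eqref{280519c}. The difference is in packaging: the paper's proof simply quotes \cite[Theorem~4]{CicaleseGV18} as a black box (the rate bound \eqref{UB R Tunstall}), performs the $\log_{|\set{X}|} n$ versus $m$ bookkeeping from $n \leq |\set{X}|^m < n+(D-1)$ exactly as you do to produce the two cases of \eqref{13072019a1}, and reduces the claim to the condition \eqref{12072019a1}; you instead inline the content of the cited theorem. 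Your re-derivation is sound on both counts: the reduction via $R = m \log|\set{X}| / L(\set{T})$ together with the conservation-of-entropy identity $H(P_\ell) = L(\set{T}) \, H(P)$ correctly turns $R \leq (1+\varepsilon) H(P)$ into $D(P_\ell \,\|\, U_n) \leq \log n - \frac{m \log|\set{X}|}{1+\varepsilon}$ (this identity is standard for parsing trees of memoryless sources but is the one ingredient you use without citation or proof — in the paper it is hidden inside \cite[Theorem~4]{CicaleseGV18}); and your greedy-rule argument giving $p_{\max}^{\mathrm{leaf}}/p_{\min}^{\mathrm{leaf}} \leq \frac{1}{p_{\min}}$ is a correct one-paragraph proof of precisely \cite[Lemma~6]{JelSc72}, the lemma the paper invokes in the proof of Theorem~\ref{theorem: closeness to equiprobable}. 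Note also that, as you implicitly use, only the sufficiency direction of \eqref{280519b}--\eqref{280519c} is needed, via the monotonicity of $\Delta(1,\rho)$ in $\rho$, and that for $D>2$ the passage from $\log n$ to $m\log|\set{X}| + \log\bigl(1-\tfrac{D-1}{|\set{X}|^m}\bigr)$ correctly downgrades your equivalence to a sufficient condition, which is all the theorem asserts. In short, your route buys self-containedness at the cost of one standard unproved identity; the paper's buys brevity by importing \cite[Theorem~4]{CicaleseGV18}.
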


\begin{proof}
See Appendix~\ref{appendix: Tunstall}.
\end{proof}

\begin{remark}
The condition in \eqref{12072019a2} can be replaced by the stronger
requirement that
\begin{align}
\label{12072019a3}
p_{\min} \geq \frac1{1+\sqrt{8d}}.
\end{align}
However, unless $d$ is a small fraction of unity, there is a significant
difference between the condition in \eqref{12072019a2} and the more
restrictive condition in \eqref{12072019a3} (see Figure~\ref{figure: Tunstall_p_min}).
\end{remark}

\begin{figure}[ht]
\vspace*{-4cm}
\begin{center}
\centerline{\includegraphics[width=9.3cm]{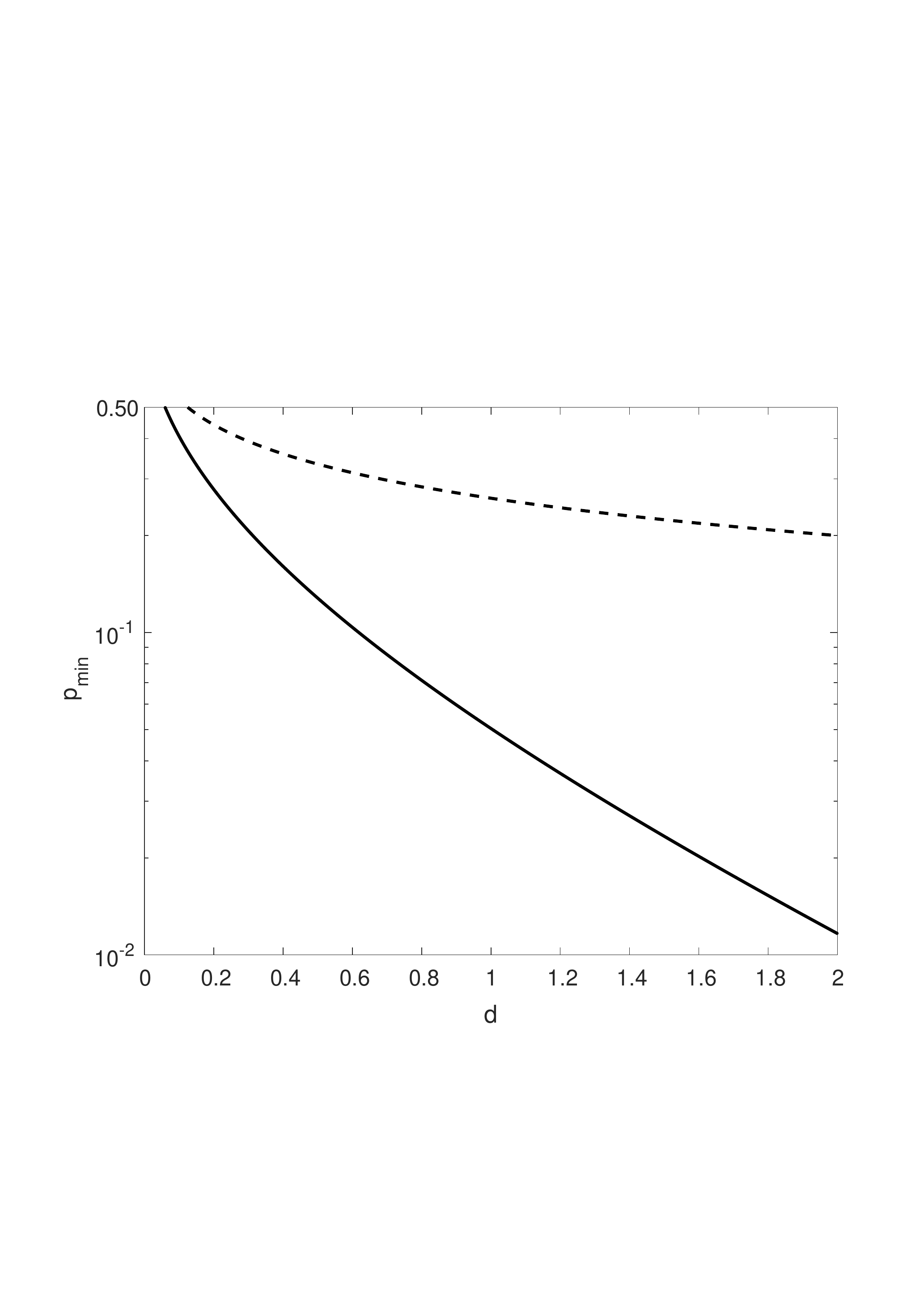}}
\end{center}
\vspace*{-3.8cm}
\caption{\label{figure: Tunstall_p_min}
Curves for the smallest values of $p_{\min}$, in the setup of
Theorem~\ref{theorem: p_min Tunstall}, according to the condition
in \eqref{12072019a2} (solid line) and the more restrictive
condition in \eqref{12072019a3} (dashed line) for binary Tunstall
codes which are used to compress memoryless and stationary binary
sources.}
\end{figure}

\begin{example}
\label{example: Tunstall}
Consider a memoryless and stationary binary source, and a binary Tunstall code
with codewords of length $m=10$ referring to a Tunstall tree with $n=2^m = 1024$
leaves. Letting $\varepsilon = 0.1$ in Theorem~\ref{theorem: p_min Tunstall},
it follows that if the minimal probability mass of the source satisfies
$p_{\min} \geq 0.0978$ (see \eqref{13072019a1}, and Figure~\ref{figure: Tunstall_p_min}
with $d = \frac{m \varepsilon \log_{\mathrm{e}} 2}{1+\varepsilon} = 0.6301$),
then the compression rate of the Tunstall code is at most
$10\%$ larger than the Shannon entropy of the source.
\end{example}

\vspace*{0.1cm}
\subsection*{\bf{Acknowledgments}}
The author wishes to thank the Guest Editor, Amos
Lapidoth, and the two anonymous reviewers for an
efficient process in reviewing and handling this paper.

\break
\appendices

\section{Proof of Theorem~\ref{thm: SDPI-IS}}
\label{appendix: SDPI-IS}

We start by proving Item~\ref{Th. 1.a}). By our assumptions
on $Q_X$ and $W_{Y|X}$,
\begin{align}
\label{P_X/Q_X pos.}
& P_X(x), Q_X(x) > 0, \quad
\hspace*{0.2cm} \forall \, x \in \set{X}, \\
\label{W_Y|X assumption}
& \sum_{x \in \set{X}} W_{Y|X}(y|x) > 0, \quad
\forall \, y \in \set{Y}, \\
\label{sum to 1}
& \sum_{y \in \set{Y}} W_{Y|X}(y|x) = 1, \quad
\hspace*{0.1cm} \forall \, x \in \set{X}, \\
\label{non-neg. W}
& W_{Y|X}(y|x) \geq 0, \quad
\hspace*{0.8cm} \forall \, (x,y) \in \set{X} \times \set{Y}.
\end{align}
From \eqref{transP}, \eqref{transQ}, \eqref{P_X/Q_X pos.},
\eqref{W_Y|X assumption} and \eqref{non-neg. W}, it follows that
\begin{align}
\label{P_Y pos.}
& P_Y(y) = \sum_{x \in \set{X}} P_X(x) W_{Y|X}(y|x) > 0,
\quad \forall \, y \in \set{Y}, \\
\label{Q_Y pos.}
& Q_Y(y) = \sum_{x \in \set{X}} Q_X(x) W_{Y|X}(y|x) > 0,
\quad \forall \, y \in \set{Y},
\end{align}
which imply that, for all $y \in \set{Y}$,
\begin{align}
\label{min/max ratio}
\inf_{x \in \set{X}} \frac{P_X(x)}{Q_X(x)}
\leq \frac{P_Y(y)}{Q_Y(y)}
\leq \sup_{x \in \set{X}} \frac{P_X(x)}{Q_X(x)}.
\end{align}
Since by assumption $P_X$ and $Q_X$ are supported on $\set{X}$, and
$P_Y$ and $Q_Y$ are supported on $\set{Y}$ (see \eqref{P_Y pos.}
and \eqref{Q_Y pos.}), it follows that the left side
inequality in \eqref{min/max ratio} is strict if
the infimum in the left side is equal to~0, and the right
side inequality in \eqref{min/max ratio} is strict if
the supremum in the right side is equal to~$\infty$.
Hence, due to \eqref{xi1}, \eqref{xi2} and \eqref{I_interval},
\begin{align}
\label{range P_Y/Q_Y}
\frac{P_X(x)}{Q_X(x)}, \frac{P_Y(y)}{Q_Y(y)} \in \set{I}(\xi_1, \xi_2),
\quad \forall \, (x,y) \in \set{X} \times \set{Y}.
\end{align}

Since by assumption $f \colon (0, \infty) \to \Reals$ is convex,
it follows that its right derivative $f'_+(\cdot)$ exists, and it is
monotonically non-decreasing and finite on $(0, \infty)$ (see,
e.g., \cite[Theorem~1.2]{RobertsV73} or \cite[Theorem~24.1]{Rockafellar96}).
A straightforward generalization of \cite[Theorem~1.1]{Collet19}
(see \cite[Remark~1]{Collet19}) gives
\begin{align}
\label{eq: Collet}
& D_f(P_X \| Q_X) - D_f(P_Y \| Q_Y)
= \sum_{(x,y) \in \set{X} \times \set{Y}} \left\{ Q_X(x) \, W_{Y|X}(y|x)
\, \Delta\biggl( \frac{P_X(x)}{Q_X(x)}, \frac{P_Y(y)}{Q_Y(y)} \biggr) \right\}
\end{align}
where
\begin{align}
\label{Delta}
\Delta(u,v) := f(u)-f(v)-f'_+(v) (u-v), \quad u,v>0.
\end{align}
In comparison to \cite[Theorem~1.1]{Collet19}, the
requirement that $f$ is differentiable on $(0, \infty)$ is relaxed
here, and the derivative of $f$ is replaced by its right-side
derivative. Note that if $f$ is differentiable, then
$\Delta\Bigl( \tfrac{P_X(x)}{Q_X(x)}, \frac{P_Y(y)}{Q_Y(y)} \Bigr)$
with $\Delta(\cdot, \cdot)$ as defined in \eqref{Delta} is Bregman's
divergence \cite{Bregman67}.
The following equality, expressed in terms of Lebesgue-Stieltjes
integrals, holds by \cite[Theorem~1]{LieseV_IT2006}:
\begin{align}
& \Delta\biggl( \frac{P_X(x)}{Q_X(x)}, \frac{P_Y(y)}{Q_Y(y)} \biggr)
\nonumber \\[0.1cm]
\label{integral1}
& = \begin{dcases}
\int 1\biggl\{ s \in \biggl( \frac{P_Y(y)}{Q_Y(y)}, \frac{P_X(x)}{Q_X(x)}
\biggr] \biggr\}
\, \biggl( \frac{P_X(x)}{Q_X(x)} - s \biggr) \, \mathrm{d}f'_{+}(s),
& \quad \mbox{if $\frac{P_X(x)}{Q_X(x)} \geq \frac{P_Y(y)}{Q_Y(y)}$,} \\[0.2cm]
\int 1\biggl\{ s \in \biggl( \frac{P_X(x)}{Q_X(x)}, \frac{P_Y(y)}{Q_Y(y)}
\biggr] \biggr\}
\, \biggl( s - \frac{P_X(x)}{Q_X(x)} \biggr) \, \mathrm{d}f'_{+}(s),
& \quad \mbox{if $\frac{P_X(x)}{Q_X(x)} < \frac{P_Y(y)}{Q_Y(y)}$.}
\end{dcases}
\end{align}
From \eqref{xi1}, \eqref{xi2}, \eqref{condition on c_f},
\eqref{range P_Y/Q_Y} and \eqref{integral1}, if
$\frac{P_X(x)}{Q_X(x)} \geq \frac{P_Y(y)}{Q_Y(y)}$, then
\begin{align}
\Delta\biggl( \frac{P_X(x)}{Q_X(x)}, \frac{P_Y(y)}{Q_Y(y)} \biggr)
& \geq 2 c_f(\xi_1, \xi_2) \int_{\frac{P_Y(y)}{Q_Y(y)}}^{\frac{P_X(x)}{Q_X(x)}}
\biggl( \frac{P_X(x)}{Q_X(x)} - s \biggr) \, \mathrm{d}s \nonumber \\
\label{integral2}
& = c_f(\xi_1, \xi_2) \biggl(\frac{P_X(x)}{Q_X(x)} - \frac{P_Y(y)}{Q_Y(y)} \biggr)^2,
\end{align}
and similarly, if $\frac{P_X(x)}{Q_X(x)} < \frac{P_Y(y)}{Q_Y(y)}$, then
\begin{align}
\label{integral3}
\Delta\biggl( \frac{P_X(x)}{Q_X(x)}, \frac{P_Y(y)}{Q_Y(y)} \biggr)
& \geq 2 c_f(\xi_1, \xi_2) \int_{\frac{P_X(x)}{Q_X(x)}}^{\frac{P_Y(y)}{Q_Y(y)}}
\biggl( s - \frac{P_X(x)}{Q_X(x)} \biggr) \, \mathrm{d}s \nonumber \\
& = c_f(\xi_1, \xi_2) \biggl(\frac{P_X(x)}{Q_X(x)} - \frac{P_Y(y)}{Q_Y(y)} \biggr)^2.
\end{align}
By combining \eqref{eq: Collet}, \eqref{integral2} and \eqref{integral3}, it follows that
\begin{align}
& D_f(P_X \| Q_X) - D_f(P_Y \| Q_Y) \nonumber \\
\label{diff f-div1}
& \geq c_f(\xi_1, \xi_2)
\sum_{(x,y) \in \set{X} \times \set{Y}} \biggl\{ Q_X(x) \, W_{Y|X}(y|x)
\, \biggl(\frac{P_X(x)}{Q_X(x)} - \frac{P_Y(y)}{Q_Y(y)} \biggr)^2 \biggr\},
\end{align}
and an evaluation of the sum in the right side of \eqref{diff f-div1} gives (see
\eqref{transP}, \eqref{transQ} and \eqref{sum to 1})
\begin{align}
& \sum_{(x,y) \in \set{X} \times \set{Y}} \biggl\{ Q_X(x) \, W_{Y|X}(y|x)
\, \biggl(\frac{P_X(x)}{Q_X(x)} - \frac{P_Y(y)}{Q_Y(y)} \biggr)^2 \biggr\} \nonumber \\
& = \sum_{x \in \set{X}} \biggl\{\frac{P_X^2(x)}{Q_X(x)} \,
\underbrace{\sum_{y \in \set{Y}} W_{Y|X}(y|x)}_{=1} \biggr\}
-2 \sum_{y \in \set{Y}} \biggl\{ \frac{P_Y(y)}{Q_Y(y)} \,
\underbrace{\sum_{x \in \set{X}} P_X(x) W_{Y|X}(y|x)}_{=P_Y(y)} \biggr\} \nonumber \\
& \hspace*{0.4cm} + \sum_{y \in \set{Y}} \biggl\{ \frac{P_Y^2(y)}{Q_Y^2(y)} \,
\underbrace{\sum_{x \in \set{X}} Q_X(x) W_{Y|X}(y|x)}_{=Q_Y(y)} \biggr\} \\
\label{27062019a17}
& = \sum_{x \in \set{X}} \frac{P_X^2(x)}{Q_X(x)}
- \sum_{y \in \set{Y}} \frac{P_Y^2(y)}{Q_Y(y)} \\
\label{27062019a18}
& = \sum_{x \in \set{X}} \frac{\bigl(P_X(x)-Q_X(x)\bigr)^2}{Q_X(x)}
- \sum_{y \in \set{Y}} \frac{\bigl(P_Y(y)-Q_Y(y)\bigr)^2}{Q_Y(y)} \\
\label{diff of chi^2 div.}
& = \chi^2(P_X \| Q_X) - \chi^2(P_Y \| Q_Y).
\end{align}
Combining \eqref{diff f-div1}--\eqref{diff of chi^2 div.} gives
\eqref{key}; \eqref{DPI1} is due to the data-processing
inequality for $f$-divergences (applied to the $\chi^2$-divergence),
and the non-negativity of $c_f(\xi_1, \xi_2)$ in \eqref{condition on c_f}.

The $\chi^2$-divergence is an $f$-divergence
with $f(t) = (t-1)^2$ for $t \geq 0$. The condition
in \eqref{condition on c_f} allows to set here
$c_f(\xi_1, \xi_2) \equiv 1$, implying that \eqref{key}
holds in this case with equality.

We next prove Item~\ref{Th. 1.b}). Let $f$ be twice differentiable
on $\set{I}:=\set{I}(\xi_1, \xi_2)$ (see \eqref{I_interval}), and
let $(u, v) \in \set{I} \times \set{I}$ with $v>u$.
Dividing both sides of \eqref{condition on c_f} by $v-u$, and letting
$v \to u^+$, yields $c_f(\xi_1, \xi_2) \leq \tfrac12 \, f''(u)$. Since
this holds for all $u \in \set{I}$, it follows that
$c_f(\xi_1, \xi_2) \leq \tfrac12 \, \underset{t \in \set{I}}{\inf} f''(t)$.
We next show that $c_f(\xi_1, \xi_2)$ in \eqref{c_f} fulfills the
condition in \eqref{condition on c_f}, and therefore it is the largest
possible value of $c_f$ to satisfy \eqref{condition on c_f}.
By the mean value theorem of Lagrange, for all $(u, v) \in \set{I} \times \set{I}$
with $v>u$, there exists an intermediate value $\xi \in (u,v)$ such that
$f'(v) - f'(u) = f''(\xi) \, (v-u)$; hence,
$$f'(v) - f'(u) \geq 2 c_f(\xi_1, \xi_2) \, (v-u),$$
so the condition in \eqref{condition on c_f} is indeed fulfilled
with $c_f := c_f(\xi_1, \xi_2)$ as given in \eqref{c_f}.

We next prove Item~\ref{Th. 1.c}). Let $f^\ast \colon (0, \infty) \to \Reals$
be the dual convex function which is given by $f^\ast(t) := t f\bigl(\tfrac1t)$
for all $t>0$ with $f^\ast(1) = f(1) = 0$. Since $P_X$, $P_Y$, $Q_X$ and $Q_Y$
are supported on $\set{X}$ (see \eqref{P_Y pos.} and \eqref{Q_Y pos.}),
we have
\begin{align}
\label{swap1}
& D_f(P_X \| Q_X) = D_{f^\ast}(Q_X \| P_X), \\
\label{swap2}
& D_f(P_Y \| Q_Y) = D_{f^\ast}(Q_Y \| P_Y), \\
\label{swap3}
& \xi_1^\ast := \inf_{x \in \set{X}} \frac{Q_X(x)}{P_X(x)}
= \biggl(\sup_{x \in \set{X}} \frac{P_X(x)}{Q_X(x)} \biggr)^{-1}
= \frac1{\xi_2}, \\[0.1cm]
\label{swap4}
& \xi_2^\ast := \sup_{x \in \set{X}} \frac{Q_X(x)}{P_X(x)}
= \biggl(\inf_{x \in \set{X}} \frac{P_X(x)}{Q_X(x)} \biggr)^{-1}
= \frac1{\xi_1}.
\end{align}
Consequently, it follows that
\begin{align}
D_f(P_X \| Q_X) - D_f(P_Y \| Q_Y)
\label{swap5}
& = D_{f^\ast}(Q_X \| P_X) - D_{f^\ast}(Q_Y \| P_Y) \\
\label{swap6}
& \geq c_{f^\ast}(\xi_1^\ast, \xi_2^\ast)
\bigl[ \chi^2(Q_X \| P_X) - \chi^2(Q_Y \| P_Y) \bigr] \\
\label{swap7}
& = c_{f^\ast}\bigl(\tfrac1{\xi_2}, \tfrac1{\xi_1}\bigr)
\bigl[ \chi^2(Q_X \| P_X) - \chi^2(Q_Y \| P_Y) \bigr]
\end{align}
where \eqref{swap5} holds due to \eqref{swap1} and \eqref{swap2};
\eqref{swap6} follows from \eqref{key} with $f$, $P_X$ and $Q_X$
replaced by $f^\ast$, $Q_X$ and $P_X$, respectively, which then
implies that $\xi_1$ and $\xi_2$ in \eqref{xi1} and \eqref{xi2}
are, respectively, replaced by $\xi_1^\ast$ and $\xi_2^\ast$
in \eqref{swap3} and \eqref{swap4}; finally, \eqref{swap7}
holds due to \eqref{swap3} and \eqref{swap4}. Since by assumption
$f$ is twice differentiable on $(0, \infty)$, so is $f^\ast$, and
\begin{align}
\label{2nd der. f^ast}
(f^\ast)''(t) = \frac1{t^3} \, f\biggl(\frac1t\biggr), \quad t>0.
\end{align}
Hence,
\begin{align}
\label{f to f^ast1}
c_{f^\ast}\bigl(\tfrac1{\xi_2}, \tfrac1{\xi_1}\bigr)
& = \tfrac12 \inf_{u \in \set{I}\bigl(\tfrac1{\xi_2}, \tfrac1{\xi_1} \bigr)}
(f^\ast)''(u) \\
\label{f to f^ast2}
& = \tfrac12 \inf_{u \in \set{I}\bigl(\tfrac1{\xi_2}, \tfrac1{\xi_1} \bigr)}
\left\{ \left(\frac1{u}\right)^3 \, f\biggl(\frac1u\biggr) \right\} \\
\label{f to f^ast3}
&= \tfrac12 \inf_{t \in \set{I}(\xi_1, \xi_2)} \bigl\{ t^3 f(t) \bigr\}
\end{align}
where \eqref{f to f^ast1} follows from \eqref{key} with $f$, $\xi_1$
and $\xi_2$ replaced by $f^\ast$, $\tfrac1{\xi_2}$ and $\tfrac1{\xi_1}$,
respectively; \eqref{f to f^ast2} holds due to \eqref{2nd der. f^ast}, and
\eqref{f to f^ast3} holds by substituting $t =: \tfrac1u$. This
proves \eqref{key-dual} and \eqref{c_f^*}, where \eqref{DPI2} is
due to the data-processing inequality for $f$-divergences, and the
non-negativity of $c_{f^\ast}(\cdot, \cdot)$.

Similarly to the condition for equality in \eqref{key}, equality
in \eqref{key-dual} is satisfied if $f^*(t) = (t-1)^2$ for all $t>0$,
or equivalently $f(t) = t f^\ast\bigl(\tfrac1t) = \tfrac{(t-1)^2}{t}$ for all
$t>0$. This $f$-divergence is Neyman's $\chi^2$-divergence where
$D_f(P\|Q) := \chi^2(Q\|P)$ for all $P$ and $Q$ with $c_{f^\ast} \equiv 1$
(due to \eqref{c_f^*}, and since $t^3 f''(t)=2$ for all $t>0$).

The proof of Item~\ref{Th. 1.d}) follows that same lines as the proof of
Items~\ref{Th. 1.a})--\ref{Th. 1.c}) by replacing the condition
in \eqref{condition on c_f} with a complementary condition of the form
\begin{align}
\label{condition on e_f}
f'_{+}(v) - f'_{+}(u) \leq 2 e_f(\xi_1, \xi_2) \; (v-u), \quad
\forall \, u,v \in \set{I}(\xi_1, \xi_2), \; u<v.
\end{align}

We finally prove Item~\ref{Th. 1.e}) by showing that
the lower and upper bounds in \eqref{key}, \eqref{key-dual}, \eqref{UB1}
and \eqref{UB2} are locally tight. More precisely, let $\{P_X^{(n)}\}$ be
a sequence of probability mass functions defined on $\set{X}$ and pointwise
converging to $Q_X$ which is supported on $\set{X}$, let $P_Y^{(n)}$
and $Q_Y$ be the probability mass functions defined on $\set{Y}$ via
\eqref{transP} and \eqref{transQ} with inputs $P_X^{(n)}$ and $Q_X$,
respectively, and let $\{\xi_{1,n}\}$ and $\{\xi_{2,n}\}$ be defined,
respectively, by \eqref{xi1} and \eqref{xi2} with $P_X$ being replaced by
$P_X^{(n)}$. By the assumptions in \eqref{13062019a1} and \eqref{13062019a2},
\begin{align}
\label{lim xi_1,n}
& \lim_{n \to \infty} \xi_{1,n} = \lim_{n \to \infty} \inf_{x \in \set{X}}
\frac{P_X^{(n)}(x)}{Q_X(x)} = 1, \\
\label{lim xi_2,n}
& \lim_{n \to \infty} \xi_{2,n} = \lim_{n \to \infty} \sup_{x \in \set{X}}
\frac{P_X^{(n)}(x)}{Q_X(x)} = 1.
\end{align}
Consequently, if $f$ has a continuous second derivative at unity, then
\eqref{key}, \eqref{c_f}, \eqref{e_f}, \eqref{UB1}, \eqref{lim xi_1,n}
and \eqref{lim xi_2,n} imply that
\begin{align}
& \lim_{n \to \infty} \frac{D_f(P_X^{(n)} \| Q_X) - D_f(P_Y^{(n)} \| Q_Y)}
{\chi^2(P_X^{(n)} \| Q_X) - \chi^2(P_Y^{(n)} \| Q_Y)} \nonumber \\[0.1cm]
& = \lim_{n \to \infty} c_f(\xi_{1,n}, \xi_{2,n})
= \lim_{n \to \infty} e_f(\xi_{1,n}, \xi_{2,n}) = \tfrac12 f''(1),
\end{align}
and similarly, from \eqref{key-dual}, \eqref{c_f^*}, \eqref{UB2}, \eqref{e_f^*},
\eqref{lim xi_1,n} and \eqref{lim xi_2,n},
\begin{align}
& \lim_{n \to \infty} \frac{D_f(P_X^{(n)} \| Q_X) - D_f(P_Y^{(n)} \| Q_Y)}
{\chi^2(Q_X \| P_X^{(n)}) - \chi^2(Q_Y \| P_Y^{(n)})} \nonumber \\[0.2cm]
& = \lim_{n \to \infty} c_{f^\ast}\biggl(\frac1{\xi_{2,n}}, \frac1{\xi_{1,n}} \biggr)
= \lim_{n \to \infty} e_{f^\ast}\biggl(\frac1{\xi_{2,n}}, \frac1{\xi_{1,n}} \biggr)
= \tfrac12 f''(1),
\end{align}
which, respectively, prove \eqref{tight} and \eqref{tight-dual}.

\section{Proof of Theorem~\ref{Thm: DMS-DMC}}
\label{appendix: DMS-DMC}

We start by proving Item~\ref{Th. 2.a.}). By the assumption that $P_{X_i}$
and $Q_{X_i}$ are supported on $\set{X}$ for all $i \in \{1, \ldots, n\}$,
it follows from \eqref{2DMS} that the probability mass functions $P_{X^n}$
and $Q_{X^n}$ are supported on $\set{X}^n$. Consequently, from \eqref{prod. RX},
also $R_{X^n}^{(\lambda)}$ is supported on $\set{X}^n$ for all $\lambda \in [0,1]$.
Due to the product forms of $Q_{X^n}$ and $R_{X^n}^{(\lambda)}$ in \eqref{2DMS}
and \eqref{prod. RX}, respectively, we get from \eqref{xi1_n} that
\begin{align}
\xi_1(n, \lambda) &= \prod_{i=1}^n \left(1 - \lambda +
\lambda \, \inf_{x \in \set{X}} \frac{P_{X_i}(x)}{Q_{X_i}(x)}\right) \nonumber \\
&= \prod_{i=1}^n \left( \inf_{x \in \set{X}}
\frac{\lambda P_{X_i}(x) + (1-\lambda) Q_{X_i}(x)}{Q_{X_i}(x)} \right) \nonumber \\
&= \inf_{\underline{x} \in \set{X}^n} \left\{
\frac{\underset{i=1}{\overset{n}{\prod}} \bigl(\lambda P_{X_i}(x_i)
+ (1-\lambda) Q_{X_i}(x_i) \bigr)}{\underset{i=1}{\overset{n}{\prod}}
Q_{X_i}(x_i)} \right\} \nonumber \\
\label{xi_1,n DMS}
&= \inf_{\underline{x} \in \set{X}^n} \frac{R_{X^n}^{(\lambda)}(\underline{x})}
{Q_{X^n}(\underline{x})} \in (0, 1],
\end{align}
and likewise, from \eqref{xi2_n},
\begin{align}
\label{xi_2,n DMS}
\xi_2(n, \lambda)
&= \sup_{\underline{x} \in \set{X}^n} \frac{R_{X^n}^{(\lambda)}(\underline{x})}
{Q_{X^n}(\underline{x})} \in [1, \infty)
\end{align}
for $\lambda \in [0,1]$.
In view of \eqref{key}, \eqref{c_f}, \eqref{xi_1,n DMS} and
\eqref{xi_2,n DMS}, replacing
$(P_X, \, P_Y, \, Q_X, \, Q_Y, \, \xi_1, \, \xi_2)$ in
\eqref{key} and \eqref{c_f} with
$(R_{X^n}^{(\lambda)}, \, R_{Y^n}^{(\lambda)}, \, Q_{X^n},
\, Q_{Y^n}, \, \xi_1(n,\lambda), \, \xi_2(n,\lambda)),$ we
obtain that, for all $\lambda \in [0,1]$,
\begin{align}
\label{key: modified}
& D_f(R_{X^n}^{(\lambda)} \, \| \, Q_{X^n})
- D_f(R_{Y^n}^{(\lambda)} \, \| \, Q_{Y^n}) \nonumber \\
&\geq c_f\bigl(\xi_1(n,\lambda), \xi_2(n,\lambda)\bigr)
\left[ \chi^2(R_{X^n}^{(\lambda)} \, \| \, Q_{X^n})
- \chi^2(R_{Y^n}^{(\lambda)} \, \| \, Q_{Y^n}) \right].
\end{align}
Due to the setting in \eqref{2DMS}--\eqref{MC3 in DMC},
for all $\underline{y} \in \set{Y}^n$ and $\lambda \in [0,1]$,
\begin{align}
R_{Y^n}^{(\lambda)}(\underline{y})
&= \sum_{\underline{x} \in \set{X}^n} R_{X^n}^{(\lambda)}(\underline{x})
\, W_{Y^n | X^n}(\underline{y} | \underline{x}) \nonumber \\
&= \sum_{\underline{x} \in \set{X}^n} \left\{ \prod_{i=1}^n
\bigl( \lambda P_{X_i}(x_i) + (1-\lambda) Q_{X_i}(x_i) \bigr) \,
\prod_{i=1}^n W_{Y_i|X_i}(y_i | x_i) \right\} \nonumber \\
&= \prod_{i=1}^n \left\{ \sum_{x_i \in \set{X}} \Bigl\{
\bigl( \lambda P_{X_i}(x_i) + (1-\lambda) Q_{X_i}(x_i) \bigr) \,
W_{Y_i|X_i}(y_i | x_i) \Bigr\} \right\} \nonumber \\
&= \prod_{i=1}^n \left\{ \lambda \sum_{x \in \set{X}} P_{X_i}(x)
W_{Y_i|X_i}(y_i | x) + (1-\lambda) \sum_{x \in \set{X}}
Q_{X_i}(x) W_{Y_i|X_i}(y_i | x)  \right\} \nonumber \\
\label{prod RY}
&= \prod_{i=1}^n \bigl( \lambda P_{Y_i}(y_i)
+ (1-\lambda) Q_{Y_i}(y_i) \bigr) \nonumber \\
&= \prod_{i=1}^n R_{Y_i}^{(\lambda)}(y_i)
\end{align}
with
\begin{align}
\label{RY}
R_{Y_i}^{(\lambda)}(y) := \lambda P_{Y_i}(y) + (1-\lambda) Q_{Y_i}(y), \quad
\forall \, i \in \{1, \ldots, n\}, \; y \in \set{Y}, \; \lambda \in [0,1],
\end{align}
and $R_{Y_i}^{(\lambda)}$ is the probability mass function at the channel
output at time instant $i$. In particular, setting $\lambda=0$ in \eqref{prod RY}
gives
\begin{align}
\label{prod QY}
Q_{Y^n}(\underline{y}) = \prod_{i=1}^n Q_{Y_i}(y_i), \quad \forall \,
\underline{y} \in \set{Y}^n.
\end{align}
Due to the tensorization property of the $\chi^2$ divergence,
since $R_{X^n}^{(\lambda)}$, $R_{Y^n}^{(\lambda)}$, $Q_{X^n}$ and $Q_{Y^n}$
are product probability measures (see \eqref{2DMS}, \eqref{prod. RX},
\eqref{prod RY} and \eqref{prod QY}), it follows that
\begin{align}
\label{1st tensorize chi2}
\chi^2(R_{X^n}^{(\lambda)} \, \| \, Q_{X^n}) = \prod_{i=1}^n
\Bigl( 1+\chi^2(R_{X_i}^{(\lambda)} \, \| \, Q_{X_i} \bigr) \Bigr) - 1,
\end{align}
and
\begin{align}
\label{2nd tensorize chi2}
\chi^2(R_{Y^n}^{(\lambda)} \, \| \, Q_{Y^n}) = \prod_{i=1}^n
\Bigl( 1+\chi^2(R_{Y_i}^{(\lambda)} \, \| \, Q_{Y_i} \bigr) \Bigr) - 1.
\end{align}
Substituting \eqref{1st tensorize chi2} and \eqref{2nd tensorize chi2} into
the right side of \eqref{key: modified} gives that, for all $\lambda \in [0,1]$,
\begin{align}
& D_f(R_{X^n}^{(\lambda)} \, \| \, Q_{X^n})
- D_f(R_{Y^n}^{(\lambda)} \, \| \, Q_{Y^n}) \label{eq: diff. Df DMS-DMC} \\
\nonumber
&\geq c_f\bigl(\xi_1(n,\lambda), \xi_2(n,\lambda)\bigr)
\left[ \, \prod_{i=1}^n \Bigl( 1+\chi^2(R_{X_i}^{(\lambda)} \, \| \, Q_{X_i} \bigr) \Bigr)
- \prod_{i=1}^n \Bigl( 1+\chi^2(R_{Y_i}^{(\lambda)} \, \| \, Q_{Y_i} \bigr) \Bigr) \right].
\end{align}
Due to \eqref{prod. RX} and \eqref{RY}, since
\begin{align}
& R_{X_i}^{(\lambda)} = \lambda P_{X_i} + (1-\lambda) Q_{X_i}, \\
& R_{Y_i}^{(\lambda)} = \lambda P_{Y_i} + (1-\lambda) Q_{Y_i},
\end{align}
and (see \cite[Lemma~5]{Sason18})
\begin{align}
\chi^2(\lambda P + (1-\lambda)Q \, \| \, Q)
= \lambda^2 \, \chi^2(P\|Q), \quad \forall \, \lambda \in [0,1]
\end{align}
for every pair of probability measures $(P,Q)$, it follows that
\begin{align}
\label{mixture1 chi^2}
& \chi^2(R_{X_i}^{(\lambda)} \, \| \, Q_{X_i} \bigr)
= \lambda^2 \, \chi^2(P_{X_i} \, \| \, Q_{X_i}), \\
\label{mixture2 chi^2}
& \chi^2(R_{Y_i}^{(\lambda)} \, \| \, Q_{Y_i} \bigr)
= \lambda^2 \, \chi^2(P_{Y_i} \, \| \, Q_{Y_i}).
\end{align}
Substituting \eqref{mixture1 chi^2} and \eqref{mixture2 chi^2} into the right side
of \eqref{eq: diff. Df DMS-DMC} gives \eqref{LB1 - DMC}.
For proving the looser bound \eqref{LB2 - DMC} from \eqref{LB1 - DMC}, and also for
later proving the result in Item~\ref{Th. 2.c.}), we rely on the following lemma.

\begin{lemma}
\label{lemma 1}
Let $\{a_i\}_{i=1}^n$ and $\{b_i\}_{i=1}^n$ be non-negative with $a_i \geq b_i$
for all $i \in \{1, \ldots, n\}$. Then,
\begin{enumerate}[a)]
\item \label{lemma 1.a}
For all $u \geq 0$,
\begin{align}
\label{lemma1-1}
\prod_{i=1}^n (1+a_i u) - \prod_{i=1}^n (1+b_i u) \geq \sum_{i=1}^n (a_i - b_i) u.
\end{align}
\item \label{lemma 1.b}
If $a_i > b_i$ for at least one index $i$, then
\begin{align}
\label{lemma1-2}
\prod_{i=1}^n (1+a_i u) - \prod_{i=1}^n (1+b_i u) = \sum_{i=1}^n (a_i - b_i) u + O(u^2).
\end{align}
\end{enumerate}
\end{lemma}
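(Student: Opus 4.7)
The plan is to prove part~\ref{lemma 1.a}) by induction on $n$, using the fact that the intermediate products are bounded below by~$1$ (since $a_i, b_i \geq 0$ and $u \geq 0$), and to prove part~\ref{lemma 1.b}) by expanding both products and collecting terms by powers of $u$.

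For part~\ref{lemma 1.a}), the base case $n=1$ is the identity $(1+a_1 u) - (1+b_1 u) = (a_1 - b_1) u$. For the inductive step, write
\begin{align*}
A_{n-1} := \prod_{i=1}^{n-1}(1+a_i u), \qquad B_{n-1} := \prod_{i=1}^{n-1}(1+b_i u),
\end{align*}
and observe that
\begin{align*}
\prod_{i=1}^n (1+a_i u) - \prod_{i=1}^n (1+b_i u)
= (A_{n-1} - B_{n-1}) + u\,\bigl(a_n A_{n-1} - b_n B_{n-1}\bigr).
\end{align*}
The key algebraic trick is the decomposition
\begin{align*}
a_n A_{n-1} - b_n B_{n-1} = a_n (A_{n-1} - B_{n-1}) + (a_n - b_n) B_{n-1} \geq (a_n - b_n) B_{n-1} \geq a_n - b_n,
\end{align*}
where the first inequality uses $a_n \geq 0$ together with the induction hypothesis $A_{n-1} \geq B_{n-1}$ (applied with the trivial bound $\sum_{i=1}^{n-1}(a_i-b_i) \geq 0$), and the second uses $B_{n-1} \geq 1$ because every factor $1+b_i u$ is at least~$1$. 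Combining this with the induction hypothesis $A_{n-1} - B_{n-1} \geq u \sum_{i=1}^{n-1}(a_i - b_i)$ yields the claim for~$n$.

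For part~\ref{lemma 1.b}), the statement is a direct consequence of expanding the products in powers of~$u$:
\begin{align*}
\prod_{i=1}^n (1+a_i u) = 1 + u \sum_{i=1}^n a_i + u^2 \sum_{1 \leq i<j \leq n} a_i a_j + \cdots,
\end{align*}
and similarly for the product involving $\{b_i\}$. Subtracting, the constant terms cancel, the linear coefficient is $\sum_{i=1}^n (a_i - b_i)$, and all remaining terms carry a factor of $u^2$ (with coefficients depending only on the fixed $\{a_i\}, \{b_i\}$), which gives the $O(u^2)$ remainder as $u \to 0^+$. The assumption that $a_i > b_i$ for at least one~$i$ is not needed for the identity itself but only ensures that the linear term is nonzero, making the asymptotic expansion informative.

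No serious obstacle is anticipated: the main subtlety is the ordering of the two inequalities in the inductive step, where one must simultaneously exploit $a_n \geq 0$, the induction hypothesis $A_{n-1} \geq B_{n-1}$, and the bound $B_{n-1} \geq 1$ in the correct order so that the remainder after peeling off the last factor lands exactly on $(a_n - b_n)$ and not merely on $(a_n - b_n)$ times some quantity that could be less than~$1$.
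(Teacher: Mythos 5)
Your proof is correct. For part~(b) it coincides with the paper's argument: the paper also just expands the polynomial $g(u) := \prod_{i=1}^n (1+a_i u) - \prod_{i=1}^n (1+b_i u)$ in powers of $u$ (phrased as a Taylor series of $g$ around $u=0$), so the two treatments are the same in substance. For part~(a), however, you take a genuinely different route. The paper works with $g$ directly: it computes $g'(u)$ and $g''(u)$, observes that $a_i \geq b_i \geq 0$ makes every bracketed term in $g''$ non-negative, concludes $g$ is convex on $[0,\infty)$, and then invokes the tangent-line bound $g(u) \geq g(0) + g'(0)u = u\sum_i (a_i - b_i)$. Your induction avoids calculus entirely; the price is exactly the subtlety you flag, namely peeling off the last factor via $a_n A_{n-1} - b_n B_{n-1} = a_n(A_{n-1}-B_{n-1}) + (a_n - b_n)B_{n-1}$ and then using, in the right order, $a_n \geq 0$ with $A_{n-1} \geq B_{n-1}$ (from the induction hypothesis plus $\sum_{i<n}(a_i-b_i) \geq 0$), and $B_{n-1} \geq 1$ with $a_n \geq b_n$; all of these steps check out, including the final multiplication by $u \geq 0$. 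The convexity proof is shorter and reuses the derivative formulas in part~(b), while yours is more elementary and self-contained. One small point in your favor: you correctly note that the hypothesis $a_i > b_i$ for some $i$ is not needed for the $O(u^2)$ identity itself. The paper instead asserts that this hypothesis makes the $u^2$ coefficient $\tfrac12\sum_i\sum_{j \neq i}(a_i a_j - b_i b_j)$ positive, which is not always true (e.g., $n=1$, or $a=(1,0)$, $b=(0,0)$), though the $O(u^2)$ conclusion is unaffected; your formulation sidesteps this inaccuracy.
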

\begin{proof}
Let $g \colon [0, \infty) \to \Reals$ be defined as
\begin{align}
\label{aux. function g}
g(u) := \prod_{i=1}^n (1+a_i u) - \prod_{i=1}^n (1+b_i u), \quad \forall \, u \geq 0.
\end{align}
We have $g(0)=0$, and the first two derivatives of $g$ are given by
\begin{align}
\label{1st der. g}
g'(u) = \sum_{i=1}^n \Bigl\{ a_i \prod_{j \neq i} (1+a_j u)
- b_i \prod_{j \neq i} (1+b_j u) \Bigr\},
\end{align}
and
\begin{align}
\label{2nd der. g}
g''(u) = \sum_{i=1}^n \sum_{j \neq i} \Bigl\{ a_i a_j \prod_{k \neq i,j} (1+a_k u)
- b_i b_j \prod_{k \neq i,j} (1+b_k u) \Bigr\}.
\end{align}
Since by assumption $a_i \geq b_i \geq 0$ for all $i$, it follows from \eqref{2nd der. g}
that $g''(u) \geq 0$ for all $u \geq 0$, which asserts the convexity of $g$ on $[0, \infty)$.
Hence, for all $u \geq 0$,
\begin{align}
\label{LB g}
g(u) \geq g(0) + g'(0) u
= \sum_{i=1}^n (b_i - a_i) u
\end{align}
where the right-side equality in \eqref{LB g} is due to \eqref{aux. function g} and \eqref{1st der. g}.
This gives \eqref{lemma1-1}.

We next prove Item~\ref{lemma 1.b}) of Lemma~\ref{lemma 1}. By the Taylor series expansion of
the polynomial function $g$, we get
\begin{align}
g(u) &= g(0) + g'(0) u + \tfrac12 g''(0) u^2 + \ldots \nonumber \\
\label{Taylor with 2nd-order term}
&= \sum_{i=1}^n (b_i - a_i) u + \tfrac12 \sum_{i=1}^n \sum_{j \neq i} (a_i a_j - b_i b_j) u^2 + \ldots
\end{align}
for all $u \geq 0$. Since by assumption $a_i \geq b_i \geq 0$ for all $i$, and there exists an index
$i \in \{1, \ldots, n\}$ such that $a_i > b_i$, it follows that the coefficient of $u^2$ in the right
side of \eqref{Taylor with 2nd-order term} is positive. This yields \eqref{lemma1-2}.
\end{proof}

We obtain here \eqref{LB2 - DMC} from \eqref{LB1 - DMC} and Item~\ref{lemma 1.a}) of Lemma~\ref{lemma 1}.
To that end, for $i \in \{1, \ldots, n\}$, let
\begin{align}
\label{a,b,u}
a_i := \chi^2(P_{X_i} \| Q_{X_i}), \quad b_i := \chi^2(P_{Y_i} \| Q_{Y_i}), \quad u := \lambda^2
\end{align}
with $u \in [0,1]$ for every $\lambda \in [0,1]$. Since by \eqref{2DMS}, \eqref{DMC}, \eqref{MC2 in DMC}
and \eqref{MC3 in DMC},
\begin{align}
& P_{X_i} \to W_{Y_i | X_i} \to P_{Y_i}, \\
& Q_{X_i} \to W_{Y_i | X_i} \to Q_{Y_i},
\end{align}
it follows from the data-processing inequality for $f$-divergences, and their non-negativity, that
\begin{align}
\label{a>=b>=0}
a_i \geq b_i \geq 0, \quad \forall \, i \in \{1, \ldots, n\},
\end{align}
which yields \eqref{LB2 - DMC} from \eqref{LB1 - DMC}, \eqref{lemma1-1}, \eqref{a,b,u} and \eqref{a>=b>=0}.

We next prove Item~\ref{Th. 2.b.}) of Theorem~\ref{Thm: DMS-DMC}.
Similarly to the proof of \eqref{key: modified}, we get from \eqref{UB1}
(rather than \eqref{key}) that
\begin{align}
\label{key: modified2}
& D_f(R_{X^n}^{(\lambda)} \, \| \, Q_{X^n}) - D_f(R_{Y^n}^{(\lambda)} \, \| \, Q_{Y^n}) \nonumber \\
&\leq e_f\bigl(\xi_1(n,\lambda), \xi_2(n,\lambda)\bigr)
\left[ \chi^2(R_{X^n}^{(\lambda)} \, \| \, Q_{X^n}) - \chi^2(R_{Y^n}^{(\lambda)} \, \| \, Q_{Y^n}) \right].
\end{align}
Combining \eqref{1st tensorize chi2}, \eqref{2nd tensorize chi2}, \eqref{mixture1 chi^2},
\eqref{mixture2 chi^2} and \eqref{key: modified2} gives \eqref{UB1 - DMC}.

We finally prove Item~\ref{Th. 2.c.}) of Theorem~\ref{Thm: DMS-DMC}.
In view of \eqref{xi1_n} and \eqref{xi2_n}, and by the assumption that
$\underset{x \in \set{X}}{\sup} \frac{P_{X_i}(x)}{Q_{X_i}(x)} < \infty$
for all $i \in \{1, \ldots, n\}$, we get
\begin{align}
\label{lim1}
& \lim_{\lambda \to 0^+} \xi_1(n, \lambda) = 1, \\
\label{lim2}
& \lim_{\lambda \to 0^+} \xi_2(n, \lambda) = 1.
\end{align}
Since, by assumption $f$ has a continuous second derivative at unity,
\eqref{c_f}, \eqref{e_f}, \eqref{lim1} and \eqref{lim2} imply that
\begin{align}
\label{lim3}
& \lim_{\lambda \to 0^+} c_f\bigl(\xi_1(n, \lambda), \xi_2(n, \lambda)\bigr) = \tfrac12 f''(1), \\
\label{lim4}
& \lim_{\lambda \to 0^+} e_f\bigl(\xi_1(n, \lambda), \xi_2(n, \lambda)\bigr) = \tfrac12 f''(1).
\end{align}
From \eqref{a,b,u}, \eqref{a>=b>=0}, and Item~\ref{lemma 1.b})
of Lemma~\ref{lemma 1}, it follows that
\begin{align}
& \lim_{\lambda \to 0^+} \, \frac1{\lambda^2}
\left[ \, \prod_{i=1}^n \bigl(1 + \lambda^2 \, \chi^2(P_{X_i} \| Q_{X_i}) \bigr)
- \prod_{i=1}^n \bigl(1 + \lambda^2 \, \chi^2(P_{Y_i} \| Q_{Y_i}) \bigr) \right] \nonumber \\
\label{lim5}
&= \sum_{i=1}^n \bigl[ \chi^2(P_{X_i} \| Q_{X_i}) - \chi^2(P_{Y_i} \| Q_{Y_i}) \bigr].
\end{align}
The result in \eqref{lim - DMC} finally follows from \eqref{LB1 - DMC},
\eqref{UB1 - DMC} and \eqref{lim3}--\eqref{lim5}. This indeed shows
that the lower bounds in the right sides of \eqref{LB1 - DMC} and
\eqref{LB2 - DMC}, and the upper bound in the right side of \eqref{UB1 - DMC}
yield a tight result as we let $\lambda \to 0^+$, leading to the limit
in the right side of \eqref{lim - DMC}.

\section{Proof of Theorems~\ref{theorem: contraction coef} and~\ref{theorem: DMS/DMC - ver2}}
\label{appendix: contraction coef.}

\subsection{Proof of Theorem~\ref{theorem: contraction coef}}
We first obtain a lower bound on $D_f(P_X \| Q_X)$, and then obtain an upper bound
on $D_f(P_Y \| Q_Y)$.
\begin{align}
\label{13062019b-1}
D_f(P_X \| Q_X) &=
\sum_{x \in \set{X}} Q_X(x) \, f\hspace*{-0.1cm}\left(\frac{P_X(x)}{Q_X(x)}\right) \\
\label{13062019b0}
&= \sum_{x \in \set{X}} Q_X(x) \, \left[ \frac{P_X(x)}{Q_X(x)} \;
g\hspace*{-0.1cm}\left(\frac{P_X(x)}{Q_X(x)}\right) + f(0) \right] \\
\label{13062019b1}
&= f(0) + \sum_{x \in \set{X}} P_X(x) \; g\hspace*{-0.1cm}\left(\frac{P_X(x)}{Q_X(x)}\right) \\
\label{13062019b1.2}
&\geq f(0) + g\left(\,  \sum_{x \in \set{X}} \frac{P_X^2(x)}{Q_X(x)} \right) \\
\label{13062019b1.3}
&= f(0) + g\bigl(1 + \chi^2(P_X \| Q_X) \bigr) \\
\label{13062019b2}
&\geq f(0) + g(1) + g'(1) \, \chi^2(P_X \| Q_X) \\
\label{13062019b3}
&= g'(1) \, \chi^2(P_X \| Q_X) \\
\label{13062019b4}
&= \bigl( f'(1) + f(0) \bigr) \, \chi^2(P_X \| Q_X),
\end{align}
where \eqref{13062019b0} holds by the definition of $g$ in
Theorem~\ref{theorem: contraction coef} and the assumption that $f(0)<\infty$;
\eqref{13062019b1.2} is due to Jensen's inequality and the convexity of $g$;
\eqref{13062019b1.3} holds by the definition of the $\chi^2$-divergence;
\eqref{13062019b2} holds due to the convexity of $g$, and its differentiability
at~1 (due to the differentiability of $f$ at~1); \eqref{13062019b3} holds since
$f(0)+g(1)=f(1)=0$; finally, \eqref{13062019b4} holds since $f(1)=0$ implies that
$g'(1) = f'(1) + f(0)$.

By \cite[Theorem~5]{ISSV16}, it follows that
\begin{align}
\label{13062019b5}
D_f(P_Y \| Q_Y) \leq \kappa(\xi_1, \xi_2) \, \chi^2(P_Y \| Q_Y),
\end{align}
where $\kappa(\xi_1, \xi_2)$ is given in \eqref{def: kappa}.

Combining \eqref{13062019b-1}--\eqref{13062019b5} yields \eqref{13062019c1}.
Taking suprema on both sides of \eqref{13062019c1}, with respect to all
probability mass functions $P_X$ with $P_X \ll Q_X$ and $P_X \neq Q_X$,
gives \eqref{13062019c2} since by the definition of $\kappa(\xi_1, \xi_2)$
in \eqref{def: kappa}, it is monotonically decreasing in $\xi_1 \in [0,1)$
and monotonically increasing in $\xi_2 \in (1, \infty]$, while \eqref{xi1}
and \eqref{xi2} yield
\begin{align}
\xi_1 \geq 0, \quad \xi_2 \leq \frac1{\underset{x \in \set{X}}{\min} \, Q_X(x)}.
\end{align}

\begin{remark}
\label{remark: Raginsky16}
The proof in \eqref{13062019b-1}--\eqref{13062019b4} is conceptually
similar to the proof of \cite[Lemma~A.2]{Raginsky16}. However, the function
$g$ here is convex, and the derivation here involves the $\chi^2$-divergence.
\end{remark}

\begin{remark}
\label{remark: MakurZ18}
The proof of \cite[Theorem~8]{MakurZ18} (see Proposition~\ref{prop.: MakurZ18} in
Section~\ref{subsection: preliminaries} here) relies on \cite[Lemma~A.2]{Raginsky16},
where the function $g$ is required to be concave in \cite{MakurZ18, Raginsky16}.
This leads, in the proof of \cite[Theorem~8]{MakurZ18}, to an upper
bound on $D_f(P_Y \| Q_Y)$. One difference in the derivation of
Theorem~\ref{theorem: contraction coef} is that our requirement on the convexity of $g$
leads to a lower bound on $D_f(P_X \| Q_X)$, instead of an upper bound on $D_f(P_Y \| Q_Y)$.
Another difference between the proofs of Theorem~\ref{theorem: contraction coef}
and \cite[Theorem~8]{MakurZ18} is that we apply here the result in \cite[Theorem~5]{ISSV16}
to obtain an upper bound on $D_f(P_Y \| Q_Y)$, whereas the proof of \cite[Theorem~8]{MakurZ18}
relies on a Pinsker-type inequality (see \cite[Theorem~3]{Gilardoni10})
to obtain a lower bound on $D_f(P_X \| Q_X)$; the latter lower bound relies on the
condition on $f$ in \eqref{Gilardoni}, which is not necessary for the derivation of
the bound in Theorem~\ref{theorem: contraction coef}.
\end{remark}

\begin{remark}
\label{remark: ISSV16}
From \cite[Theorem~1~(b)]{ISSV16}, it follows that
\begin{align}
\label{tight const.}
\sup_{P \neq Q} \frac{D_f(P \| Q)}{\chi^2(P \| Q)} = \kappa(\xi_1, \xi_2),
\end{align}
with $\kappa(\xi_1, \xi_2)$ in the right side of \eqref{tight const.} as
given in \eqref{def: kappa}, and the supremum in the left side of \eqref{tight const.}
is taken over all probability measures $P$ and $Q$ such that $P \neq Q$. In
view of \cite[Theorem~1~(b)]{ISSV16}, the equality in \eqref{tight const.}
holds since the functions $\widetilde{f}, \widetilde{g} \colon (0, \infty) \to \Reals$,
defined as $\widetilde{f}(t) := f(t) + f'(1) (1-t)$ and $\widetilde{g}(t) := (t-1)^2$
for all $t>0$, satisfy
$$D_{\widetilde{f}}(P\|Q) = D_f(P\|Q), \quad D_{\widetilde{g}}(P\|Q) = \chi^2(P\|Q)$$
for all probability measures $P$ and $Q$,
and since $\widetilde{f}'(1) = \widetilde{g} \, '(1)=0$ and the function
$\widetilde{g}$ is strictly positive on $(0,1) \cup (1, \infty)$.
Furthermore, from the proof of \cite[Theorem~1~(b)]{ISSV16}, restricting $P$ and $Q$
to be probability mass functions which are defined over a binary alphabet, the ratio
$\frac{D_f(P \| Q)}{\chi^2(P \| Q)}$ can be made arbitrarily close to the supremum in
the left side of \eqref{tight const.}; such probability measures can be obtained as
the output distributions $P_Y$ and $Q_Y$ of an arbitrary non-degenerate stochastic
transformation $W_{Y|X} \colon \set{X} \to \set{Y}$, with $|\set{Y}|=2$, by a suitable
selection of probability input distributions $P_X$ and $Q_X$, respectively (see
(\ref{P_Y pos.}) and (\ref{Q_Y pos.})).
In the latter case where $|\set{Y}|=2$, this shows the optimality of the non-negative
constant $\kappa(\xi_1, \xi_2)$ in the right side of \eqref{13062019b5}.
\end{remark}

\subsection{Proof of Theorem~\ref{theorem: DMS/DMC - ver2}}
Combining \eqref{13062019b-1}--\eqref{13062019b4} gives that,
for all $\lambda \in [0,1]$,
\begin{align}
\label{13062019b6}
D_f\bigl(R_{X^n}^{(\lambda)} \, \| \, Q_{X^n}^{(\lambda)}\bigr) \geq
\bigl( f'(1) + f(0) \bigr) \, \chi^2\bigl(R_{X^n}^{(\lambda)} \, \| \, Q_{X^n}\bigr),
\end{align}
and from \eqref{13062019b5}
\begin{align}
\label{13062019b7}
D_f\bigl(R_{Y^n}^{(\lambda)} \, \| \, Q_{Y^n}\bigr) \leq
\kappa\bigl(\xi_1(n, \lambda), \xi_2(n, \lambda)\bigr)
\; \chi^2\bigl(R_{Y^n}^{(\lambda)} \, \| \, Q_{Y^n} \bigr).
\end{align}
From \eqref{1st tensorize chi2} and \eqref{mixture1 chi^2},
\begin{align}
\label{13062019b8}
\chi^2\bigl(R_{X^n}^{(\lambda)} \, \| \, Q_{X^n} \bigr) = \prod_{i=1}^n
\Bigl( 1+ \lambda^2 \chi^2(P_{X_i} \, \| \, Q_{X_i} \bigr) \Bigr) - 1,
\end{align}
and similarly, from \eqref{2nd tensorize chi2} and \eqref{mixture2 chi^2},
\begin{align}
\label{13062019b9}
\chi^2\bigl(R_{Y^n}^{(\lambda)} \, \| \, Q_{Y^n} \bigr) = \prod_{i=1}^n
\Bigl( 1+ \lambda^2 \chi^2(P_{Y_i} \, \| \, Q_{Y_i} \bigr) \Bigr) - 1.
\end{align}
Combining \eqref{13062019b6}--\eqref{13062019b9} yields \eqref{13062019d1}.

\section{Proof of Theorem~\ref{thm: f_alpha-divergence}}
\label{appendix: f_alpha-divergence}

The function $f_\alpha \colon [0, \infty) \to \Reals$ in
\eqref{f_alpha} satisfies $f_\alpha(1)=0$, and for all
$\alpha \geq \mathrm{e}^{-\frac32}$
\begin{align}
\label{14062019a1}
& f_\alpha''(t) = 2 \log(\alpha+t) + 3 \log \mathrm{e} > 0, \quad \forall \, t>0,
\end{align}
which yields the convexity of $f_\alpha(\cdot)$ on $[0, \infty)$.
This justifies the definition of the $f$-divergence
\begin{align}
\label{def f_alpha div.}
D_{f_\alpha}(P\|Q) &:= \sum_{x \in \set{X}} Q(x) \;
f_\alpha\biggl(\frac{P(x)}{Q(x)}\biggr)
\end{align}
for probability mass functions $P$ and $Q$, which are defined
on a finite or countably infinite set $\set{X}$, with $Q$
supported on $\set{X}$. In the general
alphabet setting, sums and probability mass functions are,
respectively, replaced by Lebesgue integrals and Radon-Nikodym
derivatives.

Differentiation of both sides of \eqref{def f_alpha div.}
with respect to $\alpha$ gives
\begin{align}
\label{1st derivative}
\frac{\partial}{\partial \alpha} \bigl\{ D_{f_\alpha}(P\|Q) \bigr\}
= \sum_{x \in \set{X}} Q(x) \; r_\alpha\biggl(\frac{P(x)}{Q(x)}\biggr)
\end{align}
where
\begin{align}
\label{r_alpha}
r_\alpha(t) &:= \frac{\partial f_\alpha(t)}{\partial \alpha} \\
\label{r_alpha 2}
&= 2(\alpha+t) \log(\alpha+t) - 2(\alpha+1) \log(\alpha+1)
+ (t-1) \log \mathrm{e}, \quad t>0.
\end{align}
The function $r_\alpha \colon (0, \infty) \to \Reals$ is convex since
\begin{align}
r_\alpha''(t) = \frac{2 \log \mathrm{e}}{\alpha + t} > 0, \quad \forall \, t>0,
\end{align}
and $r_\alpha(1)=0$.
Hence, $D_{r_\alpha}(\cdot \| \cdot)$ is an $f$-divergence, and it
follows from \eqref{1st derivative}--\eqref{r_alpha 2} that
\begin{align}
& \frac{\partial}{\partial \alpha} \bigl\{ D_{f_\alpha}(P\|Q) \bigr\} \nonumber \\
& = D_{r_\alpha}(P\|Q) \\
& = 2 \sum_{x \in \set{X}} \left\{ \bigl(\alpha Q(x) + P(x) \bigr)
\, \log \left(\alpha+\frac{P(x)}{Q(x)}\right) \right\} - 2(\alpha+1) \log(\alpha+1) \\
& = 2(\alpha+1) \sum_{x \in \set{X}} \frac{\alpha Q(x) + P(x)}{\alpha+1}
\, \log \left(\frac{\alpha Q(x)+P(x)}{(\alpha+1) \, Q(x)} \right) \\
& = 2(\alpha+1) \, D\biggl( \frac{\alpha Q + P}{\alpha+1} \, \| \, Q \biggr)
\geq 0,
\end{align}
which gives \eqref{1st partial der.}, so
$D_{f_\alpha}(\cdot \| \cdot)$ is monotonically
increasing in $\alpha$. Double differentiation of both sides of
\eqref{def f_alpha div.} with respect to $\alpha$ gives
\begin{align}
\label{2nd derivative}
\frac{\partial^2}{\partial \alpha^2} \bigl\{ D_{f_\alpha}(P\|Q) \bigr\}
= \sum_{x \in \set{X}} Q(x) \; v_\alpha\biggl(\frac{P(x)}{Q(x)}\biggr)
\end{align}
where
\begin{align}
\label{v: 2nd derivative}
v_\alpha(t) &:= \frac{\partial^2 f_\alpha(t)}{\partial \alpha^2} \\
\label{v2: 2nd derivative}
& \, = 2 \log(\alpha+t) - 2 \log(\alpha+1), \quad t>0.
\end{align}
The function $v_\alpha \colon (0, \infty) \to \Reals$ is concave,
and $v_\alpha(1)=0$. By referring to the
$f$-divergence $D_{-v_\alpha}(\cdot \| \cdot)$, it follows from
\eqref{2nd derivative}--\eqref{v2: 2nd derivative} that
\begin{align}
& \frac{\partial^2}{\partial \alpha^2}
\bigl\{ D_{f_\alpha}(P\|Q) \bigr\} \nonumber \\
& = - D_{-v_\alpha}(P\|Q) \\
& = -2 \sum_{x \in \set{X}} Q(x) \left[ \log(\alpha+1)
- \log\left(\alpha + \frac{P(x)}{Q(x)} \right) \right] \\
& = -2 \sum_{x \in \set{X}} Q(x) \log \left(
\frac{(\alpha+1) Q(x)}{\alpha Q(x) + P(x)} \right) \\
& = -2 \, D\biggl(Q \, \| \, \frac{\alpha Q + P}{\alpha+1} \biggr)
\leq 0,
\end{align}
which gives \eqref{2nd partial der.}, so
$D_{f_\alpha}(\cdot \| \cdot)$ is concave in $\alpha$
for $\alpha \geq \mathrm{e}^{-\frac32}$. Differentiation of
both sides of \eqref{v2: 2nd derivative} gives that
\begin{align}
\label{3rd derivative}
\frac{\partial^3 f_\alpha(t)}{\partial \alpha^3} &=
2 \left(\frac1{\alpha+t} - \frac1{\alpha+1} \right) \log \mathrm{e},
\end{align}
which implies that
\begin{align}
\frac{\partial^3}{\partial \alpha^3} \bigl\{ D_{f_\alpha}(P\|Q) \bigr\}
&= 2 \log \mathrm{e} \, \sum_{x \in \set{X}} Q(x) \left( \tfrac1{\alpha +
\tfrac{P(x)}{Q(x)}} - \tfrac1{\alpha+1} \right) \\
&= \frac{2 \log \mathrm{e}}{\alpha+1} \left[ \, \sum_{x \in \set{X}}
\frac{Q^2(x)}{\frac{\alpha Q(x) + P(x)}{\alpha+1}} - 1 \right] \\
&= \frac{2 \log \mathrm{e}}{\alpha+1} \cdot \chi^2\biggl(Q \, \|
\, \frac{\alpha Q + P}{\alpha+1} \biggr) \geq 0.
\end{align}
This gives \eqref{3rd partial der.}, and it completes the proof
of Item~\ref{Thm. f-1}).

We next prove Item~\ref{Thm. f-1b}). From Item~\ref{Thm. f-1}),
the result in \eqref{generalized} holds for $n=1, 2, 3$. We provide in the
following a proof of  \eqref{generalized} for all $n \geq 3$.
In view of \eqref{3rd derivative}, it can be verified that for $n \geq 3$,
\begin{align} \label{f_alpha - nth der.}
\frac{\partial^n f_\alpha(t)}{\partial \alpha^n} =
2 (-1)^{n-1} (n-3)! \left[ \frac1{(\alpha+t)^{n-2}}
- \frac1{(\alpha+1)^{n-2}} \right] \log \mathrm{e},
\end{align}
which, from \eqref{def f_alpha div.}, implies that
\begin{align} \label{n-th derivative}
(-1)^{n-1} \frac{\partial^n}{\partial \alpha^n} \bigl\{ D_{f_\alpha}(P\|Q) \bigr\}
= \sum_{x \in \set{X}} Q(x) \, g_{\alpha, n} \left(\frac{P(x)}{Q(x)}\right)
\end{align}
with
\begin{align} \label{g}
g_{\alpha, n}(t) &:= (-1)^{n-1} \, \frac{\partial^n f_\alpha(t)}{\partial \alpha^n} \\
&= 2 (n-3)! \left[ \frac1{(\alpha+t)^{n-2}} - \frac1{(\alpha+1)^{n-2}} \right] \log \mathrm{e},
\quad t>0.
\end{align}
The function $g_{\alpha, n} \colon (0, \infty) \to \Reals$ is convex for
$n \geq 3$, with $g_{\alpha, n}(1) = 0$. By referring to the
$f$-divergence $D_{g_{\alpha, n}}(\cdot \| \cdot)$, its non-negativity and
\eqref{n-th derivative} imply that for all $n \geq 3$
\begin{align}
(-1)^{n-1} \frac{\partial^n}{\partial \alpha^n} \bigl\{ D_{f_\alpha}(P\|Q) \bigr\}
& = D_{g_{\alpha, n}}(P\|Q) \geq 0.
\end{align}
Furthermore, we get the following explicit formula for $n$-th partial derivative
of $D_{f_\alpha}(P\|Q)$ with respect to $\alpha$ for $n \geq 3$:
\begin{align}
& \frac{\partial^n}{\partial \alpha^n} \bigl\{ D_{f_\alpha}(P\|Q) \bigr\} \nonumber \\
\label{RD1}
&= (-1)^{n-1} \, \sum_{x \in \set{X}} Q(x) \, g_{\alpha, n} \left(\frac{P(x)}{Q(x)}\right) \\
\label{RD2}
&= \frac{2 (-1)^{n-1} (n-3)! \, \log \mathrm{e}}{(\alpha+1)^{n-2}}
\left[ \sum_{x \in \set{X}} \left\{ Q(x) \left( \tfrac{\alpha+1}{\alpha + \tfrac{P(x)}{Q(x)}}
\right)^{n-2} \right\} - 1 \right] \\[0.1cm]
\label{RD3}
&= \frac{2 (-1)^{n-1} (n-3)! \, \log \mathrm{e}}{(\alpha+1)^{n-2}}
\left[ \, \sum_{x \in \set{X}} \frac{Q^{n-1}(x)}{\left( \frac{\alpha Q(x)+P(x)}{\alpha+1}
\right)^{n-2}} - 1 \right] \\[0.1cm]
\label{RD4}
&= \frac{2 (-1)^{n-1} (n-3)! \, \log \mathrm{e}}{(\alpha+1)^{n-2}}
\left[ \exp\biggl( (n-2) \, D_{n-1}\Bigl(Q \, \| \, \tfrac{\alpha Q + P}{\alpha+1} \Bigr)
\biggr) -1 \right]
\end{align}
where \eqref{RD1} holds due to \eqref{n-th derivative}; \eqref{RD2} follows from \eqref{g},
and \eqref{RD4} is satisfied by the definition of the R\'{e}nyi divergence \cite{Renyientropy}
which is given by
\begin{align} \label{def: RD}
D_\beta(P\|Q) := \frac1{\beta-1} \, \log \left( \sum_{x \in \set{X}} P^\beta(x) \, Q^{1-\beta}(x) \right),
\quad \forall \, \beta \in (0,1) \cup (1, \infty)
\end{align}
with $D_1(P\|Q) := D(P\|Q)$ by continuous extension of $D_\beta(\cdot \| \cdot)$ at $\beta=1$.
For $n=3$, the right side of \eqref{RD4} is simplified to the right side of $\eqref{3rd partial der.}$;
this holds due to the identity
\begin{align} \label{RD2-chi2}
D_2(P\|Q) = \log \bigl(1 + \chi^2(P\|Q) \bigr).
\end{align}

To prove Item~\ref{Thm. f-2}), from \eqref{f_alpha}, for all $t \geq 0$
\begin{align}
& f_\alpha'(t) = 2(\alpha+t) \log(\alpha+t) + (\alpha+t) \log \mathrm{e}, \\
& f_\alpha''(t) = 2 \log(\alpha+t) + 3 \log \mathrm{e},  \label{2nd der. f} \\
& f_\alpha^{(3)}(t) = \tfrac{2 \log \mathrm{e}}{\alpha+t}, \label{3rd der. f}
\end{align}
which implies by a Taylor series expansion of $f_\alpha(\cdot)$ that
\begin{align} \label{Taylor-3rd}
f_\alpha(t) = f_\alpha(1) + f'_\alpha(1) (t-1) + \tfrac12 f_\alpha''(1) (t-1)^2
+ \tfrac16 f_\alpha^{(3)}(\xi) (t-1)^3,
\quad \forall \, t \geq 0
\end{align}
where $\xi$ in the right side of \eqref{Taylor-3rd} is an intermediate value between~1
and~$t$. Hence, for $t \geq 0$,
\begin{align}
\label{LB0 on f}
f_\alpha(t) & \geq f'_\alpha(1) (t-1) + \tfrac12 f_\alpha''(1) (t-1)^2 +
\tfrac16 f_\alpha^{(3)}(0) (t-1)^3 \, 1\{ t \in [0,1]\} \\
\label{LB1 on f}
& \geq f'_\alpha(1) (t-1) + \bigl( \tfrac12 f_\alpha''(1) -
\tfrac16 f_\alpha^{(3)}(0) \bigr) (t-1)^2 \\
\label{LB on f}
& = f'_\alpha(1) (t-1) +  k(\alpha) \, (t-1)^2
\end{align}
where \eqref{LB0 on f} follows from \eqref{Taylor-3rd} since $f_\alpha(1)=0$ and
$f_\alpha^{(3)}(\cdot)$ is monotonically decreasing and positive (see \eqref{3rd der. f});
$1\{t \in [0,1]\}$ in the right side of \eqref{LB0 on f} denotes the indicator function
which is equal to~1 if the relation $t \in [0,1]$ holds, and it is otherwise equal to zero;
\eqref{LB1 on f} holds since $(t-1)^3 \, 1\{ t \in [0,1]\} \geq -(t-1)^2$ for all $t \geq 0$,
and $f_\alpha^{(3)}(0) > 0$; finally, \eqref{LB on f} follows by substituting \eqref{2nd der. f}
and \eqref{3rd der. f} into the right side of \eqref{LB1 on f}, which gives the equality
\begin{align}
\tfrac12 f_\alpha''(1) - \tfrac16 f_\alpha^{(3)}(0) = k(\alpha)
\end{align}
with $k(\cdot)$ as defined in \eqref{k_alpha}.
Since the first term in the right side of \eqref{LB on f} does not affect an $f$-divergence
(as it is equal to $c \, (t-1)$ for $t \geq 0$ and some constant $c$), and for an arbitrary positive
constant $k > 0$ and $g(t) := (t-1)^2$ for $t \geq 0$, we get $D_{kg}(P\|Q) = k \, \chi^2(P\|Q)$,
inequality \eqref{Df-chi2} follows from \eqref{LB0 on f} and \eqref{LB on f}. To that end,
note that $k = k(\alpha)$ defined in \eqref{k_alpha}
is monotonically increasing in $\alpha$, and therefore
$k(\alpha) \geq k(\mathrm{e}^{-\tfrac32}) > 0.2075$ for all $\alpha \geq \mathrm{e}^{-\tfrac32}$.
Due to the inequality (see, e.g., \cite[Theorem~5]{GibbsSu02}, followed by refined versions
in \cite[Theorem~20]{ISSV16} and \cite[Theorem~9]{Simic15})
\begin{align}
D(P\|Q) \leq \log\bigl(1+\chi^2(P\|Q)\bigr),
\end{align}
the looser lower bound on $D_{f_\alpha}(P\|Q)$ in the right side of \eqref{Df-KL}, expressed
as a function of the relative entropy $D(P\|Q)$, follows from \eqref{Df-chi2}. Hence,
if $P$ and $Q$ are not identical, then \eqref{lim Infty} follows from \eqref{Df-chi2}
since $\chi^2(P\|Q) > 0$ and $\underset{\alpha \to \infty}{\lim} k(\alpha) = \infty$.

We next prove Item~\ref{Thm. f-2b}). The Taylor series expansion of $f_\alpha(\cdot)$
implies that for all $t \geq 0$
\begin{align} \label{Taylor-4th}
\hspace*{-0.3cm} f_\alpha(t) = f_\alpha(1) + f'_\alpha(1) (t-1) + \tfrac12 f_\alpha''(1) (t-1)^2
+ \tfrac16 f_\alpha^{(3)}(1) (t-1)^3 + \tfrac1{24} f_\alpha^{(4)}(\xi) (t-1)^4
\end{align}
where $\xi$ in the right side of \eqref{Taylor-4th} is an intermediate value between~1
and~$t$. Consequently, since $f_\alpha^{(4)}(\xi) = -\tfrac{2 \log \mathrm{e}}{(\alpha+\xi)^2} < 0$
and $f_\alpha(1)=0$, it follows from \eqref{Taylor-4th} that, for all $t \geq 0$,
\begin{align}
& f_\alpha(t) \nonumber \\
\label{UB f_alpha1}
& \leq f'_\alpha(1) (t-1) + \tfrac12 f_\alpha''(1) (t-1)^2
+ \tfrac16 f_\alpha^{(3)}(1) (t-1)^3 \\
\label{UB f_alpha2}
& = f'_\alpha(1) (t-1) + \tfrac12 f_\alpha''(1) (t-1)^2
+ \tfrac16 f_\alpha^{(3)}(1) \bigl[t^3 - 3(t-1)^2 - 3(t-1) - 1] \\
\label{UB f_alpha3}
& = \bigl[ f'_\alpha(1) - \tfrac12 f_\alpha^{(3)}(1) \bigr] (t-1)
+ \tfrac12 \bigl[f_\alpha''(1) - f_\alpha^{(3)}(1) \bigr] (t-1)^2
+ \tfrac16 f_\alpha^{(3)}(1) \, (t^3-1).
\end{align}
Based on \eqref{UB f_alpha1}--\eqref{UB f_alpha3}, it follows that
\begin{align}
D_{f_\alpha}(P\|Q) & \leq \tfrac12 \bigl[f_\alpha''(1) - f_\alpha^{(3)}(1) \bigr] \chi^2(P\|Q)
+ \tfrac16 f_\alpha^{(3)}(1) \,
\sum_{x \in \set{X}} \Biggl\{ Q(x) \biggl[ \biggl(\frac{P(x)}{Q(x)}\biggr)^3 - 1 \biggr] \Biggr\}
\nonumber \\
& = \tfrac12 \bigl[f_\alpha''(1) - f_\alpha^{(3)}(1) \bigr] \chi^2(P\|Q)
+ \tfrac16 f_\alpha^{(3)}(1) \Biggl( -1 + \sum_{x \in \set{X}} \frac{P^3(x)}{Q^2(x)} \Biggr) \\[0.1cm]
\label{UB f_alpha4}
& = \tfrac12 \bigl[f_\alpha''(1) - f_\alpha^{(3)}(1) \bigr] \chi^2(P\|Q)
+ \tfrac16 f_\alpha^{(3)}(1) \Bigl[ \exp\bigl(2 D_3(P\|Q)\bigr) - 1 \Bigr],
\end{align}
where \eqref{UB f_alpha4} holds due to \eqref{def: RD} (with $\beta=3$). Substituting
\eqref{2nd der. f} and \eqref{3rd der. f} into the right side of \eqref{UB f_alpha4}
gives \eqref{Df_UB}.

We next prove Item~\ref{Thm. f-2c}).
Let $P$ and $Q$ be probability mass functions such that $D_3(P\|Q) < \infty$,
and let $\varepsilon > 0$ be arbitrarily small. Since the R\'{e}nyi divergence
$D_{\alpha}(P\|Q)$ is monotonically non-decreasing in $\alpha>0$ (see
\cite[Theorem~3]{ErvenH14}), it follows that $D_2(P\|Q) < \infty$, and therefore also
\begin{align}
\label{D2-chi^2}
\chi^2(P\|Q) = \exp\bigl(D_2(P\|Q)\bigr)-1 < \infty.
\end{align}
In view of \eqref{Df-chi2}, there exists
$\alpha_1 := \alpha_1(P,Q,\varepsilon)$ such that for all $\alpha > \alpha_1$
\begin{align}
D_{f_\alpha}(P\|Q) > \bigl(\log (\alpha+1)
+ \tfrac32 \log \mathrm{e} \bigr) \, \chi^2(P\|Q) - \varepsilon,
\end{align}
and, from \eqref{Df_UB}, there exists $\alpha_2 := \alpha_2(P,Q,\varepsilon)$
such that for all $\alpha > \alpha_2$
\begin{align}
D_{f_\alpha}(P\|Q) <
\bigl(\log (\alpha+1) + \tfrac32 \log \mathrm{e} \bigr) \, \chi^2(P\|Q) + \varepsilon.
\end{align}
Letting $\alpha^\ast := \max\{\alpha_1, \alpha_2\}$ gives the result in \eqref{asymp.}
for all $\alpha > \alpha^\ast$.

Item~\ref{Thm. f-3}) of Theorem~\ref{thm: f_alpha-divergence} is a direct consequence
of \cite[Lemma~4]{Sason18}, which relies on \cite[Theorem~3]{PardoV03}. Let
$g(t) := (t-1)^2$ for $t \geq 0$ (hence, $D_g(\cdot \| \cdot)$ is the $\chi^2$ divergence).
If a sequence $\{P_n\}$ converges to a probability measure $Q$ in the sense that
the condition in \eqref{eq: 1st condition} is satisfied, and $P_n \ll Q$ for all
sufficiently large $n$, then \cite[Lemma~4]{Sason18} yields
\begin{align}
\label{eq: IS18}
\lim_{n \to \infty} \frac{D_{f_\alpha}(P_n \| Q)}{\chi^2(P_n \| Q)} = \tfrac12 f''_\alpha(1),
\end{align}
which gives \eqref{eq: limit of ratio of f-div} from \eqref{2nd der. f} and \eqref{eq: IS18}.

We next prove Item~\ref{Thm. f-4}). Inequality \eqref{diff Df1} is trivial.
Inequality \eqref{diff Df1b} is obtained as follows:
\begin{align}
\label{diff Df1 - a}
D_{f_\alpha}(P\|Q) - D_{f_\beta}(P\|Q) &= \int_{\beta}^{\alpha}
\frac{\partial}{\partial u} \bigl\{ D_{f_u}(P\|Q) \bigr\} \, \mathrm{d}u \\
\label{diff Df1 - b}
&= \int_{\beta}^{\alpha}
2(u+1) \, D\Bigl( \tfrac{u Q + P}{u+1} \, \| \, Q \Bigr) \, \mathrm{d}u \\
\label{diff Df1 - c}
&\geq \int_{\beta}^{\alpha} 2(u+1) \, \mathrm{d}u \cdot
D\Bigl( \tfrac{\alpha Q + P}{\alpha+1} \, \| \, Q \Bigr) \\
\label{diff Df1 - d}
&= \bigl[(\alpha+1)^2 - (\beta+1)^2 \bigr] \,
D\Bigl( \tfrac{\alpha Q + P}{\alpha+1} \, \| \, Q \Bigr) \\
\label{diff Df1 - e}
&= (\alpha-\beta) (\alpha+\beta+2) \,
D\Bigl( \tfrac{\alpha Q + P}{\alpha+1} \, \| \, Q \Bigr)
\end{align}
where \eqref{diff Df1 - b} follows from \eqref{1st partial der.},
and \eqref{diff Df1 - c} holds since the function
$I \colon [0, \infty) \to [0, \infty)$ given by
\begin{align}
I(u) := D\Bigl( \tfrac{u Q + P}{u+1} \, \| \, Q \Bigr), \quad u \geq 0
\end{align}
is monotonically decreasing in $u$ (note that by increasing the value
of the non-negative variable $u$, the probability mass function
$\tfrac{u Q + P}{u+1}$ gets closer to $Q$). This gives \eqref{diff Df1b}.

For proving inequality \eqref{diff Df2}, we obtain two upper
bounds on $D_{f_\alpha}(P\|Q) - D_{f_\beta}(P\|Q)$ with
$\alpha > \beta \geq \mathrm{e}^{-\frac32}$. For the derivation of the
first bound, we rely on \eqref{1st derivative}.
From \eqref{r_alpha}--\eqref{r_alpha 2},
\begin{align} \label{r,s}
r_\alpha(t) = 2t \log t - s_\alpha(t), \quad t \geq 0
\end{align}
where $s_\alpha \colon (0, \infty) \to \Reals$ is given by
\begin{align}
s_\alpha(t) := 2t \log t - 2(\alpha+t) \log(\alpha+t)
+ (1-t) \log \mathrm{e} + 2(\alpha+1) \log(\alpha+1), \quad t \geq 0,
\end{align}
with the convention that $0 \log 0 = 0$ (by a continuous extension
of $t \log t$ at $t=0$). Since $s_\alpha(1)=0$, and
\begin{align}
s''_\alpha(t) = \frac{2 \alpha}{t(\alpha+t)} > 0, \quad \forall \, t > 0,
\end{align}
which implies that $s_\alpha(\cdot)$ is convex on $(0, \infty)$, we get
\begin{align}
\frac{\partial}{\partial \alpha} \bigl\{ D_{f_\alpha}(P\|Q) \bigr\}
&= D_{r_\alpha}(P\|Q) \label{u div.} \\
&= 2 D(P\|Q) - D_{s_\alpha}(P\|Q) \label{s div.} \\
&\leq 2 D(P\|Q) \label{non-neg.}
\end{align}
where \eqref{u div.} holds due to \eqref{1st derivative} (recall the convexity
of $r_\alpha \colon (0, \infty) \to \Reals$ with $r_\alpha(1)=0$); \eqref{s div.}
holds due to \eqref{r,s} and since $r(t) := t \log t$ for $t>0$ yields
$D_r(P\|Q) = D(P\|Q)$; finally, \eqref{non-neg.} follows from the non-negativity
of the $f$-divergence $D_{s_\alpha}(\cdot \| \cdot)$. Consequently, integration
over the interval $[\beta, \alpha]$ ($\alpha > \beta$) on the left side of
\eqref{u div.} and the right side of \eqref{non-neg.} gives
\begin{align}
\label{diff Df3-a}
D_{f_\alpha}(P\|Q) - D_{f_\beta}(P\|Q) \leq 2 (\alpha-\beta) \, D(P\|Q).
\end{align}

Note that the same reasoning of \eqref{diff Df1 - a}--\eqref{diff Df1 - e}
also implies that
\begin{align}
\label{diff Df3-b}
D_{f_\alpha}(P\|Q) - D_{f_\beta}(P\|Q)
&\leq (\alpha-\beta) (\alpha+\beta+2) \,
D\Bigl( \tfrac{\beta Q + P}{\beta+1} \, \| \, Q \Bigr),
\end{align}
which gives a second upper bound on the left side of \eqref{diff Df3-b}.
Taking the minimal value among the two upper bounds in the right
sides of \eqref{diff Df3-a} and \eqref{diff Df3-b} gives \eqref{diff Df2}
(see Remark~\ref{remark: none UB is better} at the end of the proof of
Theorem~\ref{thm: f_alpha-divergence}).

We finally prove Item~\ref{Thm. f-5}). From \eqref{f_alpha} and
\eqref{14062019a1}, the function $f_\alpha \colon [0, \infty) \to \Reals$
is convex for $\alpha \geq \mathrm{e}^{-\frac32}$ with $f_\alpha(1)=0$,
$f_\alpha(0) = \alpha^2 \log \alpha - (\alpha+1)^2 \log(\alpha+1) \in \Reals$,
and it is also differentiable at~1. It is left to prove that the
function $g_\alpha \colon (0, \infty) \to \Reals$, defined as
$g_\alpha(t) := \frac{f_\alpha(t) - f_\alpha(0)}{t}$ for $t>0$,
is convex. From \eqref{f_alpha}, the function $g_\alpha$ is given explicitly by
\begin{align}
\label{14062019a2}
g_\alpha(t) = \frac{(\alpha+t)^2 \log(\alpha+t) - \alpha^2 \log \alpha}{t}, \quad t>0,
\end{align}
and its second derivative is given by
\begin{align}
\label{14062019a3}
g''_\alpha(t) = \frac{w_\alpha(t)}{t^3}, \quad t>0,
\end{align}
with
\begin{align}
\label{14062019a4}
w_\alpha(t) := 2 \alpha^2 \log\left(1+\frac{t}{\alpha}\right) + t(t-2\alpha) \log \mathrm{e},
\quad t \geq 0.
\end{align}
Since $w_\alpha(0) = 0$, and
\begin{align}
\label{14062019a5}
w'_\alpha(t) = \frac{2t^2 \log \mathrm{e}}{\alpha+t} > 0, \quad \forall \, t>0,
\end{align}
it follows that $w_\alpha(t)>0$ for all $t>0$; hence, from \eqref{14062019a3},
$g''_\alpha(t)>0$ for $t \in (0, \infty)$, which yields the convexity of the
function $g_\alpha(\cdot)$ on $(0, \infty)$ for all $\alpha \geq 0$. This shows
that, for every $\alpha \geq \mathrm{e}^{-\frac32}$, the function
$f_\alpha \colon [0, \infty) \to \Reals$ satisfies all the required conditions
in Theorems~\ref{theorem: contraction coef} and~\ref{theorem: DMS/DMC - ver2}.
We proceed to calculate the function $\kappa_\alpha \colon [0,1) \times
(1, \infty) \to \Reals$ in \eqref{def: kappa}, which corresponds to
$f := f_\alpha$, i.e. (see \eqref{kappa_alpha 1}),
\begin{align}
\label{kappa_alpha 3}
& \kappa_\alpha(\xi_1, \xi_2) =
\sup_{t \in (\xi_1, 1) \cup (1, \xi_2)} z_\alpha(t),
\end{align}
with
\begin{align}
\label{def: z_alpha}
z_\alpha(t) :=
\begin{dcases}
\frac{f_\alpha(t) + f_\alpha'(1) \, (1-t)}{(t-1)^2},
& \quad t \in [0,1) \cup (1, \infty), \\
\tfrac32 \, \log \mathrm{e} + \log(\alpha+1), & \quad t=1,
\end{dcases}
\end{align}
where the definition of $z_\alpha(1)$ is obtained by continuous extension
of the function $z_\alpha(\cdot)$ at $t=1$ (recall that the function
$f_\alpha(\cdot)$ is given in \eqref{f_alpha}). Differentiation shows that
\begin{align}
\label{v_alpha0}
\frac{\partial \, z_\alpha(t)}{\partial t} = \frac{v_\alpha(t)}{(t-1)^4},
\quad t \in [0,1) \cup (1, \infty),
\end{align}
where, for $t \geq 0$,
\begin{align}
\label{v_alpha1}
& v_\alpha(t) := (2 \alpha + t + 1) (t-1)^2 \log \mathrm{e}
- 2(\alpha+1)(\alpha+t)(t-1) \log \frac{\alpha+t}{\alpha+1},
\end{align}
and
\begin{align}
\label{v_alpha2}
& v'_\alpha(t) = (t-1)^2 \log \mathrm{e} + 2 (\alpha+t)(t-1) \log \mathrm{e}
-2(\alpha+1)(2t+\alpha-1) \log \frac{\alpha+t}{\alpha+1}, \\
\label{v_alpha3}
& v''_\alpha(t) = 6(t-1) \log \mathrm{e} + \frac{2(\alpha+1)^2 \log \mathrm{e}}{\alpha+t}
- 4(\alpha+1) \log \frac{\alpha+t}{\alpha+1}, \\
\label{v_alpha4}
& v_\alpha^{(3)}(t) = \frac{2(t-1)(3t+4\alpha+1)}{(\alpha+t)^2}.
\end{align}
From \eqref{v_alpha4}, it follows that $v_\alpha^{(3)}(t) < 0$ if $t \in [0,1)$,
$v_\alpha^{(3)}(1) = 0$, and $v_\alpha^{(3)}(t) > 0$ if $t \in (1,\infty)$.
Since $v''_\alpha(\cdot)$ is therefore monotonically decreasing on
$[0,1]$ and it is monotonically increasing on $[1, \infty)$, \eqref{v_alpha3}
implies that
\begin{align} \label{v_alpha5}
v''_\alpha(t) \geq v''_\alpha(1) = 2(\alpha+1) \log \mathrm{e} > 0, \quad \forall \, t \geq 0.
\end{align}
Since $v'_\alpha(1)=0$ (see \eqref{v_alpha2}), and $v'_\alpha(\cdot)$
is monotonically increasing on $[0, \infty)$, it follows that $v'_\alpha(t)<0$
for all $t \in [0,1)$ and $v'_\alpha(t) > 0$ for all $t>1$. This implies that
$v_\alpha(t) \geq v_\alpha(1) = 0$ for all $t \geq 0$ (see \eqref{v_alpha1});
hence, from \eqref{v_alpha0}, the function $z_\alpha(\cdot)$ is monotonically
increasing on $[0, \infty)$, and it is continuous over this interval (see
\eqref{def: z_alpha}). It therefore follows from \eqref{kappa_alpha 3} that
\begin{align}
\kappa_\alpha(\xi_1, \xi_2) = z_\alpha(\xi_2),
\end{align}
for every $\xi_1 \in [0,1)$ and $\xi_2 \in (1, \infty)$ (independently of $\xi_1$),
which proves \eqref{kappa_alpha 2}.

\begin{remark}
\label{remark: none UB is better}
None of the upper bounds in the right sides of \eqref{diff Df3-a} and
\eqref{diff Df3-b} supersedes the other. For example, if $P$ and $Q$
correspond to $\text{Bernoulli}(p)$ and $\text{Bernoulli}(q)$,
respectively, and $(\alpha, \beta, p, q) = (2,1,\tfrac15,\tfrac25)$,
then the right sides of \eqref{diff Df3-a} and \eqref{diff Df3-b} are,
respectively, equal to $0.264 \log \mathrm{e}$ and $0.156 \log \mathrm{e}$.
If on the other hand $(\alpha, \beta, p, q) = (10,1,\tfrac15,\tfrac25)$,
then the right sides of \eqref{diff Df3-a} and \eqref{diff Df3-b} are,
respectively, equal to $2.377 \log \mathrm{e}$ and $3.646 \log \mathrm{e}$.
\end{remark}

\section{Proof of Theorem~\ref{thm: majorization Df}}
\label{appendix: majorization}

By assumption, $P \prec Q$ where the probability mass functions $P$ and $Q$
are defined on the set $\set{A} := \{1, \ldots, n\}$. The majorization
relation $P \prec Q$ is equivalent to the existence of a doubly-stochastic
transformation $W_{Y|X} \colon \set{A} \to \set{A}$ such that (see
Proposition~\ref{proposition: majorization and DP})
\begin{align} \label{Q to P}
Q \to W_{Y|X} \to P.
\end{align}
(See, e.g., \cite[Theorem~2.1.10]{Bhatia} or \cite[Theorem~2.B.2]{MarshallOA}
or \cite[pp.~195--204]{Steele}). Define
\begin{align} \label{input pmfs}
\set{X} = \set{Y} := \set{A}, \quad P_X := Q, \quad Q_X := U_n.
\end{align}
The probability mass functions given by
\begin{align} \label{output pmfs}
P_Y := P, \quad Q_Y := U_n
\end{align}
satisfy, respectively, relations \eqref{transP} and \eqref{transQ}.
The first one is obvious from \eqref{Q to P}--\eqref{output pmfs};
relation \eqref{transQ} holds due to the fact that
$W_{Y|X} \colon \set{A} \to \set{A}$ is a doubly stochastic
transformation, which implies that for all $y \in \set{A}$
\begin{align}
\sum_{x \in \set{A}} Q_X(x) P_{Y|X}(y|x)
&= \frac1n \sum_{x \in \set{A}} P_{Y|X}(y|x) \\
&= \frac1n = Q_Y(y).
\end{align}
Since (by assumption) $P_X$ and $Q_X$ are supported on $\set{A}$,
relations \eqref{transP} and \eqref{transQ} hold in the setting of
\eqref{Q to P}--\eqref{output pmfs}, and $f \colon (0, \infty) \to \Reals$ is
(by assumption) convex and twice differentiable, it is possible to apply
the bounds in Theorem~\ref{thm: SDPI-IS}~\ref{Th. 1.b}) and~\ref{Th. 1.d}).
To that end, from \eqref{xi1}, \eqref{xi2}, \eqref{input pmfs} and \eqref{output pmfs},
\begin{align} \label{updated xi's}
& \xi_1 = \min_{x \in \set{A}} \frac{Q(x)}{\tfrac1n} = n q_{\min}, \\[0.1cm]
& \xi_2 = \max_{x \in \set{A}} \frac{Q(x)}{\tfrac1n} = n q_{\max},
\end{align}
which, from \eqref{key}, \eqref{DPI1}, \eqref{UB1}, \eqref{input pmfs},
\eqref{output pmfs} and \eqref{updated xi's}, give that
\begin{align}
& e_f(n q_{\min}, n q_{\max})
\left[ \chi^2(Q \| U_n) - \chi^2(P \| U_n) \right] \nonumber \\
& \geq D_f(Q \| U_n) - D_f(P \| U_n) \label{270519a} \\
&\geq c_f(n q_{\min}, n q_{\max})
\left[ \chi^2(Q \| U_n) - \chi^2(P \| U_n) \right] \label{270519b} \\
&\geq 0.
\end{align}
The difference of the $\chi^2$ divergences in the left side of
\eqref{270519a} and the right side of \eqref{270519b} satisfies
\begin{align}
\chi^2(Q \| U_n) - \chi^2(P \| U_n) \nonumber
&= \sum_{x \in \set{A}} \frac{Q^2(x)}{\tfrac1n} -
\sum_{x \in \set{A}} \frac{P^2(x)}{\tfrac1n} \nonumber \\
&= n \bigl( \| Q \|_2^2 - \| P \|_2^2 \bigr), \label{270519c}
\end{align}
and the substitution of \eqref{270519c} into the bounds in
\eqref{270519a} and \eqref{270519b} give the result
in \eqref{UB diff Df - equiprob.} and \eqref{LB diff Df - equiprob.}.

Let $f(t) = (t-1)^2$ for $t>0$. From \eqref{c_f} and \eqref{e_f}, it
yields $c_f(\cdot, \cdot) = e_f(\cdot, \cdot) = 1$. Since $D_f(\cdot \| \cdot)
= \chi^2(\cdot \| \cdot)$, it follows from \eqref{270519c} that the upper
and lower bounds in the left side of \eqref{UB diff Df - equiprob.} and the
right side of \eqref{LB diff Df - equiprob.}, respectively, coincide
for the $\chi^2$-divergence; this therefore yields the tightness of these
bounds in this special case.

We next prove \eqref{bounds on diff. norms}. The following lower bound
on the second-order R\'{e}nyi entropy (a.k.a. the collision entropy)
holds (see \cite[(25)--(27)]{Sason18b}):
\begin{align} \label{collision entropy}
H_2(Q) := -\log \bigl( \| Q \|_2^2 \bigr) \geq \log \frac{4n\rho}{(1+\rho)^2},
\end{align}
where $\frac{q_{\max}}{q_{\min}} \leq \rho$. This gives
\begin{align} \label{270519d}
\| Q \|_2^2 = \exp\bigl(-H_2(Q)\bigr) \leq \frac{(1+\rho)^2}{4n\rho}.
\end{align}
By Cauchy-Schwartz inequality $\| P \|_2^2 \geq \tfrac1n$ which, together
with \eqref{270519d}, give
\begin{align} \label{270519e}
\| Q \|_2^2 - \| P \|_2^2 \leq \frac{(\rho-1)^2}{4n \rho}.
\end{align}
In view of the Schur-concavity of the R\'{e}nyi entropy (see
\cite[Theorem~13.F.3.a.]{MarshallOA}), the assumption $P \prec Q$
implies that
\begin{align} \label{270519f}
H_2(P) \geq H_2(Q),
\end{align}
and an exponentiation of both sides of \eqref{270519f} (see the left-side
equality in \eqref{collision entropy}) gives
\begin{align} \label{270519g}
\| Q \|_2^2 \geq \| P \|_2^2.
\end{align}
Combining \eqref{270519e} and \eqref{270519f} gives \eqref{bounds on diff. norms}.

\section{Proof of Theorem~\ref{thm: LB/UB f-div}}
\label{appendix: LB/UB f-div}

We prove Item~\ref{Thm. 5-a}), showing that the set $\set{P}_n(\rho)$
(with $\rho \geq 1$) is non-empty, convex and compact. Note that
$\set{P}_n(1) = \{U_n\}$ is a singleton, so the claim is trivial for $\rho=1$.

Let $\rho > 1$. The non-emptiness of $\set{P}_n(\rho)$ is trivial since
$U_n \in \set{P}_n(\rho)$. To prove the convexity of $\set{P}_n(\rho)$,
let $P_1, P_2 \in \set{P}_n(\rho)$, and let
$p_{\max}^{(1)}, \, p_{\max}^{(2)}, \, p_{\min}^{(1)}$ and $p_{\min}^{(2)}$
be the (positive) maximal and minimal probability masses of $P_1$ and $P_2$,
respectively. Then, $\frac{p_{\max}^{(1)}}{p_{\min}^{(1)}} \leq \rho$ and
$\frac{p_{\max}^{(2)}}{p_{\min}^{(2)}} \leq \rho$ yield
\begin{align}
\label{290519b1}
\frac{\lambda p_{\max}^{(1)} + (1-\lambda) p_{\max}^{(2)}}{\lambda p_{\min}^{(1)}
+ (1-\lambda) p_{\min}^{(2)}} \leq \rho, \quad \forall \, \lambda \in [0,1].
\end{align}
For every $\lambda \in [0,1]$,
\begin{align}
\label{290519b2}
& \min_{1 \leq i \leq n} \bigl\{ \lambda P_1(i) + (1-\lambda) P_2(i) \bigr\}
\geq \lambda \, p_{\min}^{(1)} + (1-\lambda) \, p_{\min}^{(2)}, \\[0.1cm]
\label{290519b3}
& \max_{1 \leq i \leq n} \bigl\{ \lambda P_1(i) + (1-\lambda) P_2(i) \bigr\}
\leq \lambda \, p_{\max}^{(1)} + (1-\lambda) \, p_{\max}^{(2)}.
\end{align}
Combining \eqref{290519b1}--\eqref{290519b3} implies that
\begin{align}
\frac{\underset{1 \leq i \leq n}{\max} \bigl\{ \lambda P_1(i)
+ (1-\lambda) P_2(i) \bigr\}}{\underset{1 \leq i \leq n}{\min}
\bigl\{ \lambda P_1(i) + (1-\lambda) P_2(i) \bigr\}} \leq \rho,
\end{align}
so $\lambda P_1 + (1-\lambda) P_2 \in \set{P}_n(\rho)$ for
all $\lambda \in [0,1]$. This proves the convexity of $\set{P}_n(\rho)$.

An alternative proof for Item~\ref{Thm. 5-a} relies on the observation
that, for $\rho \geq 1$,
\begin{align}
\set{P}_n(\rho) = \set{P}_n \bigcap \left\{ \bigcap_{i \neq j} \left\{P: \, P(i) - \rho P(j) \leq 0 \right\} \right\},
\end{align}
which yields the convexity and compactness of the set $\set{P}_n(\rho)$ for all $\rho \geq 1$.

The set of probability mass functions $\set{P}_n(\rho)$ is
clearly bounded; for showing its compactness, it is left to show
that $\set{P}_n(\rho)$ is closed. Let $\rho>1$,
and let $\{P^{(m)}\}_{m=1}^{\infty}$ be a sequence of probability
mass functions in $\set{P}_n(\rho)$ which pointwise converges
to $P$ over the finite set $\set{A}_n$. It is required to show that
$P \in \set{P}_n(\rho) \subseteq \set{P}_n$. As a limit of probability
mass functions, $P \in \set{P}_n$, and since by assumption $P^{(m)}
\in \set{P}_n(\rho)$ for all $m \in \naturals$, it follows that
$$
(n-1) \rho p_{\min}^{(m)} + p_{\min}^{(m)}
\geq (n-1) p_{\max}^{(m)} + p_{\min}^{(m)} \geq 1,
$$
which yields $p_{\min}^{(m)} \geq \frac1{(n-1)\rho+1}$ for all
$m$. Since $p_{\max}^{(m)} \leq \rho p_{\min}^{(m)}$ for every
$m$, it follows that also for the limiting probability mass
function $P$ we have $p_{\min} \geq \frac1{(n-1)\rho+1} > 0$,
and $p_{\max} \leq \rho p_{\min}$. This proves that
$P \in \set{P}_n(\rho)$, and therefore $\set{P}_n(\rho)$ is
a closed set.

The result in Item~\ref{Thm. 5-b}) holds in view of
Item~\ref{Thm. 5-a}), and due to the convexity and continuity
of $D_f(P\|Q)$ in $(P,Q) \in \set{P}_n(\rho) \times \set{P}_n(\rho)$
(where $p_{\min}, \, q_{\min} \geq \frac1{(n-1)\rho+1} > 0$).
This implication is justified by the statement that a convex
and continuous function over a non-empty convex and compact
set attains its supremum over this set (see, e.g.,
\cite[Theorem~7.42]{Beck} or
\cite[Theorem~10.1 and Corollary~32.3.2]{Rockafellar96}).

We next prove Item~\ref{Thm. 5-c}). If $Q \in \set{P}_n(\rho)$,
then $\frac1{1+(n-1)\rho} \leq q_{\min} \leq \frac1n$ where the
lower bound on $q_{\min}$ is attained when $Q$ is the probability
mass function with $n-1$ masses equal to $\rho q_{\min}$ and a
single smaller mass equal to $q_{\min}$, and the upper bound is
attained when $Q$ is the equiprobable distribution. For an arbitrary
$Q \in \set{P}_n(\rho)$, let $q_{\min} := \beta$ where $\beta$
can get any value in the interval $\Gamma_n(\rho)$ defined in
\eqref{Gamma interval}. By \cite[Lemma~1]{Sason18b},
$Q \prec Q_\beta$ and $Q_\beta \in \set{P}_n(\rho)$ where
$Q_\beta$ is given in \eqref{Q_beta}. The Schur-convexity
of $D_f( \cdot \| U_n)$ (see \cite[Lemma~1]{CicaleseGV06})
and the identity $D_f(U_n \| \cdot) = D_{f^\ast}(\cdot \| U_n)$
give that
\begin{align}
\label{300519c3}
D_f(Q \| U_n) \leq D_f(Q_\beta \| U_n), \quad D_f(U_n \| Q) \leq D_f(U_n \| Q_\beta)
\end{align}
for all $Q \in \set{P}_n(\rho)$ with $q_{\min} = \beta \in \Gamma_n(\rho)$; furthermore,
equalities hold in \eqref{300519c3} if $Q = Q_\beta \in \set{P}_n(\rho)$.
The maximization of $D_f(Q \| U_n)$ and $D_f(U_n \| Q)$ over all the probability
mass functions $Q \in \set{P}_n(\rho)$ can be therefore simplified
to the maximization of $D_f(Q_\beta \| U_n)$ and $D_f(U_n \| Q_\beta)$,
respectively, over the parameter $\beta$ which lies in the interval
$\Gamma_n(\rho)$ in \eqref{Gamma interval}. This proves \eqref{opt1 Df: beta}
and \eqref{opt2 Df: beta}.

We next prove Item~\ref{Thm. 5-e}), and then prove Item~\ref{Thm. 5-d}).
In view of Item~\ref{Thm. 5-c}), the maximum of $D_f(Q\|U_n)$ over all the
probability mass functions $Q \in \set{P}_n(\rho)$ is attained by $Q = Q_\beta$
with $\beta \in \Gamma_n(\rho)$ (see \eqref{Gamma interval}--\eqref{i_beta}).
From \eqref{Q_beta}, $Q_\beta$ can be expressed as the $n$-length probability vector
\begin{align}
\label{Q_beta vec}
Q_\beta = ( \, \underbrace{\rho \beta, \ldots, \rho \beta}_{i_\beta},
\, 1-(n+i_\beta \rho - i_\beta - 1) \beta, \, \underbrace{\beta, \ldots,
\beta}_{n-i_\beta-1} \,).
\end{align}
The influence of the $(i_\beta+1)$-th entry of the probability vector in
\eqref{Q_beta vec} on $D_f(Q_\beta \| U_n)$ tends to zero as we
let $n \to \infty$. This holds since the entries of the vector in
\eqref{Q_beta vec} are written in decreasing order, which implies that
for all $\beta \in \Gamma_n(\rho)$ (with $\rho \geq 1$)
\begin{align}
\label{300519e2}
n \bigl[1-(n+i_\beta \rho - i_\beta - 1) \bigr] \in
[n \beta, n \rho \beta] \subseteq \Bigl[\tfrac{n}{(n-1)\rho+1}, \rho \Bigr]
\subseteq \bigl[ \tfrac1\rho, \rho \bigr];
\end{align}
from \eqref{300519e2} and the convexity of $f$ on $(0, \infty)$ (so, $f$ attains
its finite maximum on every closed sub-interval of $(0, \infty)$), it follows  that
\begin{align}
& \Bigl| \bigl[ 1-(n+i_\beta \rho - i_\beta - 1) \beta \bigr] \,
f\bigl( n \bigl[1-(n+i_\beta \rho - i_\beta - 1) \bigr] \bigr) \Bigr| \nonumber \\
& \leq \Bigl| \bigl[ 1-(n+i_\beta \rho - i_\beta - 1) \beta \bigr] \Bigr| \,
\max_{u \in \bigl[\frac1\rho, \rho \bigr]} \, \bigl| f(u) \bigr| \nonumber \\
\label{300519e3}
& \leq \frac{\rho}{n} \, \max_{u \in \bigl[\frac1\rho, \rho \bigr]} \, \bigl| f(u) \bigr|
\underset{n \to \infty}{\longrightarrow} 0.
\end{align}
In view of \eqref{Q_beta vec} and \eqref{300519e3}, by letting $n \to \infty$,
the maximization of $D_f(Q_\beta \| U_n)$ over $\beta \in \Gamma_n(\rho)$ can be
replaced by a maximization of $D_f(\widetilde{Q}_m \| U_n)$ where
\begin{align}
\label{300519e4}
\widetilde{Q}_m := ( \, \underbrace{\rho \beta, \ldots, \rho \beta}_m,
\underbrace{\beta, \ldots, \beta}_{n-m} \, ) \in \set{P}_n(\rho)
\end{align}
with the free parameter $m \in \{0, \ldots, n\}$, and with $\beta := \frac1{n+(\rho-1)m}$
(the value of $\beta$ is determined so that the total mass of $\widetilde{Q}_m$ is~1).
Hence, we get
\begin{align}
\label{310519e1}
\lim_{n \to \infty} \max_{\beta \in \Gamma_n(\rho)} D_f(Q_\beta \| U_n)
= \lim_{n \to \infty} \max_{m \in \{0, \ldots, n\}} D_f(\widetilde{Q}_m \| U_n).
\end{align}
The $f$-divergence in the right side of \eqref{310519e1} satisfies
\begin{align}
\label{300519e5}
D_f(\widetilde{Q}_m \| U_n)
&= \frac1n \sum_{i=1}^n f\bigl(n \, \widetilde{Q}_m(i) \bigr) \\
\label{300519e6}
&= \frac{m}{n} \, f\biggl( \frac{\rho n}{n+(\rho-1)m} \biggr) + \biggl(1-\frac{m}{n}\biggr)
\, f\biggl(\frac{n}{n+(\rho-1)m}\biggr) \\
\label{300519e7}
&= g_f^{(\rho)}\Bigl(\frac{m}{n}\Bigr),
\end{align}
where \eqref{300519e7} holds by the definition of the function $g_f^{(\rho)}(\cdot)$
in \eqref{def: g_f}.
It therefore follows that
\begin{align}
& \lim_{n \to \infty} u_f(n, \rho) \nonumber \\
\label{310519e2}
&= \lim_{n \to \infty} \max_{m \in \{0, \ldots, n\}} g_f^{(\rho)}\Bigl(\frac{m}{n}\Bigr) \\
\label{310519e3}
&= \max_{x \in [0,1]} g_f^{(\rho)}(x)
\end{align}
where \eqref{310519e2} holds by combining \eqref{opt1 Df: beta} and
\eqref{310519e1}--\eqref{300519e7}; \eqref{310519e3} holds by the continuity of the
function $g_f^{(\rho)}(\cdot)$ on $[0,1]$, which follows from \eqref{def: g_f} and
the continuity of the convex function $f$ on $\bigl[\tfrac1\rho, \rho \bigr]$ for
$\rho \geq 1$ (recall that a convex function is continuous on every closed sub-interval
of its domain of region, and by assumption $f$ is convex on $(0, \infty)$).
This proves \eqref{asympt. max1}, by the definition of $g_f^{(\rho)}(\cdot)$ in
\eqref{def: g_f}.

Equality \eqref{asympt. max2} follows from \eqref{asympt. max1}
by replacing $g_f^{(\rho)}(\cdot)$ with $g_{f^\ast}^{(\rho)}(\cdot)$, with
$f^\ast \colon (0, \infty) \to \Reals$ as given in \eqref{dual f};
this replacement is justified by the equality $D_f(U_n \| Q) = D_{f^\ast}(Q \| U_n)$.

Once Item~\ref{Thm. 5-e}) is proved, we return to prove Item~\ref{Thm. 5-d}).
To that end, it is first shown that
\begin{align}
\label{monotonicity u_f}
& u_f(n, \rho) \leq u_f(2n, \rho), \\
\label{monotonicity v_f}
& v_f(n, \rho) \leq v_f(2n, \rho),
\end{align}
for all $\rho \geq 1$ and integers $n \geq 2$,
with the functions $u_f$ and $v_f$, respectively, defined in \eqref{def: u_f} and
\eqref{def: v_f}. Since $D_f(P\|Q) = D_{f^\ast}(Q\|P)$ for all $P, Q \in \set{P}_n$,
\eqref{def: u_f} and \eqref{def: v_f} give that
\begin{align}
\label{u_f, v_f}
v_f(n, \rho) = u_{f^\ast}(n, \rho),
\end{align}
so the monotonicity property in \eqref{monotonicity v_f} follows from
\eqref{monotonicity u_f} by replacing $f$ with $f^\ast$. To prove
\eqref{monotonicity u_f}, let $Q^\ast \in \set{P}_n(\rho)$ be a
probability mass function which attains the maximum at the right side
of \eqref{def: u_f}, and let $P^\ast$ be the probability mass function
supported on $\set{A}_{2n} = \{1, \ldots, 2n\}$, and defined as follows:
\begin{align}
\label{P^ast}
P^\ast(i) =
\begin{dcases}
\tfrac12 Q^\ast(i),   & \quad \mbox{if $i \in \{1, \ldots, n\}$}, \\
\tfrac12 Q^\ast(i-n), & \quad \mbox{if $i \in \{n+1, \ldots, 2n\}$}.
\end{dcases}
\end{align}
Since by assumption $Q^\ast \in \set{P}_n(\rho)$, it is easy to verify
from \eqref{P^ast} that $P^\ast \in \set{P}_{2n}(\rho)$. It therefore
follows that
\begin{align}
\label{310519d1}
u_f(2n, \rho) &= \max_{Q \in \set{P}_{2n}(\rho)} D_f(Q \| U_{2n}) \\
\label{310519d2}
&\geq D_f(P^\ast \| U_{2n}) \\
\label{310519d4}
&= \frac1{2n} \left[ \, \sum_{i=1}^n f\bigl(2n P^\ast(i) \bigr)
+ \sum_{i=n+1}^{2n} f\bigl(2n P^\ast(i) \bigr) \right] \\
\label{310519d5}
&= \frac1{n} \sum_{i=1}^n f\bigl(n Q^\ast(i) \bigr) \\
\label{310519d6}
&= D_f(Q^\ast \| U_n) \\
\label{310519d7}
&= \max_{Q \in \set{P}_n(\rho)} D_f(Q \| U_{2n}) \\
\label{310519d8}
&= u_f(n, \rho)
\end{align}
where \eqref{310519d1} and \eqref{310519d8} hold due to \eqref{def: u_f};
\eqref{310519d2} holds since $P^\ast \in \set{P}_{2n}(\rho)$; finally,
\eqref{310519d5} holds due to \eqref{P^ast},
which implies that the two sums in the right side of \eqref{310519d4}
are identical, and they equal to the sum in the right side of \eqref{310519d5}.
This gives \eqref{monotonicity u_f}, and likewise also \eqref{monotonicity v_f}
(see \eqref{u_f, v_f}).
\begin{align}
\label{310519d9}
u_f(n, \rho) &\leq \lim_{k \to \infty} u_f(2^k n, \rho) \\
\label{310519d10}
&= \lim_{n' \to \infty} u_f(n', \rho) \\
\label{310519d11}
&= \max_{x \in [0,1]} g_f^{(\rho)}(x)
\end{align}
where \eqref{310519d9} holds since, due to \eqref{monotonicity u_f},
the sequence $\{u_f(2^k n, \rho)\}_{k=0}^{\infty}$ is monotonically
increasing, which implies that the first term of this sequence is
less than or equal to its limit. Equality~\eqref{310519d10} holds
since the limit in its right side exists (in view of the above
proof of \eqref{asympt. max1}), so its limit coincides with the
limit of every subsequence; \eqref{310519d11} holds due to
\eqref{310519e2} and \eqref{310519e3}. A replacement of $f$ with
$f^\ast$ gives, from \eqref{u_f, v_f}, that
\begin{align}
\label{310519d12}
v_f(n, \rho) \leq \underset{x \in [0,1]}{\max} g_{f^\ast}^{(\rho)}(x).
\end{align}
Combining \eqref{310519d9}--\eqref{310519d12} gives the right-side
inequalities in \eqref{LB/UB u_f} and \eqref{LB/UB v_f}.

The left-side inequality in \eqref{LB/UB u_f} follows by combining
\eqref{def: u_f}, \eqref{300519e4} and \eqref{300519e5}--\eqref{300519e7},
which gives
\begin{align}
u_f(n, \rho) &= \max_{Q \in \set{P}_n(\rho)} D_f(Q \| U_n) \\
&\geq \max_{m \in \{0, \ldots, n\}} D_f(\widetilde{Q}_m \| U_n) \\
&= \max_{m \in \{0, \ldots, n\}} g_f^{(\rho)}\Bigl(\frac{m}{n}\Bigr).
\end{align}
Likewise, in view of \eqref{u_f, v_f}, the left-side inequality in
\eqref{LB/UB v_f} follows from the left-side inequality in \eqref{LB/UB u_f}
by replacing $f$ with $f^\ast$.

We next prove Item~\ref{Thm. 5-f}), providing an upper bound on the
convergence rate of the limit in \eqref{asympt. max1}; an analogous result
can be obtained for the convergence rate to the limit in \eqref{asympt. max2}
by replacing $f$ with $f^\ast$ in \eqref{dual f}. To prove \eqref{convergence rate 1},
in view of Items~\ref{Thm. 5-d}) and~\ref{Thm. 5-e}), we get that for
every integer $n \geq 2$
\begin{align}
\label{02062019a1}
0 & \leq \lim_{n' \to \infty} \left\{ u_f(n', \rho) \right\} - u_f(n, \rho) \\
\label{02062019a2}
& \leq \max_{x \in [0,1]} g_f^{(\rho)}(x) - \max_{m \in \{0, \ldots, n\}}
g_f^{(\rho)}\Bigl(\frac{m}{n}\Bigr) \\
\label{02062019a3}
&= \max_{x \in [0,1]} g_f^{(\rho)}(x) - \max_{m \in \{0, \ldots, n-1\}}
g_f^{(\rho)}\Bigl(\frac{m}{n}\Bigr) \\
\label{02062019a4}
&= \max_{m \in \{0, \ldots, n-1\}} \Biggl\{ \max_{x \in \bigl[
\frac{m}{n}, \frac{m+1}{n} \bigr]} g_f^{(\rho)}(x) \Biggr\}
- \max_{m \in \{0, \ldots, n-1\}} g_f^{(\rho)}\Bigl(\frac{m}{n}\Bigr) \\
\label{02062019a5}
&\leq \max_{m \in \{0, \ldots, n-1\}} \Biggl\{ \max_{x \in
\bigl[ \frac{m}{n}, \frac{m+1}{n} \bigr]} \Bigl\{ g_f^{(\rho)}(x)
- g_f^{(\rho)}\Bigl(\frac{m}{n}\Bigr) \Bigr\} \Biggr\}
\end{align}
where \eqref{02062019a1} holds due to monotonicity property in \eqref{monotonicity u_f},
and also due to the existence of the limit of $\{u_f(n', \rho)\}_{n' \in \naturals}$;
\eqref{02062019a2} holds due to \eqref{LB/UB u_f};
\eqref{02062019a3} holds since the function $g_f^{(\rho)} \colon [0,1] \to \Reals$
(as defined in \eqref{def: g_f}) satisfies $g_f^{(\rho)}(1)=g_f^{(\rho)}(0)= 0$ (recall
that by assumption $f(1)=0$); \eqref{02062019a4} holds since
$[0, 1] = \overset{n-1}{\underset{m=1}{\bigcup}} \bigl[\frac{m}{n}, \frac{m+1}{n}\bigr]$,
so the maximization of $g_f^{(\rho)}(\cdot)$ over the interval $[0,1]$ is the maximum over
the maximal values over the sub-intervals $\bigl[\frac{m}{n}, \frac{m+1}{n}\bigr]$ for
$m \in \{0, \ldots, n-1\}$; finally, \eqref{02062019a5} holds since the
maximum of a sum of functions is less than or equal to the sum of the maxima of these
functions.
If the function $g_f^{(\rho)} \colon [0,1] \to \Reals$ is differentiable on $(0,1)$, and
its derivative is upper bounded by $K_f(\rho) \geq 0$, then by the mean value theorem of
Lagrange, for every $m \in \{0, \ldots, n-1\}$,
\begin{align}
\label{02062019a6}
g_f^{(\rho)}(x) - g_f^{(\rho)}\Bigl(\frac{m}{n}\Bigr) \leq \frac{K_f(\rho)}{n}, \quad
\forall \, x \in \left[ \frac{m}{n}, \frac{m+1}{n} \right].
\end{align}
Combining \eqref{02062019a1}--\eqref{02062019a6} gives \eqref{convergence rate 1}.

We next prove Item~\ref{Thm. 5-g}). By definition, it readily follows that
$\set{P}_n(\rho_1) \subseteq \set{P}_n(\rho_2)$ if $1 \leq \rho_1 < \rho_2$.
By the definition in \eqref{def: u_f}, for a fixed integer $n \geq 2$,
it follows that the function $u_f(n, \cdot)$ is monotonically increasing
on $[1, \infty)$. The limit in the left side of \eqref{asympt. max3} therefore
exists. Since $D_f(Q \| U_n)$ is convex in $Q$, its maximum over the convex
set of probability mass functions $Q \in \set{P}_n$ is obtained at one of the
vertices of the simplex $\set{P}_n$. Hence, a maximum of $D_f(Q \| U_n)$
over this set is attained at $Q^\ast = (q_1^\ast, \ldots, q_n^\ast)$
with $q_i^\ast = 1$ for some $i \in \{1, \ldots, n\}$, and $q_j^\ast = 0$
for $j \neq i$. In the latter case,
\begin{align}
D_f(Q^\ast \| U_n) = \frac1n \sum_{k=1}^n f(n q_k^\ast)
= \frac1n \, \bigl[ (n-1) f(0) + f(n) \bigr].
\end{align}
Note that $Q^\ast \notin \underset{\rho \geq 1} \bigcup \set{P}_n(\rho)$
(since the union of $\{\set{P}_n(\rho)\}$, for all $\rho \geq 1$, includes
all the probability mass functions in $\set{P}_n$ which are {\em supported}
on $\set{A}_n = \{1, \ldots, n\}$, so $Q^\ast \in \set{P}_n$ is not an
element of this union); hence, it follows that
\begin{align}
\label{02062019b1}
\lim_{\rho \to \infty} u_f(n, \rho)
\leq \biggl(1 - \frac1n\biggr) f(0) + \frac{f(n)}{n}.
\end{align}
On the other hand, for every $\rho \geq 1$,
\begin{align}
\label{02062019b2}
u_f(n, \rho) &
\geq g_f^{(\rho)}\biggl(\frac1n\biggr) \\
\label{02062019b4}
&= \frac1n \, f\left(\frac{\rho n}{n+\rho-1}\right)
+ \left(1-\frac1n\right) \, f\left(\frac{n}{n+\rho-1}\right)
\end{align}
where \eqref{02062019b2} holds due to the left-side inequality
of \eqref{LB/UB u_f},
and \eqref{02062019b4} is due to \eqref{def: g_f}. Combining
\eqref{02062019b2}--\eqref{02062019b4}, and the continuity of
$f$ at zero (by the continuous extension of the convex function
$f$ at zero), yields (by letting $\rho \to \infty$)
\begin{align}
\label{02062019b5}
\lim_{\rho \to \infty} u_f(n, \rho)
\geq \biggl(1 - \frac1n\biggr) f(0) + \frac{f(n)}{n}.
\end{align}
Combining \eqref{02062019b1} and \eqref{02062019b5} gives \eqref{asympt. max3}
for every integer $n \geq 2$. In order to get an upper bound on the convergence
rate in \eqref{asympt. max3}, suppose that $f(0) < \infty$, $f$ is differentiable
on $(0,n)$, and $ K_n := \underset{t \in (0,n)}{\sup} \, \bigl| f'(t) \bigr| < \infty$.
For every $\rho \geq 1$, we get
\begin{align}
\label{02062019c1}
0 & \leq \lim_{\rho' \to \infty} \left\{ u_f(n, \rho') \right\} - u_f(n, \rho) \\[0.1cm]
\label{02062019c2}
& \leq \frac1n \biggl[ f(n) - f\biggl( \frac{\rho n}{n+\rho-1} \biggr) \biggr]
+ \biggl(1 - \frac1n \biggr) \left[f(0) - f\biggl(\frac{n}{n+\rho-1}\biggr) \right] \\[0.1cm]
\label{02062019c3}
& \leq \frac{K_n}{n} \left(n - \frac{\rho n}{n+\rho-1} \right) +
\left(1-\frac1n\right) \frac{K_n \, n}{n+\rho-1} \\[0.1cm]
\label{02062019c4}
&= \frac{2K_n \; (n-1)}{n+\rho-1},
\end{align}
where \eqref{02062019c1} holds since the sets $\{\set{P}_n(\rho)\}_{\rho \geq 1}$
are monotonically increasing in $\rho$; \eqref{02062019c2} follows from
\eqref{02062019b1}--\eqref{02062019b4}; \eqref{02062019c3} holds by the
assumption that $\bigl|f'(t)\bigr| \leq K_n$ for all $t \in (0,n)$, by the
mean value theorem of Lagrange, and since
$0 < \frac{n}{n+\rho-1} \leq \frac{\rho n}{n+\rho-1} \leq n$
for all $\rho \geq 1$ and $n \in \naturals$. This proves \eqref{convergence rate 2}.

We next prove Item~\ref{Thm. 5-h}). Setting $P := U_n$ yields $P \prec Q$
for every probability mass function $Q$ which is supported on $\{1, \ldots, n\}$.
Since $q_{\min} + (n-1) q_{\max} \geq 1$ and also $(n-1) q_{\min} + q_{\max} \leq 1$,
and since by assumption $\frac{q_{\max}}{q_{\min}} \leq \rho$, it follows that
\begin{align}
\label{intervals}
[nq_{\min}, \, nq_{\max}] &
\subseteq \biggl[ \frac{n}{1+(n-1)\rho}, \; \frac{\rho n}{n-1+\rho} \biggr]
\subseteq \biggl[ \frac1\rho, \, \rho \biggr].
\end{align}
Combining the assumption in \eqref{f'' bounded} with \eqref{intervals}
implies that
\begin{align} \label{bounded f''}
m \leq f''(t) \leq M, \quad \forall \, t \in [nq_{\min}, \, nq_{\max}].
\end{align}
Hence, \eqref{c_f}, \eqref{e_f} and \eqref{bounded f''} yield
\begin{align}
\label{bounds c_f and e_f}
\tfrac12 \, m \leq c_f(n q_{\min}, n q_{\max}) \leq
e_f(n q_{\min}, n q_{\max}) \leq \tfrac12 \, M.
\end{align}
The lower bound on $D_f(Q \| U_n)$ in the left side of \eqref{LB Df-m} follows
from a combination of \eqref{LB diff Df - equiprob.}, the left-side inequality
in \eqref{bounds c_f and e_f}, and $\|P\|_2^2 = \frac1n$. Similarly, the
upper bound on $D_f(Q \| U_n)$ in the right side of \eqref{UB0 Df-M} follows
from a combination of \eqref{UB diff Df - equiprob.}, the right-side inequality
in \eqref{bounds c_f and e_f}, and the equality $\|P\|_2^2 = \frac1n$.
The looser upper bound on $D_f(Q \| U_n)$ in the right side of \eqref{UB Df-M},
expressed as a function of $M$ and $\rho$, follows by combining
\eqref{UB diff Df - equiprob.}, \eqref{bounds on diff. norms}, and the right-side
inequality in \eqref{bounds c_f and e_f}.

The tightness of the lower bound in the left side of \eqref{LB Df-m}
and the upper bound in the right side of \eqref{UB0 Df-M} for
the $\chi^2$ divergence is clear from the fact that $M=m=2$
if $f(t)=(t-1)^2$ for all $t>0$; in this case,
$\chi^2(Q \| U_n) = n \| Q \|_2^2 - 1$.

To prove Item~\ref{Thm. 5-i}), suppose that the second derivative
of $f$ is upper bounded on $(0, \infty)$ with $f''(t) \leq M_f
\in (0, \infty)$ for all $t>0$, and there is a need to assert
that $D_f(Q \| U_n) \leq d$ for an arbitrary $d>0$. Condition
\eqref{UB rho} follows from \eqref{UB Df-M} by solving the inequality
$\frac{M_f \, (\rho-1)^2}{8 \rho} \leq d$,
with the variable $\rho \geq 1$, for given $d > 0$ and
$M_f > 0$ (note that $M_f$ does not depend on $\rho$).

\section{Proof of Theorem~\ref{thm: bounds Tsallis}}
\label{appendix: bounds Tsallis}

The proof of Theorem~\ref{thm: bounds Tsallis} relies on Theorem~\ref{thm: majorization Df}.
For an arbitrary $\alpha \in (0,1) \cup (1, \infty)$, let $u_\alpha \colon (0,\infty) \to \Reals$
be the non-negative and convex function given by (see, e.g., \cite[(2.1)]{LieseV_book87}
or \cite[(17)]{LieseV_IT2006})
\begin{align}
\label{02062019d1}
u_\alpha(t) := \frac{t^\alpha - \alpha (t-1) - 1}{\alpha (\alpha-1)}, \quad t > 0,
\end{align}
and let $u_1 \colon (0, \infty) \to \Reals$ be the convex function given by
\begin{align}
\label{02062019d2}
u_1(t) := \lim_{\alpha \to 1} u_\alpha(t) = t \log_\mathrm{e} t + 1-t, \quad t > 0.
\end{align}
Let $P$ and $Q$ be probability mass functions which are supported on a finite set;
without loss of generality, let their support be given by
$\set{A}_n := \{1, \ldots, n\}$. Then,
\begin{align}
& D_{u_\alpha}(Q \| U_n) - D_{u_\alpha}(P \| U_n) \nonumber \\[0.1cm]
& = \frac1n \sum_{i=1}^n u_\alpha\bigl(n Q(i) \bigr) -
\frac1n \sum_{i=1}^n u_\alpha\bigl(n P(i) \bigr) \nonumber \\[0.1cm]
& = \frac{n^{\alpha-1}}{\alpha (\alpha-1)} \left[ \,
\sum_{i=1}^n Q^\alpha(i) - \sum_{i=1}^n P^\alpha(i) \right] \nonumber \\[0.1cm]
\label{02062019d3}
& = \frac{n^{\alpha-1} \bigl[S_\alpha(P) - S_\alpha(Q) \bigr]}{\alpha},
\end{align}
where
\begin{align}
\label{Tsallis entropy}
S_\alpha(P) :=
\begin{dcases}
\frac1{1-\alpha} \left( \sum_{i=1}^n P^\alpha(i) - 1 \right),
& \quad \alpha \in (0,1) \cup (1, \infty), \\
-\sum_{i=1}^n  P(i) \, \log_{\mathrm{e}} P(i), & \quad \alpha = 1.
\end{dcases}
\end{align}
designates the order-$\alpha$ Tsallis entropy of a probability mass $P$
defined on the set $\set{A}_n$.
Equality \eqref{02062019d3} also holds for $\alpha=1$ by continuous extension.

In view of \eqref{c_f} and \eqref{e_f}, since $u_\alpha''(t) = t^{\alpha-2}$
for all $t > 0$, it follows that
\begin{align}
\label{02062019d4}
c_{u_\alpha}(n q_{\min}, \, n q_{\max}) &=
\begin{dcases}
\tfrac12 \, n^{\alpha-2} \, q_{\max}^{\alpha-2},
& \quad \mbox{if $\alpha \in (0,2]$}, \\[0.1cm]
\tfrac12 \, n^{\alpha-2} \, q_{\min}^{\alpha-2},
& \quad \mbox{if $\alpha \in (2, \infty)$},
\end{dcases}
\end{align}
and
\begin{align}
\label{02062019d5}
e_{u_\alpha}(n q_{\min}, \, n q_{\max}) &=
\begin{dcases}
\tfrac12 \, n^{\alpha-2} \, q_{\min}^{\alpha-2},
& \quad \mbox{if $\alpha \in (0,2]$}, \\[0.1cm]
\tfrac12 \, n^{\alpha-2} \, q_{\max}^{\alpha-2},
& \quad \mbox{if $\alpha \in (2, \infty)$}.
\end{dcases}
\end{align}
The combination of \eqref{UB diff Df - equiprob.} and \eqref{LB diff Df - equiprob.}
under the assumption that $P$ and $Q$ are supported on $\set{A}_n$ and
$P \prec Q$, together with \eqref{02062019d3}, \eqref{02062019d4} and
\eqref{02062019d5} gives \eqref{bounds Tsallis}--\eqref{UB Tsallis}.
Furthermore, the left and right-side inequalities in \eqref{bounds Tsallis}
hold with equality if $c_{u_\alpha}(\cdot, \cdot)$ in \eqref{02062019d4}
and $e_{u_\alpha}(\cdot, \cdot)$ in \eqref{02062019d5} coincide, which implies that
the upper and lower bounds in \eqref{UB diff Df - equiprob.} and
\eqref{LB diff Df - equiprob.} are tight in that case. Comparing
$c_{u_\alpha}(\cdot, \cdot)$ in \eqref{02062019d4} and $e_{u_\alpha}(\cdot, \cdot)$
in \eqref{02062019d5} shows that they coincide if $\alpha=2$.

To prove Item~\ref{Thm. 6.b}) of Theorem~\ref{thm: bounds Tsallis},
let  $P_\varepsilon$ and $Q_\varepsilon$ be probability mass functions
supported on $\set{A} = \{0, 1\}$ where
$P_\varepsilon(0) = \tfrac12 + \varepsilon$,
$Q_\varepsilon(0) = \tfrac12 + \beta \varepsilon$,
and $\beta > 1$ and $0 < \varepsilon < \frac1{2 \beta}$. This yields
$P_\varepsilon \prec Q_\varepsilon$. The result in \eqref{inf,sup}
is proved by showing that, for all $\alpha > 0$,
\begin{align}
\label{limit 1 - Tsallis}
& \lim_{\varepsilon \to 0^+} \frac{S_\alpha(P_\varepsilon) -
S_\alpha(Q_\varepsilon)}{L(\alpha, P_\varepsilon, Q_\varepsilon)} = 1, \\[0.1cm]
\label{limit 2 - Tsallis}
& \lim_{\varepsilon \to 0^+} \frac{S_\alpha(P_\varepsilon) -
S_\alpha(Q_\varepsilon)}{U(\alpha, P_\varepsilon, Q_\varepsilon)} = 1,
\end{align}
which shows that the infimum and supremum in \eqref{inf,sup} can be even
restricted to the binary alphabet setting.
For every $\alpha \in (0,1) \cup (1, \infty)$,
\begin{align}
S_\alpha(P_\varepsilon) - S_\alpha(Q_\varepsilon)
& = \frac1{1-\alpha} \left( \sum_i P_\varepsilon^\alpha(i)
- \sum_i Q_\varepsilon^\alpha(i) \right) \nonumber \\
& = \frac1{1-\alpha} \Biggl[ \left(\tfrac12 + \varepsilon \right)^\alpha
+ \left(\tfrac12 - \varepsilon \right)^\alpha
- \left(\tfrac12 + \beta \varepsilon \right)^\alpha
- \left(\tfrac12 - \beta \varepsilon \right)^\alpha \Biggr] \nonumber \\
\label{03062019a1}
& = \alpha 2^{2-\alpha} (\beta^2 - 1) \varepsilon^2 + O\bigl(\varepsilon^4\bigr),
\end{align}
where \eqref{03062019a1} follows from a Taylor series expansion
around $\varepsilon=0$, and the passage in the limit where $\alpha \to 1$ shows
that \eqref{03062019a1} also holds at $\alpha=1$ (due to the continuous extension
of the order-$\alpha$ Tsallis entropy at $\alpha=1$). This implies that
\eqref{03062019a1} holds for all $\alpha>0$. We now calculate the lower and
upper bounds on $S_\alpha(P_\varepsilon) - S_\alpha(Q_\varepsilon)$ in
\eqref{LB Tsallis} and \eqref{UB Tsallis}, respectively.
\begin{enumerate}[1)]
\item
For $\alpha \in (0,2]$,
\begin{align}
L(\alpha, P_\varepsilon, Q_\varepsilon)
& = \tfrac12 \, \alpha q_{\max}^{\alpha-2} \, \bigl( \| Q_\varepsilon \|_2^2
- \| P_\varepsilon \|_2^2 \bigr) \nonumber \\
& = \tfrac12 \, \alpha \left(\tfrac12 + \beta \varepsilon \right)^{\alpha-2}
\left[ \left( \tfrac12 + \beta \varepsilon \right)^2 + \left( \tfrac12
- \beta \varepsilon \right)^2 - \left( \tfrac12 + \varepsilon \right)^2
- \left( \tfrac12 - \varepsilon \right)^2 \right] \nonumber \\
\label{03062019a2}
& = \alpha 2^{2-\alpha} (\beta^2-1) (1+2\beta \varepsilon)^{\alpha-2}.
\end{align}

\item
For $\alpha \in (2, \infty)$,
\begin{align}
L(\alpha, P_\varepsilon, Q_\varepsilon)
& = \tfrac12 \, \alpha q_{\min}^{\alpha-2} \, \bigl( \| Q_\varepsilon \|_2^2
- \| P_\varepsilon \|_2^2 \bigr) \nonumber \\
\label{03062019a3}
& = \alpha 2^{2-\alpha} (\beta^2-1) (1-2\beta \varepsilon)^{\alpha-2}.
\end{align}

\item
Similarly, for $\alpha \in (0,2]$,
\begin{align}
U(\alpha, P_\varepsilon, Q_\varepsilon)
& = \tfrac12 \, \alpha q_{\min}^{\alpha-2} \, \bigl( \| Q_\varepsilon \|_2^2
- \| P_\varepsilon \|_2^2 \bigr) \nonumber \\
\label{03062019a4}
& = \alpha 2^{2-\alpha} (\beta^2-1) (1-2\beta \varepsilon)^{\alpha-2},
\end{align}
and, for $\alpha \in (2, \infty)$,
\begin{align}
U(\alpha, P_\varepsilon, Q_\varepsilon)
& = \tfrac12 \, \alpha q_{\max}^{\alpha-2} \, \bigl( \| Q_\varepsilon \|_2^2
- \| P_\varepsilon \|_2^2 \bigr) \nonumber \\
\label{03062019a5}
& = \alpha 2^{2-\alpha} (\beta^2-1) (1+2\beta \varepsilon)^{\alpha-2}.
\end{align}
\end{enumerate}
The combination of \eqref{03062019a1}--\eqref{03062019a3} yields
\eqref{limit 1 - Tsallis}; similarly, the combination of
\eqref{03062019a1}, \eqref{03062019a4} and \eqref{03062019a5}
yields \eqref{limit 2 - Tsallis}.

\section{Proof of Theorem~\ref{theorem: Delta_alpha} and Corollary~\ref{corollary: Delta}}
\label{appendix: Delta_alpha}

\subsection{Proof of Theorem~\ref{theorem: Delta_alpha}}
\label{Proof of Theorem - Delta}
The proof of the convexity property of $\Delta(\cdot, \rho)$ in \eqref{Delta 2: alpha-div.},
with $\rho>1$, over the real line $\Reals$ relies on \cite[Theorem~2.1]{Simic07} which
states that if $W$ is a non-negative random variable, then
\begin{align}
\label{eq: log-convex lambda}
\lambda_\alpha :=
\begin{dcases}
\frac{\bigl(\mathbb{E}[W^\alpha] - \mathbb{E}^\alpha[W] \bigr)
\, \log \mathrm{e}}{\alpha(\alpha-1)}, & \quad \alpha \neq 0, 1 \\[0.1cm]
\log \bigl( \mathbb{E}[W] \bigr) - \mathbb{E}[\log W], & \quad \alpha=0 \\[0.1cm]
\mathbb{E}[W \log W] - \mathbb{E}[W] \, \log \bigl(\mathbb{E}[W]\bigr),
& \quad \alpha=1
\end{dcases}
\end{align}
is log-convex in $\alpha \in \Reals$. This property has been used to derive
$f$-divergence inequalities (see, e.g., \cite[Theorem~20]{ISSV16}, \cite{Simic07}
and \cite{Simic15}).

Let $Q \ll P$, and let $W := \frac{\mathrm{d}Q}{\mathrm{d}P}$ be the Radon-Nikodym
derivative ($W$ is a non-negative random variable). Let the expectations in the right
side of \eqref{eq: log-convex lambda} be taken with respect to $P$. In view of the
above statement from \cite[Theorem~2.1]{Simic07}, this gives the log-convexity of
$D_{\mathrm{A}}^{(\alpha)}(Q\|P)$ in $\alpha \in \Reals$.
Since log-convexity yields convexity, it follows that $D_{\mathrm{A}}^{(\alpha)}(Q\|P)$
is convex in $\alpha$ over the real line. Let $P := U_n$, and let $Q \in \set{P}_n(\rho)$;
since $Q \ll P$, it follows that $D_{\mathrm{A}}^{(\alpha)}(Q\|U_n)$ is convex in
$\alpha \in \Reals$. The pointwise maximum of a set of convex functions is a convex function,
which implies that $\underset{Q \in \set{P}_n(\rho)}{\max} D_{\mathrm{A}}^{(\alpha)}(Q\|U_n)$
is convex in $\alpha \in \Reals$ for every integer $n \geq 2$. Since the pointwise limit of
a convergent sequence of convex functions is a convex function, it follows that
$\underset{n \to \infty}{\lim} \, \underset{Q \in \set{P}_n(\rho)}{\max}
D_{\mathrm{A}}^{(\alpha)}(Q\|U_n)$ is convex in $\alpha$. This, by definition, is equal
to $\Delta(\alpha, \rho)$ (see \eqref{Delta def 1}), which proves the convexity of this
function in $\alpha$ over the real line.

From \eqref{Delta 2: alpha-div.}, for all $\rho > 1$,
\begin{align}
\vspace*{0.1cm}
\Delta(1+\alpha, \rho)
&= \frac1{(\alpha+1) \alpha} \left[ \frac{(-\alpha)^\alpha
\bigl(\rho^{1+\alpha}-1\bigr)^{1+\alpha} \bigl(\rho -
\rho^{1+\alpha}\bigr)^{-\alpha}}{(\rho-1)(1+\alpha)^{1+\alpha}} - 1 \right] \nonumber \\[0.1cm]
&= \frac1{(-\alpha)(-\alpha-1)} \left[ \frac{(1+\alpha)^{-\alpha-1} \bigl(\rho^{1+\alpha}-1\bigr)^{1+\alpha}
\bigl(\rho - \rho^{1+\alpha} \bigr)^{-\alpha}}{(\rho-1)(-\alpha)^{-\alpha}} - 1 \right] \nonumber \\[0.1cm]
&= \frac1{(-\alpha)(-\alpha-1)} \left[ \frac{(1+\alpha)^{-\alpha-1}
\bigl(\rho^\alpha (\rho - \rho^{-\alpha}) \bigr)^{1+\alpha} \bigl(\rho^{1+\alpha} (\rho^{-\alpha} - 1)
\bigr)^{-\alpha}}{(\rho-1)(-\alpha)^{-\alpha}} - 1 \right] \nonumber \\
&= \frac1{(-\alpha)(-\alpha-1)} \left[ \frac{(1+\alpha)^{-\alpha-1} \bigl(\rho - \rho^{-\alpha}\bigr)^{1+\alpha}
\bigl(\rho^{-\alpha} - 1 \bigr)^{-\alpha}}{(\rho-1)(-\alpha)^{-\alpha}} - 1 \right] \nonumber \\
&= \Delta(-\alpha, \rho),
\end{align}
which proves the symmetry property of $\Delta(\alpha, \rho)$ around $\alpha = \tfrac12$ for all $\rho > 1$.
The convexity in $\alpha$ over the real line, and the symmetry around $\alpha = \tfrac12$ implies that
$\Delta(\alpha, \rho)$ gets its global minimum at $\alpha = \tfrac12$, which is equal to
$\frac{4(\sqrt[4]{\rho}-1)^2}{\sqrt{\rho}+1}$ for all $\rho > 1$.

Inequalities \eqref{mon. 1} and \eqref{mon. 2} follow from \cite[Proposition~2.7]{LieseV_book87};
this proposition implies that, for every integer $n \geq 2$ and for all probability mass functions
$Q$ defined on $\set{A}_n := \{1, \ldots, n\}$,
\begin{align}
\label{mon. 1-b}
& \alpha \, D_{\mathrm{A}}^{(\alpha)}(Q\|U_n) \leq \beta \, D_{\mathrm{A}}^{(\beta)}(Q\|U_n),
\hspace*{3.5cm} 0 < \alpha \leq \beta < \infty, \\
\label{mon. 2-b}
& (1-\beta) \, D_{\mathrm{A}}^{(1-\beta)}(Q\|U_n) \leq (1-\alpha) \,
D_{\mathrm{A}}^{(1-\alpha)}(Q\|U_n), \quad -\infty < \alpha \leq \beta < 1.
\end{align}
Inequalities \eqref{mon. 1} and \eqref{mon. 2} follow, respectively, by maximizing both
sides of \eqref{mon. 1-b} or \eqref{mon. 2-b} over $Q \in \set{P}_n(\rho)$, and letting
$n$ tend to infinity.

For every $\alpha \in \Reals$, the function $\Delta(\alpha, \rho)$ is monotonically increasing
in $\rho \in (1, \infty)$ since (by definition) the set of probability mass functions
$\{\set{P}_n(\rho)\}_{\rho \geq 1}$ is monotonically increasing (i.e., $\set{P}_n(\rho_1)
\subseteq \set{P}_n(\rho_2)$ if $1 \leq \rho_1 < \rho_2 < \infty$), and therefore the maximum
of $D_{\mathrm{A}}^{(\alpha)}(Q \| U_n)$ over $Q \in \set{P}_n(\rho)$ is a monotonically
increasing function of $\rho \in [1, \infty)$; the limit of this maximum, as we let $n \to \infty$,
is equal to $\Delta(\alpha, \rho)$ in \eqref{Delta 2: alpha-div.} for all $\rho > 1$, which
is therefore monotonically increasing in $\rho$ over the interval $(1, \infty)$. The continuity of
$\Delta(\alpha, \rho)$ in both $\alpha$ and $\rho$ is due to its expression in \eqref{Delta 2: alpha-div.}
with its continuous extension at $\alpha=0$ and $\alpha=1$ in \eqref{continuous extension of Delta}. Since
$\set{P}_n(1) = \{ U_n\}$, it follows from the continuity of $\Delta(\alpha, \rho)$ that
$$\underset{\rho \to 1^+}{\lim} \Delta(\alpha, \rho) = D_{\mathrm{A}}^{(\alpha)}(U_n \| U_n) = 0.$$

\subsection{Proof of Corollary~\ref{corollary: Delta}}
\label{Proof of Corollary - Delta}
For all $\alpha \in \Reals$ and $\rho > 1$,
\begin{align}
& \lim_{n \to \infty} \max_{Q \in \set{P}_n(\rho)} D_{\mathrm{A}}^{(\alpha)}(U_n \| Q) \nonumber \\
\label{05062019-b1}
&= \lim_{n \to \infty} \max_{Q \in \set{P}_n(\rho)} D_{\mathrm{A}}^{(1-\alpha)}(Q \| U_n) \\
\label{05062019-b2}
&= \Delta(1-\alpha, \rho) \\
\label{05062019-b3}
&= \Delta(\alpha, \rho),
\end{align}
where \eqref{05062019-b1} holds due to the symmetry property in
\cite[p.~36]{LieseV_book87}, which states that
\begin{align}
D_{\mathrm{A}}^{(\alpha)}(P\|Q) = D_{\mathrm{A}}^{(1-\alpha)}(Q\|P),
\end{align}
for every $\alpha \in \Reals$ and probability mass functions $P$ and $Q$;
\eqref{05062019-b2} is due to \eqref{Delta def 1}; finally, \eqref{05062019-b3}
holds due to the symmetry property of $\Delta(\cdot, \rho)$ around $\tfrac12$
in Theorem~\ref{theorem: Delta_alpha}~\ref{Thm. 7-a}).

\section{Proof of \eqref{280519c}}
\label{appendix: Lambert-W}
In view of \eqref{10062019c1} and \eqref{04062019b}, it follows that the condition in
\eqref{10062019a} is satisfied if and only if $\rho \leq \rho^\ast$ where
$\rho^\ast \in (1, \infty)$ is the solution of the equation
\begin{align}
\label{10062019d1}
\frac{\rho^\ast \log \rho^\ast}{\rho^\ast-1} -
\log \left( \frac{\mathrm{e} \rho^\ast \log_{\mathrm{e}}
\rho^\ast}{\rho^\ast-1} \right) = d \log \mathrm{e}.
\end{align}
with a fixed $d>0$. The substitution
\begin{align}
\label{10062019d2}
x := \frac{\rho^\ast \log_{\mathrm{e}} \rho^\ast}{\rho^\ast-1}
\end{align}
leads to the equation
\begin{align}
\label{10062019d3}
x - \log_{\mathrm{e}} x = d+1.
\end{align}
Negation and exponentiation of both sides of \eqref{10062019d3} gives
\begin{align}
\label{10062019d4}
(-x) \mathrm{e}^{-x} = -\mathrm{e}^{-d-1}.
\end{align}
Since $\rho^\ast > 1$ implies by \eqref{10062019d2} that $x>1$, the
proper solution for $x$ is given by
\begin{align}
\label{10062019d5}
x = -W_{-1}\bigl(-\mathrm{e}^{-d-1}\bigr), \quad d > 0,
\end{align}
where $W_{-1}$ denotes the secondary real branch of the Lambert $W$ function
\cite{Corless96}; otherwise, the replacement of $W_{-1}$ in the right side of
\eqref{10062019d5} with the principal real branch $W_0$ yields $x \in (0,1)$.

We next proceed to solve $\rho^\ast$ as a function of $x$. From
\eqref{10062019d2}, letting $u := \frac1{\rho^\ast}$ gives the equation
$u = \mathrm{e}^{(u-1)x}$, which is equivalent to
\begin{align}
\label{10062019d7}
(-ux) \mathrm{e}^{-ux} &= -x \mathrm{e}^{-x} \\
\label{10062019d8}
&= -\mathrm{e}^{-d-1},
\end{align}
where \eqref{10062019d8} follows from \eqref{10062019d5} and by the definition of
the Lambert $W$ function (i.e., $t = W(u)$ if and only if $t \mathrm{e}^t = u$).
The solutions of \eqref{10062019d7} are given by
\begin{align}
\label{10062019d9}
-ux = W_{-1/ 0}\bigl(-\mathrm{e}^{-d-1}\bigr),
\end{align}
which (from \eqref{10062019d5}) correspond, respectively, to $u=1$ and
\begin{align}
\label{10062019d10}
u = \frac{W_{0}\bigl(-\mathrm{e}^{-d-1}\bigr)}{W_{-1}\bigl(-\mathrm{e}^{-d-1}\bigr)} \in (0,1).
\end{align}
Since $\rho^\ast \in (1, \infty)$ is equal to $\frac1u$, the reciprocal of the right side of
\eqref{10062019d10} gives the proper solution for $\rho^\ast$ (denoted by
$\rho_{\max}^{(1)}(d)$ in \eqref{280519c}).

\section{Proof of \eqref{Phi - UB1}, \eqref{Phi - UB2} and \eqref{Phi - UB3}}
\label{appendix: Phi - UBs}

We first derive the upper bound on $\Phi(\alpha, \rho)$ in \eqref{Phi - UB1}
for $\alpha \geq \mathrm{e}^{-\frac32}$ and $\rho \geq 1$. For every
$Q \in \set{P}_n(\rho)$, with an integer $n \geq 2$,
\begin{align}
D_{f_\alpha}(Q \| U_n) & \leq \Bigl[ \log(\alpha+1) + \tfrac32 \log \mathrm{e}
- \frac{\log \mathrm{e}}{\alpha+1} \Bigr] \, \chi^2(Q \| U_n) \nonumber \\
& \hspace*{0.4cm} + \frac{\log \mathrm{e}}{3(\alpha+1)}
\Bigl[ \exp\bigl(2 D_3(Q \| U_n)\bigr) - 1 \Bigr]  \label{06062019a1} \\
& \leq \Bigl[ \log(\alpha+1) + \tfrac32 \log \mathrm{e}
- \frac{\log \mathrm{e}}{\alpha+1} \Bigr] \, \frac{(\rho-1)^2}{4 \rho} \nonumber \\
& \hspace*{0.4cm} + \frac{\log \mathrm{e}}{3(\alpha+1)}
\Bigl[ \exp\bigl(2 D_3(Q \| U_n)\bigr) - 1 \Bigr]  \label{06062019a2}
\end{align}
where \eqref{06062019a1} follows from \eqref{Df_UB}, and \eqref{06062019a1}
holds due to \eqref{chi2 asymp.}. By upper bounding the second term in the
right side of \eqref{06062019a2}, for all $Q \in \set{P}_n(\rho)$,
\begin{align}
\label{06062019a3}
D_3(Q \| U_n) &= \tfrac12 \log \bigl(1 + 6 D_{\mathrm{A}}^{(3)}(Q \| U_n) \bigr) \\
\label{06062019a4}
&\leq \tfrac12 \log \bigl(1 + 6 \Delta(3,\rho) \bigr) \\
\label{06062019a5}
&= \tfrac12 \log \left( \frac{4 (\rho^3-1)^3}{27 (\rho-1)
(\rho - \rho^3)^2} \right) \\[0.1cm]
\label{06062019a6}
&= \tfrac12 \log \left( \frac{4 (\rho^2+\rho+1)^3}{27 \rho^2 (\rho+1)^2} \right)
\end{align}
where \eqref{06062019a3} holds by setting $\alpha=3$ in \eqref{1-1 Renyi and Alpha div.};
\eqref{06062019a4} follows from \eqref{UB - finite n}, \eqref{Alpha-divergence}
and \eqref{Delta def 0}; \eqref{06062019a5} holds by setting $\alpha=3$ in
\eqref{Delta 2: alpha-div.}; finally, \eqref{06062019a6} follows from the
factorizations $$ (\rho^3-1)^3 = (\rho-1)^3 (\rho^2 + \rho + 1)^3, \quad
(\rho-1)(\rho-\rho^3)^2 = (\rho-1)^3 \rho^2 (\rho+1)^2.$$
Substituting the bound in the right side of \eqref{06062019a6} into the
second term of the bound on the right side of \eqref{06062019a2} implies
that, for all $Q \in \set{P}_n(\rho)$,
\begin{align}
D_{f_\alpha}(Q \| U_n) & \leq \left[ \log(\alpha+1) + \tfrac32 \log \mathrm{e}
- \frac{\log \mathrm{e}}{\alpha+1} \right] \, \frac{(\rho-1)^2}{4 \rho} \nonumber \\
\label{06062019a7}
& \hspace*{0.4cm} + \frac{\log \mathrm{e}}{3(\alpha+1)}
\left[ \frac{4 (\rho^2+\rho+1)^3}{27 \rho^2 (\rho+1)^2} - 1 \right] \\
&= \left[ \log(\alpha+1) + \tfrac32 \log \mathrm{e}
- \frac{\log \mathrm{e}}{\alpha+1} \right] \frac{(\rho-1)^2}{4 \rho} \nonumber \\[0.1cm]
& \hspace*{0.4cm} + \frac{\log \mathrm{e}}{81(\alpha+1)}
\left( \frac{(\rho-1)(2\rho+1)(\rho+2)}{\rho(\rho+1)} \right)^2,
\end{align}
which therefore gives \eqref{Phi - UB1} by maximizing the left side of \eqref{06062019a7}
over $Q \in \set{P}_n(\rho)$, and letting $n$ tend to infinity (see \eqref{Phi def 1}).

We next derive the upper bound in \eqref{Phi - UB2}.
The second derivative of the convex function $f_\alpha \colon (0, \infty) \to \Reals$
in \eqref{f_alpha} is upper bounded over the interval $\bigl[\tfrac1\rho, \rho \bigr]$
by the positive constant $M = 2 \log(\alpha + \rho) + 3 \log \mathrm{e}$.
From \eqref{UB Df-M}, it follows that for all $Q \in \set{P}_n(\rho)$ (with
$\rho \geq 1$ and an integer $n \geq 2$) and $\alpha \geq \mathrm{e}^{-\frac32}$,
\begin{align}
\label{270519i}
D_{f_\alpha}(Q \| U_n) \leq
\Bigl[ \log(\alpha+\rho) + \tfrac32 \log \mathrm{e} \Bigr] \, \frac{(\rho-1)^2}{4 \rho},
\end{align}
which, from \eqref{Phi def 1}, yields \eqref{Phi - UB2}.

We finally derive the upper bound in \eqref{Phi - UB3} by loosening the bound in
\eqref{Phi - UB1}. The upper bound in the right side of \eqref{Phi - UB1} can be
rewritten as
\begin{align}
\Phi(\alpha, \rho) & \leq \Bigl[ \tfrac14 \, \log(\alpha+1) +
\tfrac38 \, \log \mathrm{e} \Bigr] \, \frac{(\rho-1)^2}{\rho} \nonumber \\
\label{10062019b8}
& \hspace*{0.4cm} + \frac{\log \mathrm{e}}{\alpha+1}
\left[ \frac1{81} \left(2 + \frac2{\rho} + \frac1{1+\rho} \right)^2
- \frac1{4\rho} \right] (\rho-1)^2.
\end{align}
For all $\rho \geq 1$,
\begin{align}
\label{10062019b9}
\frac1{81} \left(2 + \frac2{\rho} + \frac1{1+\rho} \right)^2 - \frac1{4\rho}
\leq \frac4{81},
\end{align}
which can be verified by showing that the left side of \eqref{10062019b9} is
monotonically increasing in $\rho$ over the interval $[1, \infty)$, and it
tends to $\tfrac4{81}$ as we let $\rho \to \infty$. Furthermore, for all
$\rho \geq 1$,
\begin{align}
\label{10062019b10}
\frac{(\rho-1)^2}{\rho} \leq \min\bigl\{\rho-1, (\rho-1)^2\bigr\}.
\end{align}
In view of inequalities \eqref{10062019b9} and \eqref{10062019b10}, one gets
\eqref{Phi - UB3} from \eqref{10062019b8} (where the latter is an equivalent
form of \eqref{Phi - UB1}).

\section{Proof of Theorem~\ref{thm: conjugate}}
\label{appendix: conjugate}

We start by proving Item~\ref{converse}).
In view of the variational representation of $f$-divergences (see
\cite[Theorem~2.1]{Keziou03}, and \cite[Lemma~1]{NWJ10}),
if $f \colon (0, \infty) \to \Reals$ is convex with $f(1)=0$, and
$P$ and $Q$ are probability measures defined on a set $\set{A}$, then
\begin{align}
\label{var. rep. f-div}
D_f(P\|Q) = \sup_{g \colon \set{A} \to \Reals} \Bigl( \expectation\bigl[g(X)\bigr]
- \expectation[ \, \overline{f}\bigl(g(Y)\bigr)\bigr] \Bigr),
\end{align}
where $X \sim P$ and $Y \sim Q$, and the supremum is taken over
all measurable functions $g$ under which the expectations are finite.

Let $P \in \set{P}_n(\rho)$, with $\rho > 1$, and let $Q := U_n$;
these probability mass functions are defined on the set
$\set{A}_n := \{1, \ldots, n\}$, and it follows that
\begin{align}
\label{08062019a1}
u_f(n, \rho) &\geq D_f(P \| U_n) \\
\label{08062019a2}
&\geq \expectation\bigl[g(X)\bigr]
- \frac1n \sum_{i=1}^n \overline{f}\bigl(g(i)\bigr),
\end{align}
where \eqref{08062019a1} holds by the definition in \eqref{def: u_f};
\eqref{08062019a2} holds due to \eqref{var. rep. f-div} with $X \sim P$,
and $Y$ being an equiprobable random variable over $\set{A}_n$. This
gives \eqref{eq: converse}.

We next prove Item~\ref{achievability}). As above, let
$f \colon (0, \infty) \to \Reals$ be a convex function with
$f(1)=0$. Let $\beta^\ast \in \Gamma_n(\rho)$ be a maximizer of
the right side of \eqref{opt1 Df: beta}. Then,
\begin{align}
\label{08062019a3}
u_f(n, \rho) &= D_f(Q_{\beta^\ast} \| U_n) \\
\label{08062019a4}
&=  \frac1n \sum_{i=1}^n f \bigl(n Q_{\beta^\ast}(i) \bigr).
\end{align}
Let $\varepsilon > 0$ be selected arbitrarily. We have
$\overline{(\overline{f})} \equiv f$  (i.e., repeating twice the
convex conjugate operation (see \eqref{conjugate}) on a convex
function $f$, returns $f$ itself). From the convexity of $f$, it
therefore follows that, for all $t > 0$, there exists $x \in \Reals$
such that
\begin{align}
\label{08062019a5}
f(t) \leq tx - \overline{f}(x) + \varepsilon.
\end{align}
Let
\begin{align}
\label{08062019a6}
t_i := n Q_{\beta^\ast}(i), \quad \forall \, i \in \set{A}_n,
\end{align}
let $x := x_i(\varepsilon) \in \Reals$ be selected to satisfy
\eqref{08062019a5} with $t := t_i$, and let the function
$g_\varepsilon \colon \set{A}_n \to \Reals$ be defined as
\begin{align}
\label{08062019a7}
g_\varepsilon(i) = x_i(\varepsilon), \quad \forall \, i \in \set{A}_n.
\end{align}
Consequently, it follows from \eqref{08062019a5}--\eqref{08062019a7}
that for all such $i$
\begin{align}
\label{08062019a8}
f\bigl(n Q_{\beta^\ast}(i)\bigr) \leq n Q_{\beta^\ast}(i) \, g_\varepsilon(i)
- \overline{f}\bigl(g_\varepsilon(i)\bigr) + \varepsilon.
\end{align}
Let $P := Q_{\beta^\ast} \in \set{P}_n(\rho)$ (see \eqref{Q_beta}), and $X \sim P$.
Then,
\begin{align}
\label{08062019a9}
u_f(n,\rho) &= \frac1n \sum_{i=1}^n f \bigl(n Q_{\beta^\ast}(i) \bigr) \\
\label{08062019a10}
&\leq \sum_{i=1}^n Q_{\beta^\ast}(i) \, g_\varepsilon(i) -
\frac1n \sum_{i=1}^n \overline{f}\bigl(g_\varepsilon(i)\bigr) + \varepsilon \\
\label{08062019a11}
&= \expectation\bigl[ g_\varepsilon(X) \bigr] -
\frac1n \sum_{i=1}^n \overline{f}\bigl(g_\varepsilon(i)\bigr) + \varepsilon
\end{align}
where \eqref{08062019a9} holds due to \eqref{08062019a3} and \eqref{08062019a4};
\eqref{08062019a10} follows from \eqref{08062019a8}; \eqref{08062019a11} holds
since by assumption $P_X = Q_{\beta^\ast}$. This gives \eqref{eq: achievability}.

\section{Proof of Theorem~\ref{theorem: generalized Fano Df}}
\label{appendix: generalized Fano Df}
For $y \in \set{Y}$, let the $L$-size list of the decoder be given by
$\set{L}(y) = \{x_1(y), \ldots, x_L(y) \}$ with $L < M$. Then,
the (average) list decoding error probability is given by
\begin{align}
\label{P err}
P_{\set{L}} = \expectation\bigl[P_{\set{L}}(Y)\bigr]
\end{align}
where the conditional list decoding error probability, given that
$Y=y \in \set{Y}$, is equal to
\begin{align}
\label{cond. P err}
P_{\set{L}}(y) = 1 - \sum_{\ell=1}^L P_{X|Y}\bigl( x_\ell(y) \, | \, y \bigr).
\end{align}
For every $y \in \set{Y}$,
\begin{align}
& D_f\bigl( P_{X|Y}(\cdot | y) \, \| \, U_M \bigr) \nonumber \\
\label{21062019a1}
& \geq D_f \left( \left[ \, \sum_{\ell=1}^L P_{X|Y}\bigl(x_\ell(y) \, | \, y \bigr),
\; 1 - \sum_{\ell=1}^L P_{X|Y}\bigl(x_\ell(y) \, | \, y) \right] \, \|
\, \left[\frac{L}{M}, \, 1 - \frac{L}{M}\right] \right) \\[0.1cm]
\label{21062019a2}
& = D_f \left( \bigl[1-P_{\set{L}}(y), \, P_{\set{L}}(y) \bigr] \, \|
\, \left[\frac{L}{M}, \, 1 - \frac{L}{M}\right] \right),
\end{align}
where \eqref{21062019a1} holds by the data-processing inequality
for $f$-divergences, and since for every $y \in \set{Y}$
\begin{align}
\label{30062019a12}
\sum_{\ell=1}^L U_M\bigl(x_\ell(y)\bigr)
= \sum_{\ell=1}^L \frac1M = \frac{L}{M};
\end{align}
\eqref{21062019a2} is due to \eqref{cond. P err}.
Hence, it follows that
\begin{align}
\label{21062019a2.1}
& \expectation \Bigl[ D_f\bigl( P_{X|Y}(\cdot | Y) \, \|
\, U_M \bigr) \Bigr] \nonumber \\[0.1cm]
& \geq \expectation \biggl[ D_f \biggl( \bigl[1-P_{\set{L}}(Y),
\, P_{\set{L}}(Y) \bigr] \, \| \, \left[\frac{L}{M}, \,
1 - \frac{L}{M}\right] \biggr) \biggr] \\[0.1cm]
\label{21062019a3}
& = \frac{L}{M} \; \expectation \left[ f\biggl( \frac{M
(1-P_{\set{L}}(Y))}{L} \biggr) \right]
+ \biggl(1 - \frac{L}{M}\biggr) \; \expectation \biggl[
f\biggl( \frac{M P_{\set{L}}(Y)}{M-L} \biggr) \biggr] \\[0.1cm]
\label{21062019a4}
& \geq \frac{L}{M} \; f\biggl( \frac{M \, \expectation
[ 1-P_{\set{L}}(Y)]}{L} \biggr)
+ \biggl(1 - \frac{L}{M}\biggr) \, f\biggl(\frac{M \,
\expectation[P_{\set{L}}(Y)]}{M-L} \biggr) \\[0.1cm]
\label{21062019a5}
&= \frac{L}{M} \; f\biggl( \frac{M \bigl(1-P_{\set{L}}\bigr)}{L} \biggr) +
\biggl(1 - \frac{L}{M}\biggr) \, f\biggl(\frac{M P_{\set{L}}}{M-L} \biggr),
\end{align}
where \eqref{21062019a2.1} holds by taking expectations in
\eqref{21062019a1}--\eqref{21062019a2} with respect to $Y$;
\eqref{21062019a3} holds by the definition of $f$-divergence,
and the linearity of expectation operator; \eqref{21062019a4}
follows from the convexity of $f$ and Jensen's inequality;
finally, \eqref{21062019a5} holds by \eqref{P err}.

\section{Proof of Corollary~\ref{corollary: generalized Fano-Renyi inequality}}
\label{appendix: generalized Fano-Renyi inequality}

Let $\alpha \in (0,1) \cup (1, \infty)$, and let $y \in \set{Y}$.
The proof starts by applying Theorem~\ref{theorem: generalized Fano Df}
in the setting where $Y=y$ is deterministic, and the convex function
$f \colon (0, \infty) \to \Reals$ is given by $f := u_\alpha$
in \eqref{f of Alpha-divergence}, i.e.,
\begin{align}
\label{30062019a1}
f(t) = \frac{t^\alpha-\alpha(t-1)-1}{\alpha(\alpha-1)}, \quad t \geq 0.
\end{align}
In this setting, \eqref{generalized Fano Df} is specialized to
\begin{align}
\label{30062019a2}
D_f \bigl(P_{X|Y}(\cdot|y) \, \| \, U_M \bigr)
\geq \frac{L}{M} \; f\biggl(\frac{M \, (1-P_{\set{L}}(y))}{L} \biggr)
+ \biggl(1-\frac{L}{M}\biggr) \; f\biggl(\frac{M P_{\set{L}}(y)}{M-L} \biggr),
\end{align}
where $P_{\set{L}}(y)$ is the conditional list decoding error probability given
that $Y=y$. Substituting \eqref{30062019a1} into the right side of \eqref{30062019a2}
gives
\begin{align}
& \frac{L}{M} \; f\biggl(\frac{M \, (1-P_{\set{L}}(y))}{L} \biggr)
+ \biggl(1-\frac{L}{M}\biggr) \; f\biggl(\frac{M P_{\set{L}}(y)}{M-L} \biggr) \nonumber \\[0.1cm]
&= \frac1{\alpha(\alpha-1)} \left[ P_{\set{L}}^\alpha(y) \, \biggl(1-\frac{L}{M}\biggr)^{1-\alpha}
+ \bigl(1-P_{\set{L}}(y)\bigr)^\alpha \, \biggl(\frac{L}{M}\biggr)^{1-\alpha} - 1 \right] \\[0.1cm]
\label{30062019a3}
&= \frac1{\alpha(\alpha-1)} \left[ \exp\Biggl( (\alpha-1) \, d_\alpha\biggl(P_{\set{L}}(y) \, \| \,
1-\frac{L}{M} \biggr) \Biggr) - 1 \right],
\end{align}
where \eqref{30062019a3} follows from \eqref{eq1: binary RD}.
Substituting \eqref{30062019a1} into the left side of \eqref{30062019a2} gives
\begin{align}
& D_f \bigl(P_{X|Y}(\cdot|y) \, \| \, U_M \bigr) \nonumber \\
&= \frac1{M\alpha(\alpha-1)} \sum_{x \in \set{X}} \Bigl[ \bigl(M P_{X|Y}(x|y)\bigr)^\alpha
- \alpha \bigl( M P_{X|Y}(x|y)-1 \bigr) - 1 \Bigr] \\
&= \frac1{M\alpha(\alpha-1)} \Biggl[ M^\alpha \sum_{x \in \set{X}} P_{X|Y}^\alpha(x|y)
- \alpha \underbrace{\sum_{x \in \set{X}} \bigl( M P_{X|Y}(x|y)-1 \bigr)}_{=0
\; \; (|\set{X}|=M)} -M \Biggr] \\
&= \frac1{\alpha(\alpha-1)} \Biggl[ M^{\alpha-1}
\sum_{x \in \set{X}} P_{X|Y}^\alpha(x|y) - 1 \Biggr] \\
\label{30062019a4}
&= \frac1{\alpha(\alpha-1)} \biggl[ \exp \Bigl((\alpha-1) \,
\bigl[\log M - H_\alpha(X | Y=y) \bigr] \Bigr) - 1 \biggr].
\end{align}
Substituting \eqref{30062019a3} and \eqref{30062019a4} into the right and left sides
of \eqref{30062019a2}, and rearranging terms while relying on the monotonicity property
of an exponential function gives
\begin{align}
\label{30062019a5}
H_\alpha(X|Y=y) \leq
\log M - d_\alpha\biggl(P_{\set{L}}(y) \, \| \, 1-\frac{L}{M} \biggr).
\end{align}

We next obtain an upper bound on the Arimoto-R\'{e}nyi conditional entropy.
\begin{align}
& H_\alpha(X|Y) \nonumber \\
\label{30062019a6}
&= \frac{\alpha}{1-\alpha} \, \log
\, \int_{\set{Y}} \mathrm{d}P_Y(y) \, \exp \left(
\frac{1-\alpha}{\alpha} \;
H_{\alpha}(X | Y=y) \right) \\
\label{30062019a7}
&\leq \frac{\alpha}{1-\alpha} \, \log
\, \int_{\set{Y}} \mathrm{d}P_Y(y) \, \exp \Biggl(
\frac{1-\alpha}{\alpha} \; \biggl[ \log M - d_\alpha
\biggl(P_{\set{L}}(y) \, \| \, 1-\frac{L}{M} \biggr) \biggr] \Biggr) \\
\label{30062019a8}
&= \log M + \frac{\alpha}{1-\alpha} \, \log
\int_{\set{Y}} \mathrm{d}P_Y(y)
\left[ P_{\set{L}}^{\alpha}(y) \biggl(1-\frac{L}{M}\biggr)^{1-\alpha}
+ \bigl(1-P_{\set{L}}(y)\bigr)^\alpha \biggl(\frac{L}{M}\biggr)^{1-\alpha}
\right]^{\frac1\alpha}
\end{align}
where \eqref{30062019a6} holds due to \eqref{eq2: Arimoto - cond. RE};
\eqref{30062019a7} follows from \eqref{30062019a5}, and \eqref{30062019a8}
follows from \eqref{eq1: binary RD}. By \cite[Lemma~1]{ISSV18}, it
follows that the integrand in the right side of \eqref{30062019a8}
is convex in $P_{\set{L}}(y)$ if $\alpha > 1$; furthermore, it is
concave in $P_{\set{L}}(y)$ if $\alpha \in (0,1)$. Invoking Jensen's
inequality therefore yields (see \eqref{P err})
\begin{align}
\label{30062019a9}
H_\alpha(X|Y) & \leq \log M + \frac{\alpha}{1-\alpha} \, \log \left(
\left[ P_{\set{L}}^{\alpha} \biggl(1-\frac{L}{M}\biggr)^{1-\alpha}
+ \bigl(1-P_{\set{L}}\bigr)^\alpha \biggl(\frac{L}{M}\biggr)^{1-\alpha}
\right]^{\frac1\alpha} \right) \\
\label{30062019a10}
&= \log M - \frac{1}{\alpha-1} \, \log
\left( P_{\set{L}}^{\alpha} \biggl(1-\frac{L}{M}\biggr)^{1-\alpha}
+ \bigl(1-P_{\set{L}}\bigr)^\alpha \biggl(\frac{L}{M}\biggr)^{1-\alpha}
\right) \\
\label{30062019a11}
&= \log M - d_\alpha\biggl(P_{\set{L}} \, \| \, 1-\frac{L}{M} \biggr),
\end{align}
where \eqref{30062019a9} follows from Jensen's inequality, and
\eqref{30062019a11} follows from \eqref{eq1: binary RD}. This
proves \eqref{eq: generalized Fano-Renyi - list decoding} and
\eqref{eq2: generalized Fano-Renyi - list decoding} for all
$\alpha \in (0,1) \cup (1,\infty)$. The necessary and sufficient
condition for \eqref{eq: generalized Fano-Renyi - list decoding}
to hold with equality, as given in \eqref{eq: tight Fano-Renyi - list decoding},
follows from the proof of \eqref{30062019a2} (see
\eqref{21062019a1}--\eqref{30062019a12}), and from the use of
Jensen's inequality in \eqref{30062019a9}.

\section{Proof of Theorem~\ref{theorem: refined Fano's inequality}}
\label{appendix: refined Fano's inequality}

The proof of Theorem~\ref{theorem: refined Fano's inequality} relies on
Theorem~\ref{thm: SDPI-IS}, and the proof of Theorem~\ref{theorem: generalized Fano Df}.

Let $\set{Z} = \{0,1\}$ and, without any loss of generality, let
$\set{X} = \{1, \ldots, M \}$. For every $y \in \set{Y}$, define a
deterministic transformation from $\set{X}$ to $\set{Z}$ such that
every $x \in \set{L}(y)$ is mapped to $z=0$, and every
$x \notin \set{L}(y)$ is mapped to $z=1$. This corresponds to a conditional
probability mass function, for every $y \in \set{Y}$, where
$W_{Z|X}^{(y)}(z|x) = 1$ if $x \in \set{L}(y)$ and $z=0$, or if
$x \notin \set{L}(y)$ and $z=1$; otherwise, $W_{Z|X}^{(y)}(z|x) = 0$.
Let $\set{L}(y) := \{x_1(y), \ldots, x_L(y) \}$ with $L < M$. Then, for every
$y \in \set{Y}$, a conditional probability mass function $P_{X|Y}(\cdot|y)$
implies that
\begin{align}
\label{P_Z}
P_Z^{(y)}(z) := \sum_{x \in \set{X}} P_{X|Y}(x | y) \, W_{Z|X}^{(y)}(z|x), \quad \forall \, z \in \{0,1\},
\end{align}
satisfies (see \eqref{cond. P err})
\begin{align}
\label{P_Z 2}
& P_Z^{(y)}(0) = \sum_{\ell = 1}^L P_{X|Y}(x_\ell(y) | y) = 1 - P_{\set{L}}(y), \\
& P_Z^{(y)}(1) = P_{\set{L}}(y).
\end{align}
Under the deterministic transformation $W_{Z|X}^{(y)}$ as above, the equiprobable distribution $Q_X^{(y)} = U_M$
(independently of $y \in \set{Y}$) is mapped to a Bernoulli distribution over the two-elements
set $\set{Z}$ where
\begin{align}
\label{Q_Z}
Q_Z^{(y)} = \left[\frac{L}{M}, ~1 - \frac{L}{M}\right], \quad \forall \, y \in \set{Y}.
\end{align}
Given $Y=y \in \set{Y}$, applying Theorem~\ref{thm: SDPI-IS} with the
transformation $W_{Z|X}^{(y)}$ as above gives that
\begin{align}
& D_f\bigl( P_{X|Y}(\cdot | y) \, \| \, U_M \bigr) \nonumber \\
\label{27062019a1}
& \geq D_f\bigl( P_Z^{(y)} \| Q_Z^{(y)} \bigr)
+ c_f\bigl(\xi_1(y), \xi_2(y)\bigr)
\left[ \chi^2\bigl( P_{X|Y}(\cdot | y) \, \| \, U_M \bigr)
- \chi^2\bigl( P_Z^{(y)} \| Q_Z^{(y)} \bigr) \right]
\end{align}
where, from \eqref{xi1} and \eqref{xi2},
\begin{align}
\label{27062019a2}
& \xi_1(y) = \min_{x \in \set{X}} \frac{P_{X|Y}(x|y)}{U_M(x)}
= M \min_{x \in \set{X}} P_{X|Y}(x|y), \\
\label{27062019a3}
& \xi_2(y) = \max_{x \in \set{X}} \frac{P_{X|Y}(x|y)}{U_M(x)}
= M \max_{x \in \set{X}} P_{X|Y}(x|y).
\end{align}
Since, from \eqref{28062019a1}, \eqref{28062019a2}, \eqref{27062019a2}
and \eqref{27062019a3},
\begin{align}
\label{27062019a4}
\inf_{y \in \set{Y}} \xi_1(y)
= M \inf_{(x,y) \in \set{X} \times \set{Y}} P_{X|Y}(x|y) = \xi_1^\ast, \\
\label{27062019a5}
\sup_{y \in \set{Y}} \xi_2(y)
= M \sup_{(x,y) \in \set{X} \times \set{Y}} P_{X|Y}(x|y) = \xi_2^\ast,
\end{align}
it follows from the definition of $c_f(\cdot, \cdot)$ in \eqref{c_f} that
for every $y \in \set{Y}$
\begin{align}
\label{27062019a6}
c_f\bigl(\xi_1(y), \xi_2(y)\bigr) & \geq c_f\bigl(\xi_1^\ast, \xi_2^\ast\bigr) \\
\label{27062019a7}
& = \tfrac12 \inf_{t \in \set{I}(\xi_1^\ast, \xi_2^\ast)} f''(t) \\
\label{27062019a8}
& \geq \tfrac12 \, m_f
\end{align}
where the last inequality holds by the assumption in \eqref{m_f}.
Combining \eqref{27062019a1} and \eqref{27062019a6}--\eqref{27062019a8}
implies that, for every $y \in \set{Y}$,
\begin{align}
\nonumber
& D_f\bigl( P_{X|Y}(\cdot | y) \, \| \, U_M \bigr) \\
\label{27062019a9}
& \geq D_f\bigl( P_Z^{(y)} \| Q_Z^{(y)} \bigr)
+ \tfrac12 m_f
\Bigl[ \chi^2\bigl( P_{X|Y}(\cdot | y) \, \| \, U_M \bigr)
- \chi^2\bigl( P_Z^{(y)} \| Q_Z^{(y)} \bigr) \Bigr].
\end{align}
Hence,
\begin{align}
& \expectation \bigl[ D_f\bigl( P_{X|Y}(\cdot | Y) \, \| \, U_M \bigr) \bigr] \nonumber \\
\label{27062019a10}
& \geq \expectation \bigl[ D_f\bigl( P_Z^{(Y)} \| Q_Z^{(Y)} \bigr) \bigr]
+ \tfrac12 \, m_f \,
\expectation \Bigl[ \chi^2\bigl( P_{X|Y}(\cdot | Y) \, \| \, U_M \bigr)
- \chi^2\bigl( P_Z^{(Y)} \| Q_Z^{(Y)} \bigr) \Bigr]
\end{align}
where \eqref{27062019a10} holds by taking expectations with respect to
$Y$ on both sides of \eqref{27062019a9}.

Referring to the first term in the right side of \eqref{27062019a10} gives
\begin{align}
\expectation \bigl[ D_f\bigl( P_Z^{(Y)} \| Q_Z^{(Y)} \bigr) \bigr]
\label{27062019a11}
& = \expectation \biggl[ D_f \left( \bigl[1-P_{\set{L}}(Y),
\, P_{\set{L}}(Y) \bigr] \, \| \, \left[\frac{L}{M}, \,
1 - \frac{L}{M}\right] \right) \biggr] \\[0.1cm]
\label{27062019a12}
&\geq \frac{L}{M} \; f\biggl( \frac{M \bigl(1-P_{\set{L}}\bigr)}{L} \biggr) +
\biggl(1 - \frac{L}{M}\biggr) \; f\biggl(\frac{M P_{\set{L}}}{M-L} \biggr),
\end{align}
where \eqref{27062019a11} follows from \eqref{P_Z 2}--\eqref{Q_Z}, and
\eqref{27062019a12} holds due to \eqref{21062019a3}--\eqref{21062019a5}.

Referring to the second term in the right side of \eqref{27062019a10} gives
\begin{align}
& \expectation \Bigl[ \chi^2\bigl( P_{X|Y}(\cdot | Y) \, \| \, U_M \bigr)
- \chi^2\bigl( P_Z^{(Y)} \| Q_Z^{(Y)} \bigr) \Bigr] \nonumber \\[0.1cm]
\label{27062019a13}
&= \expectation \biggl[ \chi^2\bigl( P_{X|Y}(\cdot | Y) \, \| \, U_M \bigr)
- \chi^2\left( \bigl[1-P_{\set{L}}(Y),
\, P_{\set{L}}(Y) \bigr] \, \| \, \left[\frac{L}{M}, \,
1 - \frac{L}{M}\right] \right) \biggr] \\[0.1cm]
\label{27062019a14}
&= \expectation \Biggl[ M \sum_{x \in \set{X}} P_{X|Y}^2(x|Y) -
\frac{M \bigl(1-P_{\set{L}}(Y)\bigr)^2}{L}
- \frac{M P_{\set{L}}^2(Y)}{M-L} \Biggr] \\[0.1cm]
\label{27062019a15}
&= M \, \expectation \Biggl[ \, \sum_{x \in \set{X}} P_{X|Y}^2(x|Y) \Biggr]
- \frac{M}{L} + \frac{2M}{L} \cdot \expectation\bigl[P_{\set{L}}(Y)\bigr] \nonumber \\[0.1cm]
& \hspace*{0.4cm} - \left( \frac{M}{L} + \frac{M}{M-L} \right)
\expectation\bigl[P_{\set{L}}^2(Y)\bigr] \\[0.1cm]
\label{27062019a16}
&= M \, \expectation \Biggl[ \, \sum_{x \in \set{X}} P_{X|Y}^2(x|Y) \Biggr]
- \frac{M \bigl(1 - 2 P_{\set{L}}\bigr)}{L}
- \frac{M^2 \, \expectation \bigl[P_{\set{L}}^2(Y)\bigr]}{L(M-L)},
\end{align}
where \eqref{27062019a13} follows from \eqref{P_Z}--\eqref{Q_Z};
\eqref{27062019a14} follows from \eqref{27062019a17}--\eqref{diff of chi^2 div.};
\eqref{27062019a16} is due to \eqref{P err}. Furthermore, we get (since
$P_{\set{L}}(Y) \in [0,1]$)
\begin{align}
\label{27062019a19}
& \expectation \bigl[P_{\set{L}}^2(Y)\bigr]
\leq \expectation\bigl[P_{\set{L}}(Y)\bigr] = P_{\set{L}}, \\[0.1cm]
\label{27062019a20}
& \expectation \bigl[P_{\set{L}}^2(Y)\bigr]
\geq \expectation^2\bigl[P_{\set{L}}(Y)\bigr] = P_{\set{L}}^2,
\end{align}
and
\begin{align}
& \expectation \left[ \, \sum_{x \in \set{X}} P_{X|Y}^2(x|Y) \right] \nonumber \\[0.1cm]
\label{27062019a21}
& = \int_{\set{Y}} \mathrm{d}P_Y(y) \, \sum_{x \in \set{X}} P_{X|Y}^2(x|y) \\
\label{27062019a22}
& = \int_{\set{X} \times \set{Y}} \mathrm{d}P_{XY}(x,y) \, P(x|y) \\[0.1cm]
\label{27062019a23}
& = \expectation \bigl[ P_{X|Y}(X|Y) \bigr].
\end{align}
Combining \eqref{27062019a13}--\eqref{27062019a23} gives
\begin{align}
& M \left( \expectation \bigl[ P_{X|Y}(X|Y) \bigr]
- \frac{1 - P_{\set{L}}}{L}
- \frac{P_{\set{L}}}{M-L} \right)^{+} \nonumber \\[0.1cm]
\label{27062019a24}
& \leq \expectation \left[ \chi^2\bigl( P_{X|Y}(\cdot | Y) \, \| \, U_M \bigr)
- \chi^2\bigl( P_Z^{(Y)} \| Q_Z^{(Y)} \bigr) \right] \\[0.1cm]
\label{27062019a25}
& \leq M \left( \expectation \bigl[ P_{X|Y}(X|Y) \bigr]
- \frac{\bigl(1 - P_{\set{L}}\bigr)^2}{L}
- \frac{P_{\set{L}}^2}{M-L} \right),
\end{align}
providing tight upper and lower bounds on
$\expectation \left[ \chi^2\bigl( P_{X|Y}(\cdot | Y) \, \| \, U_M \bigr)
- \chi^2\bigl( P_Z^{(Y)} \| Q_Z^{(Y)} \bigr) \right]$ if $P_{\set{L}}$
is small. Note that the lower bound on the left side of
\eqref{27062019a24} is non-negative since, by the data-processing
inequality for the $\chi^2$ divergence, the right side of
\eqref{27062019a24} should be non-negative (see \eqref{P_Z}--\eqref{Q_Z}).
Finally, combining \eqref{27062019a10}--\eqref{27062019a25}
yields \eqref{list dec.-26062019a}, which proves
Item~\ref{Part a - refined Fano's inequality}).

For proving Item~\ref{Part b - refined Fano's inequality}), the upper bound on
the left side of \eqref{27062019a19} is tightened. If the list decoder selects
the $L$ most probable elements from $\set{X}$ given the value of $Y \in \set{Y}$,
then $P_{\set{L}}(y) \leq 1 - \frac{L}{M}$ for every $y \in \set{Y}$. Hence,
the bound in \eqref{27062019a19} is replaced by the tighter bound
\begin{align}
\label{27062019a26}
& \expectation \bigl[P_{\set{L}}^2(Y)\bigr]
\leq \left(1-\frac{L}{M}\right) P_{\set{L}}.
\end{align}
Combining \eqref{27062019a13}--\eqref{27062019a16},
\eqref{27062019a21}--\eqref{27062019a23} and \eqref{27062019a26} gives
the following improved lower bound in the left side of \eqref{27062019a24}:
\begin{align}
& M \left( \expectation \bigl[ P_{X|Y}(X|Y) \bigr]
- \frac{1 - P_{\set{L}}}{L} \right)^{+} \nonumber \\[0.1cm]
\label{27062019a27}
& \leq \expectation \left[ \chi^2\bigl( P_{X|Y}(\cdot | Y) \, \| \, U_M \bigr)
- \chi^2\bigl( P_Z^{(Y)} \| Q_Z^{(Y)} \bigr) \right].
\end{align}
It is next shown that the operation $(\cdot)^+$ in the left side of
\eqref{27062019a27} is redundant. From \eqref{P err} and \eqref{cond. P err},
\begin{align}
\label{09072019a1}
P_{\set{L}} &= 1 - \sum_{\ell=1}^L \expectation \bigl[ P_{X|Y}\bigl( x_{\ell}(Y)
\, | \, Y \bigr) \bigr] \\[0.1cm]
\label{09072019a2}
&= 1 - \sum_{\ell=1}^L \int_{\set{Y}} \mathrm{d}P_Y(y) \, P_{X|Y}\bigl( x_{\ell}(y)
\, | \, y \bigr) \\[0.1cm]
\label{09072019a3}
&= 1 - \int_{\set{Y}} \mathrm{d}P_Y(y) \, \sum_{\ell=1}^L P_{X|Y}\bigl( x_{\ell}(y)
\, | \, y \bigr) \\[0.1cm]
\label{09072019a4}
&\geq 1 - L \int_{\set{Y}} \mathrm{d}P_Y(y) \, \sum_{\ell=1}^L P_{X|Y}^2\bigl( x_{\ell}(y)
\, | \, y \bigr) \\[0.1cm]
\label{09072019a5}
&\geq 1 - L \int_{\set{Y}} \mathrm{d}P_Y(y) \, \sum_{x \in \set{X}}
P_{X|Y}^2\bigl( x | y \bigr) \\[0.1cm]
\label{09072019a6}
&\geq 1 - L \int_{\set{X} \times \set{Y}} \mathrm{d}P_{XY}(x,y) \; P_{X|Y}(x|y) \\[0.2cm]
\label{09072019a7}
&= 1 - L \, \expectation \bigl[ P_{X|Y}(X|Y) \bigr],
\end{align}
where \eqref{09072019a4} is due to the Cauchy-Schwarz inequality, and
\eqref{09072019a5} is due to the inclusion $\set{L}(y) \subseteq \set{X}$ for
all $y \in \set{Y}$. From \eqref{09072019a1}--\eqref{09072019a7},
$\expectation \bigl[ P_{X|Y}(X|Y) \bigr] \geq \frac{1-P_{\set{L}}}{L}$, which
implies that the operation $(\cdot)^+$ in the left side of \eqref{27062019a27}
is indeed redundant.

Similarly to the proof of \eqref{list dec.-26062019a} (see
\eqref{27062019a10}--\eqref{27062019a12}), \eqref{27062019a27} yields
\eqref{list dec.-26062019b} while ignoring the operation $(\cdot)^+$ in the left side
of \eqref{27062019a27}.

\section{Proof of Theorem~\ref{theorem: LB - variable list size}}
\label{appendix: LB - variable list size}

For every $y \in \set{Y}$, let the $M$ elements of $\set{X}$ be sorted in decreasing
order according to the conditional probabilities $P_{X|Y}(\cdot | y)$. Let
$x_{\ell}(y)$ be the $\ell$-th most probable element in $\set{X}$ given $Y=y$, i.e.,
\begin{align}
\label{02072019a1}
P_{X|Y}(x_1(y) \, | y) \geq P_{X|Y}(x_2(y) \, | y) \geq \ldots \geq P_{X|Y}(x_M(y) \, | y).
\end{align}
The conditional list decoding error probability, given $Y=y$, satisfies
\begin{align}
\label{02072019a2}
P_{\set{L}}(y) & \geq 1 - \sum_{\ell = 1}^{|\set{L}(y)|} P_{X|Y}(x_\ell(y) \, | y) \\
\label{02072019a3}
&:= P_{\set{L}}^{\mathrm{(opt)}}(y),
\end{align}
and the (average) list decoding error probability satisfies
$P_{\set{L}} \geq P_{\set{L}}^{\mathrm{(opt)}}$. Let $U_M$ denote the equiprobable
distribution on $\set{X}$, and let $g_\gamma \colon [0, \infty) \to \Reals$
be given by $g_\gamma(t) := (t-\gamma)^{+}$ with $\gamma \geq 1$, where $u^{+} := \max\{u,0\}$
for $u \in \Reals$. The function $g_\gamma(\cdot)$ is convex, and $g_\gamma(1)=0$ for
$\gamma \geq 1$; the $f$-divergence $D_{g_\gamma}(\cdot \| \cdot)$ is named as the
$E_\gamma$ divergence (see, e.g., \cite{LCV17}), i.e.,
\begin{align}
\label{E_gamma divergence}
E_\gamma(P\|Q) := D_{g_\gamma}(P\|Q), \quad \forall \, \gamma \geq 1,
\end{align}
for all probability measures $P$ and $Q$. For every $y \in \set{Y}$,
\begin{align}
& E_\gamma\bigl( P_{X|Y}(\cdot | y) \, \| \, U_M \bigr) \nonumber \\[0.1cm]
\label{02072019a4}
& \geq E_\gamma \biggl( [1-P_{\set{L}}^{\mathrm{(opt)}}(y),~P_{\set{L}}^{\mathrm{(opt)}}(y)] \,
\| \, \biggl[ \frac{|\set{L}(y)|}{M},~1-\frac{|\set{L}(y)|}{M} \biggr] \biggr) \\[0.1cm]
\label{02072019a5}
&= \frac{|\set{L}(y)|}{M} \cdot g_\gamma\biggl(\frac{M\bigl(1-P_{\set{L}}^{\mathrm{(opt)}}(y)
\bigr)}{|\set{L}(y)|} \, \biggr) +
\biggl(1-\frac{|\set{L}(y)|}{M}\biggr) \; g_\gamma\biggl(\frac{M \,
P_{\set{L}}^{\mathrm{(opt)}}(y)}{M-|\set{L}(y)|} \biggr),
\end{align}
where \eqref{02072019a4} holds due to the data-processing inequality for $f$-divergences, and
because of \eqref{02072019a3}; \eqref{02072019a5} holds due to \eqref{E_gamma divergence}. Furthermore,
in view of \eqref{02072019a1} and \eqref{02072019a3}, it follows that
$\frac{M \, P_{\set{L}}^{\mathrm{(opt)}}(y)}{M-|\set{L}(y)|} \leq 1$ for all $y \in \set{Y}$;
by the definition of $g_\gamma$, it follows that
\begin{align}
\label{02072019a6}
g_\gamma\biggl(\frac{M \, P_{\set{L}}^{\mathrm{(opt)}}(y)}{M-|\set{L}(y)|} \biggr) = 0,
\quad \forall \, \gamma \geq 1.
\end{align}
Substituting \eqref{02072019a6} into the right side of \eqref{02072019a5} gives that,
for all $y \in \set{Y}$,
\begin{align}
& E_\gamma\bigl( P_{X|Y}(\cdot | y) \, \| \, U_M \bigr) \nonumber \\
\label{02072019a7}
&\geq \frac{|\set{L}(y)|}{M} \cdot g_\gamma\biggl(\frac{M\bigl(1-P_{\set{L}}^{\mathrm{(opt)}}(y)
\bigr)}{|\set{L}(y)|} \, \biggr) \\
\label{02072019a8}
&= \biggl(1-P_{\set{L}}^{\mathrm{(opt)}}(y)-\frac{\gamma \, |\set{L}(y)|}{M}\biggr)^{+}.
\end{align}
Taking expectations with respect to $Y$ in \eqref{02072019a7}--\eqref{02072019a8}, and applying
Jensen's inequality to the convex function $f(u) := (u)^{+}$, for $u \in \Reals$, gives
\begin{align}
& \expectation\Bigl[ E_\gamma\bigl( P_{X|Y}(\cdot | Y) \, \| \, U_M \bigr) \Bigr] \nonumber \\
\label{02072019a9}
& \geq \expectation\biggl[ \biggl(1-P_{\set{L}}^{\mathrm{(opt)}}(Y)-\frac{\gamma \,
|\set{L}(Y)|}{M}\biggr)^{+} \biggr] \\
\label{02072019a10}
& \geq \Biggl(1- \expectation\bigl[P_{\set{L}}^{\mathrm{(opt)}}(Y)\bigr]
-\frac{\gamma \, \expectation\bigl[|\set{L}(Y)|\bigr]}{M}\Biggr)^{+} \\
\label{02072019a11}
& = \Biggl(1- P_{\set{L}}^{\mathrm{(opt)}}
-\frac{\gamma \, \expectation\bigl[|\set{L}(Y)|\bigr]}{M}\Biggr)^{+} \\
\label{02072019a12}
& \geq 1- P_{\set{L}}^{\mathrm{(opt)}}
-\frac{\gamma \, \expectation\bigl[|\set{L}(Y)|\bigr]}{M}.
\end{align}
On the other hand, the left side of \eqref{02072019a9} is equal to
\begin{align}
& \expectation\Bigl[ E_\gamma\bigl( P_{X|Y}(\cdot | Y) \, \| \, U_M \bigr) \Bigr] \nonumber \\
\label{02072019a13}
&= \expectation\Biggl[ \frac1M \sum_{x \in \set{X}} \bigl( M P_{X|Y}(x|Y) - \gamma \bigr)^{+} \Biggr] \\
\label{02072019a14}
&= \expectation\Biggl[ \, \sum_{x \in \set{X}} \biggl( P_{X|Y}(x|Y) - \frac{\gamma}{M} \biggr)^{+} \Biggr] \\
\label{02072019a15}
&= \tfrac12 \, \expectation\Biggl[ \, \sum_{x \in \set{X}} \biggl\{ \biggl| P_{X|Y}(x|Y) - \frac{\gamma}{M} \biggr|
+ P_{X|Y}(x|Y) - \frac{\gamma}{M} \biggr\} \Biggr] \\
\label{02072019a16}
&= \tfrac12 \, \expectation\Biggl[ \, \sum_{x \in \set{X}} \biggl| P_{X|Y}(x|Y) - \frac{\gamma}{M} \biggr| \Biggr]
+ \tfrac12 (1-\gamma),
\end{align}
where \eqref{02072019a13} is due to \eqref{E_gamma divergence}, and since $U_M(x) = \tfrac1M$ for all $x \in \set{X}$;
\eqref{02072019a14} and \eqref{02072019a15} hold, respectively, by the simple identities $(cu)^{+} = c \; u^{+}$,
and $u^{+} = \tfrac12 (|u|+u)$ for $c \geq 0$ and $u \in \Reals$; finally, \eqref{02072019a16} holds since
$$\underset{x \in \set{X}}{\sum} \Bigl( P_{X|Y}(x|y) - \frac{\gamma}{M} \Bigr) =
-\gamma + \underset{x \in \set{X}}{\sum} P_{X|Y}(x|y) = 1-\gamma,$$ for all $y \in \set{Y}$.
Substituting \eqref{02072019a13}--\eqref{02072019a16} and rearranging terms gives that
\begin{align}
\label{02072019a17}
P_{\set{L}} \geq  P_{\set{L}}^{(\mathrm{opt})}
\geq \frac{1+\gamma}{2}  - \frac{\gamma \, \expectation\bigl[ |\set{L}(Y)| \bigr]}{M}
- \tfrac12 \, \expectation\Biggl[ \, \sum_{x \in \set{X}} \biggl| P_{X|Y}(x|Y)
- \frac{\gamma}{M} \biggr| \Biggr],
\end{align}
which is the lower bound on the list decoding error probability in \eqref{LB - variable list size}.

We next proceed to prove the sufficient conditions for equality in \eqref{LB - variable list size}.
First, if for all $y \in \set{Y}$, the list decoder selects the $|\set{L}(y)|$ most probable
elements in $\set{X}$ given that $Y=y$, then equality holds in \eqref{02072019a17}. In this case,
for all $y \in \set{Y}$, $\set{L}(y) := \{x_1(y), \ldots, x_{|\set{L}(y)|}\}$ where $x_{\ell}(y)$
denotes the $\ell$-th most probable element in $\set{X}$, given $Y=y$, with ties in probabilities
which are resolved arbitrarily (see \eqref{02072019a1}). Let $\gamma \geq 1$. If, for every
$y \in \set{Y}$, $P_{X|Y}\bigl(x_\ell(y) \, | y)$ is fixed for all $\ell \in \{1, \ldots, |\set{L}(y)|\}$
and $P_{X|Y}\bigl(x_\ell(y) \, | y)$ is fixed for all $\ell \in \{|\set{L}(y)|+1, \ldots, M\}$,
then equality holds in \eqref{02072019a4} (and therefore equalities also hold in \eqref{02072019a7}
and \eqref{02072019a9}). For all $y \in \set{Y}$, let the common values of the conditional
probabilities $P_{X|Y}(\cdot | y)$ over each of these two sets, respectively, be equal to
$\alpha(y)$ and $\beta(y)$. Then,
\begin{align}
\label{02072019a21}
\alpha(y) \, |\set{L}(y)| + \beta(y) \, \bigl(M - |\set{L}(y)|)
= \sum_{x \in \set{X}} P_{X|Y}(x|y) = 1,
\end{align}
which gives the condition in \eqref{02072019a19}. Moreover, if for all $y \in \set{Y}$,
$$1-P_{\set{L}}^{\mathrm{(opt)}}(y)-\frac{\gamma \, |\set{L}(y)|}{M} \geq 0,$$ then the
operation $(\cdot)^+$ in the right side of \eqref{02072019a9} is redundant, which causes
\eqref{02072019a10} to hold with equality as an expectation of a linear function;
furthermore, also \eqref{02072019a12} holds with equality in this case (since an
expectation of a non-negative and bounded function is non-negative and finite). By
\eqref{02072019a19} and \eqref{02072019a3}, it follows that
$P_{\set{L}}^{\mathrm{(opt)}}(y) = 1 - \alpha(y) \, |\set{L}(y)|$ for all $y \in \set{Y}$,
and therefore the satisfiability of \eqref{02072019a20} implies that equalities hold in
\eqref{02072019a10} and \eqref{02072019a12}. Overall, under the above condition, it therefore
follows that \eqref{LB - variable list size} holds with equality. To verify it explicitly,
under conditions \eqref{02072019a19} and \eqref{02072019a20} which have been derived as above,
the right side of \eqref{LB - variable list size} satisfies
\begin{align}
& \frac{1+\gamma}{2} - \frac{\gamma \expectation[|\set{L}(Y)|]}{M}
- \tfrac12 \, \expectation \Biggl[ \, \sum_{x \in \set{X}} \, \biggl| P_{X|Y}(x|Y)
- \frac{\gamma}{M} \biggr| \Biggr] \nonumber \\
&= \frac{1+\gamma}{2} - \frac{\gamma \expectation[|\set{L}(Y)|]}{M} \nonumber \\
\label{02072019a22}
& \hspace*{0.4cm} - \tfrac12 \, \expectation \Biggl[ \biggl(\alpha(Y) - \frac{\gamma}{M}\biggr) \, |\set{L}(Y)|
+ \biggl(\frac{\gamma}{M} - \frac{1-\alpha(Y) \, |\set{L}(Y)|}{M - |\set{L}(Y)|} \biggr) \bigl(M - |\set{L}(Y)|\bigr) \Biggr] \\[0.1cm]
\label{02072019a23}
&= 1 - \expectation \bigl[ \alpha(Y) \, |\set{L}(Y)| \bigr] \\[0.1cm]
\label{02072019a24}
&= \expectation \Biggl[1 - \sum_{\ell=1}^{|\set{L}(Y)|} P_{X|Y} \bigl(x_\ell(Y) \, | Y \bigr) \Biggr] \\
\label{02072019a25}
&= P_{\set{L}},
\end{align}
where \eqref{02072019a22} holds since, under \eqref{02072019a20}, it follows that
$$0 \leq \frac{1-\alpha(Y) \, |\set{L}(Y)|}{M - |\set{L}(Y)|} \leq \frac1M \leq \frac{\gamma}{M}$$
for all $\gamma \geq 1$; \eqref{02072019a23} holds by straightforward algebra, where $\gamma$ is canceled out;
\eqref{02072019a24} holds by the condition in \eqref{02072019a19}; finally, \eqref{02072019a25} holds
by \eqref{P err}, \eqref{cond. P err} and \eqref{02072019a1}.
This indeed explicitly verifies that the conditions in Theorem~\ref{theorem: LB - variable list size}
yield an equality in \eqref{LB - variable list size}.

\section{Proofs of Theorems Related to Tunstall Trees}
\label{appendix: Tunstall}

\subsection{Proof of Theorem~\ref{theorem: closeness to equiprobable}}
\label{subsection: proof - clossness of equiprobable}
Theorem~\ref{theorem: closeness to equiprobable}~\ref{thm: Tunstall-a})
follows from \eqref{Tunstall-a1} (see \cite[Corollary~1]{CicaleseGV06}).

By \cite[Lemma~6]{JelSc72}, the ratio of the maximal to minimal
positive masses of $P_\ell$ is upper bounded by the reciprocal
of the minimal probability mass of the source symbols.
Theorem~\ref{theorem: closeness to equiprobable}~\ref{thm: Tunstall-b})
is therefore obtained from Theorem~\ref{thm: LB/UB f-div}~\ref{Thm. 5-c}).
Theorem~\ref{theorem: closeness to equiprobable}~\ref{thm: Tunstall-c})
consequently holds due to Theorem~\ref{thm: LB/UB f-div}~\ref{Thm. 5-d});
the bound in the right side of \eqref{23062019a5}, which holds for every
number of leaves $n$ in the Tunstall tree, is equal to the limit
of the upper bound in the right side of \eqref{23062019a3} when
we let $n \to \infty$.

Theorem~\ref{theorem: closeness to equiprobable}~\ref{thm: Tunstall-d})
relies on \cite[Theorem~11]{LieseV_IT2006}
and the definition in \eqref{closeness to U_n}, providing an integral
representation of an $f$-divergence in \eqref{int_rep.} under the
conditions in Item~\ref{thm: Tunstall-d}).

\subsection{Proof of Theorem~\ref{theorem: p_min Tunstall}}
\label{subsection: Proof - LB p_min Tunstall}
In view of \cite[Theorem~4]{CicaleseGV18}, if the fixed length of
the codewords of the Tunstall code is equal to $m$, then the compression
rate $R$ of the code satisfies
\begin{align}
\label{UB R Tunstall}
R \leq \frac{\lceil \log_{|\set{X}|}n \rceil \, H(P)}{\log_{|\set{X}|} n
- \Bigl[\frac{\rho \log \rho}{\rho-1} - \log \Bigl(\frac{\mathrm{e}
\rho \log_{\mathrm{e}} \rho}{\rho-1} \Bigr) \Bigr] \frac1{\log|\set{X}|}},
\end{align}
where $H(P)$ denotes the Shannon entropy of the memoryless and stationary
discrete source, $\rho := \frac1{p_{\min}}$, $n$ is the number of leaves
in Tunstall tree, and the logarithms with an unspecified base can be taken
on an arbitrary base in the right side of \eqref{UB R Tunstall}.
By the setting in Theorem~\ref{theorem: p_min Tunstall}, the construction
of the Tunstall tree satisfies $n \leq |\set{X}|^m < n+(D-1)$. Hence,
if $D=2$, then $\log_{|\set{X}|} n = m$; if $D>2$, then
$\lceil \log_{|\set{X}|} n \rceil = m$ (since the length of the codewords
is $m$), and $\log_{|\set{X}|} n > m + \log_{|\set{X}|} \Bigl(1-\frac{D-1}{|\set{X}|^m} \Bigr)$.
Combining this with \eqref{UB R Tunstall} yields
\begin{align}
\label{UB2 R Tunstall}
R \leq
\begin{dcases}
\frac{m H(P)}{m + \biggl\{ \log \Bigl(1-\frac{D-1}{|\set{X}|^m} \Bigr)
- \Bigl[\frac{\rho \log \rho}{\rho-1} - \log \Bigl(\frac{\mathrm{e}
\rho \log_{\mathrm{e}} \rho}{\rho-1} \Bigr) \Bigr] \biggr\}
\frac1{\log |\set{X}|}}, \quad \mbox{if $D>2$,} \\[0.1cm]
\frac{m H(P)}{m - \Bigl[\frac{\rho \log \rho}{\rho-1} - \log \Bigl(\frac{\mathrm{e}
\rho \log_{\mathrm{e}} \rho}{\rho-1} \Bigr) \Bigr] \frac1{\log |\set{X}|}},
\hspace*{3.8cm} \mbox{if $D=2$.}
\end{dcases}
\end{align}
In order to assert that $R \leq (1+\varepsilon) \, H(P)$, it is
requested that the right side of \eqref{UB2 R Tunstall} does not
exceed $(1+\varepsilon) \, H(P)$. This gives
\begin{align}
\label{12072019a1}
\frac{\rho \log \rho}{\rho-1} - \log \biggl(\frac{\mathrm{e} \rho
\log_{\mathrm{e}} \rho}{\rho-1} \biggr) \leq d \log \mathrm{e},
\end{align}
where $d$ is given in \eqref{13072019a1}. In view of the part in
Section~\ref{subsubsection: Alpha divergences} with respect to the
exemplification of Theorem~\ref{thm: LB/UB f-div} for the relative
entropy, and the related analysis in Appendix~\ref{appendix: Lambert-W},
the condition in \eqref{12072019a1} is equivalent to $\rho \leq
\rho_{\max}^{(1)}(d)$ where $\rho_{\max}^{(1)}(d)$ is defined in
\eqref{280519c}. Since $p_{\min} = \frac1{\rho}$, it leads to the
sufficient condition in \eqref{12072019a2} for the requested
compression rate $R$ of the Tunstall code.

\break

\end{document}